\documentclass[11pt]{article}
%% \usepackage{AA}
% Amirhossein Kazeminia and Andrei Bulatov style file for the latex

\usepackage{microtype}
\usepackage{algorithm}
\usepackage{algpseudocode}

\usepackage{mathtools} 
\usepackage{color}
\usepackage{amsmath,amssymb,amsthm}
\usepackage{enumitem}
\usepackage{mathtools} 
\usepackage{relsize}
\usepackage{euscript}
\usepackage{subcaption}

\usepackage{hyperref}

\usepackage{tikz}
\usepackage{bm}

%add line number for checking
%\usepackage[pagewise]{lineno}
\usepackage{esvect}
%\linenumbers

% Setting up the page configuration
\usepackage{geometry}
\geometry
{
     letterpaper,
     left=1in,
     right=1in,
     top=1in,
}

\usepackage{stackengine}
\newcommand{\circled}[1]{\stackMath\stackinset{c}{}{c}{}{\scriptscriptstyle #1}{\scalebox{0.65}{$\bigcirc$}}}

\newtheorem{theorem}{Theorem}[section]
\newtheorem{ther}{Theorem}
\newtheorem{lemma}[theorem]{Lemma}

\newtheorem{proposition}[theorem]{Proposition}
\newtheorem{prop}{Proposition}
\newtheorem*{proposition-no}{Proposition}

\newtheorem{remark}[theorem]{Remark}
\newtheorem{example}[theorem]{Example}

% \newtheorem{conjecture}[theorem]{Conjecture}

%\newenvironment{oneshot}[1]{\@begintheorem{#1}{\unskip}}{\@endtheorem}
%Special alphabets

%Greeks
\let\al=\alpha

\let\Gm=\Gamma

\let\vf=\varphi

\let\sg=\sigma

\let\th=\theta

\def\ovarrow{}
\def\circp{\circled{p}}

%%%%%%%%%%%%%%%%%%%%%%%%%%%%%%%%%%%%%%%%%%%%%%%%%%%%%%%%%%
%Fancy math font
\def\rel{\EuScript{R}}
\def\relo{\EuScript{Q}}
\def\rela{\EuScript{S}}

\def\cE{\mathcal E}
\def\cF{\mathcal F}
\def\cG{\mathcal G}
\def\cH{\mathcal H}
\def\cI{\mathcal I}

\def\cK{\mathcal K}
\def\cP{\mathcal P}

\def\cT{\mathcal T}

\def\cC{\mathcal C}

%%%%%%%%%%%%%%%%%%%%%%%%%%%%%%%%%%%%%%%%%%%%%%%%%%%%%%%%%%
% wierd stuff LOL

%%%%%%%%%%%%%%%%%%%%%%%%%%%%%%%%%%%%%%%%%%%%%%%%%%%%%%%%%%
%Fnacy bold math
\def\ba{\mathbf{a}}
\def\bb{\mathbf{b}}
\def\bc{\mathbf{c}}
\def\bd{\mathbf{d}}
\def\bs{\mathbf{s}}
\def\bx{\mathbf{x}}
\def\by{\mathbf{y}}
\def\bz{\mathbf{z}}

\def\bv{\mathbf{v}}
%%%%%%%%%%%%%%%%%%%%%%%%%%%%%%%%%%%%%%%%%%%%%%%%%%%%%%%%%%

\def\sH{\mathsf{H}}
\def\sG{\mathsf{G}}
\def\sK{\mathsf{K}}
\def\sM{\mathsf{M}}

\def\const{\mathsf{c}}
%%%%%%%%%%%%%%%%%%%%%%%%%%%%%%%%%%%%%%%%%%%%%%%%%%%%%%%%%%
%charachter with extra symbols

\def\zz{{\underline{0}}}

\def\WT#1{\widetilde #1}
%%%%%%%%%%%%%%%%%%%%%%%%%%%%%%%%%%%%%%%%%%%%%%%%%%%%%%%%%%
%Script Font

\def\RT{{\mathfrak{R}}}

%%%%%%%%%%%%%%%%%%%%%%%%%%%%%%%%%%%%%%%%%%%%%%%%%%%%%%%%%%
%sets and abbreviation

\def\CSP{\mathsf{CSP}}

\def\NCSP{\mathsf{\#CSP}}
\def\NpCSP{\#_p\mathsf{CSP}}

\def\Aut{\mathsf{Aut}}
\def\Fix{\mathsf{Fix}}

\def\Part{\mathsf{Part}}

\def\PAR #1{\mathsf{PAR}_{#1}}

\def\ghom#1{\#\mathsf{Hom}(#1)}
\def\ghomk#1#2{\#_{#1}\mathsf{Hom}(#2)}
\def\CSPp#1{\#_p\mathsf{CSP}(#1)}

\def\sgn{\mathsf{sgn}}
\def\bbZ{\mathbb{Z}}

\def\type{\mathsf{type}}

\def\pr{\mathsf{pr}}

\def\Hom{\mathsf{Hom}}

\def\CSP{\mathsf{CSP}}

\def\Aut{\mathsf{Aut}}
\def\Fix{\mathsf{Fix}}

%%%%
\def\proj{\mathsf{pr}}
\def\prp{\mathsf{pr}^p}
\def\prpar{\mathsf{pr}^2}
%%%%
\def\ext{\mathsf{\# ext}}
\def\extp{\mathsf{\#_p ext}}

\def\hom{\mathsf{hom}}
\def\inj{\mathsf{inj}}

\def\Con{\mathsf{Con}}

\def\cp#1#2{\arraycolsep0pt
\left(\begin{array}{c} #1\\ #2 \end{array}\right)}
\def\cpp#1#2#3{\arraycolsep0pt \left(\begin{array}{c} #1\\ #2\\
#3 \end{array}\right)}

%%%%%%%%%%%%%%%%%%%%%%%%%%%%%%%%%%%%%%%%%%%%%%%%%%%%%%%%%%
%Symbols
\let\eps=\emptyset
\let\ov=\overline

\def\ang#1{\langle#1\rangle}
\let\sse=\subseteq

\let\tm=\times
\def\tms{\tm\dots\tm}
\def\zd{,\dots,}
\def\vv#1{\mathbf{#1}}
\def\vc#1#2{#1_1,\dots,#1_{#2}}
\def\modp{$\#_p$P}
\def\modk{$\#_k$P}
\def\ar{\mathsf{ar}}
\def\nat{\mathbb{N}}
\def\red#1{\vrule height7pt depth3pt width.4pt
\lower3pt\hbox{$\scriptstyle #1$}}
\let\meet=\wedge

\def\existspr{\exists^{\equiv p}}

%%%%%%%%%%%%%%%%%%%%%%%%%%%%%%%%%%%%%%%%%%%%%%%%%%%%%%%%%%
%%%%%%%%%%%%%%%%%%%%%%%%%%%%%%%%%%%%%%%%%%%%%%%%%%%%%%%%%%
%%%%%%%%%%%%%%%%%%%%%%%%%%%%%%%%%%%%%%%%%%%%%%%%%%%%%%%%%%

 %% [1]{\textcolor{purple}{(Amirhosein: #1)}}

\newcommand{\FUNC}[3]{#1 : #2 \rightarrow #3}

\newcommand{\constCSP}[2]{\langle #1,#2 \rangle}

\newcommand{\SP}[1]{\langle #1 \rangle}

\newcommand{\colect}[2]{\{ #1 \}_{{#2}} }

%======matrix stuff:

%%%%%%%%%%%%
\usepackage{scalerel}
\usepackage{stackengine}
\stackMath

%%%%%%%%%%%%%

\newcommand{\vvecad}[2]{
    \begin{pmatrix}
    {#1} \\
    \vdots \\
    {#2}
    \end{pmatrix}
}

%

%

%%%%%%%%%%%%%%%%%%%%%%%%%%%%%%%%%%%%%%%%%%%%%%%
%%%%%%%%%%%%%%%%%%%%%%%%%%%%%%%%%%%%%%%%%%%%%%%

\begin{document}
\date{}

\title{Modular Counting CSP: Reductions and Algorithms}
%\titlerunning{Modular Counting CSP}
\author{Amirhossein Kazeminia and Andrei A.Bulatov \\ Simon Fraser University}
%\author{Anonymous authors}{Some affiliation}{}{}{}
%\authorrunning{Anonymous authors}
%\keywords{Constraint Satisfaction Problem, Modular Counting}
\maketitle

\begin{abstract}
The Constraint Satisfaction Problem (CSP) is ubiquitous in various areas of mathematics and computer science. Many of its variations have been studied including the Counting CSP, where the goal is to find the number of solutions to a CSP instance. The complexity of finding the exact number of solutions of a  CSP is well understood (Bulatov, 2013, and Dyer and Richerby, 2013) and the focus has shifted to other variations of the Counting CSP such as counting the number of solutions modulo an integer. This problem has attracted considerable attention recently. In the case of CSPs based on undirected graphs Bulatov and Kazeminia (STOC 2022) obtained a complexity classification for the problem of counting solutions modulo $p$ for arbitrary prime $p$. In this paper we report on the progress made towards a similar classification for the general CSP, not necessarily based on graphs.

We identify several features that make the general case very different from the graph case such as a stronger form of rigidity and the structure of automorphisms of powers of relational structures. We provide a solution algorithm in the case $p=2$ that works under some additional conditions and prove the hardness of the problem under some assumptions about automorphisms of the powers of the relational structure. We also reduce the general CSP to the case that only uses binary relations satisfying strong additional conditions.  
\end{abstract}

\newpage

%%%%%%%%%%%%%%%%%%%%%%%%%%%%%%%%%%%%%%%
%%%%%%%%%%%%%%%%%%%%%%%%%%%%%%%%%%%%%%%
\section{Introduction}
%% \input{Data/b-intro}

% general stuff about counting and CSP
Counting problems in general have been intensively studied since the pioneering work by Valiant \cite{Valiant79:complexity,Valiant79:computing}. One of the most interesting and well studied problems in this area is the Counting Constraint Satisfaction Problem ($\#\CSP$), which  provides a generic framework for a wide variety of counting combinatorial problems that arise frequently in multiple disciplines such as logic, graph theory, and artificial intelligence. The counting $\CSP$ also allows for generalizations including weighted CSPs and partition functions \cite{Barvinok16:combinatorics,Bulatov05:partition} that yield connections with areas such as statistical physics, see, e.g.\ \cite{Jerrum93:polynomial,Lieb81:general}. While the complexity of exact counting solutions of a $\CSP$ is now well-understood \cite{Dyer00:complexity,DBLP:journals/jacm/Bulatov13,effective-Dyer-doi:10.1137/100811258,Dalmau04:complexity}, modular counting such as finding the parity of the number of solutions remains widely open. 

\vspace{3mm}

\noindent
\textbf{Homomorphisms and the Constraint Satisfaction Problem.}
The most convenient way to introduce CSPs is through homomorphisms of relational structures. A \emph{relational signature} $\sg$ is a collection of \emph{relational symbols} each of which is assigned a positive integer, the \emph{arity} of the symbol. A \emph{relational structure} $\cH$ with signature $\sg$ is a set $H$ and an \emph{interpretation} $\rel^\cH$ of each $\rel\in\sg$, where $\rel^\cH$ is a relation or a predicate on $H$ whose arity equals that of $\rel$. The set $H$ is said to be the \emph{base set} or the \emph{universe} of $\cH$. We will use the same letter for the base set as for the structure, only in the regular font. A structure with signature $\sg$ is often called a \emph{$\sg$-structure}. Structures with the same signature are called \emph{similar}. A $\sg$-structure $\cH$ is \emph{finite} if both, $H$ and $\sigma$ are finite. In this paper all structures are finite.

Let $\cG,\cH$ be similar structures with signature $\sg$. A \emph{homomorphism} from $\cG$ to $\cH$ is a mapping $\vf:G\to H$ such that for any $\rel\in\sg$, say, of arity $r$, if $\rel^\cG(\vc ar)$ is true for $\vc ar\in G$, then $\rel^\cH(\vf(a_1)\zd\vf(a_r))$ is also true. The set of all homomorphisms from $\cG$ to $\cH$ is denoted $\Hom(\cG,\cH)$. The cardinality of $\Hom(\cG,\cH)$ is denoted by $\hom(\cG,\cH)$. A homomorphism $\vf$ is an \emph{isomorphism} if it is bijective and the inverse mapping $\vf^{-1}$ is a homomorphism from $\cH$ to $\cG$. A homomorphism of a structure to itself is called an \emph{endomorphism}, and an isomorphism to itself is called an \emph{automorphism}.

Following Feder and Vardi \cite{Feder98:computational}, in a $\CSP$, the goal is, given similar relational structure $\cG,\cH$, to decide whether there is a homomorphism from $\cG$ to $\cH$. The restricted problem in which $\cH$ is fixed and only $\cG$ is given as an input is denoted by $\CSP(\cH)$. The complexity of such problems is well understood \cite{Bulatov17:dichotomy,Zhuk20:dichotomy}.

\vspace{3mm}

\noindent
\textbf{Counting CSP.}
In the (exact) Counting $\CSP$ the goal is to find the number $\hom(\cG,\cH)$ of homomorphisms from a  structure $\cG$ to a similar  structure $\cH$. Restricted versions of the Counting CSP can be introduced in the same way as for the decision one. In the counting version of $\CSP(\cH)$ denoted $\NCSP(\cH)$ the goal is to find $\hom(\cG,\cH)$ for a given structure $\cG$.

% exact counting: results (vaguely), complexity classes
The complexity class the Counting $\CSP$ belongs to is \#P, the class of problems of counting accepting paths of polynomial time nondeterministic Turing machines. There are several ways to define reductions between counting problems, but the most widely used ones are parsimonious reductions and Turing reductions. A \emph{parsimonious reduction} from a counting problem $\#A$ to a counting problem $\#B$ is an algorithm that, given an instance $I$ of $\#A$ produces (in polynomial time) an instance $I'$ of $\#B$ such that the answers to $I$ and $I'$ are equal. A \emph{Turing reduction} is a polynomial time algorithm that solves $\#A$ using $\#B$ as an oracle. 
%% The fact that $\#A$ is Turing reducible to $\#B$ is denoted by $\#A\le_T\# B$. 
Completeness in \#P is then defined in the standard way. This paper and all the papers we cite predominantly use Turing reductions.

A complexity classification of counting $\CSP$s of the form $\NCSP(\cH)$ was obtained by Bulatov~\cite{DBLP:journals/jacm/Bulatov13} and was further improved and simplified by Dyer and Richerby  \cite{effective-Dyer-doi:10.1137/100811258}. Bulatov’s proof makes heavy use of techniques of universal algebra. Dyer and Richerby's proof, on the other hand, uses combinatorial and structural properties of relational structures. The more general version of the counting CSP, the weighted CSP, has also been thoroughly studied. Cai and Chen~\cite{cai-complex} obtained a complexity classification for weighted CSP, where each homomorphism has a complex weight. One of the main features of counting with complex weights is the phenomenon of cancellation, when complex weights of homomorphisms cancel each other rather than add up. This, of course, never happens in exact unweighted counting problems, but is frequently encountered in modular counting. 
% modular counting, classes

\vspace{3mm}

\noindent
\textbf{Modular Counting.}
Another natural variation of counting problems is counting modulo some integer. In this paper we consider the problem of computing the number of solutions of a $\CSP$ modulo a prime number $p$. If a relational structure $\cH$ is fixed, this problem is denoted $\NpCSP(\cH)$. More precisely, in $\NpCSP(\cH)$ the objective is, given a relational structure $\cG$, to find the number of homomorphisms from $\cG$ to $\cH$ modulo $p$. 
%% In a more general setting we need to find the number $\hom(\cG,\cH)$ modulo some fixed number $p$. This problem is denoted by $\#_p\CSP(\cH)$.

There are several complexity classes related to modular counting. The more established type of classes is $\mathsf{Mod}_k$P, the class of problems of deciding whether the number of accepting paths of a polynomial time nondeterministic Turing machine is \emph{not} divisible by $k$, \cite{Cai89:power,Hertrampf90:relations}. In particular, if $k=2$ then $\mathsf{Mod}_k$P is the class $\oplus$P. However, problems of counting accepting paths naturally belong to classes of the form \modk,  introduced by Faben in \cite{faben2008complexity} that contain problems of counting accepting paths modulo $k$. The standard notion of reduction is again Turing reduction. Faben in \cite{faben2008complexity} studied the basic properties of such classes, in particular, he identified several \modk-complete problems.

% problem definition
In the case of the $\CSP$, the research has mostly been focused on graph homomorphisms. The only exceptions we are aware of are a result of Faben \cite{faben2008complexity}, who characterized the complexity of counting the solutions of a Generalized Satisfiability problem modulo an integer, and a generalization of \cite{faben2008complexity} to problems with weights by  Guo et al.\ \cite{guo_et_al:LIPIcs:2011:3015}. The study of modular counting of graph homomorphisms has been much more vibrant. 

Before discussing the existing results on modular counting and the results of this study, we need to mention some features of the \emph{automorphism group} of a relational structure. The automorphisms of $\cH$ form a group with respect to composition denoted $\Aut(\cH)$. The \emph{order} of an automorphism $\pi\in\Aut(\cH)$ is the smallest number $k$ such that $\pi^k$ is the identity permutation. An element $a\in H$ is a fixed point of $\pi\in\Aut(\cH)$ if $\pi(a)=a$. The set of all fixed points of $\pi$ is denoted by $\Fix(\pi)$.

A systematic study of counting homomorphisms in graphs was initiated by Faben and Jerrum in \cite{faben2008complexity}. They observed that the automorphism group $\Aut(\cH)$, particularly the automorphisms of order $p$, plays a crucial role in the complexity of the problem $\#_p\Hom(\cH)$. This insight extends to relational structures, as discussed in \cite{DBLP:conf/stoc/BulatovK22}. Specifically, for a homomorphism $\varphi$ from a relational structure $\cG$ to $\cH$, composing $\varphi$ with an automorphism from $\Aut(\cH)$ yields another homomorphism from $\cG$ to $\cH$. Thus, any automorphism of $\cH$ acts on the set $\Hom(\cG, \cH)$ of all homomorphisms from $\cG$ to $\cH$. If $\Aut(\cH)$ contains an automorphism $\pi$ of order $p$, the size of the orbit of $\varphi$ is divisible by $p$ unless $\pi \circ \varphi = \varphi$. In this case, the range of $\varphi$ lies within the set of fixed points $\Fix(\pi)$ of $\pi$, making this orbit contribute $0$ modulo $p$ to the total homomorphism count from $\cG$ to $\cH$. This observation motivates the following construction: let $\cH^\pi$ denote the substructure of $\cH$ induced by $\Fix(\pi)$. We denote by $\cH \rightarrow_p \cH'$ the existence of a $\pi \in \Aut(\cH)$ such that $\cH'$ is isomorphic to $\cH^\pi$. Furthermore, we write $\cH \rightarrow^*_p \cH'$ if there exist structures $\vc \cH k$ such that $\cH$ is isomorphic to $\cH_1$, $\cH'$ is isomorphic to $\cH_k$, and $\cH_1 \rightarrow_p \cH_2 \rightarrow_p \cdots \rightarrow_p \cH_k$.

Relational structures without order $p$ automorphisms will be called \emph{$p$-rigid}. 

\begin{lemma}[\cite{DBLP:conf/stoc/BulatovK22,ref:CountingMod2Ini}]\label{lem:aut-reduction}
Let $\cH$ be a relational structure and $p$ a prime. Then \\[1mm]
(a) Up to an isomorphism there exists a unique $p$-rigid structure $\cH^{*p}$ such that $\cH\rightarrow_p^* \cH^{*p}$.\\[1mm]
(b) For any relational structure $\cG$ it holds  that 
$
\hom(\cG,\cH)\equiv\hom(\cG,\cH^{*p})\pmod p.
$
\end{lemma}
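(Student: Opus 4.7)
The plan is to isolate a single-step version of the congruence---an orbit-counting identity---and then use it to derive both (b) and the uniqueness portion of (a); existence in (a) is immediate from strict decrease of the universe size.

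First I would prove the one-step statement: for any $\pi\in\Aut(\cH)$ of order $p$, $\hom(\cG,\cH)\equiv\hom(\cG,\cH^\pi)\pmod p$, where $\cH^\pi$ is the substructure on $\Fix(\pi)$. This is the orbit argument sketched just before the lemma: $\langle\pi\rangle$ acts on $\Hom(\cG,\cH)$ by left composition, every orbit has size $1$ or $p$ since $p$ is prime, and the singleton orbits are precisely the homomorphisms whose image lies in $\Fix(\pi)$, i.e.\ the elements of $\Hom(\cG,\cH^\pi)$. For existence in (a), a nontrivial order-$p$ automorphism must satisfy $\Fix(\pi)\subsetneq H$, so each $\to_p$ step strictly shrinks the universe and any chain terminates at a $p$-rigid structure. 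Iterating the one-step identity along such a terminating chain $\cH\to_p\cdots\to_p\cH^{*p}$ then yields the congruence in (b).

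The remaining task, and the main obstacle, is uniqueness in (a). My plan is to establish the auxiliary mod-$p$ Lovász-type statement: if $\cH_1$ and $\cH_2$ are both $p$-rigid and $\hom(\cG,\cH_1)\equiv\hom(\cG,\cH_2)\pmod p$ for every similar finite structure $\cG$, then $\cH_1\cong\cH_2$. Granted this, if $\cH$ admits two $p$-rigid terminations $\cH_1,\cH_2$ via distinct chains, iterating the one-step congruence gives $\hom(\cG,\cH_1)\equiv\hom(\cG,\cH_2)\pmod p$ for every $\cG$, and the statement closes the argument.

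To prove the auxiliary statement, I would convert homomorphism counts to counts of surjective and injective homomorphisms via Möbius inversion on the partition lattice of $\cG$; the coefficients are integers, so the mod-$p$ equivalence transfers. Then $p$-rigidity enters through Cauchy's theorem: a finite group with no element of order $p$ has order coprime to $p$, so $|\Aut(\cH_i)|$ is a unit modulo $p$. Evaluating the resulting identities at $\cG=\cH_1$ and $\cG=\cH_2$ and dividing out the automorphism counts forces surjective homomorphisms in both directions. Promoting these to an honest isomorphism of relational structures is the delicate technical point---for arbitrary signatures a bijective homomorphism need not reflect relations, so one must combine surjective and injective versions of the argument to control each relation table from both sides. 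A Faben--Jerrum-style treatment of this step for general relational structures is carried out in the cited works \cite{DBLP:conf/stoc/BulatovK22,ref:CountingMod2Ini}, and I would invoke that machinery rather than reprove it from scratch.
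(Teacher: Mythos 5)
Your plan follows the same route as the paper: the one-step orbit argument (Lemma~\ref{lem:aut-reduction-multi-sorted-structures-prelim}), existence by strict decrease of the universe, and uniqueness via a modulo-$p$ Lov\'asz-type theorem (the paper's Lemma~\ref{lem:lovasz-multi-structure}, proved in Appendix~\ref{sec:appendix-new} for the multi-sorted generalization) in which $p$-rigidity enters precisely through the fact that $|\Aut(\cH_i)|\not\equiv 0\pmod p$ and the inj/hom conversion is done by M\"obius inversion on the partition lattice. So the skeleton matches the paper's proof essentially verbatim.

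The one place you are more hesitant than you need to be is the final step. You worry that ``for arbitrary signatures a bijective homomorphism need not reflect relations'' and propose to combine surjective and injective versions of the argument. In fact, a single direction suffices: once the Lov\'asz identity yields injective homomorphisms $f\colon\cH_1\to\cH_2$ and $g\colon\cH_2\to\cH_1$ between finite structures, $|H_1|=|H_2|$ forces $f,g$ to be bijections, and then $g\circ f$ is a bijective endomorphism of the finite structure $\cH_1$, hence an automorphism $\sigma$ (some power $(g\circ f)^k$ is the identity, so the inverse is also a power, hence a homomorphism). From $g\circ f=\sigma$ one reads off $f^{-1}=\sigma^{-1}\circ g$, a composition of homomorphisms, so $f$ is already an isomorphism. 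The same argument works with surjections in both directions; you do not need to interleave the two. Apart from this unnecessary caution, the proposal is correct and matches the paper's approach.
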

By Lemma~\ref{lem:aut-reduction} it suffices to determine the complexity of $\NpCSP(\cH)$ for $p$-rigid structures $\cH$.

% existing results 
\vspace{3mm}

\noindent
\textbf{The existing results on modular counting.}
As we mentioned before, the research in modular counting $\CSP$s has mostly been aimed at counting graph homomorphisms. The complexity of the problem $\#_p\Hom(H)$ of counting homomorphism to a fixed graph $H$ modulo a prime number $p$ has received significant attention in the last ten years. Faben and Jerrum in \cite{ref:CountingMod2Ini} posed a conjecture that up to an order $p$ automorphism reduction $\rightarrow_p$ the complexity of this problem is the same as that for exact counting. More precisely, they conjectured that $\#_p\Hom(H)$ is solvable in polynomial time if and only if $\Hom(H^{*p})$ is. By the result of Dyer and Greenhill \cite{Dyer00:complexity} $\ghomk pH$ is solvable in polynomial time if and only if every connected component of $H^{*p}$ is a complete graph with all loops present or a complete bipartite graph. Therefore, proving that if a $p$-rigid $H$ does not satisfy these conditions then $\#_p\Hom(H)$ is $\#_pP$-hard suffices to confirm the conjecture of Faben and Jerrum. Over several years the hardness of $\#_p\Hom(H)$ was established for various graph classes \cite{ref:CountingMod2Ini,ref:CountingMod2ToCactus,ref:CountingMod2ToSquarefree,ref:CountingModPToTrees_gbel_et_al_LIPIcs, ref:CountingModPToSquarefree, Focke21:counting,ref:CountingModPToK33free}. Finally, it was proved for arbitrary graphs by Bulatov and Kazeminia \cite{DBLP:conf/stoc/BulatovK22}.

\begin{theorem}[\cite{DBLP:conf/stoc/BulatovK22}]\label{the:CountingGraphHom}
For any prime $p$ and any graph $H$ the problem $\ghomk pH$ is solvable in polynomial time if and only if $\Hom(H^{*p})$ is solvable in polynomial time. Otherwise it is \modp-complete. 
\end{theorem}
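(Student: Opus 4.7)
The plan is to proceed along the standard template for modular counting dichotomies. By Lemma~\ref{lem:aut-reduction}, $\hom(G,H)\equiv\hom(G,H^{*p})\pmod p$, so I may replace $H$ by its $p$-rigid reduct $H^{*p}$ throughout. The easy direction is then immediate: since $H$ is fixed, $H^{*p}$ can be precomputed in constant time; if every connected component of $H^{*p}$ is a reflexive complete graph or a complete bipartite graph, the Dyer--Greenhill algorithm computes $\hom(G,H^{*p})$ exactly in polynomial time and reduction modulo $p$ finishes the job.

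For the hard direction, suppose $H^{*p}$ has a connected component that is neither a reflexive complete graph nor a complete bipartite graph; I must show $\ghomk p{H^{*p}}$ is \modp-complete. The first step is to isolate a single ``hard'' connected component $H'$. Because $H^{*p}$ is $p$-rigid, distinct components are pairwise non-isomorphic and each is itself $p$-rigid. Using multiplicativity $\hom(G_1\sqcup G_2,H^{*p})=\hom(G_1,H^{*p})\cdot\hom(G_2,H^{*p})$ together with a polynomial-interpolation trick over the number of disjoint copies of $G$, one can disentangle contributions of distinct components modulo $p$, reducing $\ghomk p{H'}$ to $\ghomk p{H^{*p}}$.

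Next I establish pinning: for $p$-rigid connected $H'$ and every $v\in V(H')$ there is a polynomial-time reduction from counting, modulo $p$, the homomorphisms that send a distinguished vertex of $G$ to $v$. The standard route is to attach small gadgets to $G$ and apply Lemma~\ref{lem:aut-reduction} on enlarged targets, using $p$-rigidity to guarantee that no order-$p$ automorphism identifies the distinguished vertex with others. With pinning in hand, the heart of the proof is a gadget-based reduction from a known \modp-complete problem, typically $\#_p\IS$ or $\#_p\BIS$. One builds a family of gadgets whose homomorphism counts into $H'$, indexed by a size parameter, are values of a polynomial whose coefficients encode the quantities one needs to recover; evaluating at enough parameters and inverting a Vandermonde-type system modulo $p$ extracts the target \modp-complete count.

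The main obstacle I anticipate is controlling cancellation modulo $p$ inside the interpolation. For the scheme to succeed the relevant ``gadget ratios''---concretely, eigenvalue ratios or entries of certain $2\times 2$ submatrices of the adjacency matrix of $H'$---must have multiplicative orders coprime to $p$; otherwise the interpolating polynomial collapses and the reduction breaks. The crucial use of $p$-rigidity is precisely here: case analysis on whether $H'$ is bipartite, combined with a non-degenerate submatrix forced by the failure of the Dyer--Greenhill condition, should show that one can pick a parameter that is a primitive-enough root of unity modulo $p$, making the Vandermonde system invertible. Carrying this out uniformly across all primes $p$ and all non-tractable $p$-rigid graphs is the central technical challenge.
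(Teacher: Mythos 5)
This theorem is only cited in the paper (from \cite{DBLP:conf/stoc/BulatovK22}), not proved, so I compare your sketch against the proof strategy of that reference as it is described here and in Sections~\ref{sec:overview} and \ref{sec:automorphisms}.

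Your first step (replacing $H$ by $H^{*p}$ via Lemma~\ref{lem:aut-reduction}) and the tractability direction (Dyer--Greenhill on the $p$-rigid reduct) match the actual proof. The component-isolation argument via multiplicativity and the pinning step via $p$-rigidity are also along the right lines, though pinning in \cite{DBLP:conf/stoc/BulatovK22} is done through the indicator-problem/M\"obius-inversion construction (cf.\ Theorem~\ref{the:ConstantCSP} here), not by attaching small gadgets and re-reducing automorphisms.

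The real divergence is in the hardness core, and this is where your plan has a genuine gap. You propose a Vandermonde/interpolation reduction from $\#_p\mathsf{IS}$ or $\#_p\BIS$, with the invertibility modulo $p$ of the interpolation matrix as the ``central technical challenge.'' That is exactly the approach used in the earlier partial results for specific graph classes (trees, cactus, square-free, $K_4$-minor-free), and it is precisely the approach that \emph{does not} scale to arbitrary $p$-rigid graphs: modular cancellation can make the needed eigenvalue ratios degenerate, and there is no uniform way to choose parameters making the Vandermonde system invertible across all primes and all non-tractable graphs. You flag the difficulty honestly, but you do not resolve it, and this is not a detail one can wave past --- it is where the Faben--Jerrum conjecture was stuck for years. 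The proof of \cite{DBLP:conf/stoc/BulatovK22} avoids interpolation entirely. Its key ingredient, emphasized repeatedly in this paper (e.g.\ the discussion of Theorem~\ref{the:splitting_automorphisms} and Section~\ref{sec:automorphisms}), is the structure theorem for automorphisms of direct products of graphs \cite{ref:ProductOfGraphs}: it shows that every automorphism of $H^n$ factors through permutations of prime factors and automorphisms of those factors. This is what makes it possible to close $\#_p\mathsf{Hom}$ under pp-definable relations, and then hardness is obtained via $p$-protected rectangularity obstructions (in the spirit of Section~\ref{sec:rectangularity-obstruction}), reducing from bipartite graph homomorphism counting. Without that structural input about $\Aut(H^n)$, or some substitute for it, your interpolation route does not go through.
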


\vspace{3mm}

\noindent
\textbf{Our Results.}
In this paper we begin a systematic study of the problem $\#_p\CSP(\cH)$ for general relational structures $\cH$. Note that to the best of our knowledge, it is the first paper attempting at such a general modular counting problem. The ultimate goal is to obtain a complexity classification similar to Theorem~\ref{the:CountingGraphHom} for arbitrary relational structures. The contribution of the paper is twofold. First, we analyse the existing techniques and their applicability to the general case. We conclude that few of them work. More specifically, Theorem~\ref{the:CountingGraphHom} asserts that the complexity of modular counting for $p$-rigid graphs is the same as of exact counting. We, however, suggest a relational structure, a digraph $\cT_p$, that is $p$-rigid, its exact counting problem is hard, but modular counting is easy, see Section~\ref{sec:rigid-non-rigid}. Another important ingredient of the proof of Theorem~\ref{the:CountingGraphHom} is a structural theorem on automorphisms of products of graphs \cite{ref:ProductOfGraphs}. No such result exists for products of relational structures. Moreover, in Section~\ref{sec:poduct-auto} we suggest an example (again, a digraph) showing that nothing similar to such a structural result can be true. Some of the standard techniques in counting CSPs involve properties of relations and relational structures such as rectangularity, permutability, balancedness, the presence of a Mal'tsev polymorphism. In the case of exact counting these concepts are closely related to each other and make efficient algorithms possible. Unfortunately, these concepts are of little help for modular counting. We introduce their modular equivalents, but then a series of examples show that no tight connections are possible in this case, see Section~\ref{sec:rectangularity-main}. This makes algorithm design very difficult.

On the positive side, we obtain some results to somewhat remedy the situation. The first step is to convert the problem into a richer framework of multi-sorted relational structures and CSPs. The main idea is, given a CSP instance $\cG$, to try to identify the possible images of each vertex of $\cG$, and then treat vertices with different ranges as having different types and map them to different disjoint domains. In Section~\ref{sec:refinement} we call this process refinement and propose two types of refinement, one is based on local propagation techniques, and the other on solving the decision version of the problem. The main benefit of using multi-sorted structures and CSPs is the richer structure of their automorphisms. This often allows for stronger reductions than single-sorted structures do. In particular, if the digraph $\cT_p$ mentioned above is subjected to this process, it results in a multi-sorted structure that is no longer $p$-rigid, the corresponding reduced structure is very simple and can easily be solved. We are not aware of any example of a structure whose multi-sorted refinement is $p$-rigid, but that would give rise to an easy modular counting problem.

In the next line of research we follow the approach of  \cite{DBLP:conf/stoc/BulatovK22} to expand the relational structure $\cH$ by adding relations to $\cH$ that are primitive positive (pp-)definable in $\cH$, that is, relations that can be derived from the relations of $\cH$ using equality, conjunction and existential quantifiers. However, expanding the general relational structure by pp-definable relations does not work as well as for graphs. To overcome this obstacle, we introduce a new form of expansion which uses $p$-modular quantifiers instead of regular existential quantifiers. The semantics of a \emph{$p$-modular quantifier} is "there are non-zero modulo $p$ values of a variable" rather than "there exists at least one value of a variable" as the regular existential quantifier asserts. Every relational structure is associated with a \emph{relational clone} $\SP\cH$ that consists of all relations pp-definable in $\cH$.  The new concept gives rise to new definitions of pp-formulas and relational clones. If regular existential quantifiers in pp-formulas are replaced with $p$-modular quantifiers, we obtain $p$-modular primitive positive formulas ($p$-mpp-formulas, for short). Then, similar to pp-definitions, a relation $\rel$ is said to be \emph{$p$-mpp-definable} in a structure $\cH$ if there is a $p$-mpp-formula in $\cH$ expressing $\rel$. The $p$-modular clone $\SP\cH_p$ associated with $\cH$ is the set of all relations $p$-mpp-definable in $\cH$. We show in Theorem~\ref{pro:GadgetExists} (see also Theorem~\ref{the:ConstantCSP}) that, similar to the result of Bulatov and Dalmau \cite{ref:BULATOV_TowardDichotomy}, expanding a structure by a  $p$-mpp-definable relation does not change the complexity of the problem $\#_p\CSP(\cH)$.

\begin{theorem}\label{the:CloneTheorem-intro}
Let $p$ be a prime number and $\cH$ a $p$-rigid relational structure. If $\rel$ is $p$-mpp-definable in $\cH$, then $\#_p\CSP(\cH+\rel)$ is polynomial time reducible to $\#_p\CSP(\cH)$.
\end{theorem}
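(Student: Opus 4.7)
The strategy is to extend the classical pp-gadget-replacement reduction of Bulatov and Dalmau to the $p$-modular setting. Given an instance $\cG$ of $\NpCSP(\cH+\rel)$, the goal is to construct in polynomial time an instance $\cG'$ of $\NpCSP(\cH)$ with $\hom(\cG',\cH)\equiv\hom(\cG,\cH+\rel)\pmod p$; this is obtained by replacing each constraint of the form $\rel(\bar y)$ in $\cG$ by a fresh copy of a gadget $G_{\rel}$ built from the $p$-mpp-definition of $\rel$, identified at the boundary variables $\bar y$. The obstacle to the naive construction is clear from the definition of the $p$-modular quantifier: if $\rel(\bar y)\equiv\existspr\bar x\,\phi(\bar x,\bar y)$ and one simply plugs in a pp-gadget for $\phi$, the contribution at a tuple $\bar a$ is $N(\bar a):=|\{\bar x:\phi(\bar x,\bar a)\}|$, which is $0$ modulo $p$ exactly when $\rel(\bar a)$ is false but whose nonzero value varies with $\bar a$. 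The fix is Fermat's little theorem: $N(\bar a)^{p-1}\bmod p$ is the indicator of $N(\bar a)\not\equiv 0\pmod p$, which by definition of $\existspr$ is exactly the truth value of $\rel(\bar a)$. Hence an $\existspr$-quantifier should be realised by taking $p-1$ disjoint copies of the inner gadget, each with its own fresh internal variables but all sharing the same boundary $\bar y$.

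To make this precise I would induct on the syntactic structure of an arbitrary $p$-mpp-formula $\psi(\bar y)$, establishing the claim: there is a gadget $G_\psi$, whose variables and constraints are all from $\cH$, with designated boundary $\bar y$, such that for every tuple $\bar a$ the number of extensions of $\bar y\mapsto\bar a$ to a homomorphism $G_\psi\to\cH$ is congruent modulo $p$ to the $\{0,1\}$-truth value of $\psi(\bar a)$. For atomic $\cH$-relations and equalities the gadget is the constraint itself and the count is literally $0$ or $1$. For a conjunction $\psi_1\wedge\psi_2$ one joins $G_{\psi_1}$ and $G_{\psi_2}$ at $\bar y$; the extension count factorises as a product, and by induction each factor is already in $\{0,1\}$ modulo $p$, so the product gives the truth value of the conjunction. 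For a $p$-modular quantifier $\existspr\bar x\,\phi(\bar x,\bar y)$, I would apply induction to $\phi$ with boundary $(\bar x,\bar y)$, take $p-1$ independent copies of $G_\phi$ sharing $\bar y$ but with fresh internal copies of $\bar x$, and use that summing the count over $\bar x$ and then taking the $(p-1)$-th power yields, by Fermat applied to residues modulo $p$, exactly the $\{0,1\}$-truth value of $\existspr\bar x\,\phi$.

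Applying the induction to the top-level $p$-mpp-definition produces $G_{\rel}$, and replacing every $\rel$-constraint of $\cG$ by a fresh copy of $G_{\rel}$ delivers the desired $\cG'$; factorising $\hom(\cG',\cH)$ over constraints and invoking the inductive claim on each replaced constraint yields $\hom(\cG',\cH)\equiv\hom(\cG,\cH+\rel)\pmod p$, which (being in fact a parsimonious-mod-$p$ reduction) certainly gives a polynomial-time Turing reduction. The main obstacle I expect to handle carefully is the bookkeeping once several $p$-modular quantifiers are nested: at an inner layer the gadget's extension count is congruent to $0$ or $1$ modulo $p$ but is not literally in $\{0,1\}$, so the argument that raising its sum over the quantified variables to the $(p-1)$-th power recovers the correct truth value must be carried out at the level of residues modulo $p$, not at the level of integer arithmetic. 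A secondary check is that the recursion inflates the gadget by a factor of at most $p-1$ per quantifier layer and thus runs in polynomial time, since the formula (and $p$) are fixed; the $p$-rigidity hypothesis is not used directly in the reduction and is the standing assumption of the framework, justified by Lemma~\ref{lem:aut-reduction}.
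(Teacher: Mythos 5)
Your proof is correct and arrives at the same conclusion, but by a somewhat more self-contained route than the paper. The paper's proof of Theorem~\ref{pro:GadgetExists} is short: it invokes Proposition~5.6 of \cite{DBLP:conf/stoc/BulatovK22} to pass to a \emph{strict} $p$-mpp-definition of $\rel$ (one in which every $\ba\in\rel$ has exactly $1\pmod p$ extensions satisfying the body $\Phi$), and then simply replaces each $\rel$-constraint by the body of that strict definition with fresh auxiliary variables, so that strictness makes the replacement parsimonious modulo~$p$ in one step. Your structural induction with $p-1$ independent copies of the inner gadget at each $\existspr$-layer is, in effect, a direct construction of such a strict definition: raising the extension count to the $(p-1)$-st power via Fermat's little theorem is exactly what drives the strictification lemma the paper cites. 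So your argument re-derives the key ingredient rather than using it as a black box. The induction you sketch is sound --- the conjunction and quantifier cases factorise correctly because internal variables of distinct subgadgets are kept disjoint, and you correctly flag that once quantifiers are nested the inner extension counts are only $0$ or $1$ modulo~$p$, not literally in $\{0,1\}$, so the Fermat step must be read as a congruence. Your observation that $p$-rigidity plays no role in this particular reduction also matches the paper, whose Theorem~\ref{pro:GadgetExists} only asks that $\cH$ contain equality, which Lemma~\ref{lem:adding-equality} provides for free.
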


In the remaining part of the paper we identify a number of conditions under which it is possible to design an algorithm or to prove the hardness of the problem. One such case is $\#_2\CSP(\cH)$ when $\cH$ satisfies both 2-rectangularity and the usual strong rectangularity conditions (or, equivalently, has a Mal'tsev polymorphism). It starts with applying the known techniques \cite{Bulatov06:simple,effective-Dyer-doi:10.1137/100811258} to find a concise representation or a frame of the set of solutions of a given  CSP. However, such a representation cannot be used directly to find the parity of the number of solutions. The algorithm performs multiple recomputations of the frame to exclude the parts that produce an even number of solutions. Unfortunately, this algorithm does not generalize to larger $p$ even under very strong assumptions, because the structure of finite fields start playing a role.

While studying the structure of automorphisms of products of relational structures such as $\Aut(\cH^n)$ may be a difficult problem, in Section~\ref{sec:binarization} we make a step forward by reducing the class of structures $\cH$ for which such  structural results are required. More precisely, for any relational structure $\cH=(H;\vc\rel k)$ we construct its \emph{binarization} $b(\cH)$ whose relations are binary and rectangular. This makes such structures somewhat closer to graphs and the hope is that it will be easier to study the structure of $\Aut(b(\cH)^n)$ than $\Aut(\cH^n)$ itself. We prove that $\cH$ and $b(\cH)$ share many important properties.

\begin{theorem}
Let $\cH$ be a relational structure. Then $\cH$ is strongly rectangular ($p$-strongly rectangular, $p$-rigid, has a Mal'tsev polymorphism) if and only if $b(\cH)$ is strongly rectangular ($p$-strongly rectangular, $p$-rigid, has a Mal'tsev polymorphism).
\end{theorem}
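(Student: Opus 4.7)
The plan is to exploit the tight syntactic correspondence between $\cH$ and $b(\cH)$ by showing that in each direction the relevant algebraic object (an automorphism, a polymorphism, or a witness of failure of rectangularity) can be transferred between the two structures in a canonical, essentially unique way. Concretely, since $b(\cH)$ introduces, for every $k$-ary relation $\rel_i$ of $\cH$, a new ``tuple sort'' $T_i$ together with binary linking relations projecting each new element onto its coordinate in $H$, both the sort structure and the internal tuple structure are recoverable from the binary relations alone. Thus any automorphism or polymorphism of $b(\cH)$ must preserve the partition of its universe into $H$ and the $T_i$'s and must act on tuples coordinatewise.

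For the $p$-rigidity statement I would first show that every $\pi\in\Aut(\cH)$ lifts to $\pi'\in\Aut(b(\cH))$ by acting as $\pi$ on $H$ and as $(t_1,\dots,t_k)\mapsto(\pi(t_1),\dots,\pi(t_k))$ on each $T_i$; it is immediate that $\pi$ and $\pi'$ have the same order. Conversely, given $\sigma\in\Aut(b(\cH))$, I would use the fact that the binary linking relations uniquely identify the sort of each element to show that $\sigma$ fixes $H$ setwise, and then use the projection relations to show that $\sigma|_H$ determines the action on every $T_i$; hence $\sigma\mapsto\sigma|_H$ is a group isomorphism $\Aut(b(\cH))\to\Aut(\cH)$ preserving orders, yielding the $p$-rigidity equivalence.

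For the Mal'tsev/polymorphism part I would carry out the analogous lift-restrict argument: a Mal'tsev operation $m$ on $H$ extends to $b(\cH)$ by acting componentwise on tuples in each $T_i$ (and the componentwise action preserves the binary linking relations by construction), while any Mal'tsev polymorphism of $b(\cH)$ restricts to $H$ and, because the linking relations force its action on the $T_i$'s to be the coordinatewise extension, the restriction preserves each $\rel_i$ and thus is a Mal'tsev polymorphism of $\cH$. Strong rectangularity and $p$-strong rectangularity I would handle by translating, via the binary links, the ``generalized rectangularity'' conditions on a $k$-ary relation $\rel_i$ of $\cH$ into rectangularity conditions on the corresponding set of tuples in $T_i$ definable in $b(\cH)$, and reading the equivalences off.

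I expect the main obstacle to be the backward direction for the polymorphism properties: one must check that an arbitrary polymorphism of $b(\cH)$ genuinely is ``coordinatewise'' on the tuple sorts, rather than merely respecting the sort partition. The delicate point is to use the binary linking relations plus (possibly several) auxiliary relations introduced by the binarization to pin the polymorphism down on each $T_i$; once this rigidity of polymorphisms on the new sorts is established, the equivalences of strong rectangularity and $p$-strong rectangularity follow routinely, since each failure witness on one side converts into a failure witness on the other through the same lift-and-restrict dictionary.
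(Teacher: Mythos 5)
Your lift-and-restrict strategy for automorphisms and Mal'tsev polymorphisms is essentially the paper's proof: an automorphism (polymorphism) of $\cH$ acts on each relation $\relo_i$ coordinatewise and hence is a unary (ternary) polymorphism of $b(\cH)$ that preserves the linking relations $\rel^{ij}_{st}$; conversely, since each $H_i$ is required to appear among the domains of $b(\cH)$, any automorphism/polymorphism $\sigma$ of $b(\cH)$ restricts to the $H$-sorts, and preservation of the linking relations $\rel^{ij}_{s1}$ forces $\sigma_i(\ba)[s]=\sigma_j(\ba[s])$, so the action on tuple-sorts is exactly the coordinatewise extension. One small correction: you do not need to \emph{prove} that $\sigma$ respects the sort partition --- $b(\cH)$ is a multi-sorted structure and automorphisms of multi-sorted structures are by definition families of per-sort bijections, so this is automatic. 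You also do not mention that ``strongly rectangular iff Mal'tsev'' is a theorem (Hagemann--Mitschke), which instantly collapses the strong rectangularity claim into the Mal'tsev one; the paper implicitly uses this.

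The genuine gap is in the $p$-strong rectangularity part. You propose to ``translate, via the binary links, the generalized rectangularity conditions on a $k$-ary relation $\rel_i$ of $\cH$ into rectangularity conditions on the corresponding set of tuples in $T_i$.'' But strong $p$-rectangularity is a condition on \emph{every} relation in the $p$-modular clone $\SP{\cH}_p$, not just the base relations $\rel_i$; and because 2-mpp-definitions need not preserve polymorphisms (Example~\ref{exa:bad-one}), you cannot deduce $p$-strong rectangularity of arbitrary $p$-mpp-definable relations from a Mal'tsev-style argument. What is actually needed --- and what the paper does --- is to fix an arbitrary non-rectangular relation $\relo'\in\SP{b(\cH)}_p$ presented by a $p$-mpp-formula, view that formula as an instance of $\NCSP(b(\cH))$, and push it through the variable-identification translation used to prove parsimonious interreducibility of $\NCSP(\cH)$ and $\NCSP(b(\cH))$; the bijection $\zeta$ between solution sets then carries the failure witness of rectangularity (the four tuples with the right modular extension counts) over to a $p$-mpp-definable relation in $\cH$. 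Without this instance-level bookkeeping --- in particular without controlling how the existential block $\exists^{\equiv p}$ interacts with the variable identification --- the ``reading off'' you propose does not go through for non-atomic relations.
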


In Section~\ref{sec:automorphisms} we explore what implications of a structural results about $\Aut(\cH^n)$ can be. We show that assuming certain such a result has similar consequences as those in the case of graphs, they may be sufficient to prove the hardness of $\#_p\CSP(\cH)$.

We provide a more detailed overview of the main results in Section~\ref{sec:overview}.

%%%%%%%%%%%%%%%%%%%%%%%%%%%%%%%%%%%%%%%
%%%%%%%%%%%%%%%%%%%%%%%%%%%%%%%%%%%%%%%
\section{Preliminaries}\label{sec:preliminaries}

Let $[n]$ denote the set $\{1, 2, \ldots, n\}$. Let $H^n$ be the Cartesian product of the set $H$ with itself $n$ times and $H_1\tms H_k$ the Cartesian product of sets $\vc Hk$. We denote the members of $H^n$ and $H_1\tms H_k$ using bold font, $\mathbf{a} \in H^n$, $\ba\in H_1\tms H_k$. The $i$-th element of $\mathbf{a}$ is denoted by $\mathbf{a}[i]$. 

For $I=\{\vc ik\}\sse[n]$ we use $\pr_I\ba$ to denote $(\ba[i_1]\zd \ba[i_k])$; and for $\rel\sse H_1\tms H_k$ we use $\pr_I\rel$ to denote $\{\pr_I\ba\mid\ba\in\rel\}$. 
For $\ba\in H^s$ by $\ext_\rel(\ba)$ we denote the number of assignments $\bb\in H^{n-s}$ such that $(\ba,\bb)\in\rel$. (Note that the order of elements in $\ba$ and $\bb$ and $\rel$ might differ, hence we slightly abuse the notation here.) We denote the number of assignments mod $p$ by $\extp_\rel(\ba)$.
Moreover, $\prp_{I}\rel$ denotes the set $\{\pr_{I} \ba\mid\ba \in \rel \text{ and } \extp_{\rel}(\pr_{I} \ba) \neq 0  \}$. Often, we treat relations $\rel\sse H_1\tms H_k$ as predicates $\rel: H_1\tms H_k\to \{0,1\}$.

%%%%%%%%%%%%%%%%%%%%%%%%%%%%%%%%%%%
\subsection{Multi-Sorted sets and relational structures.}\label{sec:multi-sorted}

We begin with introducing \emph{multi-sorted} or \emph{typed} sets. Let $H =\colect{H_i}{i\in [k]}= \{ H_1\zd H_k \}$ be a collection of sets. We will assume that the sets $\vc Hk$ are disjoint. 
The cardinality of a multi-sorted set $H$ equals $|H| = \sum_{i\in[k]} |H_i|$.
A mapping $\vf$ between two multi-sorted sets $G= \colect{G_i}{i\in[k]}$ and $H= \colect{H_i}{i\in[k]}$ is defined as a collection of mappings $\ovarrow\vf=\colect{\vf_i}{i\in[k]}$, where $\vf_i:G_i\to H_i$, that is, $\vf_i$ maps elements of the sort $i$ in $G$ to elements of the sort $i$ in $H$. 
A mapping $\ovarrow\vf = \colect{\vf_i}{i\in[k]}$ from $\colect{G_i}{i\in[k]}$ to $\colect{H_i}{i\in [k]}$ is injective (bijective), if for all $i\in [k]$, $\vf_i$ is injective (bijective).

A \emph{multi-sorted relational signature} $\sg$ over a set of types $\{1\zd k\}$ is a collection of \emph{relational symbols}, a symbol $\rel\in\sg$ is assigned a positive integer $\ell_\rel$, the \emph{arity} of the symbol, and a tuple $(\vc i{\ell_\rel})$, the \emph{type} of the symbol. A \emph{multi-sorted relational structure} $\ovarrow\cH$ with signature $\sg$ is a multi-sorted set $\{H_i\}_{i \in [k]}$ and an \emph{interpretation} $\rel^{\ovarrow\cH}$ of each $\rel\in\sg$, where $\rel^{\ovarrow\cH}$ is a relation or a predicate on $H_{i_1} \times ... \times H_{i_{\ell_\rel}}$. 
The multi-sorted structure $\ovarrow\cH$ is finite if $H$ and $\sg$ are finite. All structures in this paper are finite. The set $H$ is said to be the \emph{base set} or the \emph{universe} of $\ovarrow\cH$. For the base set we will use the same letter as for the structure, only in the regular font. 
Multi-sorted structures with the same signature and type are called \emph{similar}.

Let $\ovarrow\cG,\ovarrow\cH$ be similar multi-sorted structures with signature $\sg$. A \emph{homomorphism} $\ovarrow\vf$ from $\ovarrow\cG$ to $\ovarrow\cH$ is a collection of mappings $\vf_i:G_i\to H_i$, $i\in[k]$, from $G$ to $H$ such that for any $\rel\in\sg$ with type $(\vc i{\ell_\rel})$, if $\rel^{\ovarrow\cG}(\vc a{\ell_\rel})$ is true for $(\vc a{\ell_\rel})\in G_{i_1} \times ... \times G_{i_{\ell_\rel}}$, then $\rel^{\ovarrow\cH}(\vf_{i_1}(a_1)\zd\vf_{i_{\ell_\rel}}(a_{\ell_\rel}))$ is true as well. 
The set of all homomorphisms from $\ovarrow\cG$ to $\ovarrow\cH$ is denoted $\Hom(\ovarrow\cG,\ovarrow\cH)$. The cardinality of $\Hom(\ovarrow\cG,\ovarrow\cH)$ is denoted by $\hom(\ovarrow\cG,\ovarrow\cH)$. For a multi-sorted structure $\ovarrow\cH$, the corresponding \emph{counting CSP}, $\#_p\CSP(\ovarrow\cH)$, is the problem of computing $\hom(\ovarrow\cG,\ovarrow\cH)$ for a given structure $\ovarrow\cG$. A homomorphism $\ovarrow\vf$ is an \emph{isomorphism} if it is bijective and the inverse mapping $\ovarrow\vf^{-1}$ is a homomorphism from $\ovarrow\cH$ to $\ovarrow\cG$. If $\ovarrow\cH$ and $\ovarrow\cG$ are isomorphic, we write $\ovarrow\cH\cong\ovarrow\cG$. A homomorphism of a structure to itself is called an \emph{endomorphism}, and an isomorphism to itself is called an \emph{automorphism}. As is easily seen, automorphisms of a structure $\ovarrow\cH$ form a group denoted $\Aut(\ovarrow\cH)$.

The \emph{direct product} of multi-sorted $\sg$-structures $\ovarrow\cH,\ovarrow\cG$, denoted $\ovarrow\cH\tm\ovarrow\cG$ is the multi-sorted $\sg$-structure with the base set $H\tm G=\colect{H_i \times G_i}{i\in[k]}$, the interpretation of $\rel\in\sg$ is given by $\rel^{\ovarrow\cH\tm\ovarrow\cG}((a_1,b_1)\zd(a_k,b_k))=1$ if and only if $\rel^{\ovarrow\cH}(\vc ak)=1$ and $\rel^{\ovarrow\cG}(\vc bk)=1$. By $\ovarrow\cH^\ell$ we will denote the \emph{$\ell$th power} of $\ovarrow\cH$, that is, the direct product of $\ell$ copies of $\ovarrow\cH$. 

Let $\ovarrow\cH=(\colect{H_i}{i\in[k]};\vc\rel m)$ be a multi-sorted relational structure and $\ovarrow\pi$ an automorphism of $\ovarrow\cH$. 
For $\relo\sse H_{i_1}\tm\dots\tm H_{i_\ell}$, we use $\ovarrow\pi(\relo)$ to denote the set $\{\ovarrow\pi(\ba)\mid \ba\in\relo\}$, where $\ovarrow\pi(\ba)=(\pi_{i_1}(\ba[1])\zd\pi_{i_\ell}(\ba[\ell]))$.
For a prime number $p$ we say that $\pi$ has order $p$ if $\pi$ is not the identity in $\Aut(\cH)$ and has order $p$ in $\Aut(\cH)$. In other words, each of the $\pi_j$'s is either the identity mapping or has order $p$, and at least one of the $\pi_j$'s is not the identity mapping. Structure $\ovarrow\cH$ is said to be \emph{$p$-rigid} if it has no automorphism of order $p$. Similar to regular relational structures we can introduce reductions of multi-sorted structures by their automorphisms of order $p$. 

A \emph{substructure} $\ovarrow\cH'$ of $\ovarrow\cH$ \emph{induced} by a collection of sets $\{H'_i\}_{i\in[k]}$, where $H'_i\sse H_i$ is the relational structure given by $(\colect{H'_i}{i\in[k]};\vc{\rel'}m)$, where 
%% if $\rel_j\sse H_{i_1}\tm\dots\tm H_{i_\ell}$, then 
$\rel'_j=\rel_j\cap(H'_{i_1}\tm\dots\tm H'_{i_\ell})$ and $(\vc i\ell)$ is the type of $\rel_j$. 
By $\Fix(\ovarrow\pi)$ we denote the collection $\{\Fix(\pi_i) \}_{j\in[k]}$ of sets of fixed points of the $\pi_i$'s. Let $\ovarrow\cH^{\ovarrow\pi}$ denote the substructure of $\ovarrow\cH$ induced by $\Fix(\ovarrow\pi)$. We write $\ovarrow\cH\rightarrow_p \ovarrow\cH'$ if there is $\ovarrow\pi\in\Aut(\ovarrow\cH)$ of order $p$ such that $\ovarrow\cH'$ is isomorphic to $\ovarrow\cH^{\ovarrow\pi}$. We also write $\ovarrow\cH\rightarrow^*_p \ovarrow\cH'$ if there are structures $\vc {\ovarrow\cH} {t}$ such that $\ovarrow\cH$ is isomorphic to $\ovarrow\cH_1$, $\ovarrow\cH'$ is isomorphic to $\ovarrow\cH_t$, and $\ovarrow\cH_1\rightarrow_p \ovarrow\cH_2\rightarrow_p\dots\rightarrow_p \ovarrow\cH_t$. Let also $\ovarrow\cH^{*p}$ be a $p$-rigid structure such that $\ovarrow\cH\rightarrow^*_p \ovarrow\cH^{*p}$. 

\begin{proposition}\label{pro:mult-uniqueness-main}
Let $\ovarrow\cH$ be a multi-sorted  structure and $p$ a prime. Then up to an isomorphism there exists a unique $p$-rigid multi-sorted  structure $\ovarrow\cH^{*p}$ such that $\ovarrow\cH\rightarrow_p^* \ovarrow\cH^{*p}$.
\end{proposition}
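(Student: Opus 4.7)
The plan is to mirror the proof of Lemma~\ref{lem:aut-reduction}(a) and carry it through sort-by-sort. \emph{Existence} is immediate: any order-$p$ automorphism $\ovarrow\pi$ of $\ovarrow\cH$ has a non-identity component $\pi_i$, which therefore has an orbit of size $p$ on $H_i$. Hence the substructure $\ovarrow\cH^{\ovarrow\pi}$ has strictly smaller total universe size than $\ovarrow\cH$, so iterating $\rightarrow_p$ must terminate at a $p$-rigid multi-sorted structure.

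For \emph{uniqueness}, suppose $\ovarrow\cH_1$ and $\ovarrow\cH_2$ are two $p$-rigid structures with $\ovarrow\cH\rightarrow_p^*\ovarrow\cH_j$ for $j=1,2$. I would first establish the multi-sorted analogue of the Faben--Jerrum congruence: for any $\ovarrow\pi\in\Aut(\ovarrow\cH)$ of order $p$ and any $\ovarrow\cG$ similar to $\ovarrow\cH$,
\[
\hom(\ovarrow\cG,\ovarrow\cH)\equiv\hom(\ovarrow\cG,\ovarrow\cH^{\ovarrow\pi})\pmod p.
\]
This follows by the standard orbit argument: $\ovarrow\pi$ acts on $\Hom(\ovarrow\cG,\ovarrow\cH)$ by sort-wise post-composition, $(\ovarrow\pi\cdot\ovarrow\vf)_i=\pi_i\circ\vf_i$, which is still a homomorphism since $\ovarrow\pi$ preserves every $\rel\in\sg$; each orbit has size $1$ or $p$; the size-$1$ orbits are exactly the homomorphisms with $\vf_i(G_i)\subseteq\Fix(\pi_i)$ for all $i$, i.e.\ those that factor through $\ovarrow\cH^{\ovarrow\pi}$; and the size-$p$ orbits contribute zero modulo $p$. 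Iterating along the two reduction sequences yields $\hom(\ovarrow\cG,\ovarrow\cH_1)\equiv\hom(\ovarrow\cG,\ovarrow\cH_2)\pmod p$ for every $\ovarrow\cG$.

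To upgrade this congruence to an isomorphism, I would pass to \emph{injective} homomorphism counts. Every $\ovarrow\vf:\ovarrow\cG\to\ovarrow\cH$ determines a tuple of equivalence relations $\sim_i$ on $G_i$ via $x\sim_i y\iff \vf_i(x)=\vf_i(y)$, and factors injectively through the corresponding multi-sorted quotient $\ovarrow\cG/\ovarrow\sim$, giving
\[
\hom(\ovarrow\cG,\ovarrow\cH)=\sum_{\ovarrow\sim}\inj(\ovarrow\cG/\ovarrow\sim,\ovarrow\cH).
\]
M\"obius inversion over this sort-wise partition lattice has integer coefficients, so the congruence transfers to $\inj$: $\inj(\ovarrow\cG,\ovarrow\cH_1)\equiv\inj(\ovarrow\cG,\ovarrow\cH_2)\pmod p$ for every $\ovarrow\cG$. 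Setting $\ovarrow\cG=\ovarrow\cH_1$ yields $|\Aut(\ovarrow\cH_1)|=\inj(\ovarrow\cH_1,\ovarrow\cH_1)\equiv\inj(\ovarrow\cH_1,\ovarrow\cH_2)\pmod p$. Since $\ovarrow\cH_1$ is $p$-rigid, $\Aut(\ovarrow\cH_1)$ has no element of order $p$, so by Cauchy's theorem $p\nmid|\Aut(\ovarrow\cH_1)|$. Hence $\inj(\ovarrow\cH_1,\ovarrow\cH_2)\not\equiv 0\pmod p$, producing a sort-wise injective homomorphism $\ovarrow\cH_1\hookrightarrow\ovarrow\cH_2$; a symmetric argument gives one in the reverse direction, so each sort of $\ovarrow\cH_1$ and $\ovarrow\cH_2$ has the same cardinality. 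Applying the same congruence to ``single $\rel$-tuple'' test structures for each $\rel\in\sg$ forces $|\rel^{\ovarrow\cH_1}|=|\rel^{\ovarrow\cH_2}|$, which upgrades either two-sided bijection into an isomorphism.

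\textbf{The main difficulty} is this last step: promoting a two-sided sort-wise injection into an isomorphism in the multi-sorted setting. The M\"obius inversion and the orbit counting both have to be executed sort-by-sort, while relations have mixed types so collapsing elements in one sort can affect tuples of relations touching other sorts; the bookkeeping is routine but needs consistent choices throughout. All other ingredients are direct transcriptions of the single-sorted arguments cited in Lemma~\ref{lem:aut-reduction}.
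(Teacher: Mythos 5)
Your overall strategy is the same as the paper's: derive the multi-sorted Faben--Jerrum congruence, transfer it from $\hom$ to $\inj$ counts over the sort-wise partition lattice, and then use $p$-rigidity together with $\inj(\ovarrow\cH_1,\ovarrow\cH_1)=|\Aut(\ovarrow\cH_1)|\not\equiv 0\pmod p$ (via Cauchy) to extract injective homomorphisms in both directions. Your M\"obius inversion is exactly the closed form of the paper's strong induction on $|K|$ in the multi-sorted Lov\'asz-type lemma, and dropping the distinguished-vertex machinery is fine here since the paper only invokes that lemma with $r=0$.

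The one place you go wrong is the step you yourself flag as the main difficulty, and the irony is that it is not a difficulty at all. Applying the homomorphism congruence to single-$\rel$-tuple test structures only yields $|\rel^{\ovarrow\cH_1}|\equiv|\rel^{\ovarrow\cH_2}|\pmod p$, not equality, so that argument does not close anything. Fortunately no such argument is needed: for finite (multi-sorted) structures, a pair of sort-wise injective homomorphisms $\ovarrow\vf:\ovarrow\cH_1\to\ovarrow\cH_2$ and $\ovarrow\psi:\ovarrow\cH_2\to\ovarrow\cH_1$ already forces an isomorphism. Indeed $\ovarrow\psi\circ\ovarrow\vf$ is a sort-wise bijective endomorphism of the finite structure $\ovarrow\cH_1$; being an element of a finite symmetric group (in each sort) it has finite order, so its inverse is a positive power of it and hence a homomorphism, i.e.\ $\ovarrow\psi\circ\ovarrow\vf\in\Aut(\ovarrow\cH_1)$. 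Since $\ovarrow\vf$ is a sort-wise bijection, $\ovarrow\vf^{-1}=(\ovarrow\psi\circ\ovarrow\vf)^{-1}\circ\ovarrow\psi$ is a composition of homomorphisms, hence a homomorphism, so $\ovarrow\vf$ is an isomorphism. This is precisely what the paper relies on when it ends the Lov\'asz lemma's proof with ``thus the two structures are isomorphic'' after producing the two injections; your proof is correct once this last paragraph is replaced by that standard argument.
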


The proof of Proposition~\ref{pro:mult-uniqueness-main} follows the same lines as the analogous statement in the single-sorted case, and is moved to Appendix~\ref{sec:appendix-new}.

The following statements are well-known for single-sorted structures, see \cite{ref:CountingMod2Ini,DBLP:conf/stoc/BulatovK22}. 

\begin{lemma}\label{lem:aut-reduction-multi-sorted-structures-prelim} 
Let $\ovarrow\cH$ be a multi-sorted relational structure. If $\ovarrow\pi$ is a $p$-automorphism of $\ovarrow\cH$, then for any $\ovarrow\cG$,
\[
\hom(\ovarrow\cG,\ovarrow\cH)\equiv \hom(\ovarrow\cG, \ovarrow\cH^{\ovarrow\pi}) \pmod{p}.
\]
\end{lemma}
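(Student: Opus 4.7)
The plan is to use the standard group-action argument adapted to the multi-sorted setting. Let $\ovarrow\pi = \{\pi_i\}_{i\in[k]}$ be a $p$-automorphism of $\ovarrow\cH$, so $\ovarrow\pi$ generates a cyclic subgroup of $\Aut(\ovarrow\cH)$ of order $p$. First I would define an action of this cyclic subgroup on $\Hom(\ovarrow\cG,\ovarrow\cH)$ by componentwise composition: given $\ovarrow\vf=\{\vf_i\}_{i\in[k]}\in\Hom(\ovarrow\cG,\ovarrow\cH)$, set $(\ovarrow\pi\circ\ovarrow\vf)_i = \pi_i\circ\vf_i$ for each sort $i\in[k]$. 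A short check shows that $\ovarrow\pi\circ\ovarrow\vf$ is again a homomorphism from $\ovarrow\cG$ to $\ovarrow\cH$: for any $\rel\in\sg$ of type $(i_1,\dots,i_{\ell_\rel})$ and any tuple $(a_1,\dots,a_{\ell_\rel})\in\rel^{\ovarrow\cG}$, we have $(\vf_{i_1}(a_1),\dots,\vf_{i_{\ell_\rel}}(a_{\ell_\rel}))\in\rel^{\ovarrow\cH}$ since $\ovarrow\vf$ is a homomorphism, and applying $\ovarrow\pi$ preserves $\rel^{\ovarrow\cH}$ because $\ovarrow\pi\in\Aut(\ovarrow\cH)$.

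Next I would analyse the orbits of this action. Since $|\langle\ovarrow\pi\rangle|=p$ is prime, every orbit has size either $1$ or $p$. An orbit has size $1$ precisely when $\ovarrow\pi\circ\ovarrow\vf=\ovarrow\vf$, i.e.\ $\pi_i(\vf_i(a))=\vf_i(a)$ for every $i\in[k]$ and every $a\in G_i$. Equivalently, the image of each $\vf_i$ lies in $\Fix(\pi_i)$, so $\ovarrow\vf$ is precisely a homomorphism from $\ovarrow\cG$ to the induced substructure $\ovarrow\cH^{\ovarrow\pi}$ (viewed inside $\ovarrow\cH$). This gives a natural bijection between the set of fixed points of the action and $\Hom(\ovarrow\cG,\ovarrow\cH^{\ovarrow\pi})$.

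Finally, counting $\Hom(\ovarrow\cG,\ovarrow\cH)$ by orbits yields
\[
\hom(\ovarrow\cG,\ovarrow\cH) \;=\; \hom(\ovarrow\cG,\ovarrow\cH^{\ovarrow\pi}) + p\cdot N,
\]
where $N$ is the number of orbits of size $p$. Reducing modulo $p$ proves the lemma.

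This argument is a direct transcription of the single-sorted case, so there is no real obstacle; the only point that requires even a brief check is that the componentwise composition $\ovarrow\pi\circ\ovarrow\vf$ respects the typing of each relational symbol, which is immediate from the fact that $\pi_i$ maps $H_i$ to $H_i$ and $\vf_i$ maps $G_i$ to $H_i$ for every sort $i$.
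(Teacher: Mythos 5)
Your argument is correct and coincides with the paper's proof: both use the orbit-counting argument for the cyclic group $\langle\ovarrow\pi\rangle$ of order $p$ acting on $\Hom(\ovarrow\cG,\ovarrow\cH)$ by componentwise post-composition, with the fixed points identified as exactly $\Hom(\ovarrow\cG,\ovarrow\cH^{\ovarrow\pi})$. Your write-up is, if anything, a bit more carefully phrased than the paper's.
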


\begin{proof}
Let $H = \colect{H}{i\in[k]}$ and $H^{\ovarrow\pi} = \colect{H_i^{\pi_i}}{i\in[k]}$ denote the universes of $\ovarrow\cH$ and $\ovarrow\cH^{\ovarrow\pi}$ respectively. For a similar multi-sorted structure $\cG$ with universe $G$, we show that the number of homomoprhisms which use at least one element of $H- H^{\ovarrow\pi}$ is $0\pmod p$. 

Given any homomorphism $\FUNC{\ovarrow\vf}{\ovarrow\cG}{\ovarrow\cH}$, where $\vf = \{\vf_i\}_{i\in [k]}$, consider the  homomorphism  $\ovarrow{\pi \circ \vf} = \{\pi_i \circ \vf_i\}_{i \in [k]}$. This is still a homomorphism  which is different from $\ovarrow\vf$ as there is some element $v \in G_i$ such that $\vf_i(v) \in H_i- H_i^{\pi_i}$ for all $i \in [k]$, and so $\pi_i(\vf_i(v)) \neq \vf_i(v)$. On the other hand $\pi_i^p \circ \vf_i$ is just $\vf_i$, as $\pi_i$ is a $p$-automorphism. So $\ovarrow\pi$ acts as a permutation of order $p$ on the set $\Hom(\ovarrow\cG, \ovarrow\cH)$. Moreover, the orbit of $\ovarrow\pi$ containing a homomorphism that has at least one element from $H_i- H_i^\pi$ in its range has size 0 modulo $p$. 
\end{proof}

The following statement is a straightforward implication of Lemma~\ref{lem:aut-reduction-multi-sorted-structures-prelim}.

\begin{proposition}\label{pro:mult-aut-reduction}
Let $\ovarrow\cH$ be a multi-sorted relational structure and $p$ a prime. Then for any relational structure $\cG$ it holds that
\[
\hom(\cG,\cH)\equiv\hom(\cG,\cH^{*p})\pmod p.
\]
\end{proposition}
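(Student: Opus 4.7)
The plan is to obtain Proposition~\ref{pro:mult-aut-reduction} as a direct corollary of Lemma~\ref{lem:aut-reduction-multi-sorted-structures-prelim} by iterating it along the chain of order-$p$ automorphism reductions guaranteed by Proposition~\ref{pro:mult-uniqueness-main}. No new construction is needed; the argument is essentially a transitivity argument on the relation $\rightarrow_p$.

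First, I would invoke Proposition~\ref{pro:mult-uniqueness-main} to fix, up to isomorphism, the $p$-rigid structure $\ovarrow\cH^{*p}$ together with a sequence of structures $\ovarrow\cH = \ovarrow\cH_1, \ovarrow\cH_2, \ldots, \ovarrow\cH_t = \ovarrow\cH^{*p}$ witnessing $\ovarrow\cH \rightarrow_p^* \ovarrow\cH^{*p}$. By definition of $\rightarrow_p$, for each $i \in [t-1]$ there exists an order-$p$ automorphism $\ovarrow\pi_i \in \Aut(\ovarrow\cH_i)$ such that $\ovarrow\cH_{i+1} \cong \ovarrow\cH_i^{\ovarrow\pi_i}$.

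Next, for any similar multi-sorted structure $\ovarrow\cG$, Lemma~\ref{lem:aut-reduction-multi-sorted-structures-prelim} applied to $\ovarrow\pi_i$ yields
\[
\hom(\ovarrow\cG, \ovarrow\cH_i) \equiv \hom(\ovarrow\cG, \ovarrow\cH_i^{\ovarrow\pi_i}) \pmod{p}.
\]
Since isomorphic structures admit the same number of homomorphisms from $\ovarrow\cG$, the right-hand side equals $\hom(\ovarrow\cG, \ovarrow\cH_{i+1})$. Chaining these congruences over $i = 1, \ldots, t-1$ gives $\hom(\ovarrow\cG, \ovarrow\cH) \equiv \hom(\ovarrow\cG, \ovarrow\cH^{*p}) \pmod{p}$, which is the claim. (Note that the proposition as stated writes $\cG$ and $\cH$ in the regular font; I would read this as shorthand for the multi-sorted $\ovarrow\cG$ and $\ovarrow\cH$, matching the setting of the preceding lemma.)

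There is essentially no obstacle: the only subtlety is ensuring that Lemma~\ref{lem:aut-reduction-multi-sorted-structures-prelim} is indeed applicable at each intermediate step, i.e.\ that $\ovarrow\cH_i$ is a multi-sorted relational structure similar to $\ovarrow\cG$ and that $\ovarrow\pi_i$ is a genuine order-$p$ automorphism of $\ovarrow\cH_i$. The first is immediate because taking the substructure induced by $\Fix(\ovarrow\pi)$ preserves the signature and type, and the second is built into the definition of the reduction $\rightarrow_p$. Thus the entire proof reduces to one sentence invoking Proposition~\ref{pro:mult-uniqueness-main} and a finite induction on the length $t$ of the reduction chain using Lemma~\ref{lem:aut-reduction-multi-sorted-structures-prelim}.
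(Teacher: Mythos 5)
Your proof is correct and matches what the paper intends: the authors simply state that Proposition~\ref{pro:mult-aut-reduction} is ``a straightforward implication of Lemma~\ref{lem:aut-reduction-multi-sorted-structures-prelim},'' and your chaining of that lemma along the reduction sequence $\ovarrow\cH_1 \rightarrow_p \cdots \rightarrow_p \ovarrow\cH_t = \ovarrow\cH^{*p}$ (together with invariance of $\hom$ under isomorphism) is exactly the intended argument.
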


We complete this section with a definition of polymorphisms. Let $\rel\sse H^n$ be a relation over a set $H$. A $k$-ary \emph{polymorphism} of $\rel$ is a mapping $f:H^k\to H$ such that for any choice of $\ba_1\zd\ba_k\in \rel$, it holds that $f(\ba_1\zd\ba_k) \in \rel$ (computed component-wise). The mapping $f$ is a polymorphism of a (single-sorted) relational structure $\cH=(H,\vc\rel m)$ if it is a polymorphism of each relation $\vc\rel m$. In the multi-sorted case the definitions are a bit more complicated. Let $\rel\sse H_1\tms H_n$ be a multi-sorted relation. Instead of a single mapping we consider a family of mappings $f=\{f_i\}_{i\in[n]}$, $f_i:H^k_i\to H_i$. The family $f$ is said to be a polymorphism of $\rel$ if it satisfies the same condition: for any choice of $\ba_1\zd\ba_k\in \rel$, it holds that $f(\ba_1\zd\ba_k) \in \rel$, only in this case the mapping $f_i$ is applied in the $i$th coordinate position, $i\in[n]$.  A polymorphism of a multi-sorted structure $\ovarrow\cH=(\{H_i\}_{i\in[q]};\vc\relo m)$ is again a family $f=\{f_i\}_{i\in q}:H^k_i\to H_i$ such that for each $j\in[m]$, $f$ (or rather its appropriate subfamily) is a polymorphism of $\relo_j$. For a complete introduction into polymorphisms the reader is referred to \cite{ref:POlymorphismAndUsethem_barto2017polymorphisms}.

The following special type of polymorphisms often occurs in the study of counting CSPs. For a set $H$ a mapping $\FUNC{\vf}{H^3}{H}$ is said to be a \emph{Mal'tsev} operation if for any $a,b\in H$ it satisfies the conditions $f(a,a,b)=f(b,a,a)=b$. In the multi-sorted case a family $f=\{f_i\}_{i\in[n]}$ is Mal'tsev, if every $f_i$ is Mal'tsev. A Mal'tsev operation (or a family of operations) that is a polymorphism of a relation $\rel$ or a relational structure $\cH$ is said to be a Mal'tsev polymorphism of $\rel$ or $\cH$. One of the useful properties of Mal'tsev polymorphisms is that it enforces the rectangularity of relations. A binary relation $\rel\sse H^2$ is rectangular if whenever $((a,c),(a,d),(b,c)\in\rel$ it also holds that $(b,d)\in\rel$. If $\rel$ has a Mal'tsev polymorphism $f$, it is rectangular. Indeed,
\[
f\left(\begin{array}{ccc} a&a&b\\ d&c&c \end{array}\right)=\left(\begin{array}{c} f(a,a,b)\\f(d,c,c) \end{array}\right) = \left(\begin{array}{c} b\\  d \end{array}\right)\in\rel.
\]
See Section~\ref{sec:rectangularity-main} for further details.

%%%%%%%%%%%%%%%%%%%%%%
\subsection{The two views on the CSP}\label{sec:two-views}

In the Introduction, we defined the CSP as the problem of deciding the existence of a homomorphism between two relational structures. From a technical perspective, however, a more traditional approach is often more useful. 

Let $H$ be a (finite) set and $\Gm$ a set of relations on $H$. Such a set of relations is called a \emph{constraint language}, and the set $H$ is often referred to as the \emph{domain}. 

The Constraint Satisfaction Problem $\CSP(\Gm)$ is the combinatorial problem with:\\[3mm]
\textbf{\emph{Instance:}} a pair $\cP=(V,\cC)$ where $V$ is a finite set of variables and $\cC$ is a finite set of \emph{constraints}. Each constraint $C \in \cC$ is a pair $\ang{\bs,\rel}$ where 
\begin{itemize}
    \item $\bs = (v_1, v_2,... , v_m )$ is a tuple of variables from $V$ of length $m$, called the \emph{constraint scope};
    \item $\rel\in\Gm$ is an $m$-ary relation, called the \emph{constraint relation}.
\end{itemize}
\textbf{\emph{Objective:}} Decide whether there is a solution of $\cP$, that is, a mapping $\vf:V\to H$ such that for each constraint $\ang{\bs,\rel}\in \cC$ with $\bs = (\vc vm)$ the tuple $(\vf(v_1)\zd\vf(v_m))$ belongs to~$\rel$. 

\smallskip

In the (modular) counting version of $\CSP(\Gm)$ denoted $\NCSP(\Gm)$ ($\NpCSP(\Gm)$) the objective is to find the number of solutions of instance $\cP$ (modulo $p$). We will refer to this definition as the \emph{standard} definition of the CSP.

To relate this definition to the homomorphism definition of the CSP, note that for a $\sg$-structure $\cH$ the collection of interpretations $\rel^\cH$, $\rel\in\sg$, is just a set of relations, that is a constraint language over $H$. Thus, for every relational structure $\cH$ there is an associated constraint language $\Gm_\cH$. Conversely, every (finite) constraint language $\Gm$ can be converted into a relational structure $\cH_\Gm$ such that $\Gm_{\cH_\Gm}=\Gm$ in a straightforward way, although in this case there is much room for the choice of a signature. A language $\Gm$ is said to be \emph{rigid} or \emph{$p$-rigid}, if $\cH_\Gm$ is rigid or $p$-rigid.

It is well known, see e.g.\ \cite{Feder98:computational} and \cite{ref:POlymorphismAndUsethem_barto2017polymorphisms} that the problems $\CSP(\cH)$ and $\CSP(\Gm_\cH)$ can be easily translated into each other. The same is true for $\NCSP(\cH)$ and $\NCSP(\Gm_\cH)$. The conversion procedure goes as follows. Let $\cG$ be an instance of $\CSP(\cH)$ and $\sg$ is the signature of $\cH$. Create an instance $\cP=(V,\cC)$ of $\CSP(\Gm_\cH)$ by setting $V=G$, the base set of $\cG$, and for every $\rel\in\sg$ and every $\bs\in\rel^\cG$, include the constraint $\ang{\bs,\rel^\cH}$ into $\cC$. The transformation of instances of $\CSP(\Gm)$ to an instance of $\CSP(\cH_\Gm)$ is the reverse of the procedure above. This transformation is clearly extended to counting CSPs, as well. In this paper we often use the standard definition of the CSP inside proofs assuming that an instance of $\NpCSP(\cH)$ is given by variables and constraints.

The definition above can be easily extended to multi-sorted relational structures and languages. If $\ovarrow\cH$ is a multi-sorted relational structure, $\CSP(\cH)$ ($\#\CSP(\cH)$) asks, given a similar structure $\cG$, whether there exists a homomorphism of multi-sorted structure $\cG$ to $\cH$ (asks to find the number of such homomorphisms). If $\Gm$ is a set of multi-sorted relations, every variable $v\in V$ for an instance $(V,\cC)$ of $\CSP(\Gamma)$ is assigned a type $\tau(v)$ and for every constraint $\ang{(\vc vk),\rel}$, the sequence $(\tau(v_1)\zd\tau(v_k))$ must match the type of $\rel$.

%%%%%%%%%%%%%%%%%%%%%%%%%%%%%%%%%%%%%%%%%%%%%%%%%%%%%%%%%%%%%%%%%
\subsection{Expansion of relational structures}\label{sec:pp-defs}

One of the standard techniques when studying constraint problems is to identify ways to expand the target relational structure or constraint language with additional relations without changing the complexity of the problem.

Let $\cH$ be a relational structure with signature $\sg$ and $\cH^=$ its expansion by adding a binary relational symbol $=$ interpreted as $=_H$, the equality relation on $H$. The following reduction is straightforward.

\begin{lemma}[\cite{DBLP:conf/stoc/BulatovK22}]\label{lem:adding-equality}
For any relational structure $\cH$ and any prime $p$, $\NpCSP(\cH^=)\le_T\NpCSP(\cH)$. Similarly, for a constraint language $\Gm$ on $H$ the problem $\NpCSP(\Gm\cup\{=_H\})$ is polynomial time reducible to $\NpCSP(\Gm)$.
\end{lemma}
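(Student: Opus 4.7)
The plan is to give a parsimonious (hence Turing) reduction that simply eliminates equality constraints by identifying the variables they force to be equal. I will work with the standard definition of the CSP from Section~\ref{sec:two-views}, so an instance of $\NpCSP(\cH^=)$ is a pair $\cP=(V,\cC)$ whose constraints use either relations from the signature of $\cH$ or the new equality symbol interpreted as $=_H$.

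First I would separate the constraints of $\cP$ into equality constraints $\cC_=$ and the rest $\cC_\cH$. Using $\cC_=$, I define the smallest equivalence relation $\sim$ on $V$ such that $x\sim y$ whenever $\ang{(x,y),=_H}\in\cC_=$; this can be computed in polynomial time by union-find. Let $V'=V/{\sim}$ and, for each $v\in V$, let $[v]$ denote its class. I then build an instance $\cP'=(V',\cC')$ of $\NpCSP(\cH)$ whose constraints are obtained from $\cC_\cH$ by replacing every variable $v$ with its class $[v]$: for each $\ang{(v_1\zd v_m),\rel}\in\cC_\cH$ put $\ang{([v_1]\zd[v_m]),\rel}$ into $\cC'$, and discard all equality constraints.

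Next I would exhibit a bijection between solutions. Any solution $\vf:V\to H$ of $\cP$ must satisfy $\vf(x)=\vf(y)$ whenever $x\sim y$, because equality is transitive, so the map $\vf'([v]):=\vf(v)$ is well-defined on $V'$; it clearly satisfies every constraint of $\cP'$. Conversely, given a solution $\vf':V'\to H$ of $\cP'$, the map $\vf(v):=\vf'([v])$ satisfies all constraints in $\cC_\cH$ by construction and all equality constraints in $\cC_=$ trivially. These maps are inverses of each other, so $\hom(\cP,\cH^=)=\hom(\cP',\cH)$, which gives a parsimonious reduction and therefore certainly a Turing reduction. The second statement of the lemma, concerning a constraint language $\Gm$ on $H$, is obtained by the same argument applied verbatim in the standard formulation, since the only relation being added is $=_H$.

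There is no real obstacle here; the only point that requires any care is ensuring that the equivalence closure is computed so that \emph{chains} of equality constraints (not just individual ones) collapse together, which is handled by taking the transitive closure. The reduction is evidently polynomial-time, preserves the solution count exactly, and extends without change to the multi-sorted setting of Section~\ref{sec:multi-sorted} provided the equality symbol is typed so that both of its arguments have the same sort, since then $\sim$ only identifies variables of the same type.
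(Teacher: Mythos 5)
Your proof is correct and matches the paper's approach in spirit: the paper iteratively removes each equality constraint and substitutes one endpoint for the other, which amounts to the same collapse by $\sim$-classes that you compute in one pass via union-find. Both yield the same parsimonious reduction, and you correctly identify that transitive chains of equalities must be collapsed, which the iterative substitution in the paper also handles.
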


A constant relation over a set $H$ is a unary relation $C_a=\{a\}$, $a\in H$. For a relational structure $\cH$ by $\cH^\const$ we denote the expansion of $\cH$ by all the constant relations $C_a$, $a\in H$. Theorem~\ref{the:ConstantCSP} was proved for exact counting in \cite{ref:BULATOV_TowardDichotomy}, for modular counting of graph homomorphisms in \cite{ref:CountingMod2Ini, ref:CountingMod2ToSquarefree, ref:CountingModPToTrees_gbel_et_al_LIPIcs}, and for general modular counting CSP in \cite{DBLP:conf/stoc/BulatovK22}. 

\begin{theorem}[\cite{DBLP:conf/stoc/BulatovK22}]\label{the:ConstantCSP}
Let $\cH$ be a $p$-rigid $\sg$-structure. Then $\NpCSP(\cH^\const)$ is polynomial time reducible to $\NpCSP(\cH)$. 

For a $p$-rigid constraint language $\Gm$ on a set $H$ the problem $\NpCSP(\Gm\cup\{C_a\mid a\in H\})$ is polynomial time reducible to $\NpCSP(\Gm)$.
\end{theorem}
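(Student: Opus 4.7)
The plan is to follow the template developed by Bulatov and Dalmau \cite{ref:BULATOV_TowardDichotomy} for exact counting, adapted to modular counting as in \cite{DBLP:conf/stoc/BulatovK22}: each constant relation $C_a$ is simulated by a gadget that uses only relations of $\cH$, and whose combinatorics are invertible modulo $p$. The crucial enabling observation is that $p$-rigidity of $\cH$ means $\Aut(\cH)$ contains no element of order $p$, so by Cauchy's theorem $p\nmid|\Aut(\cH)|$, and consequently every orbit of $\Aut(\cH)$ on any Cartesian power $H^n$ has cardinality coprime to $p$.

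First I would handle a single constant. Fix $a\in H$ and enumerate $H=\{b_1\zd b_{|H|}\}$. Given an input $\cG$ with one constraint $C_a$ on a variable $v$, one constructs from $\cG$ a family of ``witness'' instances over the signature of $\cH$: attach to $v$ an auxiliary gadget built from relations of $\cH$ that realises a chosen tuple $\mathbf{w}\in H^n$ as the image of a designated sub-structure. Summing over all homomorphisms of the resulting instance to $\cH$ yields a value of the form $\sum_i \lambda_i N_i\pmod p$, where $N_i=|\{\vf\in\Hom(\cG,\cH)\mid \vf(v)=b_i\}|$ and $\lambda_i$ is the size of the $\Aut(\cH)$-orbit of $\mathbf{w}$ as seen from the $i$-th image. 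By $p$-rigidity each $\lambda_i$ is coprime to $p$. Varying $\mathbf{w}$ gives a system of linear equations over $\mathbb{F}_p$ whose coefficient matrix is invertible, so a polynomial number of oracle calls to $\#_p\CSP(\cH)$ recovers each $N_i$ modulo $p$; in particular, the one with $b_i=a$ is the desired count.

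Second, I would handle several constants at once. A naive iteration fails because adding $C_a$ to $\cH$ may destroy $p$-rigidity, so the inductive step cannot be applied. Instead, the gadgets for all pinned variables of $\cG$ are attached simultaneously, and an extended linear system is set up over $\mathbb{F}_p$ whose unknowns are indexed by tuples of candidate images for those variables. Because the gadgets are independent and each contributes a factor coprime to $p$, the coefficient matrix factors as a tensor product of invertible matrices and therefore remains invertible modulo $p$, so a single polynomial-time Turing reduction to $\#_p\CSP(\cH)$ suffices. The second statement of the theorem follows from the first via the back-and-forth translation between structures and languages described in Section~\ref{sec:two-views}.

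The main obstacle is the simultaneous treatment of many constants: the single-constant gadget must be designed so that when several are attached to the same instance, their interactions factor cleanly and the matrix of oracle equations remains of polynomial size and invertible over $\mathbb{F}_p$. This is precisely where $p$-rigidity is used essentially, rather than merely once, and also why the statement is naturally unified with the modular clone result (Theorem~\ref{the:CloneTheorem-intro}): if every $C_a$ can be shown $p$-mpp-definable in $\cH$, the conclusion is immediate, and the linear-algebra construction above can equivalently be viewed as writing down such an explicit $p$-mpp-definition for each $a\in H$.
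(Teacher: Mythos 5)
The broad idea --- exploiting $p\nmid|\Aut(\cH)|$ to make a linear-algebra argument over $\mathbb{F}_p$ invertible --- is indeed the key, but your proposal never pins down the gadget, and the way you plan to treat several constants would fail. The paper's actual construction handles \emph{all} pinned variables at once with a single, constant-size device: it adds $|H|$ auxiliary variables $v_a$, one per element $a\in H$, imposes the conjunctive-definable relation $\relo$ (the set of endomorphisms of $\cH$, obtained from the indicator problem $\cI_1(\cH)$, Lemma~\ref{lem:indicator}) on the tuple $(v_a)_{a\in H}$, and replaces each constraint $C_a(x)$ in the input by the equality $x=v_a$. The Turing reduction then performs M\"obius inversion over the poset $\Part(H)$ --- whose size depends only on $\cH$, not on the input --- to isolate the count $N(\zz)$ of solutions in which $a\mapsto\vf(v_a)$ is injective, hence an automorphism; orbit counting gives $N(\zz)=|\Aut(\cH)|\cdot T$, and $p$-rigidity lets one divide by $|\Aut(\cH)|$ mod $p$.

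By contrast, your plan for several constants sets up a system ``whose unknowns are indexed by tuples of candidate images for those variables.'' If $\cG$ has $k$ pinned variables that is $|H|^k$ unknowns, and the right-hand side requires that many oracle calls --- exponential in the input, so the tensor-product factorization of the coefficient matrix does not rescue polynomial time. The step that is missing is precisely the one that keeps the auxiliary construction of constant size: only $|H|$ new variables, the endomorphism relation $\relo$, and inversion over $\Part(H)$. Your closing remark that ``the linear-algebra construction above can equivalently be viewed as writing down \ldots\ a $p$-mpp-definition for each $a\in H$'' is also not what the paper does and is suspect in general: the reduction relies on subtraction (M\"obius inversion) and on dividing by $|\Aut(\cH)|$, neither of which is realizable by a single $p$-mpp-formula, so the theorem is a genuine Turing reduction rather than an instance of Theorem~\ref{the:CloneTheorem-intro}.
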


Lemma~\ref{lem:adding-equality} and Theorem~\ref{the:ConstantCSP} can be generalized to the multi-sorted case. Let $\ovarrow\cH=\{\cH_i\}_{i\in[k]}$ be a multi-sorted structure with signature $\sg$ and $\ovarrow\cH^=$ its expansion by adding a family of binary relational symbols $=_{H_i}$ (one for each type) interpreted as the equality relation on $H_i$, $i\in[k]$. 

A constant relation over a set $\{H_i\}_{i\in[k]}$ is a unary relation $C_a=\{a\}$, $a\in H_i,i\in[k]$ (such a predicate can only be applied to variables of type $i$). For a structure $\ovarrow\cH$ by $\ovarrow\cH^\const$ we denote the expansion of $\ovarrow\cH$ by all the constant relations $C_a$, $a\in H_i,i\in[k]$. 

\begin{theorem}[see also \cite{DBLP:conf/stoc/BulatovK22}]\label{the:ConstantCSP-main}
Let $\cH$ be a multi-sorted relational structure and $p$ prime.\\[1mm]
(1) $\NpCSP(\ovarrow\cH^=)$ is Turing reducible to $\NpCSP(\ovarrow\cH)$;\\[1mm]
(2) Let $\ovarrow\cH$ be $p$-rigid. Then $\NpCSP(\ovarrow\cH^\const)$ is Turing reducible to $\NpCSP(\ovarrow\cH)$. 
\end{theorem}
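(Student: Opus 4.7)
The proof has two parts with quite different flavours. Part (1) is a direct parsimonious reduction by variable identification, whereas part (2) is the substantial claim; the plan is to adapt the single-sorted proof of Theorem~\ref{the:ConstantCSP} from \cite{DBLP:conf/stoc/BulatovK22} to the multi-sorted setting.

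For part (1), given an instance $\cP=(V,\cC)$ of $\NpCSP(\ovarrow\cH^=)$, let $\sim$ denote the equivalence relation on $V$ generated by the equality constraints. Because the predicate $=_{H_i}$ can only be imposed on pairs of variables of type $i$, every $\sim$-class is homogeneous in type, so one can assign each class the common type of its members. Construct $\cP'=(V/\sim,\cC')$ by rewriting every non-equality constraint in terms of the $\sim$-classes of its variables and discarding the equality constraints. Every solution of $\cP$ is forced to be constant on each class and thus descends to a solution of $\cP'$; conversely every solution of $\cP'$ lifts uniquely. Hence $\hom(\cG_\cP,\ovarrow\cH^=)=\hom(\cG_{\cP'},\ovarrow\cH)$, giving the required parsimonious (in particular Turing) reduction.

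For part (2), given an instance $\cP$ of $\NpCSP(\ovarrow\cH^\const)$ with constant constraints $C_{a_1}(u_1),\dots,C_{a_k}(u_k)$ at variables of sorts $t_1,\dots,t_k$, the goal is to compute the number of solutions modulo $p$ using oracle calls to $\NpCSP(\ovarrow\cH)$. Let $\cP^\circ$ be the instance obtained by deleting those constant constraints, and for every tuple $\bb=(b_1,\dots,b_k)\in H_{t_1}\tms H_{t_k}$ let $N(\bb)$ be the number of homomorphisms $\ovarrow\vf$ of $\cG_{\cP^\circ}$ to $\ovarrow\cH$ with $\vf_{t_i}(u_i)=b_i$ for every $i$; then $\hom(\cG_{\cP^\circ},\ovarrow\cH)=\sum_\bb N(\bb)$, and one needs to recover $N(a_1,\dots,a_k)\bmod p$. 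I plan to build, sort by sort, a family of ``pinning gadgets''\,---\,substructures of suitable powers $\ovarrow\cH^\ell$ glued to the $u_i$'s\,---\,so that each oracle call to $\NpCSP(\ovarrow\cH)$ on the resulting augmented instance returns an $\bbF_p$-linear combination of the values $N(\bb)$ whose coefficients depend only on the gadget and on $\ovarrow\cH$. A carefully chosen family makes the coefficient matrix invertible over $\bbF_p$, and Gaussian elimination then extracts $N(a_1,\dots,a_k)\bmod p$ from $O(|H|^k)$ oracle calls.

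The main obstacle is proving invertibility of the coefficient matrix modulo $p$. In the single-sorted case this rests on the interplay between $\Aut(\cH^\ell)$-orbits and $p$-rigidity, which together force the orbit-counting pairing to be non-degenerate modulo $p$ (ultimately because the orders of all relevant orbits are coprime to $p$, by Cauchy's theorem applied to the $p$-rigid $\Aut(\cH)$). In the multi-sorted case I expect this to go through essentially verbatim via two observations: first, $\Aut(\ovarrow\cH^\ell)$ acts on each sort $H_i^\ell$ independently, so the orbit analysis decomposes sort by sort; second, the multi-sorted notion of $p$-rigidity from Section~\ref{sec:multi-sorted}, combined with the mod-$p$ orbit vanishing established in Lemma~\ref{lem:aut-reduction-multi-sorted-structures-prelim}, supplies exactly the sort-wise input that the single-sorted argument needs. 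The one genuinely new piece of bookkeeping is the simultaneous pinning of several variables of the same sort, which I plan to handle by assembling the gadget family one sort at a time and taking appropriate products across sorts.
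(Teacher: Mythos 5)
Part (1) of your proposal coincides with the paper's proof (eliminating equality constraints by quotienting on the equivalence they generate), so no comment is needed there.

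Part (2), however, departs from the paper's argument and, more importantly, has a real gap. You propose an interpolation scheme: attach a family of gadgets to the pinned variables, view each oracle answer as an $\bbF_p$-linear combination $\sum_\bb N(\bb)\,M_j(\bb)$, and recover $N(a_1,\dots,a_k)$ by inverting the coefficient matrix. The entire weight of this plan rests on showing that the matrix $[M_j(\bb)]$ (restricted to orbit representatives, and scaled by orbit sizes) has full rank over $\bbF_p$. You assert this follows because the single-sorted proof "rests on the interplay between $\Aut(\cH^\ell)$-orbits and $p$-rigidity" — but that is not how the single-sorted proof in \cite{ref:BULATOV_TowardDichotomy} or \cite{DBLP:conf/stoc/BulatovK22} actually goes, and you never prove the invertibility. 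Having pairwise-distinct columns mod $p$ (which is what a pointed mod-$p$ Lov\'asz-type statement would deliver) does not by itself give linear independence over $\bbF_p$, so this is a genuine missing step, not routine bookkeeping.

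The paper avoids this difficulty entirely. It adds one new variable $v_a$ for each $a\in H$, constrains the tuple of all $v_a$'s by the conjunctive-definable relation $\relo$ consisting of all endomorphism images of the fixed enumeration of $H$ (this is conjunctive-definable via the indicator problem, Lemma~\ref{lem:indicator}), and ties each constant-constrained variable to the corresponding $v_a$ via equality. It then applies M\"obius inversion over the (constant-size) partition lattice $\Part(H)$ to isolate the count $N(\zz)$ of solutions in which the $v_a$'s are pairwise distinct — these correspond exactly to compositions with automorphisms — and uses the identity $N(\zz)=|\Aut(\ovarrow\cH)|\cdot T$, where $T$ is the target count. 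Since $\ovarrow\cH$ is $p$-rigid, Cauchy's theorem gives $p\nmid|\Aut(\ovarrow\cH)|$, so $T\equiv|\Aut(\ovarrow\cH)|^{-1}N(\zz)\pmod p$. This uses only a constant number of oracle calls (one per $\theta\in\Part(H)$) and, crucially, $p$-rigidity enters only to invert a single group order, not to establish invertibility of an oracle-response matrix. If you want to pursue your interpolation route, you would need to supply a genuine mod-$p$ full-rank argument (something in the spirit of Lemma~\ref{lem:lovasz-multi-structure}, but strengthened from "distinct columns" to "linearly independent columns"), and this is precisely the part your proposal leaves unaddressed.
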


Yet another way to expand a relational structure is by primitive positive definable (pp-definable for short) relations. Primitive-positive definitions have played a major role in the study of the CSP. It has been proved in multiple circumstances that expanding a structure with pp-definable relations does not change the complexity of the corresponding CSP. This has been proved for the decision CSP in \cite{Jeavons:algebraic,Bulatov05:classifying} and the exact Counting CSP \cite{ref:BULATOV_TowardDichotomy}. The reader is referred to \cite{ref:POlymorphismAndUsethem_barto2017polymorphisms} for details about pp-definitions and their use in the study of the CSP.

Conjunctive definitions are a special case of primitive positive definitions that do not use quantifiers. Let $\cH$ be a structure with signature $\sg$. A \emph{conjunctive formula} $\Phi$ over variables $\{\vc xk\}$ is a conjunction of atomic formulas of the form $\rel(\vc y\ell)$, where $\rel\in\sg$ is an ($\ell$-ary) symbol and $\vc y\ell\in\{\vc xk\}$. A $k$-ary predicate $\relo$ is conjunctive definable in $\cH$ if $(\vc ak)\in\relo$ if and only if $\Phi(\vc ak)$ is true.

\begin{lemma}[\cite{DBLP:conf/stoc/BulatovK22}]\label{lem:conjunctive}
Let $\cH$ be a relational structure with signature $\sg$, $\rel$ be conjunctive definable in $\cH$, and $\cH+\rel$ denote the expansion of $\cH$ by a predicate symbol $\rel$ that is interpreted as the relation $\rel$ in $\cH$. Then $\NpCSP(\cH+\rel)\le_T\NpCSP(\cH)$.

Similarly, for a constraint language $\Gm$ if a relation $\rel$ is conjunctive definable in $\Gm$, then $\NpCSP(\Gm\cup\{\rel\})\le_T\NpCSP(\Gm)$.
\end{lemma}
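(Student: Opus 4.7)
The plan is to give a parsimonious (in particular, Turing) reduction that produces an instance of $\NpCSP(\cH)$ on the same variable set and with an identical solution set, so the two numbers of solutions coincide on the nose, not merely modulo $p$. The reason such a simple reduction is available is precisely that $\rel$ is defined by a \emph{quantifier-free} conjunctive formula: no auxiliary variables, no gadgets, and no sums over hidden variables are needed.

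Concretely, let $\rel$ be $k$-ary and fix its conjunctive definition $\Phi(\vc xk)=\bigwedge_{j=1}^t \rel_j(\ba_j)$ in $\cH$, where each $\rel_j\in\sg$ and each $\ba_j$ is a tuple of variables drawn (with repetitions allowed) from $\{\vc xk\}$. Given an instance $\cP=(V,\cC)$ of $\NpCSP(\cH+\rel)$, I would construct $\cP'=(V,\cC')$ as follows. Each constraint $\ang{\bs,\relo}\in\cC$ with $\relo\in\sg$ is copied into $\cC'$ unchanged. Each constraint $\ang{(\vc v k),\rel}\in\cC$ is \emph{expanded} into the collection of atomic constraints obtained from the conjuncts of $\Phi$ under the variable substitution $x_i\mapsto v_i$: for every $j\in[t]$ I add $\ang{\bs_j,\rel_j}$ to $\cC'$, where $\bs_j$ is $\ba_j$ with each $x_i$ replaced by $v_i$.

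The correctness is immediate from the definition of conjunctive definability. A mapping $\vf:V\to H$ satisfies $\ang{(\vc v k),\rel}$ iff $(\vf(v_1)\zd\vf(v_k))\in\rel$ iff $\Phi(\vf(v_1)\zd\vf(v_k))$ is true iff $\vf$ satisfies every constraint $\ang{\bs_j,\rel_j}$ produced from it. Combined with the fact that the remaining constraints are copied verbatim, this shows $\cP$ and $\cP'$ have the same set of solutions, so the numbers of solutions are equal and therefore equal modulo $p$. Since $\cP'$ is produced in polynomial time and is an instance of $\NpCSP(\cH)$, this gives $\NpCSP(\cH+\rel)\le_T\NpCSP(\cH)$. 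The constraint-language version is handled by exactly the same substitution, applied to every constraint of $\cP$ whose relation is $\rel$.

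There is essentially no obstacle: the only bookkeeping is to ensure that when a variable $x_i$ appears in several conjuncts of $\Phi$, the substitution $x_i\mapsto v_i$ is applied consistently, which is automatic because the substitution is defined at the level of variables rather than occurrences. The quantifier-free hypothesis is what keeps the solution counts \emph{equal} rather than only congruent modulo $p$; with existential quantifiers one would have to reason about extension counts $\extp$ and would genuinely need something like Theorem~\ref{the:CloneTheorem-intro}.
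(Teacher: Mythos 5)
Your proof is correct. The paper does not actually give its own proof of this lemma --- it is stated as a cited result from \cite{DBLP:conf/stoc/BulatovK22} --- but the argument you give (replace each $\rel$-constraint by the conjuncts of its quantifier-free definition under the evident variable substitution, obtaining an instance with literally the same solution set, hence a parsimonious reduction) is the standard and essentially the only reasonable proof. Your closing observation, that it is precisely the absence of existential quantifiers that makes the solution counts equal on the nose rather than merely congruent modulo $p$, is exactly the right thing to emphasize and correctly identifies why the pp-definable case requires the heavier machinery of Theorem~\ref{pro:GadgetExists}.
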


We now extend the concept of primitive-positive definability to the multi-sorted case.
Let $\ovarrow\cH$ be a multi-sorted relational structure with the base set $H$. As before \emph{primitive positive} (pp-) formula in $\ovarrow\cH$ is a first-order formula 
\[
\exists \vc ys\Phi(\vc xk,\vc ys),
\]
where $\Phi$ is a conjunction of atomic formulas of the form $z_1=_Hz_2$ or $\rel(\vc z\ell)$, $\vc z\ell\in\{\vc xk,\vc ys\}$, and $\rel$ is a predicate of $\ovarrow\cH$. However, every variable in $\Phi$ is now assigned a type $\tau(x_i),\tau(y_j)$ in such a way that for every atomic formula $z_1=_Hz_2$ it holds that $\tau(z_1)=\tau(z_2)$, and for any atomic formula $\rel(\vc z\ell)$ the sequence $(\tau(z_1)\zd\tau(z_\ell))$ matches the type of $\rel$. We say that $\ovarrow\cH$ \emph{pp-defines} a predicate $\relo$ if there exists a pp-formula such that \
\[
\relo(\vc xk)=\exists \vc ys\Phi(\vc xk,\vc ys).
\]

For $\ba\in\rel$ by $\ext_\Phi(\ba)$ we denote the number of assignments $\bb\in H_{\tau(y_1)}\tms H_{\tau(y_s)}$ to $\vc ys$ such that $\Phi(\ba,\bb)$ is true. We denote the number of such assignments mod $p$ by $\extp_\Phi(\ba)$.

While when $\cH$ is a graph it is possible to prove a statement similar to Lemma~\ref{lem:conjunctive} for pp-definable relations \cite{DBLP:conf/stoc/BulatovK22}, we will later see that it is unlikely to be true for general relational structures.

Finally, for a relational structure $\cH$ (single- or multi-sorted) $\SP\cH$ denotes the relational clone of $\cH$, that is, the set of all relations pp-definable in $\cH$

%%%%%%%%%%%%%%%%%%%%%%%%%%%%%%%%%%%%%%%
%%%%%%%%%%%%%%%%%%%%%%%%%%%%%%%%%%%%%%%

\section{Overview of the Results}\label{sec:overview}
The results of this paper can be grouped in two categories. The results from the first category analyse the methods used to obtain a classification of the complexity of modular counting CSPs over graphs, Theorem~\ref{the:CountingGraphHom}, \cite{DBLP:conf/stoc/BulatovK22}, and the methods used in exact counting \cite{DBLP:journals/jacm/Bulatov13,effective-Dyer-doi:10.1137/100811258,cai-complex} trying to evaluate which of them are applicable for modular counting CSPs over general relational structures. Unfortunately, the majority of those methods fail, as we show by constructing a series of counterexamples. The second category comprises a number of results aiming at refining the framework, modifying the existing methods, and proposing new ones in order to overcome those difficulties. We now discuss the results in some detail.

%%%%%%%%%%%%%%%%%%%%%%%%%%%%%%%%%%%%%%%%%%%%%%%%%%%%%
\subsection{The Failure}\label{sec:failure}

\textbf{$p$-rigidity.}
The classification result in Theorem~\ref{the:CountingGraphHom} asserts that $\ghomk pH$ for a $p$-rigid graph $H$ is hard whenever the exact counting problem is hard. In Example~\ref{exa:bad-triangle} we show that this is no longer the case for general relational structures. Let $p$ be a prime number and $T_p=\{0\zd p-1,p,p+1\}$. A relational structure $\cT_p$ has the base set $T_p$ and predicates $\rel,C_0\zd C_{p+1}$, where $C_a$ is the constant relation $\{(a)\}$ and $\rel=T^2_p-\{(0,p)\zd(p-1,p)\}$. The structure $\cT_p$ is $p$-rigid as it contains all the constant relations and every automorphism must preserve them, implying that every element of $T_p$ is a fixed point. By \cite{Bulatov17:dichotomy,Zhuk20:dichotomy} the decision $\CSP(\cT_p)$ is solvable in polynomial time, while the exact counting problem $\#\CSP(\cT_p)$ is $\#$P-complete by \cite{ref:BULATOV_TowardDichotomy}, as it does not have a Mal'tsev polymorphism and $\rel$ is not rectangular (see below). However, if $\cG$ is a structure similar to $\cT_p$ and there is a homomorphism $\vf:\cG\to\cT_p$ such that some vertex $v$ of $\cG$ is mapped to $a\in\{0\zd p-1\}$ then, unless $v$ is bound by $C_a$, the mapping that differs from $\vf$ only at $v$ by sending it to any $b\in\{0\zd p-1\}$ is also a homomorphism. Since $|\{0\zd p-1\}|=p$, this means that the elements $0\zd p-1$ can be effectively eliminated from $\cT_p$, and the resulting structure is somewhat trivial. Therefore $\NpCSP(\cT_p)$ can be solved in polynomial time. 

\smallskip

\textbf{Automorphisms of direct products.}
Another crucial component of Theorem~\ref{the:CountingGraphHom} is the structure of automorphisms of direct products of graphs \cite{ref:ProductOfGraphs}. It essentially asserts that every automorphism of a direct product $H_1\tms H_n$ can be thought of as a composition of a permutation of factors in the product and automorphisms of those factors. In Example~\ref{exa:bad-graph} we show that this breaks down already for digraphs. Indeed, let $\cH = (V, E)$ be a directed graph where $V = \{a, b,c ,d \}$ with (directed) edge set $E = \{ (b, a), (b,c), (c,d) \}$. This digraph is rigid. However, the automorphism group of $\cH^2$, see Figure~\ref{fig:bad-example-overview}, has a complicated structure. As is easily seen $\cH^2$ has a large number of automorphisms of order 2 and 3, not all of which have a simple representation mentioned above. 

In \cite{DBLP:conf/stoc/BulatovK22} one of the important applications of the structural theorem for automorphisms of graph products is that it allows one to prove that $\ghomk p{\cH+\rel)}$, where $\cH+\rel$ denotes the expansion of $\cH$ by a relation $\rel$ pp-definable in $\cH$, is polynomial time reducible to $\ghomk p\cH$ (see below). The example above indicates that this result may no longer be true for general relational structures.

\begin{figure}[ht]
    \centering
    \includegraphics[height=3.5cm]{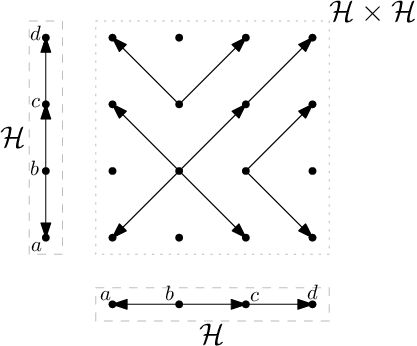}
    \caption{The structure of $\cH$ and $\cH^2$}
    \label{fig:bad-example-overview}
\end{figure} 

\smallskip

\textbf{Rectangularity, permutability, and Mal'tsev polymorphisms.}
The key properties of relational structures heavily exploited in \cite{DBLP:journals/jacm/Bulatov13,effective-Dyer-doi:10.1137/100811258,cai-complex} to obtain characterizations of the complexity of exact counting are rectangularity, balancedness, and the presence of a Mal'tsev polymorphism. 

Recall that a binary relation $\rel\subseteq H_1 \times H_2$ is said to be \emph{rectangular} if $(a, c),(a, d),(b, c) \in\rel$ implies $(b, d) \in\rel$ for any $a, b \in H_1, c, d \in H_2$. A relation $\rel\subseteq H^n$ for $n \geq 2$ is rectangular if for every $I\subsetneq[n]$, the relation $R$ is rectangular when viewed as a binary relation, a subset of $\pr_I\rel\tm\pr_{[n]-I}\rel$. A relational structure $\cH$ is \emph{strongly rectangular} if every relation $\rel\in \langle \cH \rangle$ of arity at least 2 is rectangular.

An $n$-by-$m$ matrix $M$ is said to be a \emph{rank-1 block matrix} if by permuting rows and columns it can be transformed to a block-diagonal matrix (not necessarily square), where every nonzero diagonal block with has rank at most 1. 

Finally, let $\rel(x, y, z)$ be a ternary relation,a subset of $H_1 \times H_2 \times H_3$. We call $\rel$ \emph{balanced} if the matrix $\sM_\rel \in\bbZ^{|H_1|\tm|H_2|}$, where $\sM_\rel[x, y] = |\{z \in H_3\mid (x, y, z) \in \rel\}|$   such that $x \in H_1$ and $y \in H_2$ is a rank-1 block matrix. A relation $\rel$ of arity $n > 3$ is balanced if every representation of $\rel$ as a ternary relation, a subset of $H^k \times H^\ell \times H^{n-k-\ell}$, is balanced. A relational structure $\cH$ is called \emph{strongly balanced} if every relation $\rel \in \SP \cH$ is balanced. 

These three concepts are closely related and proved to be very useful for exact counting. It is well known \cite{Hagemann73:permutable} that strong rectangularity of a relation structure is equivalent to it having a Mal'tsev polymorphism. It is also straightforward that every balanced relation is rectangular. According to \cite{effective-Dyer-doi:10.1137/100811258} $\#\CSP(\cH)$ is polynomial time solvable if and only if $\cH$ is strongly balanced. However, in order for the solution algorithm to work, it requires a Mal'tsev polymorphism to be applied over and over again to construct a \emph{frame}, that is, a compact representation of the set of solutions, \cite{DBLP:journals/jacm/Bulatov13,effective-Dyer-doi:10.1137/100811258}. As was observed above, pp-definitions do not quite work in the case of modular counting, and we introduce their modular counterparts by introducing modular quantifier (see below). Using modular pp-definitions, we can change the definitions of rectangularity and balancedness accordingly to obtain the properties of strong $p$-rectangularity and $p$-balancedness. These new properties require that every modular-pp-definable relation is rectangular and balanced.  Modular-pp-definitions preserve the complexity of modular problems, however, they destroy the connections between the concepts above. In Section~\ref{sec:rectangularity-main} we present a series of counterexamples showing that almost all the properties of $p$-rectangularity, $p$-permutability, $p$-balancedness, and having a Mal'tsev polymorphism are independent of each other. While, as we prove the first three properties are necessary for the tractability of the problem, the latter one does not follow from them, and therefore neither of $p$-rectangularity, $p$-permutability, $p$-balancedness help to construct a solution algorithm in the same fashion it is done for exact counting. 

%%%%%%%%%%%%%%%%%%%%%%%%%%%%%%%%%%%%%%%%%%%%%%%%%%%%%
\subsection{Remedies}\label{sec:remedies}

In the category of positive results we propose several ways to overcome the difficulties outlined in Section~\ref{sec:failure}. We hope that these new approaches will lead to a complexity classification of modular counting.

\smallskip 

\textbf{Multi-sorted structures and $p$-rigidity.}
We start with applying a well-known framework of multi-sorted relational structures to modular counting. While multi-sorted structures is a standard tool in the study of decision CSPs, it usually only used to simplify arguments and streamline algorithms. Here we use this framework in a more fundamental way, to strengthen the main (hypothetical) tractability condition.

Let us consider again the structure $\cT_p$ introduced in the beginning of Section~\ref{sec:failure} and discover that when considered as a multi-sorted structure $\cT_p$ is not $p$-rigid, and therefore can be transformed so that the corresponding modular CSP is solvable in polynomial time. The idea is, given an instance $\cG$ of $\ghomk p{\cT_p}$, to use some preprocessing algorithm (such as local propagation or a solution algorithm for the decision CSP) to reduce possible range of vertices from $\cG$ under homomorphisms. We then \emph{refine} the instance by replacing $\cT_p$ with a multi-sorted structure in such a way that every possible range of a vertex from $\cG$ becomes a set of a distinct type. Now, since automorphisms of a multi-sorted structure may act differently on different types, the refined structure may have a richer automorphism group. For instance, for the refinement of $\cT_p$ the sets $T_p$ and $\{0\}$ have different types, and an automorphism of such a structure may act differently on these sets and no longer has to have $0$ as a fixed point when acting on $T_p$.

We identify several ways of how a refinement of a structure or an instance can be constructed and prove that these constructions are parsimonious reductions between modular CSPs. We are not aware of any example of a refined $p$-rigid structure whose complexity differs from that for exact counting.

\smallskip

\textbf{Modular existential quantifiers.}
We follow the approach of \cite{DBLP:conf/stoc/BulatovK22} by expanding the relational structure $\cH$ by adding pp-definable relations, but doing in a manner friendly to modular counting.

We introduce a new form of expansion which is using $p$-modular quantifiers instead of regular existential quantifiers. The semantics of a $p$-modular quantifier is "there are non-zero modulo $p$ values of a variable" rather than "there exists at least one value of a variable" as the regular existential quantifier asserts. The new concept gives rise to new definitions of pp-formulas. If regular existential quantifiers in pp-formulas are replaced with $p$-modular quantifiers, we obtain \emph{$p$-modular primitive positive formulas} (\emph{$p$-mpp-formulas}, for short). The $p$-modular quantifier is denoted $\exists^{\equiv p}$, and so $p$-mpp-formulas have the form 
\[
\exists^{\equiv p}y_1\zd y_{\ell_1}\dots \exists^{\equiv p}y_{\ell_1+\dots+\ell_{s-1}+1}\zd y_k\Phi(\vc zm).
\] 
Note the more complicated form of the quantifier prefix: modular quantification is not as robust as the regular one and quantifying variables away in groups or one-by-one may change the result. For example, let $\rel=\{(1,0,0),(1,1,0),(1,1,1),(2,2,2)\}$ be a relation on $\{0,1,2\}$. Then formulas $\exists^{\equiv3}y \exists^{\equiv3}z\ \  \rel(x,y,z)$ and $\exists^{\equiv3}y,z\ \  \rel(x,y,z)$ define sets $\{1,2\}$ and $\{2\}$, respectively.

Every relational structure is associated with a relational clone $\SP\cH$ that consists of all relations pp-definable in $\cH$.  Then, similar to pp-definitions, a relation $\rel$ is said to be \emph{$p$-mpp-definable} in a structure $\cH$ if there is a $p$-mpp-formula in $\cH$ expressing $\rel$. The \emph{$p$-modular clone} $\SP\cH_p$ associated with $\cH$ is the set of all relations $p$-mpp-definable in $\cH$. Similar to the result of Bulatov and Dalmau \cite{ref:BULATOV_TowardDichotomy}, expanding a structure by a  $p$-mpp-definable relation does not change the complexity of the problem $\#_p\CSP(\cH)$.  

\begin{theorem}
Let $\ovarrow\cH$ be a be a $\sg$-structure with equality, and $p$ a prime. Let $\rel$ be a relation that is defined by
\[
R(\vc xk)=\existspr\vc y\ell \Phi(\vc xk, y) 
\]
then $\#_p\CSP(\ovarrow\cH+\ovarrow\rel)$ is polynomial time reducible to $\#_p\CSP(\cH)$.
\end{theorem}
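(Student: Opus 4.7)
The plan is to mimic the gadget-replacement reduction behind Lemma~\ref{lem:conjunctive}, but to compensate for the fact that a $p$-modular quantifier only promises that the number of witnesses is nonzero modulo $p$, rather than exactly one as a conjunctive definition would. Because different tuples in $\rel$ may have different witness counts, a naive one-gadget replacement would return a \emph{weighted} count, not the unweighted $\hom(\cG,\cH+\rel)\pmod p$ we want. The crucial idea is to use Fermat's little theorem to homogenize these weights by taking $p-1$ parallel copies of the gadget.

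Formally, given an instance $\cG$ of $\NpCSP(\cH+\rel)$, I would build an instance $\cG'$ of $\NpCSP(\cH^=)$ as follows. For every atomic constraint of the form $\rel(v_1,\dots,v_k)$ in $\cG$, introduce $p-1$ disjoint blocks of fresh variables $\{y_1^{(i)},\dots,y_\ell^{(i)}\}$ for $i\in[p-1]$, and for each $i$ add to $\cG'$ all atomic constraints appearing in $\Phi(v_1,\dots,v_k,y_1^{(i)},\dots,y_\ell^{(i)})$. All constraints of $\cG$ that do not involve $\rel$ are copied verbatim, and any equality atoms arising from $\Phi$ are absorbed via Lemma~\ref{lem:adding-equality}. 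The construction is polynomial time, so once the identity $\hom(\cG',\cH^=)\equiv\hom(\cG,\cH+\rel)\pmod p$ is proved, the desired Turing reduction to $\NpCSP(\cH)$ follows.

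To verify the identity, group homomorphisms from $\cG'$ according to their restriction $\vf$ to the variables of $\cG$. Since the $p-1$ gadget copies attached to each original $\rel$-constraint use pairwise disjoint fresh variables, the number of extensions of $\vf$ to all gadget variables factors as
\[
\prod_{c=\rel(\bs)\in\cC}\bigl(\ext_\Phi(\vf(\bs))\bigr)^{p-1}.
\]
By Fermat's little theorem, modulo $p$ each factor equals $1$ when $\extp_\Phi(\vf(\bs))\neq 0$, that is, when $\vf(\bs)\in\rel$, and equals $0$ otherwise. Hence the product is $1\pmod p$ exactly when $\vf$ satisfies every $\rel$-constraint of $\cG$, and $0$ otherwise, so summing over those $\vf$ that satisfy the $\cH$-constraints recovers $\hom(\cG,\cH+\rel)\pmod p$.

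The main obstacle is precisely this weight-homogenization step: without the $p-1$ copies, one would be computing $\sum_\vf\prod_c\extp_\Phi(\vf(\bs_c))\pmod p$, which is not $\hom(\cG,\cH+\rel)$ in general. The argument leans essentially on $p$ being prime, so that the multiplicative group of $\bbF_p$ has order $p-1$; composite moduli would require a more delicate construction. I expect the multi-sorted setting to add no further difficulty, since the gadget respects types and both Lemma~\ref{lem:adding-equality} and the definition of $\ext_\Phi$ carry over sort-wise.
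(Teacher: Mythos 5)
Your proposal is correct, but it takes a genuinely different route from the paper. The paper's proof invokes Proposition~5.6 of \cite{DBLP:conf/stoc/BulatovK22} to assume without loss of generality that the $p$-mpp-definition of $\rel$ is \emph{strict}, meaning $\extp_\Phi(\ba)=1$ for every $\ba\in\rel$; with that normalization in hand, a \emph{single} gadget copy per $\rel$-constraint already produces exactly one extension modulo $p$ for each solution of the original instance (and zero modulo $p$ for non-solutions), so the reduction is a direct one-gadget replacement. You instead bypass the strictness lemma entirely: you take $p-1$ disjoint parallel copies of the gadget and appeal to Fermat's little theorem to force $\bigl(\ext_\Phi(\vf(\bs_c))\bigr)^{p-1}\equiv 1\pmod p$ whenever $\vf(\bs_c)\in\rel$ and $\equiv 0$ otherwise. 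Both arguments are sound; what the paper's route buys is a tighter instance (one gadget copy) and a cleaner quantitative statement once strictness is available, while yours buys self-containedness --- you do not need to import the nontrivial fact that every $p$-mpp-definable relation has a strict $p$-mpp-definition, at the cost of a factor $p-1$ blow-up in instance size and a reliance on primality through the multiplicative group of $\bbF_p$ (which the paper's approach also ultimately uses, but indirectly). Your observation that the naive one-copy replacement would compute a weighted sum $\sum_\vf\prod_c\extp_\Phi(\vf(\bs_c))$ is exactly the obstacle the paper's strictness assumption is designed to sidestep.
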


\smallskip

\textbf{An algorithm for parity.}
The next two results identify a number of conditions under which it is possible to design an algorithm or to prove the hardness of the problem. One such case is $\#_2\CSP(\cH)$ when $\cH$ satisfies both the 2-rectangularity and the usual strong rectangularity conditions (or, equivalently, has a Mal'tsev polymorphism). Note that the case $p=2$ is special and the concepts of 2-rectangularity and 2-balancedness are equivalent as Lemma~\ref{lem:2-rec-vs-2-bal} states. This comes very handy, because it allows us to avoid rectangular but not balanced relations that require a completely different approach. Some sense of what it might require can be seen in \cite{Dyer09:weighted,cai-complex}. 

The other ingredient that we use to develop the algorithm for determining the parity of solutions for CSP is the use of frames. A frame for a relation $\rel \subseteq H^n$ is defined as a subset $\rel' \subseteq \rel$ that satisfies two conditions. Firstly, if $\rel$ includes a tuple where the $i$-th component is $a$, then $\rel'$ must also include at least one such tuple. Secondly, for $1 < i \leq n$, consider a pair of elements $a,b\in H$. We say $a,b$ are \emph{$i$-equivalent} in $\rel$ if $\rel$ contains tuples $\ba,\bb$ that agree on their first $i-1$ components and such that $\ba[i]=a,\bb[i]=b$. Any pair that is $i$-equivalent in $\rel$ must also be $i$-equivalent in $\rel'$, though $\rel'$ may have fewer tuples sharing these common prefixes compared to $\rel$. It is demonstrated in \cite{effective-Dyer-doi:10.1137/100811258} that every $n$-ary relation over $H$ has a compact frame of size at most $n|H|$, despite $\rel$ potentially having up to $|H|^n$ elements. If $\rel$ such that has Mal'tsev polymorphism, it can be reconstructed from any of its frames. Additionally, they provide a method for constructing a frame efficiently provided again that $\rel$ has a Mal'tsev polymorphism.

We apply the known techniques \cite{Bulatov06:simple,effective-Dyer-doi:10.1137/100811258} to find a concise representation or a frame of the set of solutions of a given CSP. In order to do that we consider the set of solutions of a (decision) CSP instance with $n$ variables as an $n$-ary relation $\rel$. First of all the algorithm finds a frame for $\rel$ using the existing techniques \cite{Bulatov06:simple}.Then it looks for a frame for the relation $\WT\rel\sse\pr_{[n-1]}\rel$ consisting of the tuples from $\pr_{[n-1]}\rel$ that have odd number of extensions from $\rel$. This requires careful filtering of tuples obtained from the original frame. This filtering step is the main novelty brought by our algorithm. The procedure then repeats reducing the arity of $\rel$ ever further. The key property of this procedure is that the parity of $\WT\rel$ equals that of $\rel$. Once reduced to a unary relation, determining the parity becomes straightforward.

\begin{theorem}\label{the:parity-algorithm-overview}
Let $\cH$ be a 2-rigid, strongly 2-rectangular, and $\SP \cH_2$ has a Mal'tsev polymorphism\footnote{Note that the latter condition does not follow from $\cH$ having a Mal'tsev polymorphism, because 2-mpp-definitions do not preserve polymorphisms, as we show in Section~\ref{sec:rectangularity-main}.}. Then $\#_2\CSP(\cH)$ can be solved in time $O(n^5)$. 
\end{theorem}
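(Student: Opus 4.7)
The plan is to reduce the parity computation to iteratively shrinking the arity of the solution relation while maintaining a compact representation at every step. Treat the set of solutions of the given $\#_2\CSP(\cH)$ instance on $n$ variables as a single relation $\rel\sse H^n$, which is pp-definable in $\cH$ and hence inherits the Mal'tsev polymorphism of $\cH$. I will maintain, for $k=n,n-1,\dots,1$, a frame $F_k$ of a relation $\rel_k\sse H^k$ where $\rel_n=\rel$ and $\rel_{k-1}(\bx)=\existspar y\,\rel_k(\bx,y)$. The target output is $|\rel_1|\bmod 2$: this equals $|\rel|\bmod 2$ by iterating the fact that any prefix with an even number of extensions contributes $0\pmod 2$, so projecting via $\existspar$ preserves parity at each step.

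First I would build $F_n$ using the frame-construction algorithm of \cite{Bulatov06:simple,effective-Dyer-doi:10.1137/100811258}, which runs in polynomial time because $\rel_n\in\SP\cH$ has a Mal'tsev polymorphism (equivalently, $\cH$ is strongly rectangular, a consequence of strong $2$-rectangularity combined with 2-rigidity, or simply a direct assumption). The inductive step is the main novelty: given $F_k$ for $\rel_k$, produce $F_{k-1}$ for $\rel_{k-1}$. Crucially, $\rel_k$ is $2$-mpp-definable in $\cH$ by induction, and $\rel_{k-1}$ is obtained from $\rel_k$ by a single $\existspar$ quantifier, so $\rel_{k-1}\in\SP\cH_2$ and hence has a Mal'tsev polymorphism by hypothesis. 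This is precisely what guarantees that a frame uniquely and compactly determines $\rel_{k-1}$; without the third assumption of the theorem the very notion of a frame for $\rel_{k-1}$ would break down.

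For the inductive construction itself, walk through the tuples in $F_k$ and for each prefix $\ba\in H^{k-1}$ appearing in them decide whether $\extp_{\rel_k}(\ba)\not\equiv 0\pmod 2$. Because $\cH$ is strongly $2$-rectangular and $p=2$, by Lemma~\ref{lem:2-rec-vs-2-bal} the ternary ``slicing'' of $\rel_k$ into (prefix, last coordinate, witnesses for the remaining pp-definition) is $2$-balanced, so the parity of $\extp_{\rel_k}(\ba)$ depends only on which rank-$1$ block contains $\ba$. Combined with the Mal'tsev polymorphism on $\rel_k$, which provides a canonical coset of extensions for each prefix in $F_k$, the parity test becomes a constant-time block-lookup per frame entry. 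Keep in $F_{k-1}$ exactly those prefixes whose parity is odd; since $\rel_{k-1}$ itself has a Mal'tsev polymorphism, one can verify using the standard frame criterion that the filtered set witnesses all required $i$-equivalences of $\rel_{k-1}$, and fix up any missing witnesses by applying the Mal'tsev polymorphism to pairs of surviving tuples.

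The hard part, I expect, will be this verification: showing that after filtering one really has a \emph{frame} of $\rel_{k-1}$, not merely a proper subframe that drops some $i$-equivalences. The core lemma one needs is that for every $i$-equivalence $(a,b)$ of $\rel_{k-1}$ there exist tuples $\ba,\bb\in\rel_k$ with odd extension parity whose first $k-1$ coordinates witness this equivalence; this should follow from the interaction between the Mal'tsev polymorphism of $\rel_{k-1}$ (provided by the $\SP\cH_2$ hypothesis) and the $2$-balanced block structure of $\rel_k$, but it requires a careful coupling of the two. The runtime bound follows by a size analysis: each of the $n$ iterations runs the Mal'tsev frame-repair procedure on a relation presented by a frame of size $O(n|H|)$, which costs $O(n^4)$ per iteration in the standard implementation \cite{Bulatov06:simple}, giving the claimed $O(n^5)$ overall bound.
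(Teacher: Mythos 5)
Your high-level plan is the same as the paper's: write the solution set as an $n$-ary relation $\rel\in\SP\cH_2$, maintain a frame, and iteratively shrink the arity via $\rel_{k-1}(\bx)=\existspar y\,\rel_k(\bx,y)$, preserving parity at each step (your $\rel_{k-1}$ is the paper's $\WT\rel^{(k)}$, and the parity-preservation argument matches Lemmas~\ref{lem:PAR_size} and \ref{lem:TILDE_size}). The place where the argument is not actually carried out — and where you yourself flag the difficulty — is the inductive frame computation, and I do not think the filtering-plus-repair step you sketch can be made to work as stated.

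The problem is that walking through the $O(k|H|)$ tuples of $F_k$, checking the parity of $\extp_{\rel_k}$ on each $(k-1)$-prefix, and discarding the even ones does not produce a frame of $\rel_{k-1}$. There are two failure modes. First, for a fixed position $i$ and value $a$, the frame $F_k$ contains exactly one witness $\omega(i,a)$; its prefix might have even extension parity even though $a\in\pr_i\rel_{k-1}$ (some other tuple with $i$-th coordinate $a$, not in $F_k$, has odd parity). Naive filtering silently drops $a$ from position $i$. Second, the $\sim'_i$ classes of $\PAR\rel_k$ are refinements of the $\sim_i$ classes of $\rel_k$ (Lemma~\ref{lem:Frame_add_const} only goes one way), so the equivalence structure must be recomputed, not inherited. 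Neither gap can be repaired by "applying the Mal'tsev polymorphism to pairs of surviving tuples": the Mal'tsev operation lets you \emph{generate} tuples of a relation you already have a frame for, but it does not discover missing $(i,a)$ witnesses or missing $\sim'_i$ equivalences. What the paper actually does (Proposition~\ref{pro:calculate-witness} and Algorithms~\ref{alg:EpsilonClass}, \ref{alg:WitnessFunction}) is rebuild the witness function of $\PAR\rel_k$ from scratch: for each position $k$ and each $a\in H$ it computes the frame $\omega^{k\leftarrow a}$ of the pinned relation $\rel_k\wedge C_a(x_k)$ (Proposition~\ref{pro:Frame_constant_existance}), checks via Lemma~\ref{lem:tool_PAR} whether any frame class $\cE^{k\leftarrow a}_{n,s}$ has odd size, and then determines $\sim'_k$ by a further pinning test ($\mathsf{CheckEpsilonClass}$). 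This active re-derivation is the actual content of the proof, and it is genuinely different from filtering an existing frame. Your runtime accounting also leans on the wrong primitive: the per-iteration $O(n^4)$ in the paper comes from the pinning loops over positions and domain elements in the witness-function reconstruction, not from a single call to the Bulatov--Dalmau frame-construction routine on a relation "presented by a frame" (which is not what $\rel_{k-1}$ is presented by at that point).
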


\smallskip

\textbf{Automorphisms of direct products of structures.}
In Section~\ref{sec:automorphisms} we explore what implications of a structural result similar to that in \cite{ref:ProductOfGraphs} that is used in \cite{DBLP:conf/stoc/BulatovK22} about $\Aut(\cH^n)$ can be. A \emph{rectangularity obstruction} is a violation of the rectangularity or $p$-rectangularity property, that is, an ($n$-ary) relation $\rel$ pp- or $p$-mpp-definable in a structure $\cH$, $k\in[n]$, and tuples $\ba,\bb\in\pr_{[k]}\rel,\bc,\bd\in\pr_{[n]-[k]}\rel$ such that $(\ba,\bc),(\ba,\bd),(\bb,\bd)\in\rel$, but $(\bb,\bc)\not\in\rel$. A \emph{generalized rectangularity obstruction} are the relation $\rel$ and sets $A_{1,1},A_{1,2}\sse\pr_{[k]}\rel$, $A_{2,1},A_{2,2}\sse\pr_{[n]-[k]}\rel$ such that $A_{1,1}\cap A_{1,2}=\emptyset$, $A_{2,1}\cap A_{2,2}=\emptyset$, and any $\ba\in A_{1,1},\bb\in A_{1,2},\bc\in A_{2,1},\bd\in A_{2,2}$ form a rectangularity obstruction. 

At the first glance, if such an obstruction exists, it should be possible to prove the hardness of $\#_p\CSP(\cH)$. Indeed, $\rel$ can be viewed as a bipartite graph $\sK_\rel$, whose parts of the bipartition are $\pr_{[k]}\rel$ and $\pr_{[n]-[k]}\rel$, and as the rectangularity obstruction shows, this graph is not complete bipartite implying that $\ghom{\sK_\rel}$ is \#P-complete. However, the hardness of $\ghomk p{\sK_\rel}$ also involves the requirement that $\sK_\rel$ is $p$-rigid. The property of $p$-rigidity is achieved by restricting the problem to induced subgraphs of $\sK_\rel$. However, such a subgraph may avoid the (generalized) rectangularity obstruction rendering it useless.

The obstruction $A_{1,1},A_{1,2}\sse\pr_{[k]}\rel$, $A_{2,1},A_{2,2}\sse\pr_{[n]-[k]}\rel$ is said to be \emph{protected} in $\rel$ if, after applying a $p$-reduction to $\rel$ under a sequence of $p$-automorphisms from $\Aut(\rel)$, for the resulting relation $\tilde\rel$ it holds that $\pr_{[k]}\tilde\rel\cap A_{1,1},\pr_{[k]}\tilde\rel\cap A_{1,2}\ne\emptyset$, $\pr_{[n]-[k]}\tilde\rel\cap A_{2,1},\pr_{[n]-[k]}\tilde\rel\cap A_{2,2}\ne\emptyset$.  In fact, Theorem~4.2 of \cite{DBLP:conf/stoc/BulatovK22} implies, although implicitly, that any $p$-rigid graph that is not a complete bipartite graph contains a protected rectangularity obstruction. One case of a protected rectangularity obstruction is when it is protected in $\sK_\rel$, that is, survives $p$-reductions of $\sK_\rel$ itself. In this case we say that the obstruction is \emph{graph-protected}.

\begin{proposition}\label{pro:rect-obstructions-intro}
Let $\cH$ be a (multi-sorted) relational structure and $p$ a prime number. If $\cH$ has a graph protected generalized rectangularity obstruction modulo $p$, $\#_p\CSP(\cH)$ is $\#_pP$-complete.
\end{proposition}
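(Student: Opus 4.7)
The plan combines three ingredients: Theorem~\ref{the:CloneTheorem-intro}, which lets the obstruction relation $\rel$ be included in the signature; a gadget reduction from $\ghomk p{\sK_\rel}$ to $\#_p\CSP(\cH+\rel)$; and Theorem~\ref{the:CountingGraphHom}, which supplies hardness of $\ghomk p{\sK_\rel}$. Since $\rel$ is pp- or $p$-mpp-definable in $\cH$, Theorem~\ref{the:CloneTheorem-intro} gives $\#_p\CSP(\cH+\rel)\le_T\#_p\CSP(\cH)$, so it suffices to prove $\#_pP$-hardness of $\#_p\CSP(\cH+\rel)$.

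First I would show that $\sK_\rel^{*p}$ is not a disjoint union of complete bipartite graphs. By assumption the obstruction $A_{1,1},A_{1,2}\sse\pr_{[k]}\rel$ and $A_{2,1},A_{2,2}\sse\pr_{[n]-[k]}\rel$ is graph-protected, so there is a composition of $p$-automorphisms of $\sK_\rel$ whose action reduces $\sK_\rel$ to $\sK_\rel^{*p}$ while leaving at least one vertex in each of the four sets. Pick such representatives $\ba\in A_{1,1}$, $\bb\in A_{1,2}$, $\bc\in A_{2,1}$, $\bd\in A_{2,2}$; because $p$-reduction produces an induced subgraph, the edges $(\ba,\bc)$, $(\ba,\bd)$, $(\bb,\bd)$ persist in $\sK_\rel^{*p}$ while $(\bb,\bc)$ remains absent. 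If $\sK_\rel^{*p}$ were a disjoint union of complete bipartite graphs, the path $\bb\!-\!\bd\!-\!\ba\!-\!\bc$ would place all four vertices into a single complete bipartite component, forcing $(\bb,\bc)$ to be an edge, a contradiction. Since $\sK_\rel$ is bipartite and loop-free, so is $\sK_\rel^{*p}$, so by the Dyer--Greenhill criterion together with Theorem~\ref{the:CountingGraphHom}, $\ghomk p{\sK_\rel}$ is $\#_pP$-complete.

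Next I would construct the reduction $\ghomk p{\sK_\rel}\le_T\#_p\CSP(\cH+\rel)$ via a standard gadget. Given an input graph $G$, I would proceed component-by-component, each time reducing to counting homomorphisms from a connected bipartite component of $G$ to $\sK_\rel$ that respect a chosen bipartition $(L,R)$ into parts destined for $\pr_{[k]}\rel$ and $\pr_{[n]-[k]}\rel$. For such a component, introduce $k$ CSP variables $u^{(1)},\dots,u^{(k)}$ per $u\in L$ and $n-k$ variables $v^{(1)},\dots,v^{(n-k)}$ per $v\in R$, with types dictated by $\rel$, and replace every edge $\{u,v\}$ by the single constraint $\rel(u^{(1)},\dots,u^{(k)},v^{(1)},\dots,v^{(n-k)})$. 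Solutions of the resulting instance of $\#_p\CSP(\cH+\rel)$ biject with bipartition-respecting homomorphisms from that component, since the vertices of $\sK_\rel$ on each side are exactly the tuples of the corresponding projection. Summing over bipartition choices per component recovers $\hom(G,\sK_\rel)\bmod p$ using polynomially many oracle calls.

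The main obstacle I expect is the second paragraph at the level of full detail: one must verify carefully that the $p$-automorphisms of the graph $\sK_\rel$ really act on its vertices in a way that makes the sets $A_{i,j}$ meaningful after forming the induced subgraph, and that the contradiction via the path $\bb\!-\!\bd\!-\!\ba\!-\!\bc$ is preserved in the multi-sorted setting where different vertices of $\sK_\rel$ may live in different sorts. The remaining subtleties — non-bipartite or disconnected inputs $G$ and isolated vertices in the gadget — should be routine via the standard bipartite/non-bipartite decomposition of counting homomorphisms to a bipartite target.
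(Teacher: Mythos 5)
Your argument follows the same high-level blueprint as the paper's: form the bipartite graph $\sK_\rel$ from the obstruction relation, reduce $\ghomk p{\sK_\rel}$ to $\#_p\CSP(\cH)$ via the gadget that replaces each graph vertex by a tuple of CSP variables tied by a single $\rel$-constraint (this is exactly the paper's Lemma~\ref{lem:standard-gadget-to-csp}), and invoke Theorem~\ref{the:CountingGraphHom} with the Dyer--Greenhill criterion for the hardness of $\ghomk p{\sK_\rel}$. Where you diverge is in establishing that $\sK_\rel^{*p}$ is not a disjoint union of complete bipartite graphs. The paper's Proposition~\ref{pro:standard-hardness-gadget} restricts attention to \emph{standard hardness gadgets}, using the extra covering conditions $A_{1,1}\cup A_{1,2}=\pr_{[k]}\rel$, $A_{2,1}\cup A_{2,2}=\pr_{[n]-[k]}\rel$ together with Lemmas~\ref{lem:struct-standard-gadget} and~\ref{lem:k-graph} to force every automorphism of $\sK_\rel$ to permute each block $A'_{i,j}$ within itself, and then argues (somewhat loosely) that $p$-protectedness leaves each block non-empty. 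You instead read the graph-protectedness hypothesis at face value — every $A'_{i,j}$ meets the fixed-point set of any composition of order-$p$ automorphisms of $\sK_\rel$ — pick surviving representatives, and observe that the quadruple they form still has the missing edge, so the component containing the path $\bb$--$\bd$--$\ba$--$\bc$ is not complete bipartite. This is cleaner, avoids Lemma~\ref{lem:struct-standard-gadget} entirely, and actually yields the proposition as stated, whereas the paper's written proof only covers the standard-gadget special case and leaves the passage to the general graph-protected statement implicit. Two small points: both Theorem~\ref{the:CloneTheorem-intro} (to import $\rel$ into the signature) and the non-triviality argument require $\cH$ to be $p$-rigid — this hypothesis appears in Proposition~\ref{pro:standard-hardness-gadget} but was dropped from Proposition~\ref{pro:rect-obstructions-intro}, so you should state it explicitly. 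Your worry about multi-sortedness of $\sK_\rel$ is moot: $\sK_\rel$ is an ordinary bipartite graph regardless of whether $\cH$ is multi-sorted, so no additional care is needed in the path argument.
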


We consider a special case of graph-protected generalized rectangularity obstructions, standard hardness gadgets, that have to satisfy the additional condition $A_{1,1}\cup A_{1,2}=\pr_{[k]}\rel, A_{2,1}\cup A_{2,2}=\pr_{[n]-[k]}\rel$. It is proved in Section~\ref{sec:rectangularity-obstruction} that a standard hardness gadget is indeed a graph-protected obstruction.

Standard hardness gadgets provide a fairly limited condition for the hardness of $\#_p\CSP(\cH)$. In fact, it is possible to prove that $\NpCSP(\cH)$ is $\#_pP$-complete whenever $\cH$ has any protected rectangularity obstruction, not necessarily a standard gadget. However, it cannot be done using Theorem~\ref{the:CountingGraphHom} as a black box, and is outside of the scope of this paper.

\smallskip
%%%%%%%%%%%%%
\textbf{Binarization.}
While studying the structure of $\Aut(\cH^k)$ for a relational structure $\cH$ and an integer $k$ may be a difficult problem, in Section~\ref{sec:binarization} we make a step forward by reducing the class of structures $\cH$ for which such a characterization is required. In particular, we show that it suffices to obtain a characterization for structures with only binary rectangular relations. More precisely, for any relational structure $\cH=(H;\vc\rel k)$ we construct its binarization $b(\cH)$ as follows. The structure $b(\cH)$ is multi-sorted, and the domains are the relations $\vc\rel k$ viewed as sets of tuples, thus, $b(\cH)$ has $k$ domains. For every pair $i,j\in[k]$ ($i,j$ are allowed to be equal) and any $s\in[\ell_i],t\in[\ell_j]$, where $\ell_i,\ell_j$ are the arities of $\rel_i,\rel_j$, the structure $b(\cH)$ contains a binary relation $\relo^{ij}_{st}=\{(\ba,\bb)\mid \ba\in\rel_i,\bb\in\rel_j, \ba[s]=\bb[t]\}$. We prove that $\cH$ and $b(\cH)$ share many important properties.

\begin{theorem}\label{the:bipartization-intro}
Let $\cH$ be a relational structure. Then $\cH$ is strongly rectangular ($p$-strongly rectangular, $p$-rigid, has a Mal'tsev polymorphism) if and only if $b(\cH)$ is strongly rectangular ($p$-strongly rectangular, $p$-rigid, has a Mal'tsev polymorphism).
\end{theorem}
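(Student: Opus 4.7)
The plan is to exploit the fact that the binary relations $\relo^{ij}_{st}$ of $b(\cH)$ act as ``coordinate-equality'' constraints, which force every automorphism and every polymorphism of $b(\cH)$ to be determined by a single operation on $H$ applied coordinatewise to each tuple. Conversely, any automorphism or polymorphism of $\cH$ lifts to $b(\cH)$ simply by acting coordinatewise. I would first establish these two correspondences---between $\Aut(\cH)$ and $\Aut(b(\cH))$, and between $\Pol(\cH)$ and $\Pol(b(\cH))$---and then deduce from them all four equivalences.

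For the automorphism correspondence, given $\pi\in\Aut(\cH)$, the family $\bar\pi=\{\bar\pi_i\}_{i\in[k]}$ defined by $\bar\pi_i(\ba)=(\pi(\ba[1])\zd\pi(\ba[\ell_i]))$ lies in $\Aut(b(\cH))$: each $\bar\pi_i$ is a bijection on $\rel_i$ because $\pi$ preserves $\rel_i$, and coordinatewise action preserves every $\relo^{ij}_{st}$. Conversely, given $\vf=\{\vf_i\}\in\Aut(b(\cH))$, I set $\pi(\ba[s])=\vf_i(\ba)[s]$; preservation of $\relo^{ij}_{st}$ makes this well-defined, and since $\vf_i(\ba)\in\rel_i$ decomposes as $(\pi(\ba[1])\zd\pi(\ba[\ell_i]))$, this $\pi$ is an automorphism of $\cH$. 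The two constructions are mutually inverse and preserve the order of an automorphism, which yields the $p$-rigidity equivalence. The polymorphism correspondence is entirely analogous: coordinatewise lifting sends $f:H^m\to H$ to $\{f_i\}$, and in the reverse direction the relation $\relo^{ii}_{ss}$ forces $g_i(\ba_1\zd\ba_m)[s]$ to depend only on $\ba_1[s]\zd\ba_m[s]$, while the cross-relations $\relo^{ij}_{st}$ force consistency across types and positions, producing a single operation $f$ on $H$ that polymorphises each $\rel_j$. Since the Mal'tsev identities pass through coordinatewise action in both directions, the Mal'tsev equivalence follows, and strong rectangularity then transfers via its equivalence with possessing a Mal'tsev polymorphism, noted in the preliminaries.

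For the $p$-strong rectangularity equivalence I would set up a natural translation between $p$-mpp-formulas over $\cH$ and over $b(\cH)$. A formula over $\cH$ is converted to one over $b(\cH)$ by introducing, for each atomic constraint $C$ of the form $\rel_i(x_{j_1}\zd x_{j_{\ell_i}})$, a fresh type-$i$ variable $y_C$ representing the tuple $(x_{j_1}\zd x_{j_{\ell_i}})$, and by expressing each variable equation $x_{j_s}=x_{j'_t}$ between two atomic constraints $C$ and $C'$ through the corresponding $\relo^{ii'}_{st}$-atom on $y_C$ and $y_{C'}$. In the reverse direction, each type-$i$ variable is replaced by $\ell_i$ fresh single-sorted variables bound by a single $\rel_i$-atom, and each $\relo^{ij}_{st}$-atom becomes a coordinate equation. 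Every introduced variable is uniquely determined by its neighbours, so these translations are parsimonious and preserve both ordinary and $p$-modular fibre counts; a rectangularity obstruction on either side then transfers to a corresponding obstruction on the other. The step I expect to require most care is correctly matching the quantifier prefixes: because the modular quantifier $\existspr$ is sensitive to how variables are grouped, the translation must ensure that newly introduced variables can be quantified in blocks that respect the original grouping and yield matching modular counts. The uniqueness of the fresh variables makes this plausible, but the verification needs to be done carefully across the prefix structure, and this is the main technical obstacle I anticipate.
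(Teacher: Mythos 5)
Your approach is essentially the paper's: you lift automorphisms and polymorphisms coordinatewise, use the $\relo^{ij}_{st}$ constraints (and the presence of each domain as a unary relation) to push them back down, and derive $p$-rigidity and the Mal'tsev equivalence from these two mutually inverse correspondences; strong rectangularity then follows from its equivalence with having a Mal'tsev polymorphism. That matches the paper's Proposition~\ref{pro:inherit-main}(1)--(2) and the step invoking Proposition~\ref{pro:3-properties-equivalence}. Your argument that the relations $\relo^{ii}_{ss}$ force coordinatewise dependence is slightly more general than the paper's, which only invokes the relations $\relo^{ij}_{\ell 1}$ linking $\rel_i$ to the unary domain predicate $H_j$; both are valid, the paper's choice is just more economical.

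For strong $p$-rectangularity your plan diverges a bit: you propose a bidirectional translation of $p$-mpp-formulas, whereas the paper (Proposition~\ref{pro:inherit-main}(3)) proves only the implication from $\cH$ to $b(\cH)$, by taking a non-rectangular $p$-mpp-definable relation over $b(\cH)$, passing it through the instance translation of Proposition~\ref{pro:binarization-main-body}, and showing the obstruction survives. Your flagged concern about matching the $p$-modular quantifier prefix is genuine: the modular quantifier is sensitive to how variables are grouped (Example~\ref{exa:quantifier-order}), a single $b(\cH)$-variable $y_C$ may package together $\cH$-variables that live in different quantifier blocks (and conversely an equivalence class $\ov{(v,s)}$ may merge free and bound originals), and it is not automatic that the fibre counts match block by block. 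The paper's own proof side-steps this by writing the non-rectangular relation with a single modular quantifier block $\exists^{\equiv 2}\by$, so your worry is not a defect specific to your argument — it is the point at which both your proposal and the paper's proof would need further justification if multi-block $p$-mpp-definitions must genuinely be considered.
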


In addition to Theorem~\ref{the:bipartization-intro} every relation of $b(\cH)$ is binary and rectangular. This makes such structures somewhat closer to graphs and the hope is that it will be easier to study the structure of $\Aut(b(\cH)^n)$ than $\Aut(\cH^n)$ itself.

%%%%%%%%%%%%%%%%%%%%%%%%%%%%%%%%%%%%%%%
%%%%%%%%%%%%%%%%%%%%%%%%%%%%%%%%%%%%%%%
\section{Modular primitive-positive definitions}\label{sec:modular-pp}

In this section we modify the concept of pp-definitions in such a way that adding a definable relation does not change the complexity of the modular counting problem. We start with introducing a different type of quantifier. For a prime number $p$ the \emph{$p$-modular} quantifier $\existspr$ is interpreted as follows
\begin{align}
\rel(\vc xk)=&\existspr\vc ys \Phi(\vc xk,\vc ys) \text{ is true }\label{equ:p-mpp} \\ 
&\Leftrightarrow  \ext_\Phi(\vc xk) \not \equiv 0 \pmod{p}
\Leftrightarrow \extp_\Phi(\vc xk) \not = 0 .   \nonumber
\end{align}

Note that modular quantifiers are not as robust as regular ones, and the result may depend on whether variables are quantified in groups or one by one, as the following example shows. We will allow quantifiers both ways.

\begin{example}\label{exa:quantifier-order}
Let $\rel=\{(1,0,0),(1,1,0),(1,1,1),(2,2,2)\}$.
Then the formula 
$
\exists^{\equiv3}y \exists^{\equiv3}z\ \  \rel(x,y,z)
$
defines the set $\{1,2\}$, since in every step the number of extensions is not divisible by 3. However, the formula
$
\exists^{\equiv3}y,z\ \  \rel(x,y,z)
$
results in the set $\{2\}$. 
\end{example}

A primitive positive formula that uses modular existential quantifiers mod $p$ instead of regular ones is called a \emph{$p$-modular primitive positive}. The relation it defines is said to be \emph{$p$-mpp-definable}, and the definition itself is called a \emph{$p$-mpp-definition}. If for the $p$-mpp-definition~\eqref{equ:p-mpp} $\extp_\Phi(\ba)=1$ for all $\ba\in\rel$, the $p$-mpp definition is said to be \emph{strict}. By Proposition~5.6 from \cite{DBLP:conf/stoc/BulatovK22} if $\rel$ has a $p$-mpp-definition, it has a strict $p$-mpp definition.

Finally, we show that adding $p$-mpp-definable relations does not change the complexity of a CSP.

%% \begin{lemma}\label{lem:conjunctive-multi-sorted}
%% Let $\ovarrow\cH$ be a multi-sorted relational structure with signature $\sg$, $\rel$ conjunctive definable in $\ovarrow\cH$, and let $\ovarrow\cH+\rel$ denote the expansion of $\ovarrow\cH$ by the predicate symbol $\rel$ that is interpreted as the relation $\rel$ in $\ovarrow\cH$. Then $\NpCSP(\ovarrow\cH+\rel)\le_T\NpCSP(\ovarrow\cH)$.
%% \end{lemma}
%% 
%% \begin{proof}
%% Let $\rel$ be defined by a conjunctive formula 
%% \[
%% \relo_1(y_{11}\zd y_{1\ell_1})\wedge\dots\wedge \relo_s(y_{s1}\zd y_{s\ell_s}),
%% \] 
%% where $\vc\relo s\in\sg$ and $y_{ij}\in\{\vc xk\}$ for all $i,j$. We use the standard definition of the CSP. Let $\cP=(V,\cC)$ be an instance of $\NpCSP(\ovarrow\cH+\rel)$. If $\cC$ contains a constraint of the form $\ang{(\vc xk),\rel}$, replace it with constraints $\ang{(y_{11}\zd y_{1\ell_1}),\relo_1},\zd\ang{(y_{s1}\zd y_{s\ell_s}),\relo_s}$. As is easily seen, the resulting instance has exactly the same solutions as $\cP$. Repeat this procedure while constraints containing $\rel$ remain. The resulting instance $\cP'$ has the same solutions as $\cP$, and is an instance of $\NpCSP(\ovarrow\cH)$.
%% \end{proof}
%% 
\begin{theorem}\label{pro:GadgetExists}
Let $\ovarrow\cH$ be a be a $\sg$-structure with equality, and $p$ a prime. Let $\rel$ be a relation that is defined by
\[
R(\vc xk)=\existspr\vc y\ell \Phi(\vc xk, y) 
\]
then $\#_p\CSP(\ovarrow\cH+\ovarrow\rel)\le_T\#_p\CSP(\cH)$.
\end{theorem}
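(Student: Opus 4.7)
The plan is a gadget-replacement parsimonious reduction: given an instance $\cG$ of $\NpCSP(\cH+\rel)$ I will build in polynomial time a similar instance $\cG'$ of $\NpCSP(\cH)$ by replacing every atom with predicate $\rel$ by a fresh copy of the formula $\Phi$, and show that $\hom(\cG',\cH) \equiv \hom(\cG,\cH+\rel) \pmod p$. Only a single oracle call is needed. Before building the gadget I invoke the multi-sorted version of Proposition~5.6 of \cite{DBLP:conf/stoc/BulatovK22} to assume that the $p$-mpp-formula $\Phi$ defining $\rel$ is \emph{strict}, meaning $\extp_\Phi(\ba)=1$ for every $\ba\in\rel$; combined with $\extp_\Phi(\ba)=0$ for $\ba\notin\rel$, this makes $\ext_\Phi(\cdot)\bmod p$ the $\{0,1\}$-indicator of $\rel$. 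Lemma~\ref{lem:adding-equality} allows equality atoms in $\Phi$ at no cost.

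View $\cG$ in its standard form $(V,\cC)$. Build $\cG'=(V',\cC')$ by keeping every variable of $V$ and every constraint whose relation already lies in $\cH$, and by replacing each constraint $C=\langle(x_1,\ldots,x_k),\rel\rangle\in\cC$ with the following gadget: introduce a fresh tuple of typed variables $(y^C_1,\ldots,y^C_\ell)$, with types prescribed by $\Phi$, and add to $\cC'$ every atomic formula appearing in $\Phi(x_1,\ldots,x_k,y^C_1,\ldots,y^C_\ell)$. Because different $\rel$-constraints use disjoint fresh variables and $\Phi$ has constant size, the construction runs in polynomial time and $\cG'$ is a similar multi-sorted structure over $\cH$.

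Summing homomorphisms of $\cG'$ into $\cH$ and grouping by the restriction $\vf\colon V\to H$ to the original variables, the extensions over the fresh variables factor constraint-by-constraint, so
\[
\hom(\cG',\cH)\;=\;\sum_{\vf\colon V\to H}\Bigl(\prod_{C\text{ non-}\rel}[\vf\models C]\Bigr)\cdot\prod_{C=\langle\bx,\rel\rangle}\ext_{\Phi}(\vf(\bx)).
\]
Reducing modulo $p$ and applying strictness to each factor of the second product, each term equals $1$ if $\vf$ satisfies every constraint of $\cG$, including the $\rel$-constraints, and $0$ otherwise. Hence $\hom(\cG',\cH)\equiv\hom(\cG,\cH+\rel)\pmod p$, as required.

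The main obstacle is essentially confined to the strictness step: once $\Phi$ may be assumed strict, the counting identity collapses automatically via the multiplicativity of $\ext_\Phi$ across the disjoint gadgets. In the multi-sorted setting this requires checking that Proposition~5.6 of \cite{DBLP:conf/stoc/BulatovK22} survives the typed reformulation, but its proof is a combinatorial padding argument that only touches extension counts modulo $p$ and is type-preserving, so it should transfer verbatim. A minor bookkeeping point is that the types assigned to the fresh variables $y^C_j$ must match those prescribed by the quantifier prefix of $\Phi$ so that every new atom is a legal constraint of $\cH$.
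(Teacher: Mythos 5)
Your proposal follows essentially the same approach as the paper's proof: assume a strict $p$-mpp-definition (via Proposition~5.6 of~\cite{DBLP:conf/stoc/BulatovK22}), replace each $\rel$-atom by a fresh disjoint copy of $\Phi$, and use strictness together with the multiplicativity of extension counts to conclude $\hom(\cG',\cH)\equiv\hom(\cG,\cH+\rel)\pmod p$. Your explicit sum-and-factorization makes the ``$0\pmod p$'' contribution of non-solutions visible, which the paper leaves implicit, but this is a presentational rather than a substantive difference.
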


\begin{proof}
Without loss of generality we may assume that the $p$-mpp-definition of $\rel$ given in the theorem is strict. We use the standard definition of the CSP. 

Take a problem instance $\cP = (V,\cC)$ of $\#_p\CSP(\cH + \rel)$ where $\cC = \{ C_1,\dots,C_t\}$. Without loss of generality we may assume that $C_1,\dots, C_m$, $m \leq t$, are the constraints containing $\rel$. Then the arity of $C_i$ is $k$ for $i\in [m]$.

Let $\cP'$ be the problem instance from $\#_p\CSP(\cH)$ in which each constraint $C_i = \constCSP{(x_{i1},\dots,x_{ik})}{\rel}$ is replaced with formula $\Phi(x_{i1},\dots,x_{ik},\vc {y^i}\ell)$ where all the variables $y^i_j$ are different and do not occur in $\cP$.

For a solution $\vf$ to $\cP$, as for every $i\in[m]$, $\rel(x_{i1},\dots,x_{ik})$ is true, $\vf$ can be extended to a solution $\vf'$ of $\cP'$. Since the $p$-mpp-definition of $\rel$ we use is strict, the number $\ext(\vf)$ of such extensions is equal to $1\pmod p$. Therefore, the number of solutions of $\cP$ and that of $\cP'$ are equal $\pmod p$. The result follows.
\end{proof}

%% 
%% \begin{theorem}[see also \cite{DBLP:conf/stoc/BulatovK22}]\label{the:conjunctCSP}
%% Let $\ovarrow\cH$ be a relational structure and $p$ prime.\\[1mm]
%% Let $\rel$ be $\meet$-definable in $\ovarrow\cH$.
%% %% , and $\ovarrow\cH+\rel$ denote the expansion of $\cH$ by a predicate symbol $\rel$ that is interpreted as the relation $\rel$ in $\cH$. 
%% Then $\NpCSP(\ovarrow\cH+\rel)$ is Turing reducible to $\NpCSP(\ovarrow\cH)$.
%% \end{theorem}

For a structure $\ovarrow\cH$ the set of all relations pp-definable in $\ovarrow\cH$ is called the relational clone of $\ovarrow\cH$ and denoted by $\langle \ovarrow\cH \rangle$. The set of relations  $p$-mpp-definable in $\cH$ is called the \emph{$p$-modular clone} and  denoted by $\SP{\cH}_p$.

%%%%%%%%%%%%%%%%%%%%%%%%%%%%%%%%%%%%%%%
%%%%%%%%%%%%%%%%%%%%%%%%%%%%%%%%%%%%%%%
\section{Rigidity}\label{sec:rigidity}

%% \input{Data/d-rigidity}

%%%%%%%%%%%%%%%%%%%%%%%%%%%%%%%%%%%%%%%
\paragraph*{Rigid graphs and rigid structures}\label{sec:rigid-non-rigid}
% p-rigidity 
% counterexample (the structure with good polynomials)
% algorithm based on polynomials
%% In this section it will be convenient to use the \emph{standard} definition of the CSP, see \cite{ref:POlymorphismAndUsethem_barto2017polymorphisms} or Appendix~\ref{sec:two-views}.

Recall that the group of automorphisms of a multi-sorted structure $\ovarrow\cH$ is denoted by $\Aut(\ovarrow\cH)$. An automorphism $\ovarrow\pi\in\Aut(\ovarrow\cH)$ has order $p$ if every component of $\ovarrow\pi$ has order $p$ or is the identity permutation, and at least one of the components has order $p$.
%% $\pi^p$ is the identity mapping, while $\pi^\ell$ is not identity for any $\ell<p$. 
Relational structures without order $p$ automorphisms are called \emph{$p$-rigid}. By the result of \cite{DBLP:conf/stoc/BulatovK22} if $\cH$ is a $p$-rigid graph then $\CSPp \cH$ is polynomial time if and only if $\#CSP(\cH)$ is. The following example shows that it is not the case for relational structures.

\begin{example}\label{exa:bad-triangle}
Fix a prime number $p$ and let $T_p=\{0\zd p-1,p,p+1\}$. A relational structure $\cT_p$ has the base set $T_p$ and predicates $\rel,C_0\zd C_{p+1}$, where $C_a$ is the constant relation $\{(a)\}$ and $\rel=T^2_p-\{(0,p)\zd(p-1,p)\}$. The structure $\cT_p$ is obviously $p$-rigid as it contains all the constant relations. By \cite{Bulatov17:dichotomy,Zhuk20:dichotomy} the decision $\CSP(\cT_p)$ is solvable in polynomial time, while the exact counting problem $\#\CSP(\cT_p)$ is $\#$P-complete by \cite{ref:BULATOV_TowardDichotomy}, as it contains an induced subgraph (or subrelation) $\{(0,0),(p,0),(p,p)\}$ (or alternatively it does not have a Mal'tsev polymorphism). In fact, structures like $\cT_p$ play a very important role in proving that having a Mal'tsev polymorphism or avoiding induced subgraphs as above is a necessary condition for the tractability of $\#\CSP$, see \cite{ref:BULATOV_TowardDichotomy}. We now show that $\NpCSP(\cT_p)$ can be solved in polynomial time. 

We outline the idea of the algorithm here and later will describe a more general algorithm. Let $\cP=(V,\cC)$ be an instance of $\NpCSP(\cT_p)$. Perform the following three steps:\\[2mm]
{\sc Step 1.} Establish the \emph{arc-consistency} of $\cP$. That is, repeat the following procedure until no changes are possible. Pick constraints $C_1=\ang{\bs_1,\relo_1}, C_2=\ang{\bs_2,\relo_2}$ such that $W=\bs_1\cap\bs_2\ne\eps$ and a tuple $\ba\in\relo_1$. Since the arc-consistency algorithm modifies constraint relations, $\relo_1,\relo_2$ represent the current state of the corresponding constraint relations. If there is no tuple $\bb\in\relo_2$ such that $\pr_W\bb=\pr_W\ba$, remove $\ba$ from $\relo_1$.\\[2mm]
{\sc Step 2.} For every variable $v\in V$ there is now a \emph{domain} $D_v$ such that, for any $C=\ang{\bs,\relo}\in\cC$ with $v\in\bs$, it holds that $\pr_v\rel=D_v$. If $D_v=\eps$ for some $v\in V$, output 0, as $\cP$ has no solutions in this case. Otherwise remove all the variables from $V$ such that $|D_v|=1$. Denote the resulting problem $\cP'=(V',\cC')$. The number of solutions to $\cP'$ equals that of $\cP$.\\[2mm]
{\sc Step 3.} It is not difficult to see that if $D_v\cap\{0\zd p-1\}\ne\eps$ for some variable $v\in V'$ then $\{0\zd p-1\}\sse D_v$. Moreover, if $\vf$ is a solution of $\cP'$ and $\vf(v)\in\{0\zd p-1\}$, then any mapping $\psi$ such that $\psi(v)\in\{0\zd p-1\}$ and $\psi(w)=\vf(w)$, $w\in V'-\{v\}$, is also a solution of $\cP'$. For every such variable $v$ remove $\{0\zd p-1\}$ from $D_v$, as the number of solutions $\vf$ to $\cP'$ with $\vf(v)\in\{0\zd p-1\}$ is a multiple of $p$. Then repeat {\sc Step 2}.\\[2mm]
{\sc Step 4.} In the remaining case $D_v=\{p,p+1\}$ for every $v\in V'$. It is not difficult to see that any mapping from $V'$ to $\{p,p+1\}$ is a solution of $\cP'$. Therefore, output $2^{|V'|}\pmod p$.
\end{example}

%%%%%%%%%%%%%%%%%%%%%%%%%%%%%%%%%%%%%%%
\paragraph*{Refining multi-sorted structures}\label{sec:refinement}

In this subsection we introduce a different concept of rigidity that is stronger than the one used before. In particular, it will explain the tractability of the problem from Example~\ref{exa:bad-triangle}.

While Lemma~\ref{lem:aut-reduction-multi-sorted-structures-prelim} allows one to reduce CSPs over non-$p$-rigid structures, it is sometimes possible to go further and reduce a CSP to one with a richer structure, and a richer set of automorphisms. Let $\ovarrow\cH=(\colect{H_i}{i\in[k]};\vc\rel m)$ be a multi-sorted relational structure. 
We say that a structure $\ovarrow\cG$ is a \emph{refinement} of $\ovarrow\cH$ if it satisfies the following conditions: 
\begin{itemize}
\item [(a)] 
$\ovarrow\cG=(\colect{G_i}{i\in[q]};\vc\relo t)$, where the $G_i$'s are pairwise disjoint,
\item [(b)] 
for every $i\in[q]$, there is an injective mapping $\xi_i:G_i\to H_{i'}$ for some $i'\in[k]$,
\item [(c)] 
for every $j\in[t]$ there is $j'\in[m]$ with $\rel_{j'}\sse H_{i_1}\tm\dots\tm H_{i_\ell}$ and $\relo_j=\{(\vc a\ell)\in G_{i'_1}\tms G_{i'_\ell}\mid(\xi_{i'_1}(a_1)\zd\xi_{i'_\ell}(a_\ell))\in\rel_{j'}\}$, where $G_{i'_r}$ is such that $\xi_{i'_r}(G_{i'_r})\sse H_{i_r}\cap\pr_r\rel_{j'}$.
\end{itemize}
Condition (a) is required because the domains of a multi-sorted structure have to be disjoint. Condition (b) basically says that $G_i$ consists of copies of some elements of $H_{i'}$. Condition (c) amounts to saying that $\relo_j\sse\rel_{j'}$, except it uses copies of the elements of $\cH$. In the notation of item (c) we use $\xi(\vc a\ell)$ to denote $(\xi_{i'_1}(a_1)\zd\xi_{i'_\ell}(a_\ell))$, we use $\xi(\relo_j)$ to denote $\{\xi(\vc a\ell)\mid (\vc a\ell)\in\relo_j\}$, and $\xi^{-1}(\rel_{j'})$ for $\{(\vc a\ell)\in G_{i'_1}\tms G_{i'_\ell}\mid \xi(\vc a\ell)\in\rel_{j'}\}$.

It is possible that while $\cH$ is $p$-rigid, its refinement is not and Lemma~\ref{lem:aut-reduction-multi-sorted-structures-prelim} may apply.

\begin{example}[Continued from Example~\ref{exa:bad-triangle}]\label{exa:bad-triangle2}
Consider the structure $\cT_p$ from Example~\ref{exa:bad-triangle}. Let $G_i=\{i'\}$ for $i\in\{0\zd p+1\}$, $G_{p+2}=\{p'',(p+1)''\}$, and $G_{p+3}=\{0^*\zd (p-1)^*,(p+1)^*\}$. For convenience we use $G_{-1}$ to denote $T_p$. The mappings $\xi_i:G_i\to H$ simply erase dashes and stars of elements from $G_i$. Then for every $i,j\in\{-1\zd p+3\}$ we introduce relation $\relo_{ij}=\xi^{-1}(\rel\cap(\xi_i(G_i)\tm \xi_j(G_j))$. For instance, $\relo_{-1p+2}=(\{0\zd p+1\}\tm\{p+1\})\cup\{(p,p),(p+1,p)\}$. Also, by construction the only constant relations needed to be introduced are $C_{i'}$ on $G_i$ for $i\in\{0\zd p+1\}$. The structure 
\[
\ovarrow\cT^*_p=(G_{-1}\zd G_{p+3};\relo_{ij}, i,j\in\{-1\zd p+3\}: C_{i'}, i\in\{0\zd p+1\})
\]
is a refinement of $\cT_p$. The structure $\ovarrow\cT*_p$ is not $p$-rigid, as the collection of mappings $\pi_i$, $i\in\{-1\zd p+3\}$, defined as follows is an automorphism: The mappings $\pi_0\zd\pi_{p+2}$ are identity mappings. The mapping $\pi_{-1}$ is given by $\pi_{-1}(p)=p,\pi_{-1}(p+1)=p+1$, and $\pi_{-1}(a)=a+1\pmod p$ for $a\in\{0\zd p-1\}$. Finally, $\pi_{p+3}$ is the restriction of $\pi_{-1}$ on $G_{p+3}$ (with stars added to the elements of $G_{p+3}$).
\end{example}

In order to relate refinement structures with reductions between counting problems we introduce two special types of refinement. First of all, we will need an alternative approach to pp-definitions based on homomorphisms, see \cite{Feder98:computational,ref:kolaitis2004constraint}.

\begin{lemma}\label{lem:pp-gadget}
A predicate $\rel(\vc xk)$ is pp-definable in a multi-sorted structure $\ovarrow\cH$ containing the equality predicate if and only if there exists a similar structure $\ovarrow\cG_R$ containing vertices $\vc xk$ such that for any $(\vc{a}{k})\in R$ it holds that $(\vc ak)\in\rel$ if and only if there is a homomorphism from $\ovarrow\cG_\rel$ to $\ovarrow\cH$ that maps $x_i$ to $a_i$, $i\in[k]$.
We will say that $\ovarrow\cG_\rel$ \emph{defines} $R$ in $\ovarrow\cH$.
\end{lemma}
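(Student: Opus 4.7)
The plan is to prove the two directions by explicit translations between pp-formulas and relational ``gadget'' structures; this is the multi-sorted analogue of the classical homomorphism interpretation of pp-definitions from \cite{Feder98:computational}.

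For the forward direction, suppose $\rel(\vc xk)=\exists\vc ys\,\Phi(\vc xk,\vc ys)$, where $\Phi$ is a conjunction of atomic formulas $\relo(\vc z\ell)$ (with $\relo$ either a symbol of $\ovarrow\cH$ or the equality on one of the sorts), and every variable carries a type matching the sort constraints of the atoms. I would define the gadget $\ovarrow\cG_\rel$ to be the similar multi-sorted structure whose base sets $\{G_i\}_{i\in[k]}$ consist of the variables $\vc xk,\vc ys$ sorted by their types $\tau(\cdot)$, and whose interpretation of each symbol $\relo$ is exactly the set of tuples $(\vc z\ell)$ appearing in the corresponding atom of $\Phi$. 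Equality atoms can be collapsed by identifying the corresponding variables (or, since $\ovarrow\cH$ contains the equality predicate, kept as literal constraints). The required biconditional is then immediate: a homomorphism $\ovarrow\vf:\ovarrow\cG_\rel\to\ovarrow\cH$ with $\vf_{\tau(x_i)}(x_i)=a_i$ is the same data as a witness assignment $\bb$ to $\vc ys$ making $\Phi(\ba,\bb)$ true.

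For the reverse direction, given a gadget $\ovarrow\cG_\rel=(\colect{G_i}{i\in[k]};\relo_1^{\ovarrow\cG_\rel}\zd)$ with designated vertices $\vc xk$, I would introduce one first-order variable $z_v$ for every element $v\in G$ (typed by the sort of $v$), label $z_{x_i}$ as $x_i$, and form the conjunction
\[
\Phi(\vc xk,\{z_v\}_{v\in G\setminus\{x_1,\dots,x_k\}})=\bigwedge_{\relo\in\sg}\;\bigwedge_{(\vc v\ell)\in\relo^{\ovarrow\cG_\rel}}\relo(z_{v_1}\zd z_{v_\ell}).
\]
Existentially quantifying the auxiliary $z_v$ yields a pp-formula in the signature of $\ovarrow\cH$, and one checks that an assignment to the quantified variables satisfying $\Phi(\vc ak,\cdot)$ is exactly the same as a homomorphism $\ovarrow\cG_\rel\to\ovarrow\cH$ sending $x_i\mapsto a_i$; this is where the condition that $\ovarrow\cH$ contains the equality is used, to make sure type constraints and repeated occurrences of the same vertex in several tuples are correctly enforced.

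Most of the work is bookkeeping: ensuring that types line up on both sides of each atom, that disjointness of the sorts in $\ovarrow\cG_\rel$ is respected, and that equality atoms are handled consistently when translating between the two formalisms. The only non-trivial point is this handling of equality and of variables shared between atoms: without the equality predicate in $\ovarrow\cH$, one must either identify variables in $\ovarrow\cG_\rel$ (changing its base set) or allow equality in $\Phi$, which is precisely the reason for the assumption in the lemma. Once these conventions are fixed, both translations are inverse to each other up to the natural identification of homomorphisms with satisfying assignments, and the biconditional follows.
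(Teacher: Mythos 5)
The paper does not actually prove Lemma~\ref{lem:pp-gadget}; it states it as a known fact with a citation to \cite{Feder98:computational,ref:kolaitis2004constraint}. Your proposal supplies the standard canonical-structure argument, correctly adapted to the multi-sorted setting, and is sound: in the forward direction you build the gadget from the variables of the pp-formula, sorted by type, with one tuple per atom; in the reverse direction you introduce one variable per vertex and one atom per tuple in each relation of the gadget, and the identification of satisfying assignments with homomorphisms is immediate. One small inaccuracy: you locate the need for the equality predicate in the reverse direction, claiming it is used ``to make sure type constraints and repeated occurrences of the same vertex in several tuples are correctly enforced''; but repeated occurrences of a vertex $v$ are automatically shared because they all become the same variable $z_v$, and types are handled by the typing discipline of multi-sorted pp-formulas, so neither requires equality. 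Where equality genuinely earns its keep in the reverse direction is the degenerate case where two designated vertices $x_i,x_j$ coincide as elements of $G$ while the predicate $\rel(\vc xk)$ is required to have $k$ formally distinct free variables, in which case you must add $x_i=_H x_j$ to the formula; and in the forward direction, equality atoms of $\Phi$ can either be kept as literal constraints (using equality in $\ovarrow\cH$) or eliminated by identifying variables, as you note. These are bookkeeping details and do not affect the substance; the proof stands.
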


Let $\ovarrow\cH=(\colect{H_i}{i\in[k]};\vc\rel m)$ be a multi-sorted relational structure. The \emph{Gaifman graph} of $\ovarrow\cH$ is the graph $G(\ovarrow\cH)=(V,E)$, where $V=\bigcup_{i\in[k]}H_i$ and $(a,b)\in E$ if and only if there is $j\in[m]$ and $\ba\in\rel_j$ such that $\ba[s]=a,\ba[t]=b$ for some coordinate positions $s,t$ of $\rel_j$. The structure $\ovarrow\cH$ has \emph{treewidth $d$} if $G(\ovarrow\cH)$ has treewidth $d$. 

We say that a refinement $\ovarrow\cG=(\colect{G_i}{i\in[q]};\vc\relo t)$ of $\ovarrow\cH$ is \emph{width $d$ definable} if for every $i\in[q]$ there is a structure $\ovarrow\cG_i$ of treewidth at most $d$ that defines $\xi_i(G_i)$ in $\ovarrow\cH$. In a similar way, we say that $\ovarrow\cG$ is a \emph{definable} refinement if for every $i\in[q]$ the set $\xi_i(G_i)$ is pp-definable in $\ovarrow\cH$. Finally, we say that $\ovarrow\cG$ is \emph{the full width $d$ definable refinement} (respectively, \emph{full definable refinement}) if it satisfies the following conditions. 
\begin{itemize}
\item[(1)] 
It is a width $d$ definable refinement (respectively, a definable refinement). 
\item[(2)] 
For every unary relation $U$ definable in $\cH$ by a structure of treewidth at most $d$ (respectively, pp-definable unary relation) there is $i\in[q]$ such that $\xi_i(G_i)=U$. 
\item[(3)] 
For every relation $S$ obtained from some relation $\rel_j$ by restricting it to domains definable by structures of width $d$ (respectively by pp-definable domains), there is $\relo_{j'}$ such that $\xi(\relo_{j'})=S$. 
\end{itemize}
Since the original domains $H_i$, $i\in[k]$, have trivial pp-definitions, they (or rather their copies) are always among the $G_j$'s, and copies of the original relations are also among the $\relo_j$'s, although they may be over different, smaller domains than the $\rel_i$'s. 

Next, we extend refinements to CSP instances. Let $\ovarrow\cG=(\colect{G_i}{i\in[q]};\vc\relo t)$ be a refinement of $\ovarrow\cH$ and $\cP=(V,\cC)$ an instance of $\CSP(\ovarrow\cH)$. Recall that every variable $v\in V$ is assigned a sort $\sg(v)\in[k]$. Let $\sg':V\to[q]$ be such that $\xi_{\sg'(v)}:G_{\sg'(v)}\to H_{\sg(v)}$ for each $v\in V$. The instance $\cP^{\sg'}=(V,\cC^{\sg'})$ is said to be a \emph{refinement for $\ovarrow\cG$} of $\cP$ with the sort function $\sg'$ if it satisfies the following two conditions
\begin{itemize}
\item[(a)] 
every $v\in V$ is assigned the sort $\sg'(v)$;
\item[(b)]
for every $C=\ang{\bs,\rel}\in\cC$, $\bs=(\vc v\ell)$, it holds that $\xi_{\sg'(v)}(G_{\sg'(v_i)})\sse\pr_i\rel$, $i\in[\ell]$, and there is $C'=\ang{\bs,\xi^{-1}(\rel)}\in\cC^{\sg'}$.
\end{itemize}
The refinement $\cP^{\sg'}$ is \emph{lossless} if for every solution $\vf$ of $\cP$ the mapping $\xi^{-1}\circ\vf$ is a solution of $\cP^{\sg'}$. Suppose $\ovarrow\cG$ is full pp-definable. The refinement $\cP^{\sg'}$ is \emph{minimal lossless} if it is lossless and for each $v\in V$, $\sg'(v)$ is minimal (with respect to inclusion of $\xi_{\sg'(v)}(G_{\sg'(v)}$ in the original domain). If $\ovarrow\cG$ is full of width $d$, the definition is bit more complicated. As we saw in Section~\ref{sec:two-views}, the instance $\cP$ can also be viewed as a structure $\ovarrow\cF$ with vertex set $V$ and such that the solutions of $\cP$ are exactly the homomorphisms from $\ovarrow\cF$ to $\ovarrow\cH$. Then $\cP^{\sg'}$ is \emph{minimal lossless of width $d$} if it is lossless and for every $v\in V$, $\xi_{\sg'(v)}(G_{\sg'(v)})$ is minimal with respect to inclusion among unary relations defined by a structure $\ovarrow\cG_v$ of treewidth $d$ with a designated variable $x$ and such that there is a homomorphism from $\ovarrow\cG_i$ to $\ovarrow\cF$ mapping $x$ to $v$.

\begin{proposition}\label{pro:treewidth-refinement}
Let $\ovarrow\cH$ be a multi-sorted relational structure.\\[2mm]
(1) Let $\ovarrow\cG$ be the full width $d$ definable refinement of $\ovarrow\cH$. For any instance $\cP$ of $\CSP(\ovarrow\cH)$, its minimal lossless refinement of width $d$ for $\ovarrow\cG$ can be found in polynomial time.\\[2mm]
(2) Let $\ovarrow\cH$ be a relational structure containing all the constant relations and such that $\CSP(\ovarrow\cH)$ is solvable in polynomial time, and $\ovarrow\cG$ the full definable refinement of $\ovarrow\cH$. Then for any instance $\cP$ of $\CSP(\ovarrow\cH)$, its minimal lossless refinement for $\cG$ can be found in polynomial time.
\end{proposition}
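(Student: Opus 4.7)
Both parts of the proposition share a single template: view $\cP$ as a multi-sorted structure $\ovarrow\cF$ whose homomorphisms to $\ovarrow\cH$ are exactly the solutions of $\cP$. For each variable $v\in V$ I plan to compute the minimum admissible target set $U_v\sse H_{\sg(v)}$; then, using fullness of $\ovarrow\cG$, locate the unique $i\in[q]$ with $\xi_i(G_i)=U_v$ and declare $\sg'(v)=i$. Each refined constraint $\ang{\bs,\xi^{-1}(\rel)}\in\cC^{\sg'}$ is assembled in linear time from $\cP$ and $\ovarrow\cG$, so the whole question reduces to computing $U_v$ in polynomial time.

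For part (2), I would use the polynomial-time algorithm for $\CSP(\ovarrow\cH)$ as an oracle. For each $v\in V$ and each $a\in H_{\sg(v)}$, append the constant constraint $\ang{(v),C_a}$ to $\cP$ and query the oracle for satisfiability, collecting into $U_v$ those $a$ for which the augmented instance is satisfiable. The resulting $U_v$ is exactly the projection of the solution set of $\cP$ onto $v$, and therefore pp-definable in $\ovarrow\cH$ via the pp-formula obtained by existentially quantifying away all other variables of $\cP$. Fullness of $\ovarrow\cG$ guarantees a matching domain $G_i$; minimality is immediate because any lossless refinement must already contain this projection. The total number of oracle calls is $O(|V|\cdot|H|)$.

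For part (1), I would compute $U_v$ by running the standard $(d{+}1)$-consistency procedure, equivalently the existential $(d{+}1)$-pebble game, on $(\ovarrow\cF,\ovarrow\cH)$ until closure; for fixed $d$ this runs in time polynomial in $|V|$, $|H|$, and $|\sg|$. By the classical duality between bounded-width consistency and bounded-treewidth definability, a value $a\in H_{\sg(v)}$ survives at $v$ after closure if and only if, for every treewidth-$d$ structure $\ovarrow\cG_v$ with designated variable $x$ admitting a homomorphism to $\ovarrow\cF$ sending $x\mapsto v$, there exists a homomorphism $\ovarrow\cG_v\to\ovarrow\cH$ sending $x\mapsto a$. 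Hence the surviving set is precisely the required minimum, and losslessness is automatic since every genuine solution respects all locally propagated patterns and therefore is never removed.

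The main obstacle lies on the side of part (1): one must exhibit a single treewidth-$d$ structure $\ovarrow\cG_v^\star$ with designated variable $x$ realising $\pr_x\Hom(\ovarrow\cG_v^\star,\ovarrow\cH)=U_v$, so that fullness of $\ovarrow\cG$ actually yields a matching $G_i$. The plan is to extract this canonical gadget from the closed consistency table by gluing the surviving partial assignments on $(d{+}1)$-windows touching $v$ along their shared coordinates into a structure carrying a tree decomposition of width $d$. Verifying that this gadget genuinely realises $U_v$, together with the bookkeeping of sorts in the multi-sorted setting when matching computed sets to the domains of $\ovarrow\cG$, is the step that will demand the most care.
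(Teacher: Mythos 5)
Your proposal takes essentially the same route as the paper: for part (2) you solve $O(|V|\cdot|H|)$ constant-augmented decision instances to obtain $D_v=\pr_v(\text{Sol}(\cP))$ and use fullness of $\ovarrow\cG$ to match types, which is exactly the paper's argument; for part (1) you run bounded-width consistency (the paper uses $d$-consistency, you use $(d{+}1)$-consistency — the same idea up to the usual off-by-one in naming) and then identify the surviving domains with treewidth-$d$-definable unary relations. The ``main obstacle'' you flag — producing a single treewidth-$d$ gadget $\ovarrow\cG_v^\star$ realising $U_v$, rather than only knowing $U_v$ is the intersection of many such gadgets' projections — is indeed the crux, but it is precisely the classical duality between local consistency and the existential pebble game; the paper dispatches it by citing Feder–Vardi and Kolaitis, whereas you sketch the standard proof of that duality (gluing the surviving partial assignments along a tree decomposition of width $d$). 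So your plan is sound and coincides with the paper's; the only gap is that your gadget-gluing step is left as a plan while the paper invokes it as a known result.
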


\begin{proof}
The method we use in the proof of this proposition is, given an instance $\cP=(V,\cC)$ of $\CSP(\ovarrow\cH)$, to, first, identify the correct domains for each variable, and then restrict $\cP$ to those domains and assign to each variable the type associated with its domain.

(1) In this case we first run the $d$-consistency algorithm on $\cP$ (for definitions and properties of this algorithm see \cite{Feder98:computational,ref:kolaitis2004constraint}). This reduction is lossless. Let $G_{i_v}$ denote the domain of $v\in V$ after establishing $d$-consistency. By the results of \cite{Feder98:computational,ref:kolaitis2004constraint} every domain $G_{i_v}$ is definable by a structure of treewidth at most $d$ and is minimal with this property. Therefore, each $G_{i_v}$ is a domain of $\ovarrow\cG$. 

(2) By solving instances of the form $\cP_{v=a}=(V,\cC\cup\{\ang{(v),C_a(v)})$ for each $v\in V$ and $a$ from the domain of $v$ we can determine the sets $D_v=\{a\mid \vf(v)=a, \text{ $\vf$ is a solution of $\cP$}\}$. Since the $D_v$'s are defined through a CSP instance, they are pp-definable in $\ovarrow\cH$, and therefore every $D_v$ is a domain in $\ovarrow\cG$. We assign $v$ the corresponding type. Clearly, the refinement is lossless and minimal.
\end{proof}

\begin{example}[Continued from Example~\ref{exa:bad-triangle}]\label{exa:bad-triangle3}
Consider again the structure $\cT_p$ from Example~\ref{exa:bad-triangle} and its refinement $\ovarrow\cT^*_p$ from Example~\ref{exa:bad-triangle2} to solve $\NpCSP(\cT_p)$. It is not hard to see that $\ovarrow\cT^*_p$ is the full refinement of $\cT_p$ of treewidth 1. Indeed, the set $\xi_i(G_i)$, $i\in\{0\zd p+1\}$, is defined by the pp-formula $C_i(x)$, and 
\[
\xi_{p+2}(G_{p+2})(x)=\exists y(\rel(x,y)\wedge C_p(y)),\qquad 
\xi_{p+3}(G_{p+3})(x)=\exists y(\rel(y,x)\wedge C_0(y)).
\]
Therefore, by Proposition~\ref{pro:treewidth-refinement} $\#_pCSP(\cT_p)$ is polynomial time reducible to $\#_pCSP(\ovarrow\cT^*_p)$. As was observed in Example~\ref{exa:bad-triangle2}, the structure $\ovarrow\cT^*_p$ has an automorphism $\pi$ of order $p$. The reduced structure $(\ovarrow\cT^*_p)^\pi$ has the domains $G_0\zd G_{p+1},G_{p+2}$ unchanged, $G_{-1}$ reduced to $\{p,p+1\}$, $G_{p+3}$ reduced to $\{(p+1)^*\}$, and all the relations are Cartesian products of the corresponding domains. Such a problem is obviously solvable in polynomial time. 
\end{example}

%%%%%%%%%%%%%%%%%%%%%%%%%%%%%%%%%%%%%%%
%%%%%%%%%%%%%%%%%%%%%%%%%%%%%%%%%%%%%%%
\section{Maltsev polymorphisms, $p$-Rectangularity, and $p$-Permutability}\label{sec:rectangularity-main}

The concepts and results from this section can be generalized to multi-sorted structures. However, the single-sorted case is sufficient to communicate the main ideas.

%%%%%%%%%%%%%%%%%%%%%%%%%%%%%%%%%%%%%%%%%%%%%%%%%%%%%%%%%%%%%%%%%%%%%%%%%%%%%%%%%%
\subsection{The four properties}

We use the four properties of relational structures that are strongly related to the complexity of counting problems.

\noindent\textbf{Mal'tsev polymorphisms.}
Let $H$ be a set. A mapping $\FUNC{\vf}{H^3}{H}$ is said to be a \emph{Mal'tsev} polymorphism of a relation $\rel \subseteq H^n$ if for any $\vv{u}_1,\vv{u}_2, \vv{u}_3 \in\rel$, it holds that $\vf(\vv{u}_1,\vv{u}_2, \vv{u}_3)\in \rel$ (applied component-wise), and $\vf$  satisfies the equations $\vf(a,b,b)=\vf(b,b,a)=a$, for all  $a,b\in H$. We say that the relational structure $\cH$ has Mal'tsev polymorphism if all relations $\rel \in \langle \cH \rangle$ admit the same Mal'tsev polymorphism.
%% Similarly, we say that the relational structure $\cH$ has Mal'tsev polymorphism is all relations $\rel \in \langle \cH \rangle _p$ admits to the same Mal'tsev polymorphism.

Let $\rel$ be an $n$-ary relation over $H^n$. In general $|\rel|$ can be exponentially large in $n$. However if $\rel$ has a Mal’tsev polymorphism, $\rel$ admits a concise representation, whose size is linear in $n$. Bulatov and Dalmau \cite{Bulatov06:simple}, and Dyer and Richerby~\cite{effective-Dyer-doi:10.1137/100811258} introduced slightly different concepts of such concise representations called “compact representation” in \cite{Bulatov06:simple} and “frames” in \cite{effective-Dyer-doi:10.1137/100811258}. We will return to these constructions in Section~\ref{sec:parity}.

\noindent\textbf{$p$-Rectangularity.} 
A binary relation $\rel\subseteq A_1 \times A_2$ is called \emph{rectangular} if $(a, c),(a, d),(b, c) \in\rel$ implies $(b, d) \in\rel$ for any $a, b \in A_1, c, d \in A_2$. 
A relation $\rel\subseteq H^n$ for $n \geq 2$ is rectangular if for every $I\subsetneq[n]$, the relation $R$ is rectangular when viewed as a binary relation, a subset of $\pr_I\rel\tm\pr_{[n]-I}\rel$. Note that rectangularity is a structural property, and it has nothing to do directly with the size of the relation or its parts. A relational structure $\cH$ is \emph{strongly rectangular} if every relation $\rel\in \langle \cH \rangle$ of arity at least 2 is rectangular. 
The following lemma provides a connection between strong rectangularity and Mal'tsev polymorphisms and was first observed in \cite{Hagemann73:permutable} although in a different language.

\begin{lemma}[\cite{Hagemann73:permutable}, see also \cite{effective-Dyer-doi:10.1137/100811258}]
A relational structure is strongly rectangular if and only if it has a Mal’tsev polymorphism.
\end{lemma}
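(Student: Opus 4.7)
The plan is to prove the two directions separately. The forward direction is straightforward: assume $\cH$ has a Mal'tsev polymorphism $f$, which, being a polymorphism of every basic relation of $\cH$, is automatically a polymorphism of every relation in $\SP\cH$ by the standard Galois correspondence. So it suffices to show that any $n$-ary relation $\rel$ admitting a Mal'tsev polymorphism is rectangular. Fix $I \subsetneq [n]$ and take $(\ba,\bc),(\ba,\bd),(\bb,\bc) \in \rel$, with $\ba,\bb$ on coordinates $I$ and $\bc,\bd$ on the complement. Applying $f$ coordinate-wise to the three tuples $(\bb,\bc),(\ba,\bc),(\ba,\bd) \in \rel$, the $I$-part collapses to $f(\bb,\ba,\ba) = \bb$ via the identity $f(x,y,y)=x$, and the complementary part to $f(\bc,\bc,\bd) = \bd$ via $f(y,y,x)=x$, so $(\bb,\bd) \in \rel$ as required.

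For the converse, assume $\cH$ is strongly rectangular. The strategy is to construct a Mal'tsev polymorphism $f: H^3 \to H$ by specifying its graph as a suitable pp-definable relation. Let $\Phi \subseteq H^4$ be the intersection of all pp-definable (in $\cH^=$) relations that contain every basic tuple of the form $(a,b,b,a)$ and $(b,b,a,a)$ for $a,b \in H$; this intersection is itself pp-definable because the lattice of pp-definable relations on a finite set is finite and closed under conjunction. If $\Phi$ is the graph of a total function $f: H^3 \to H$, then the Mal'tsev identities $f(a,b,b)=a$ and $f(b,b,a)=a$ hold by construction, and $f$ is automatically a polymorphism of every relation in $\SP\cH$ because operations whose graphs are pp-definable preserve all pp-definable relations.

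The crux is verifying that $\Phi$ is the graph of a total function. For functionality, suppose $(a,b,c,d_1),(a,b,c,d_2) \in \Phi$ with $d_1 \neq d_2$; viewing $\Phi$ as a binary relation between its first three coordinates and its fourth, strong rectangularity should let one propagate the ambiguity in the fourth coordinate to a basic tuple, producing a witness of the form $(a^*,b^*,b^*,d')\in\Phi$ with $d'\neq a^*$, contradicting the minimality of $\Phi$. Totality amounts to $\pr_{\{1,2,3\}}\Phi = H^3$: every triple $(a,b,c)$ must be extended by some $d$ inside $\Phi$, which is shown by combining basic triples through iterated rectangular joins, starting from the diagonal case $(a,a,a)$ obtained by setting $a=b$ in the first family of basic tuples.

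The main obstacle is making the propagation arguments precise without an explicit syntactic handle on $\Phi$. The cleanest workaround is to invoke the Pol-Inv Galois correspondence directly: strong rectangularity is equivalent to the algebra $(H;\Pol(\cH))$ having permuting congruences, and by the classical theorem of Mal'tsev (see also Hagemann--Mitschke) this is in turn equivalent to the existence of a Mal'tsev term for that algebra --- equivalently, a Mal'tsev polymorphism of $\cH$. This bypasses the direct manipulation of $\Phi$ at the cost of importing standard universal-algebra machinery.
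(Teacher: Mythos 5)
Your forward direction is correct and is the standard argument: a Mal'tsev polymorphism of $\cH$ is a polymorphism of every relation in $\SP\cH$, and applying it to the three tuples $(\bb,\bc),(\ba,\bc),(\ba,\bd)$ produces $(\bb,\bd)$ via the Mal'tsev identities. The problem is the converse, which is where the real content of the lemma lies.

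Your first attempt — taking $\Phi$ to be the smallest relation in $\SP{\cH^=}$ containing all $(a,b,b,a)$ and $(b,b,a,a)$ and arguing it is the graph of a Mal'tsev operation — is the right construction (it is essentially what Dyer and Richerby do), but the two verifications you flag as "the crux" are precisely the hard part, and neither is carried out. Totality is not automatic: the generating tuples only give $\pr_{\{1,2,3\}}\Phi \supseteq \{(a,b,b),(b,b,a)\mid a,b\in H\}$, and filling out all of $H^3$ requires an inductive chain of rectangularity applications to $\Phi$ and to carefully chosen projections of it — this is not a routine propagation. Functionality is also not a one-liner: rectangularity of $\Phi$ with the split $\{1,2,3\}\mid\{4\}$ tells you that any two triples with a common extension have the same set of extensions, but you still have to argue that some generating tuple has a unique extension, and that this propagates. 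Acknowledging an obstacle is not the same as resolving it; as written, the first route does not constitute a proof.

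The fallback contains a genuine error as phrased. "Strong rectangularity is equivalent to the algebra $(H;\Pol(\cH))$ having permuting congruences" is false if read as congruence permutability of $A=(H;\Pol(\cH))$ alone. Take $H=\{0,1\}$ with $\cH$ such that $\Pol(\cH)$ contains only projections; then $\Con(A)=\{0_A,1_A\}$, which trivially permutes, yet $\cH$ is not strongly rectangular (e.g.\ $\{(0,0),(0,1),(1,0)\}\in\SP\cH$ is not rectangular) and $A$ has no Mal'tsev term, since its only term operations are projections. Mal'tsev's theorem is a statement about \emph{varieties}: $V(A)$ is congruence permutable iff $A$ has a Mal'tsev term. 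The equivalence you actually need is between strong rectangularity of $\cH$ and congruence permutability of every subalgebra of every finite power of $A$ (equivalently of $V(A)$), and establishing that equivalence is itself the non-trivial translation step — it is where the missing work from your first approach resurfaces, not a free citation. The paper does not prove this lemma either and simply cites Hagemann--Mitschke and Dyer--Richerby, so appealing to those sources is acceptable, but the intermediate claim you interpolate must be stated at the level of the variety, not the bare algebra.
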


We introduce a modular version of this concept, strongly $p$-rectangular relations.
A relational structure $\cH$ is said to be strongly \emph{$p$-rectangular}, if every $\rel \in \SP \cH _p$ is rectangular. The following example shows that a relational structure can be strongly rectangular, but not strongly $p$-rectangular.

\begin{example}\label{exa:bad-one}
Let $H=\{0,1,2,3,4\}$, $p=2$, and $\cH=(H;\rel)$, where $\rel$ is the following ternary relation, 
%(triples are written vertically), 
$\rel = \{ (0,0,0), (0,1,1), (1,0,2), (1,0,3), (1,1,4) \}$.
%\[
%\rel=\left(
%\begin{array}{ccccc}
%0&0&1&1&1\\ 0&1&0&0&1\\ 0&1&1&2&0
%\end{array}\right).
%\]
As is easily seen, $\exists^{\equiv2} z\rel(x,y,z)$ is the relation $\{(0,0),(0,1),(1,1)\}$, which is not rectangular, and so, $\cH$ is not strongly 2-rectangular. We now show that $\cH$ is strongly rectangular by presenting a Mal'tsev polymorphism of $\cH$. Let $f(x,y,z)=x+y+z$, where addition is modulo 2. For $\ba=(a_1,a_2,a_3)\in H^3$, let also $\ba'\in\{0,1\}^3$ denote the triple obtained by replacing 2's and 3's with 0's, and 4's with 1's. Then let $g(x,y,z)$ be given by
\[
g(\ba)=\left\{\begin{array}{ll}
b,& \text{if } \ba\in\{(b,a,a),(a,a,b)\mid a\in H, b\in\{2,3,4\}\},\\
f(\ba') & \text{otherwise}.
\end{array}\right.
\]
It is straightforward that $g$ is a Mal'tsev operation and is a polymorphism of $\cH$.
\end{example}

\noindent\textbf{$p$-Balancedness.}
An $n$-by-$m$ matrix $M$ is said to be a rank-1 block matrix if by permuting rows and columns it can be transformed to a block-diagonal matrix (not necessarily square) such that every nonzero block has rank at most 1.

Let $\rel(x, y, z)$ be a ternary relation, a subset of $A_1 \times A_2 \times A_3$. We call $\rel$ \emph{balanced} if the matrix $\sM_\rel \in\bbZ^{|A_1|\tm|A_2|}$, where $\sM_\rel[x, y] = |\{z \in A_3\mid (x, y, z) \in \rel\}|$   such that $ x \in A_1$ and $ y \in A_2$ is a rank-1 block matrix. A relation $\rel$ of arity $n > 3$ is balanced if every representation of $\rel$ as a ternary relation, a subset of $H^k \times H^\ell \times H^{n-k-\ell}$ is balanced. A relational structure $\cH$ (a constraint language $\Gm$) is called \emph{strongly balanced}, if every at least ternary relation $\rel \in \SP \cH $ ($\rel \in \SP\Gm$) is balanced.

The following lemma relates  rectangularity and balancedness.

\begin{lemma}[\cite{effective-Dyer-doi:10.1137/100811258}, Lemma 29,30]
Strong balancedness implies strong rectangularity, but strong rectangularity does not imply strong balancedness.
\end{lemma}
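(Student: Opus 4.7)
The plan is to treat the two assertions independently, since they require quite different techniques.

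For the forward implication, I would reduce rectangularity of an arbitrary $\rel \in \SP\cH$ of arity $n \geq 2$ to balancedness of a suitable associated ternary relation. Given a partition $I, [n]\setminus I$, I would pp-define an auxiliary relation $\rel'$ by taking the coordinates of $I$ as its first block, the coordinates of $[n]\setminus I$ as its second block, and adjoining a third ``dummy'' coordinate pinned to some coordinate in $I$ by equality (this stays in $\SP\cH$ since equality is available by Lemma~\ref{lem:adding-equality}). Strong balancedness of $\cH$ then forces the matrix $\sM_{\rel'}[\ba,\bb]$ to be a rank-1 block matrix. Because the equality tie makes the number of extensions in the third block equal to the indicator of $(\ba,\bb)\in\rel$, the entries of $\sM_{\rel'}$ are only $0$ or $1$, so every nonzero rank-1 block must be an all-ones submatrix of appropriate dimensions. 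This is exactly the combinatorial description of rectangularity of $\rel$ with respect to the partition $I$; iterating over all proper $I \subsetneq [n]$ gives strong rectangularity. For $n=2$ one applies the same construction directly to $\rel$ itself.

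For the converse failure, I would exhibit a structure $\cH_0$ whose relational clone consists entirely of rectangular relations but which contains at least one at-least-ternary relation whose fiber-counting matrix is not a rank-1 block matrix. The cleanest route is to pick a domain $H_0$ together with a Mal'tsev operation $f$ on $H_0$ (so that strong rectangularity of $\SP\cH_0$ is automatic by the Hagemann--Herrmann correspondence), and then a single generating relation $\rel_0$ preserved by $f$ whose matrix $\sM_{\rel_0}$ already fails the rank-1 block test. The Mal'tsev polymorphism guarantees rectangularity propagates through the pp-clone, while the explicit matrix computation rules out strong balancedness in one shot.

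The main obstacle I anticipate is the counterexample: one must simultaneously preserve rectangularity throughout the entire pp-clone and destroy the rank-1 block structure of a single relation, and these two demands pull in opposite directions. A naive attempt over $\{0,1\}$ with the affine Mal'tsev $x \oplus y \oplus z$ forces the preserved relations to be cosets of linear subspaces, whose fiber-count matrices tend to be rank-1 block by accident. The fix is to work over a domain with at least three elements (or to use a non-affine Mal'tsev on a larger product domain) so that the orbit structure of $f$ on $H_0^2$ is rich enough to contain two different nonzero fiber sizes lying in the same block. Once such an $f$ and $\rel_0$ are found, verifying preservation of $\rel_0$ by $f$ and computing $\sM_{\rel_0}$ are routine, and one can invoke the construction from Dyer and Richerby directly rather than reinvent it.
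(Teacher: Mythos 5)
This lemma is cited verbatim from Dyer and Richerby and is not proved in the present paper, so there is no in-paper argument to compare against; what follows evaluates your proposal on its own terms.

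Your forward direction is correct and well-aligned with the techniques the paper does use. Tying a new coordinate $x_{n+1}$ to some $x_j$, $j \in I$, by an equality atom produces a relation $\rel' \in \SP\cH$ of arity $n+1 \ge 3$ (equality is available by Lemma~\ref{lem:adding-equality}, and $\SP\cH$ is closed under coordinate permutation, which you need in order to line up the three contiguous blocks $I$, $[n]\setminus I$, $\{n+1\}$ as the definition of balancedness requires). Under that blocking, the fiber-count matrix is exactly the $0/1$ indicator matrix of $\rel$ viewed as a binary relation between $\pr_I\rel$ and $\pr_{[n]-I}\rel$, and for a $0/1$ matrix the rank-1 block condition is literally rectangularity (each nonzero block is an all-ones rectangle). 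This is the same trick the paper itself deploys to prove Lemma~\ref{lem:2-rec-vs-2-bal}, so your argument is both valid and stylistically consistent.

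For the converse you do not give a proof: you defer to ``the construction from Dyer and Richerby directly.'' That is the same citation shortcut the paper takes, so it is not disqualifying, but the heuristics you offer for a self-contained construction are not sharp enough to actually land on one. Enlarging the domain and going to a ``non-affine'' Mal'tsev is not sufficient: if the Mal'tsev is the group operation $xy^{-1}z$ on any finite group (abelian or not), every invariant relation is a coset of a subgroup, every projection is a coset, all nonempty fibers over a fixed block-pair are cosets of the same kernel and hence have the same cardinality, and the resulting matrix is a constant times a $0/1$ block matrix --- rank-1 block again. So the counterexample has to use a Mal'tsev term that is genuinely not the term of a group/module, which is the actual structural difficulty, and your sketch glosses over it. The criterion you state --- ``two different nonzero fiber sizes lying in the same block'' --- is also weaker than what is needed, since a rank-1 block can already carry many distinct nonzero values ($M_{ij}=u_i v_j$); the witness you really want is a $2\times 2$ all-nonzero submatrix with $ad \ne bc$. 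Finally, a small misattribution: the equivalence between strong rectangularity and the existence of a Mal'tsev polymorphism is due to Hagemann and Mitschke (as cited in Proposition~\ref{pro:3-properties-equivalence}), not ``Hagemann--Herrmann.''
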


Let $\rel(x, y, z)$ be a ternary relation on $A_1 \times A_2 \times A_3$. The relation $\rel$, is $p$-balanced if the matrix $\sM_\rel \in\bbZ_p^{|A_1|\tm|A_2|}$ given by
\begin{equation}\label{eq:balancedness-def}
\sM_\rel[x, y] \equiv |\{(x,y,z) \in H^3\mid (x, y, z) \in \rel\}| \pmod{p},
\end{equation}
for $x\in A_1$ and $y\in A_2$, is a rank-1 block matrix (the rank is computed in $\bbZ_p$). A relation $\rel$ on $H$ of arity $n > 3$ is $p$-balanced if every representation of $\rel$ as a ternary relation, a subset of $H^k \times H^\ell \times H^{n-k-\ell}$, is $p$-balanced. A relational structure $\cH$ (a constraint language $\Gm$) is called strongly $p$-balanced if every relation $\rel \in \SP \cH _p$ ($\rel \in \SP \Gm _p$) is $p$-balanced. 

The case $p=2$ is somewhat special.

\begin{lemma}\label{lem:2-rec-vs-2-bal}
A relational structure $\cH$ is strongly 2-rectangular if and only if it is strongly 2-balanced.
\end{lemma}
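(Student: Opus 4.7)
The plan is based on a $p = 2$ specific observation: for any ternary relation $\rel$, the matrix $\sM_\rel$ appearing in the definition of $2$-balancedness has entries in $\bbZ_2 = \{0,1\}$, and a $0$--$1$ matrix is a rank-$1$ block matrix over $\bbZ_2$ if and only if it is the indicator matrix of a rectangular relation. Indeed, a nonzero $0$--$1$ matrix of rank at most $1$ over $\bbZ_2$ must be an outer product of two $0$--$1$ vectors, i.e., the all-ones matrix supported on a rectangle of positions. This equivalence is the engine driving both directions.

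For the implication strongly $2$-rectangular $\Rightarrow$ strongly $2$-balanced, take $\rel \in \SP\cH_2$ of arity $n \geq 3$ and any decomposition of $[n]$ into three nonempty parts $I, J, K$. Consider the relation
\[
\relo(\pr_I \bx, \pr_J \bx) \;=\; \exists^{\equiv 2} \pr_K \bx\ \rel(\bx),
\]
which is $2$-mpp-definable from $\rel$ and hence lies in $\SP\cH_2$. By strong $2$-rectangularity, $\relo$ is rectangular. Its $0$--$1$ indicator matrix is, by construction, exactly $\sM_\rel$ for this partition. By the key observation, $\sM_\rel$ is then a rank-$1$ block matrix over $\bbZ_2$, and since the partition was arbitrary, $\rel$ is $2$-balanced.

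For the converse, take $\relo \in \SP\cH_2$ of arity $n \geq 2$ and any partition $[n] = I \cup J$ with both parts nonempty. Fix some $i \in [n]$, introduce a fresh variable $z$, and define
\[
\rel(x_1, \dots, x_n, z) \;=\; \relo(x_1, \dots, x_n) \,\wedge\, (x_i = z).
\]
This is a conjunctive (hence $2$-mpp) definition using equality, so $\rel \in \SP\cH_2$ has arity $n+1 \geq 3$ and strong $2$-balancedness applies. For the ternary decomposition of $[n+1]$ into $I$, $J$, and $\{n+1\}$, the entries of $\sM_\rel$ equal $|\{z : z = x_i\}|\cdot\mathbf{1}[\bx \in \relo] = \mathbf{1}[\bx \in \relo]$, i.e., the $0$--$1$ indicator of $\relo$ viewed as a binary relation between $I$- and $J$-tuples. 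Thus $\sM_\rel$ is a rank-$1$ block matrix over $\bbZ_2$, and by the key observation $\relo$ is rectangular with respect to the partition $I, J$.

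The main technical subtlety — and the reason this argument is specific to $p = 2$ — is exactly this equivalence between $0$--$1$ rank-$1$ block matrices and rectangular $0$--$1$ relations. For $p > 2$ the matrix $\sM_\rel$ typically has entries throughout $\bbZ_p$, and being rank-$1$ block is then strictly weaker than rectangularity of its support, so the simple passage between the two properties breaks down. A minor but necessary ingredient in the converse direction is that equality is always permitted inside a $p$-mpp-formula by definition, ensuring the auxiliary relation $\rel$ remains in $\SP\cH_2$.
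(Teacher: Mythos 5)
Your proof is correct and uses essentially the same two constructions as the paper: the $\exists^{\equiv2}$ projection to turn a non-rank-1-block matrix into a non-rectangular relation, and the padding of a relation with a fresh coordinate forced equal to an existing one to turn a non-rectangular relation into a witness of non-balancedness. The only presentational difference is that the paper argues both directions by contrapositive, whereas you argue directly from the definitions.
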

\begin{proof}
Suppose that $\SP\cH_2$ contains a non-rectangular $n$-ary relation $\rel$. Since $\SP\cH_2$ is closed under renaming coordinates, we may assume that for some $k<n$ there are $\ba,\bb\in\pr_{[k]}\rel$ and $\bc,\bd\in\rel_{[n]-[k]}\rel$ such that $(\ba,\bc),(\ba,\bd),(\bb,\bc)\in\rel$, but $(\bb,\bd)\not\in\rel$. Consider the relation 
\[
\relo(\vc xn,x_{n+1})=\rel(\vc xn)\wedge(x_n=x_{n+1}).
\]
Clearly, $\relo\in\SP\cH_2$. For the matrix $\sM_\relo$ constructed for $k$ and $\ell=n$ we have $\sM_\relo[\ba,\bc]=\sM_\relo[\ba,\bd]=\sM_\relo[\bb,\bc]=1$ and $\sM_\relo[\bb,\bd]=0$, showing that $\sM_\relo$ is not rank-1 block matrix.

Conversely, suppose that an $n$-ary relation $\rel\in\SP\cH_2$ is not $2$-balanced, that is, for some $k<\ell<n$ the matrix $\sM_\rel$ is not rank-1 block. Since the entries of $\sM_\rel$ are from $\bbZ_2$, it means that for some $\ba,\bb\in\pr_{[k]}\rel$ and $\bc,\bd\in\pr_{[\ell]}\rel$, it holds that $\sM_\relo[\ba,\bc]=\sM_\relo[\ba,\bd]=\sM_\relo[\bb,\bc]=1$ and $\sM_\relo[\bb,\bd]=0$. Therefore, the tuples $(\ba,\bc),(\ba,\bd),(\bb,\bc)\in H^\ell$ have an odd number of extensions to a tuple from $\rel$, while $(\bb,\bd)\in H^\ell$ has an even number of such extensions (including 0). Therefore the relation
\[
\relo(\vc x\ell)=\exists^{\equiv2} x_{\ell+1}\zd x_n\; \rel(\vc xn)
\]
belongs to $\SP\cH_2$ and is not rectangular.
\end{proof}

\vspace*{3mm}

\noindent\textbf{$p$-Permutability.}
A \emph{congruence} of a relational structure $\cH$ is an equivalence relation $\theta$ on $H$ that is definable by a pp-formula in $\cH$. More generally, let $\rela\in\SP\cH$ be a $k$-ary relation. A congruence of $\rela$ is a $2k$-ary relation pp-definable in $\cH$ that is an equivalence relation on $\rela$. 
We denote the set of all congruences of $\cH$ (of $\rela$) by $\Con(\cH)$ (respectively, by $\Con(\rela)$). By $\circ$ we denote the product of binary relations: $(a,b) \in \rel \circ \relo$ if and only if there is $c$ such that $(a, c) \in \rel$ and $(c, b) \in \relo$. In other words, the product of $\rel$ and $\relo$ is given by the pp-definition $\exists z\ \rel(x,z)\meet\relo(z,y)$. We say $\cH$ is \emph{congruence permutable} if for all $\alpha, \beta \in \Con(\rel)$, where $\rel\in\langle \cH \rangle$,  it holds that $\alpha \circ \beta = \beta \circ \alpha$. 

Congruence $p$-permutability is defined as follows: A \emph{$p$-congruence} of $\cH$ or of $\rela\in\SP\cH_p$ is an equivalence relation on $H$ or $\rela$, respectively, that is $p$-mpp-definable in $\cH$. We denote the set of all $p$-congruences of $\cH$ (respectively, $\rela$) by $\Con_p(\cH)$ (respectively, $\Con_p(\rela)$). By $\circp$ we denote the product of binary relations: $(a, b) \in \rel \circp \relo$ if and only if the number of elements $c$ such that $(a, c) \in \rel$ and $(c, b) \in \relo$ is not a multiple of $p$. The product of $\rel$ and $\relo$ is given by the pp-definition $\exists^{\equiv p} z\ \rel(x,z)\meet\relo(z,y)$. We say that $\cH$ is \emph{congruence $p$-permutable} if for all $\alpha, \beta \in \Con_p(\rela)$, where $\rela\in\langle \cH \rangle_p$, it holds that $\alpha \circp \beta = \beta \circp \alpha$. In other words, for any $\ba,\bb \in\rela$, if the number of tuples $\bc\in\rela$ such that $(\ba,\bc)\in\al, (\bc,\bb)\in\beta$ is not equal to $0\pmod p$, then so is the number of tuples $\bd\in\rela$ with $(\bb,\bd)\in\al, (\bd,\ba)\in\beta$.

\vspace*{3mm}

The regular versions of strong rectangularity, congruence permutability, and the existence of Mal'tsev polymorphism are known to be equivalent.

\begin{proposition}[\cite{Hagemann73:permutable}, see also \cite{effective-Dyer-doi:10.1137/100811258,DBLP:journals/jacm/Bulatov13}]\label{pro:3-properties-equivalence}
For a relational structure $\cH$ the following are equivalent:\\[2mm]
(1) $\cH$ has a Mal'tsev polymorphism;\\[1mm]
(2) $\cH$ is strongly rectangular;\\[1mm]
(3) $\cH$ is congruence permutable.
\end{proposition}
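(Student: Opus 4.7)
The plan is to close the cycle of implications $(1)\Rightarrow(2)\Rightarrow(3)\Rightarrow(1)$. The first two arrows are direct; the third is the classical Mal'tsev--Hagemann step and is where the real work lies.

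For $(1)\Rightarrow(2)$: Polymorphisms of $\cH$ are preserved under pp-definitions, so the Mal'tsev polymorphism $f$ is a polymorphism of every $\rel\in\SP\cH$. Applied exactly as in the $2\times 3$ matrix calculation already displayed in Section~\ref{sec:multi-sorted}, this forces rectangularity of each such $\rel$ under every binary split, which is strong rectangularity.

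For $(2)\Rightarrow(3)$: Let $\al,\beta\in\Con(\rela)$ with $\rela\in\SP\cH$. Then $\al\circ\beta$ is pp-definable (one extra existential quantifier on top of the definitions of $\al$ and $\beta$), so it lies in $\SP\cH$, and it is reflexive on $\rela$. Given $(\ba,\bc)\in\al\circ\beta$, the tuples $(\ba,\ba),(\bc,\bc),(\ba,\bc)$ all lie in $\al\circ\beta$. Viewing $\al\circ\beta$ as a binary relation on $\rela\times\rela$ and invoking strong rectangularity yields $(\bc,\ba)\in\al\circ\beta$. Unfolding this and using that $\al$ and $\beta$ are symmetric equivalences gives $(\ba,\bc)\in\beta\circ\al$; the reverse inclusion is obtained by swapping the roles of $\al$ and $\beta$.

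For $(3)\Rightarrow(1)$: I would follow the classical Mal'tsev--Hagemann template. Let $F$ be the subalgebra of $\cH^{H^3}$ generated by the three coordinate projections $\pi_1,\pi_2,\pi_3:H^3\to H$; equivalently, the universe of $F$ is $\Pol^3(\cH)$, viewed as an $|H|^3$-ary relation pp-definable from $\cH$. Let $\al$ be the smallest pp-definable equivalence relation on $F$ containing the pair $(\pi_1,\pi_2)$, and $\beta$ the analogous one for $(\pi_2,\pi_3)$. Trivially $(\pi_1,\pi_3)\in\al\circ\beta$ via $\pi_2$. Applying congruence permutability to $\al,\beta\in\Con(F)$ yields some $m\in F$ with $(\pi_1,m)\in\beta$ and $(m,\pi_3)\in\al$. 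Reading $m$ as a function $H^3\to H$, the relation $(\pi_1,m)\in\beta$ (which identifies $\pi_2$ with $\pi_3$) forces $m(a,b,b)=\pi_1(a,b,b)=a$, and dually $(m,\pi_3)\in\al$ forces $m(a,a,b)=b$. Since $m\in F\subseteq \cH^{H^3}$, $m$ preserves every relation of $\cH$, so it is a Mal'tsev polymorphism of $\cH$.

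The main obstacle is the last step. The cleanest version of the argument lives in universal algebra and produces $m$ as an element of the free algebra on three generators in the variety generated by $(H;\Pol(\cH))$; porting it into the purely relational CSP language requires checking (i) that $F=\Pol^3(\cH)$ is legitimately a relation in $\SP\cH$, so that hypothesis (3) applies to its congruences; (ii) that the universal-algebra notion of the congruence generated by a pair coincides with the smallest pp-definable equivalence containing the pair; and (iii) that extracting $m$ from $F$ indeed yields a 3-ary polymorphism whose Mal'tsev identities can be read off coordinate-wise from membership in $\al$ and $\beta$. All three points are standard but bookkeeping-heavy, which is why the statement is cited to \cite{Hagemann73:permutable} rather than reproved in situ.
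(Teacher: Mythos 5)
The paper does not prove this proposition: it cites Hagemann and Mitschke \cite{Hagemann73:permutable} and treats the equivalence as standard, so there is no in-text proof to compare against. Your reconstruction is correct and matches the classical argument. The implication $(1)\Rightarrow(2)$ is the paper's own $2\times3$ matrix computation, applied to every relation in $\SP\cH$ since polymorphisms persist under pp-definitions. Your $(2)\Rightarrow(3)$ is a clean derivation: $\al\circ\beta$ is pp-definable and reflexive, so rectangularity applied to the tuples $(\ba,\bc),(\ba,\ba),(\bc,\bc)$ yields $(\bc,\ba)\in\al\circ\beta$, and symmetry of $\al$ and $\beta$ then converts this to $(\ba,\bc)\in\beta\circ\al$. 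Your $(3)\Rightarrow(1)$ is the Mal'tsev--Hagemann free-algebra construction transcribed into relational-clone language; the three bookkeeping points you flag are the right ones and all go through for finite structures: (i) $\Pol^3(\cH)$ is conjunctive-definable via the indicator problem, which the paper itself records as Lemma~\ref{lem:indicator} in the appendix; (ii) by the Geiger/Bodnarchuk--Kaluzhnin--Kotov--Romov Galois correspondence, pp-definable equivalence relations on a pp-definable relation coincide with its algebraic congruences, so ``smallest pp-definable equivalence containing a pair'' is the same as ``congruence generated by that pair''; and (iii) you only need the \emph{inclusion} of the congruence generated by $(\pi_2,\pi_3)$ into the kernel of the restriction $f\mapsto f(x,y,y)$, not equality --- and this inclusion is automatic because restriction is a homomorphism annihilating the generating pair, which is exactly what gives $m(a,b,b)=a$, and dually $m(a,a,b)=b$. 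The sketch is faithful, and deferring the full write-up to \cite{Hagemann73:permutable} is appropriate given that the paper itself does the same.
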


For the modular versions of the same properties the picture is much more complicated. 

%%%%%%%%%%%%%%%%%%%%%%%%%%%%%%%%%%%%%%%%%%%%%%%%%%%%%%%%%%%%%%%%%%%%%
\subsection{Contrasting Properties: Modular vs. Exact Counting.}
Although Proposition~\ref{pro:3-properties-equivalence} provides significant insights for exact counting, the modular of those conditions are almost unrelated. The following lemma represents the only connection between these notions.

\begin{lemma}\label{lem:p-per-to-p-rect}
    If $\cH$ is congruence $p$-permutable, then it is strongly $p$-rectangular.
\end{lemma}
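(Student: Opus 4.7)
The plan is to adapt the classical argument that congruence permutability implies strong rectangularity, carrying it through for $p$-mpp-definitions and the modular composition $\circp$. Fix $\rel \in \SP\cH_p$ of arity $n \geq 2$ and a split $[n] = I \sqcup ([n]-I)$ with $I \subsetneq [n]$, and argue by contradiction: suppose there are $I$-tuples $\ba,\bb$ and $([n]-I)$-tuples $\bc,\bd$ with $(\ba,\bc),(\ba,\bd),(\bb,\bc) \in \rel$ but $(\bb,\bd) \notin \rel$. On $\rel$, I introduce the natural equivalence relations
\[
\alpha = \{(\bu,\bv) \in \rel \times \rel : \pr_I \bu = \pr_I \bv\}, \qquad \beta = \{(\bu,\bv) \in \rel \times \rel : \pr_{[n]-I} \bu = \pr_{[n]-I} \bv\}.
\]

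The first step is to verify that $\alpha,\beta \in \Con_p(\rel)$. Writing $\rel(\bx) = \existspr \by\ \Phi(\bx,\by)$ and invoking Proposition~5.6 of \cite{DBLP:conf/stoc/BulatovK22} to assume $\Phi$ is strict (so that $\ext_\Phi(\bx) \equiv 1 \pmod p$ for $\bx \in \rel$), I would use the $p$-mpp-definition
\[
\alpha(\bx,\bx') \;=\; \existspr \by, \by'\ \Phi(\bx,\by) \wedge \Phi(\bx',\by') \wedge \bigwedge_{i \in I}(x_i = x'_i),
\]
and analogously for $\beta$. For a fixed $(\bx,\bx')$ the count of witness pairs $(\by,\by')$ equals $\ext_\Phi(\bx) \cdot \ext_\Phi(\bx')$ when the $I$-coordinates agree, and $0$ otherwise. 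Since $\bbZ_p$ is a field (here the primality of $p$ is essential), this product is nonzero modulo $p$ precisely when $(\bx,\bx') \in \alpha$. This closure-under-conjunction step is the main technical point, and I expect it to be the only real obstacle; it is essentially routine once strictness is invoked.

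The second step applies congruence $p$-permutability to the pair $((\bb,\bc),(\ba,\bd))$. Any $\bv \in \rel$ placing this pair in $\beta \circp \alpha$ must satisfy $\pr_{[n]-I} \bv = \bc$ and $\pr_I \bv = \ba$, forcing $\bv = (\ba,\bc)$; this tuple lies in $\rel$, so the witness count is exactly $1$ and $((\bb,\bc),(\ba,\bd)) \in \beta \circp \alpha$. Congruence $p$-permutability then places the same pair in $\alpha \circp \beta$. But any witness $\bv$ for this membership must satisfy $\pr_I \bv = \bb$ and $\pr_{[n]-I} \bv = \bd$, so $\bv = (\bb,\bd)$; by assumption this tuple is not in $\rel$, giving a witness count of $0$ and contradicting membership in $\alpha \circp \beta$. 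Hence $\rel$ is rectangular at the split $I$; since both $\rel$ and $I$ were arbitrary, $\cH$ is strongly $p$-rectangular.
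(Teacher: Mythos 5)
Your proof is correct and follows essentially the same strategy as the paper: both define the two coordinate-projection congruences on $\rel$ (your $\alpha,\beta$ are the paper's $\sim_1,\sim_2$), observe that a rectangularity obstruction forces $\beta\circp\alpha$ and $\alpha\circp\beta$ to disagree on the relevant pair, and invoke congruence $p$-permutability for the contradiction. The only difference is that you spell out why $\alpha,\beta\in\Con_p(\rel)$ (via strict $p$-mpp-definitions and the fact that $\bbZ_p$ is a field), a point the paper states without elaboration; this is a harmless and reasonable addition.
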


\begin{proof}
Suppose $\cH$ is congruence $p$-permutable and consider a $p$-mpp-definable relation $\rel \subseteq H^r \times H^s$. We need to show that $\rel$ is $p$-rectangular. Define a relation $\sim_1$ on $\rel$ by $(\vv x_1, \vv y_1) \sim_1 (\vv x_2, \vv y_2)$ if and only if  $\vv x_1 = \vv x_2$. This relation is $p$-mpp-definable in $\cH$, by 
\[
(\vv x_1, \vv y_1) \sim_1 (\vv x_2, \vv y_2) := \rel(\vv x_1, \vv y_1) \wedge \rel(\vv x_2, \vv y_2) \wedge (\vv x_1 =\vv x_2)
\]
It is easy to check that this is also an equivalence relation and its equivalence classes correspond to tuples $\vv x\in\pr_{[r]}\rel$. Hence it is a $p$-congruence of $\rel$. In a similar way, define a congruence $\sim_2$ on $\rel$ by $(\vv x_1, \vv y_1) \sim_2 (\vv x_2, \vv y_2)$ if and only if $\vv y_1 = \vv y_2$. Let $\alpha = \sim_1 \circp \sim_2$. 

Suppose $((\vv a, \vv b), (\vv c, \vv d)) \in \alpha$. Then there exists $(\vv u, \vv v) \in \rel$ such that $(\vv a, \vv b) \sim_1 (\vv u, \vv v)$ and $ (\vv u, \vv v) \sim_2 (\vv c, \vv d)$. Thus, $\vv a = \vv u $ and $ \vv d = \vv v$. Therefore, $(\vv a, \vv b), (\vv a, \vv d), (\vv c, \vv d) \in \rel$. Now, congruence $p$-permutability implies that $\sim_1 \circp \sim_2 = \sim_2 \circp \sim_1$. Hence $((\vv c, \vv d), (\vv a, \vv b)) \in \alpha$. Thus, there exists $(\vv u', \vv v') \in \rel$ such that $(\vv c, \vv d) \sim_1 (\vv u', \vv v')$ and $ (\vv u', \vv v') \sim_2 (\vv a, \vv b)$. Thus, $\vv c = \vv u' $ and $ \vv b = \vv v'$. Therefore, $(\vv a, \vv c), (\vv c, \vv b), (\vv b, \vv d) \in \rel$. The strong $p$-rectangularity follows.
\end{proof}

To demonstrate the lack of connections between Maltsev polymorphisms, rectangularity, and permutability in the modular case, we present three examples. 

\begin{example}\label{exa:p-rect--not-p-perm}
Let $H = \{a_1, a_2, a_3, a_4, a_5, a_6, a_7 \}$. We define a relational structure $\cH$ over the set $H$ with the following relations:
\begin{itemize}
   \item All constant relations $C_{a_i}$ for all $i \in [7]$,
   \item The equivalence relation $\rel$ with equivalence classes ${a_1}/_\rel = \{a_1, a_2, a_3 \}$ and ${a_4}/_\rel = \{ a_4, a_5, a_6, a_7 \}$,
   \item The equivalence relation $\relo$ with equivalence classes ${a_1}/_\relo = \{a_1, a_2, a_4, a_5 \}$ and ${a_3}/_\relo = \{ a_3, a_6, a_7 \}$.
\end{itemize}
The structure $\cH = (H; \rel, \relo, C_{a_1}, ..., C_{a_7})$ is $2$-rigid due to the presence of all constant relations. It is not congruence $2$-permutable, since $(a_1, a_6) \in \rel \circp \relo$ but $(a_1, a_6) \not \in \relo \circp \rel$. We now show that $\cH$ is strongly $2$-rectangular.

Suppose an $n$-ary $\rela\in\SP\cH_2$, $k\in[n]$, and $\ba,\bb\in\pr_{[k]}\rela, \bc,\bd\in\pr_{[n]-[k]}\rela$ witness that $\rel$ is not rectangular, that is, $(\ba,\bc),(\ba,\bd),(\bb,\bc)\in\rela$, but $(\bb,\bd)\not\in\rela$. Consider a 2-mpp-definition of $\rela$ 
\[
\rela(\vc xn)=\exists^{\equiv2}\vc ym\Phi(\vc xn,\vc ym).
\]
Note that all the equivalence classes of $\rel\cap\relo$ except for $\{a_3\}$ contain 2 elements. This means that as $\ext_\Phi(\ba,\bc)$ is odd, it has to be that $\Phi(\ba,\bc,a_3\zd a_3)$ is true. The same holds for $\Phi(\ba,\bd,a_3\zd a_3),\Phi(\bb,\bc,a_3\zd a_3)$, but $\Phi(\bb,\bd,a_3\zd a_3)$ is false. This is however impossible. Indeed, as all the predicates of $\Phi$ of the form $\rel(x_i,x_j),\relo(x_i,x_j),\rel(y_i,y_j)$, and $\relo(y_i,y_j)$ are true on $(\bb,\bd,a_3\zd a_3)$ (in the former two cases due to the rectangularity of $\rel,\relo$), it has to be that $\Phi(\bb,\bd,a_3\zd a_3)$ is false implies that some predicate $\rel(x_i,y_j)$ or $\relo(x_i,y_j)$ is false. Suppose $i\in[k]$ and $\rel(x_i,y_j)$ is false. This means that $\rel(\bb[i],a_3)$ is false, which contradicts the assumption $\Phi(\bb,\bc,a_3\zd a_3)$ is true.
\end{example}

\begin{example}[Example~\ref{exa:bad-one} re-stated]\label{exa:maltsev--not-p-perm-rect}
Let $H=\{0,1,2,3,4\}$, and $\cH=(H;\rel, C_0, C_1, C_2, C_3,C_4)$, where $\rel$ is the ternary relation defined as
\[
\rel = \{ (0,0,0), (0,1,1), (1,0,2), (1,0,3), (1,1,4) \}.
\]
The constants $C_0\zd C_4$ correspond to $0\zd 4$ respectively. Thus, $\cH$ is 2-rigid.

The formula $\exists^{\equiv 2} z\rel(x,y,z)$ define the relation $\{(0,0),(0,1),(1,1)\}$, which is not rectangular. Hence, $\cH$ is not strongly 2-rectangular. As shown in Example~\ref{exa:bad-one}, $\cH$ has a Mal'tsev polymorphism. Therefore, by Proposition~\ref{pro:3-properties-equivalence}, it is strongly rectangular.

\end{example}

\begin{example}\label{exa:p-perm-rect--not-maltsev}
Let $H = \{a_1, a_2, a_3, a_4, a_5, a_6 \}$. We define a relational structure $\cH$ over the set $H$ with the following relations:
\begin{itemize}
    \item All constant relations $C_{a_i}$ for all $i \in [6]$,
    \item The equivalence relation $\rel$ with equivalence classes $a_1/_\rel = \{a_1, a_2, a_3 , a_4\}$ and $a_5/_\rel = \{ a_5, a_6 \}$,
    \item The equivalence relation $\relo$ with equivalence classes $a_1/_\relo = \{a_1, a_2, a_5, a_6 \}$ and $a_3/_\relo = \{ a_3, a_4 \}$.
\end{itemize}
The structure $\cH = (H; \rel, \relo, C_{a_1}, \ldots, C_{a_6})$ is $2$-rigid due to the presence of all constant relations. According to Proposition~\ref{pro:3-properties-equivalence}, it does not have a Mal'tsev polymorphism because $\rel \circ \relo$ is not rectangular. We only need to demonstrate that it is congruence 2-permutable. By Lemma~\ref{lem:p-per-to-p-rect}, this would imply that $\cH$ is also strongly 2-rectangular. Observe that $a_1$ and $a_2$, $a_3$ and $a_4$, $a_5$ and $a_6$ are interchangeable. That is, if $\rel' \in \SP \cH _2$ such that $\rel' \subseteq H^n$, $n>1$, and $(a_i, \vv{b}) \in \rel'$, where $\vv{b} \in H^{n-1}$, then $(a_{i+1}, \vv{b}) \in \rel'$ for all $i \in [3]$, unless $|\pr_1\rel'|=1$. Now, let $\relo'$ and $\rela'$ be congruences of $\rel'\in\SP \cH _2$. Then 
\[
(\relo' \circp \rela')(\bx, \by) = \exists^{\equiv2} \bz \, (\relo'(\bx, \bz) \wedge \rela'(\bz, \by)).
\]
By the simple fact stated earlier, for any $(\bx, \by)$, the number of extensions is always even which is zero modulo $2$. Hence, $\relo' \circp \rela' = \rela' \circp \relo' = \emptyset$, unless $|\rel'|=1$.
\end{example}

\begin{figure}[t]
    \centering
    \begin{subfigure}[t]{\textwidth}
        \centering
        \includegraphics[width=0.65\textwidth]{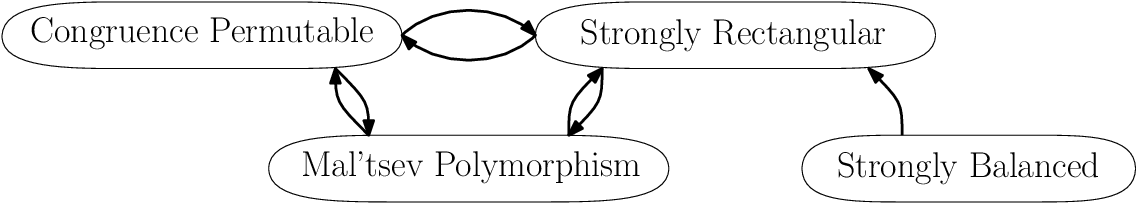}
        \caption{congruence permutability, strong rectangularity, and the existence of a Mal'tsev polymorphism are equivalent (see \cite{Hagemann73:permutable, effective-Dyer-doi:10.1137/100811258,DBLP:journals/jacm/Bulatov13}). Also, strong balancedness implies strong rectangularity (see \cite{effective-Dyer-doi:10.1137/100811258}).}
        \label{fig:a}
    \end{subfigure}
    \begin{subfigure}[t]{\textwidth}
        \centering
        \includegraphics[width=0.65\textwidth]{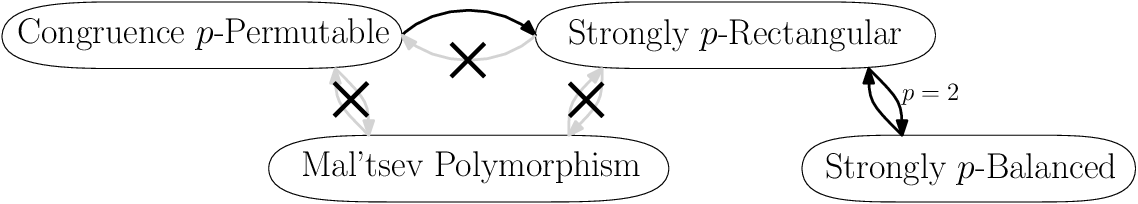}
        \caption{The only connection that is preserved for the modular case is that congruence $p$-permutability implies strong $p$-rectangularity. Also, strong 2-rectangularity is equivalence to strong 2-balancedness.}
        \label{fig:b}
    \end{subfigure}

    \caption{The connection between the four properties is shown above. Figure~(\ref{fig:a}) shows the connection in the case of exact counting. Figure~(\ref{fig:b}) shows the the connection for the modular counterparts.}
    \label{fig:relations-cp-sr-mp}
\end{figure}

%%%%%%%%%%%%%%%%%%%%%%%%%%%%%%%%%%%%%%%
\section{An Algorithm for Parity}\label{sec:parity}

In this section we show how to solve $\#_2\CSP(\cH)$ for a relational structure $\cH$ that is 2-rigid, strongly 2-rectangular, and has a Mal'tsev polymorphism $\varphi$. Observe that an instance of $\#_2\CSP(\cH)$ can be viewed as a conjunctive formula over $\cH$, and the set of solutions is a conjunctive definable relation from $\SP \cH _2$. Therefore, problem we solve is: given a conjunctive definition of a relation $\rel$, find the parity of the number of elements in $\rel$. 

The main idea is to reduce the arity of $\rel$. In other words, we attempt to find a relation $\WT\rel$, still 2-mpp-definable in $\cH$, such that $\WT\rel \subseteq H^{n-1}$ and $|\WT\rel| \equiv |\rel| \pmod 2$. To accomplish this, we use witness functions first introduced in \cite{effective-Dyer-doi:10.1137/100811258}.

%%%%%%%%%%%%%%%%%%%%%%%%%%%%%%%%%%%%%%%%%%%%%%
\subsection{Frames and Witness Functions}
Suppose that $\rel$ is an $n$-ary relation with a Mal'tsev polymorphism $\vf$. For each $i \in [n]$ we define the following relation $\sim_i$ on $\proj_i \rel$: $a \sim_i b$ if there exist tuples $\vv x \in H^{i-1}$ and $\vv y _a, \vv y _b \in H^{n-i}$ such that $(\vv x ,a ,\vv y_a) \in \rel \text{ and } (\vv x ,b ,\vv y_b) \in \rel$.
For the case $i = 1$, we have $a \sim_1 b$ for all $a, b \in \proj_1 \rel$ because they share the common empty prefix $\varepsilon$. 
The following two results are straightforward corollaries of the rectangularity of $\rel$ and were used in~\cite{effective-Dyer-doi:10.1137/100811258}.

\begin{lemma}[Folklore]
$\sim_i$ is an equivalence relation for all $i \in [n]$.
\end{lemma}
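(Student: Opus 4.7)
The plan is to verify the three defining properties of an equivalence relation directly from the definition, using the Mal'tsev polymorphism only for transitivity. Reflexivity is immediate: for $a \in \pr_i \rel$, any tuple $(\vv x, a, \vv y) \in \rel$ serves as a common witness, i.e., set $\vv y_a = \vv y_b = \vv y$. Symmetry is also immediate, since the definition is manifestly symmetric in the roles of $a$ and $b$.

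The only non-trivial part is transitivity. Assume $a \sim_i b$ witnessed by $\vv x_1 \in H^{i-1}$ and tuples $(\vv x_1, a, \vv y_a), (\vv x_1, b, \vv y_b) \in \rel$, and $b \sim_i c$ witnessed by $\vv x_2 \in H^{i-1}$ and tuples $(\vv x_2, b, \vv y_b'), (\vv x_2, c, \vv y_c) \in \rel$. Pass to the projection $\rel' = \pr_{[i]} \rel$ and view it as a binary relation, a subset of $\pr_{[i-1]}\rel' \tm \pr_{\{i\}}\rel'$. Since $\vf$ is a Mal'tsev polymorphism of $\rel$, it is a Mal'tsev polymorphism of $\rel'$ as well, and hence $\rel'$ is rectangular. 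From $(\vv x_1, a), (\vv x_1, b), (\vv x_2, b) \in \rel'$, rectangularity produces $(\vv x_2, a) \in \rel'$; unfolding this, there exists $\vv y \in H^{n-i}$ with $(\vv x_2, a, \vv y) \in \rel$. Together with $(\vv x_2, c, \vv y_c) \in \rel$, this witnesses $a \sim_i c$ via the common prefix $\vv x_2$.

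No genuine obstacle arises here, which matches the lemma being labelled folklore. An alternative, one-shot argument avoids passing to the projection by applying $\vf$ coordinatewise to the three tuples $(\vv x_1, a, \vv y_a)$, $(\vv x_1, b, \vv y_b)$, $(\vv x_2, b, \vv y_b')$: the Mal'tsev identities $\vf(x,x,y)=y$ and $\vf(y,x,x)=y$ collapse the first $i-1$ coordinates to $\vv x_2$ and the $i$-th coordinate to $a$, yielding a tuple of the shape $(\vv x_2, a, \vv y^*) \in \rel$ and concluding transitivity as above.
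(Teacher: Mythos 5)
Your proof is correct. The paper itself does not supply an argument here; the lemma is labelled Folklore and attributed to Dyer and Richerby, so there is nothing to compare against directly. What you have written is a sound filling-in of the gap: reflexivity and symmetry are immediate from the definition of $\sim_i$ (with the trivial case $i=1$ covered by the empty prefix), and transitivity is exactly where the Mal'tsev hypothesis is needed. Your first route correctly observes that $\pr_{[i]}\rel$ inherits the Mal'tsev polymorphism $\vf$ and is therefore rectangular as a subset of $\pr_{[i-1]}\rel\times\pr_{\{i\}}\rel$, and the application of rectangularity to the three pairs $(\vv x_1,a),(\vv x_1,b),(\vv x_2,b)$ to produce $(\vv x_2,a)$ is the standard one. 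Your second, one-shot route applying $\vf$ coordinatewise to $(\vv x_1,a,\vv y_a)$, $(\vv x_1,b,\vv y_b)$, $(\vv x_2,b,\vv y_b')$ and using the identities $\vf(y,y,x)=x$ on the first $i-1$ coordinates and $\vf(x,y,y)=x$ on the $i$-th is equally valid and in some ways cleaner, since it bypasses the claim that projections are rectangular and produces the witnessing tuple $(\vv x_2,a,\vv y^*)\in\rel$ in a single step. Both arguments are correct and of the expected level of difficulty for a folklore statement.
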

 
Let $\cE_i$ denote the collection of the equivalence classes of $\sim_i$, and $\cE_i=\{\cE_{i,1}\zd\cE_{i,\ell_i}\}$, $\cE_{i,j}\sse\proj_i \rel$, where $j\in[\ell_i]$. We often refer to these classes as \emph{frame classes}. 

\begin{lemma}[Folklore]
If $a \sim_i b$ and $\vv x \in \rel$ with $x_i = a$, then there is a $\vv y \in \rel$ with $\vv y_i = b$ and $\proj_{[i-1]} \vv x = \proj_{[i-1]} \vv y$.
\end{lemma}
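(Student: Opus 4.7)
The plan is to use the Mal'tsev polymorphism $\vf$ of $\rel$ directly to transport the known tuple $\vv x$ to a tuple with $i$-th coordinate $b$ while preserving its prefix on coordinates $[i-1]$. First, I would unpack the hypothesis $a \sim_i b$: by the definition of $\sim_i$, this gives a common prefix $\vv u \in H^{i-1}$ and suffixes $\vv v_a, \vv v_b \in H^{n-i}$ with
\[
\vv p := (\vv u,\, a,\, \vv v_a) \in \rel, \qquad \vv q := (\vv u,\, b,\, \vv v_b) \in \rel.
\]
Writing $\vv x = (\vv x',\, a,\, \vv x'')$ where $\vv x' = \proj_{[i-1]} \vv x$ and $\vv x'' \in H^{n-i}$, I now have three tuples of $\rel$ available: $\vv q$, $\vv p$, and $\vv x$.

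Next, I would apply $\vf$ coordinatewise to the triple $(\vv q,\, \vv p,\, \vv x)$ and set $\vv y := \vf(\vv q, \vv p, \vv x)$. Because $\vf$ is a polymorphism of $\rel$, we immediately get $\vv y \in \rel$. The Mal'tsev identities then pin down the two coordinates I care about. For each $j < i$, the $j$-th entry of $\vv y$ is $\vf(u_j, u_j, x'_j) = x'_j$, which yields $\proj_{[i-1]} \vv y = \vv x'= \proj_{[i-1]} \vv x$. At coordinate $i$, the $i$-th entries of $\vv q, \vv p, \vv x$ are $b, a, a$ respectively, so the identity $\vf(b,a,a) = b$ gives $y_i = b$. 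The entries on coordinates $i+1,\dots,n$ can be arbitrary, but that is unimportant for the conclusion.

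The main (only) subtlety is merely to make sure the three tuples line up correctly at the right coordinate positions so that the two Mal'tsev identities $\vf(z,z,w)=w$ and $\vf(w,z,z)=w$ fire at coordinates $j<i$ and at coordinate $i$ respectively; beyond that the argument is mechanical and uses no more than what rectangularity of $\rel$ already encodes (indeed, a rectangularity-based proof would work equally well by applying rectangularity to $\proj_{[i]}\rel$ viewed as a subset of $H^{i-1}\times H$). Either route suffices, but the polymorphism formulation is slightly shorter and is the one I would present.
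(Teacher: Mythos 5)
The paper labels this lemma ``Folklore'' and gives no proof, so there is no authorial argument to compare against; I can only assess correctness. Your proof is correct: letting $\vv p=(\vv u,a,\vv v_a)$ and $\vv q=(\vv u,b,\vv v_b)$ be the tuples witnessing $a\sim_i b$, the tuple $\vv y=\vf(\vv q,\vv p,\vv x)$ lies in $\rel$ because $\vf$ is a polymorphism, and the two Mal'tsev identities $\vf(c,c,d)=d$ at coordinates $j<i$ and $\vf(d,c,c)=d$ at coordinate $i$ give exactly $\proj_{[i-1]}\vv y=\proj_{[i-1]}\vv x$ and $\vv y[i]=b$, which is the entire content of the lemma. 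One minor caveat about your closing remark: the rectangularity-based alternative needs rectangularity of $\proj_{[i]}\rel$, not of $\rel$ itself (the paper's definition of rectangularity of $\rel$ talks only about binary splits of $\rel$, not of its projections). That rectangularity of $\proj_{[i]}\rel$ does hold here, since projections inherit the Mal'tsev polymorphism of $\rel$, so the alternative is indeed valid, but it is not literally ``what rectangularity of $\rel$ already encodes''; the polymorphism formulation you chose to present is therefore the cleaner and more self-contained route.
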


%%%%%%%%%%%%%%%%%%%%%%%%%%%%%%%%%%%%%%%%%%%%%%%%%%%%%%%%%%%%%%%%%%%%%%%%%

A mapping $\FUNC{\omega}{[n] \times H}{ H^n \cup \{\bot \}}$ is called a witness function of $\rel$ if
\begin{itemize}
    \item[i.] For any $i \in [n]$ and $a \not \in \proj_i \rel$,  $\omega(i, a) = \bot$;
    \item[ii.] For any $i \in [n]$ and $a \in \proj_i \rel$,  $\omega(i, a) \in \rel$ is a witness for $(i, a)$, i.e., $\proj_i \omega(i,a) = a$;
    \item[iii.] For any $i \in [n]$ and $a, b \in \proj_i \rel$ with $a \sim_i b$, we have $\proj_{[i-1]} \omega(i, a) = \proj_{[i-1]} \omega (i, b)$.
\end{itemize}
A witness function $\omega$ provides a concise representation of $\rel$. Let $F=\{\omega(i,a)\mid i\in[n],a\in\pr_i\rel\}$. Such a set of tuples is called a \emph{frame} of $\rel$.
A witness function (or a frame) can be found in polynomial time given a conjunctive definition of a relation in a relational structure with a Mal'tsev polymorphism. This is the property that makes them essential for solving CSPs. 

\begin{proposition}[\cite{effective-Dyer-doi:10.1137/100811258}]
Let $\cH$ be a relational structure. If $\cH$ has a Mal'tsev polymorphism, then the witness function for the relation $\rel(x_1\zd x_n)=\bigwedge_{i\in[m]} \rel_i(x_{i_1}\zd x_{i_t})$ can be computed in $O(mn^4)$. 
\end{proposition}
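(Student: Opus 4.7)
The plan is to build the witness function incrementally, processing the conjuncts $\rel_1,\dots,\rel_m$ one at a time. Let $\rel^{(j)}$ denote the $n$-ary relation defined by $\bigwedge_{i\le j}\rel_i$ (padding each $\rel_i$ trivially with $H$ in the unconstrained coordinates). I would maintain, after stage $j$, a witness function $\omega_j$ (equivalently a frame $F_j\sse\rel^{(j)}$ of size $O(n)$, treating $|H|$ as a constant since $\cH$ is fixed) for $\rel^{(j)}$. Stage $0$ uses the trivial witness function for $H^n$, which is built in $O(n)$ time; the final $\omega_m$ is the desired output.

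The heart of the argument is the update step $\omega_j\mapsto\omega_{j+1}$, which intersects the implicit relation $\rel^{(j)}$ with the new constant-arity constraint $\rel_{j+1}$. Crucially, since $\cH$ has a Mal'tsev polymorphism $\vf$, so does each $\rel^{(j)}$, and hence $\rel^{(j)}$ equals the subalgebra generated by $F_j$ under $\vf$. I would use the subroutine $\mathrm{Fix}(S)$ of Bulatov--Dalmau: given a set $S$ of $O(n)$ tuples lying in some Mal'tsev-closed relation, it computes, in $O(n^3)$ time, a witness function for the subalgebra generated by $S$. A careful implementation makes at most $O(n)$ passes over the current frame; each pass inspects $O(n)$ witnesses and performs constant-time $\vf$-combinations, testing in $O(n)$ time whether a new $\sim_i$-class has been discovered. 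The number of productive passes is bounded by the target frame size, which is $O(n)$.

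For the update itself, the idea is to enumerate the (constantly many) admissible assignments to the scope of $\rel_{j+1}$, for each such assignment extract the corresponding slice of $F_j$ by prefix-surgery via $\vf$, and then invoke $\mathrm{Fix}$ to rebuild a valid witness function for $\rel^{(j+1)}$. Since only $O(n)$ positions $(i,a)$ can possibly need a fresh witness, a single $\mathrm{Fix}$ call of cost $O(n^3)$ per position suffices, and the stage costs $O(n^4)$. Summing over the $m$ conjuncts gives the claimed $O(mn^4)$ bound.

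The main obstacle, and the place where the Mal'tsev hypothesis is indispensable, is preserving invariant (iii) of a witness function: any two elements $a\sim_i b$ must receive witnesses sharing a common $(i-1)$-prefix. After restricting to tuples satisfying $\rel_{j+1}$, the surviving subset may no longer respect this invariant directly. Rectangularity of $\rel^{(j)}$, which is equivalent to having a Mal'tsev polymorphism, is precisely what allows one to re-align prefixes by applying $\vf$ component-wise to existing frame tuples without leaving $\rel^{(j)}$. Because the result is due to Dyer--Richerby, I would import their analysis of $\mathrm{Fix}$ and of the per-constraint update verbatim, and restrict my own work to checking that the invariants of a witness function (not merely those of a more general compact representation) are preserved at every stage.
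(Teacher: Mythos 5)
This proposition is imported in the paper as a citation to Dyer--Richerby with no accompanying proof, so there is no in-paper argument to compare against; the comparison is really with the cited reference. Your sketch is a faithful high-level reconstruction of the constraint-by-constraint frame-update approach underlying the Bulatov--Dalmau and Dyer--Richerby algorithms: starting from the trivial witness function for $H^n$, intersecting with one $\rel_{j+1}$ at a time, and using the Mal'tsev polymorphism to regenerate a valid frame after each intersection. You also correctly isolate the delicate point, namely preserving the common-prefix invariant (iii) after restriction, and correctly identify rectangularity as the property that licenses the prefix re-alignment.

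One caution: the subroutine you call $\mathrm{Fix}(S)$---``given $O(n)$ generators inside a Mal'tsev-closed relation, output a witness function for the generated subalgebra in $O(n^3)$''---is a somewhat idealized primitive that does not appear in exactly that form in either cited source. Bulatov--Dalmau's compact-representation machinery and Dyer--Richerby's frame machinery both work from an existing witness function for $\rel^{(j)}$ plus the new constraint $\rel_{j+1}$, not from an arbitrary generating set, and the $O(n^2)$/$O(n^3)$ costs quoted in the paper (Propositions~7.4 and~7.6, restriction to constants) are for special cases. Your per-stage bound of $O(n^4)$ and the total $O(m n^4)$ do match the statement, and since you explicitly defer the detailed implementation and complexity analysis to Dyer--Richerby ``verbatim,'' the argument as a whole is consistent with the paper's own treatment, which is simply to cite the result without proof.
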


We will also use other algorithmic properties of frames.

\begin{proposition}[\cite{effective-Dyer-doi:10.1137/100811258}]\label{pro:Frame_constant_existance}
Given a frame $F$ for $\rel(x_1,x_2,...,x_n)$, a frame for $\rel(x_1, x_2,..., x_n)\wedge C_a(x_s)$, i.e., $x_s$, $s\in[n]$, is the constant $a$, can be constructed in $O(n^2)$ time.
\end{proposition}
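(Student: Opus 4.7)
The plan is to transform the given frame $F$ of $\rel$ into a frame for $\rel' := \rel \wedge C_a(x_s)$ by transplanting each witness using the Mal'tsev polymorphism $\vf$ of $\cH$. As a preprocessing step, I would inspect $\omega(s, a)$ in the given frame: if it equals $\bot$ then $a \notin \pr_s \rel$, so $\rel' = \emptyset$ and the empty frame is returned. Otherwise set $t_a := \omega(s, a)$, a distinguished tuple in $\rel$ having $a$ at coordinate $s$.

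For each entry $\omega(i, b)$ of $F$ with $i \ne s$ and $\omega(i, b) \ne \bot$, I would form the candidate
\[
\omega'(i, b) \;:=\; \vf\bigl(\omega(i, b),\; \omega(s, \omega(i, b)[s]),\; t_a\bigr).
\]
The Mal'tsev identity $\vf(c, c, a) = a$ immediately gives $\omega'(i,b)[s] = a$. Moreover, when $\omega(i,b)[s] \sim_s a$, the frame axiom (iii) at coordinate $s$ forces the tuples $\omega(s, \omega(i,b)[s])$ and $t_a$ to coincide on coordinates $1, \dots, s-1$, so the identity $\vf(x, c, c) = x$ makes $\omega'(i, b)$ agree with $\omega(i, b)$ on the whole prefix $[s-1]$. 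Thus for every $i \le s$ the candidate is a valid witness for $(i, b)$ in $\rel'$, since position $i$ itself lies in $[s-1]$. When $\omega(i,b)[s] \not\sim_s a$---a condition detected by comparing $\omega(s, \omega(i,b)[s])$ with $t_a$ on coordinates $[s-1]$---no extension of $\pr_{[i-1]} \omega(i, b)$ with $b$ at position $i$ lies in $\rel'$, and we set $\omega'(i, b) := \bot$.

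Coordinates $i > s$ require a second sweep, because the equivalence $\sim'_i$ on $\pr_i \rel'$ may strictly refine $\sim_i$ and the transplant above does not in general preserve position $i$ for $i > s$. Processing $i$ from $s+1$ up to $n$, I would iteratively repair each surviving candidate by one further $\vf$-combination with the witness of its $\sim'_i$-class representative, invoking the strong rectangularity of $\rel$ (equivalently, the fact that $\vf$ is a polymorphism) to guarantee that the repaired tuple still lies in $\rel'$ and respects the common-prefix condition. Since the frame has $O(n|H|)$ entries and each is produced by $O(1)$ Mal'tsev calls on tuples of length $n$, with $|H|$ a fixed constant of $\cH$, the total running time is $O(n^2)$ as claimed.

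The principal obstacle is the second sweep for $i > s$: one must simultaneously verify that the repaired witnesses satisfy the common-prefix condition (iii) for the refined equivalence $\sim'_i$, and that no element $b \in \pr_i \rel'$ is inadvertently lost when its original witness $\omega(i, b')$ with $b' \sim_i b$ had $\omega(i, b')[s] \not\sim_s a$ while some other representative of the $\sim_i$-class does admit an extension with $a$ at position $s$. This is where the interplay of rectangularity of $\rel$ with the Mal'tsev identities is genuinely used, and it is the technical core of the argument; it also explains why the simple first-pass transplant alone is insufficient.
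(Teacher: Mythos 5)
The paper does not prove this proposition; it is cited directly from Dyer and Richerby, so there is no internal argument to compare against. Evaluating your attempt on its own terms, the Mal'tsev-transplant idea is a reasonable point of departure, but the test you use to decide when to write $\bot$ is incorrect, and this already breaks the first sweep for $i < s$, where you assert a one-pass transplant suffices. Take $H = \{0,1\}$, $n=3$, and $\rel = \{(x,y,z) \in H^3 : y = z\}$, which is affine with Mal'tsev polymorphism $\vf(x,y,z) = x \oplus y \oplus z$. Let $s = 3$ and $a = 1$, so $\rel' = \{(0,1,1),(1,1,1)\}$. A legitimate witness function for $\rel$ has $\omega(1,0) = (0,0,0)$, and one checks directly that $0 \not\sim_3 1$ in $\rel$: every tuple with third coordinate $0$ has second coordinate $0$, while every tuple with third coordinate $1$ has second coordinate $1$, so no two tuples share the first two coordinates and differ at the third. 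Your rule then sets $\omega'(1,0) := \bot$. But $0 \in \pr_1 \rel'$ via $(0,1,1)$; even the precise claim you make, that no extension of the prefix $\pr_{[0]}\omega(1,0)$ (here the empty tuple) with $0$ at position $1$ lies in $\rel'$, is false. The output is not a frame for $\rel'$.

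The underlying difficulty is that, for $i < s$, membership of $b$ in $\pr_i \rel'$ is a two-sided condition: some tuple must carry $b$ at position $i$ \emph{and} $a$ at position $s$, and the latter is not determined by the $\sim_s$-class of the particular representative $\omega(i,b)$ the frame happens to store — a different tuple of $\rel$ with $b$ at position $i$ may sit in a different $\sim_s$-class and still land in $\rel'$. Your transplant formula itself is also fragile in this regime: when $\omega(i,b)[s] \not\sim_s a$, the middle argument $\omega(s, \omega(i,b)[s])$ need not agree with $t_a$ on coordinate $i$, so $\omega'(i,b)[i]$ can differ from $b$ (in the example, take $\omega(3,0) = (1,0,0)$ and $t_a = (0,1,1)$; then $\vf$ gives $1$ at coordinate $1$). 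Repairing this requires a genuinely different propagation step rather than a local Mal'tsev combination gated on $\sim_s$, and it affects all $i$, not just $i > s$ as your plan assumes; one should follow the construction in Dyer and Richerby directly.
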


We denote the relation $\rel(x_1, ..., x_n) \wedge C_a(x_s)$ by $\rel^{s \leftarrow a}$, its $\sim_j$ relations by $\sim^{s \leftarrow a}_j$, its frame classes by $\cE^{s \leftarrow a}_{s,t}$, and its witness function by $\omega^{s\leftarrow a}$.

\begin{proposition}[\cite{effective-Dyer-doi:10.1137/100811258}]\label{pro:Frame_constant_string_existance}
Let $I \subseteq [n]$. Given a frame $F$ for $\rel(x_1,x_2,...,x_n)$, a frame for $\rel(x_1, x_2,..., x_n)\wedge ( \bigwedge_{s \in I} C_{a_s}(x_s))$, i.e., $x_s$ is the constant $a_s \in H$, $s\in I$, can be constructed in $O(n^3)$ time.
\end{proposition}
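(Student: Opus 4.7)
The plan is to reduce this to an iterated application of Proposition~\ref{pro:Frame_constant_existance}. Enumerate $I$ in any order as $s_1,\ldots,s_k$ with $k=|I|\le n$, and set $F_0:=F$. For each $j=1,\ldots,k$, invoke the algorithm of Proposition~\ref{pro:Frame_constant_existance} on input $(F_{j-1},C_{a_{s_j}}(x_{s_j}))$ to produce a new frame $F_j$, and output $F_k$.

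Correctness is by induction on $j$. Let
\[
\rel^{(j)}(x_1,\ldots,x_n) := \rel(x_1,\ldots,x_n)\wedge\bigwedge_{i\le j} C_{a_{s_i}}(x_{s_i}).
\]
The base case $j=0$ is the hypothesis that $F_0$ is a frame for $\rel^{(0)}=\rel$. For the inductive step, note that any Mal'tsev operation $\vf$ on $H$ preserves every unary constant relation, since $\vf(a,a,a)=a$ by the Mal'tsev equations. Hence if $\rel^{(j-1)}$ has Mal'tsev polymorphism $\vf$, so does $\rel^{(j)}=\rel^{(j-1)}\wedge C_{a_{s_j}}(x_{s_j})$, and the hypothesis of Proposition~\ref{pro:Frame_constant_existance} is satisfied at step $j$. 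Thus $F_j$ is a valid frame for $\rel^{(j)}$, and in particular $F_k$ is a frame for the target relation $\rel\wedge\bigwedge_{s\in I}C_{a_s}(x_s)$.

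For the complexity, each of the $k\le n$ calls costs $O(n^2)$ by Proposition~\ref{pro:Frame_constant_existance}, yielding the total bound of $O(n^3)$. The one subtlety worth checking, which I view as the main (minor) obstacle, is that the intermediate frame $F_{j-1}$ really is a legitimate input of the expected size to the next call, so that the $O(n^2)$ per-call bound applies uniformly across all $j$. This is immediate from the definition of a frame: every frame of an $n$-ary relation over $H$ has size at most $n|H|$, a bound independent of $j$ and of how many constants have already been imposed, so no size blow-up can occur along the iteration and the argument goes through without adjustment.
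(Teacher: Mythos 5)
The paper states this proposition as a citation to Dyer and Richerby and provides no proof of its own, so there is no in-paper argument to compare against. Your derivation by iterating Proposition~\ref{pro:Frame_constant_existance} is correct and is the natural way to obtain the $O(n^3)$ bound from the $O(n^2)$ single-constant version: you correctly observe that any Mal'tsev operation fixes every constant (so the Mal'tsev hypothesis propagates through the iteration), and that frames are of size $O(n|H|)$ uniformly, so the per-call cost stays $O(n^2)$ across all $k\le n$ iterations.
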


\begin{lemma}\label{lem:Frame_add_const}
Let $i\in [n-1]$, $a,b,c \in H$. For all $s+1 \leq i\leq n$, if $b \sim^{s \leftarrow a}_i c$, then $b \sim_i c$.
\end{lemma}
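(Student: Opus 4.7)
The plan is to observe that this lemma is essentially a containment statement, and the proof unfolds directly from the definitions.

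First, I would recall that by definition $\rel^{s \leftarrow a} = \rel \wedge C_a(x_s)$, so $\rel^{s \leftarrow a} \subseteq \rel$ as sets of $n$-tuples: every tuple in the constrained relation is a tuple in $\rel$ that happens to have its $s$-th coordinate equal to $a$. Consequently $\pr_i \rel^{s \leftarrow a} \subseteq \pr_i \rel$ for each $i \in [n]$, so $b, c \in \pr_i \rel^{s \leftarrow a}$ implies $b, c \in \pr_i \rel$, meaning the assertion $b \sim_i c$ even makes sense.

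Next, I would unfold the hypothesis $b \sim^{s \leftarrow a}_i c$: by definition of the equivalence relation $\sim^{s \leftarrow a}_i$ on $\pr_i \rel^{s \leftarrow a}$, there exist a common prefix $\vv x \in H^{i-1}$ and suffixes $\vv y_b, \vv y_c \in H^{n-i}$ such that
\[
(\vv x, b, \vv y_b) \in \rel^{s \leftarrow a} \quad \text{and} \quad (\vv x, c, \vv y_c) \in \rel^{s \leftarrow a}.
\]
Applying the containment $\rel^{s \leftarrow a} \subseteq \rel$, these two tuples also lie in $\rel$. But that is exactly the definition of $b \sim_i c$ in $\rel$, so the conclusion follows immediately. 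The hypothesis $s+1 \leq i$ is used only to guarantee that the coordinate position $s$ lies inside the prefix $\vv x$, so the prefix (which already carries the constraint $x_s = a$) is shared; in the remaining cases the statement is either vacuous or trivial, which is why the lemma is phrased for $s < i$.

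There is no real obstacle here: rectangularity of $\rel$, the Mal'tsev polymorphism of $\cH$, and the structure of the frame are not needed for this direction. The content of the lemma is simply that adding a constant constraint at an earlier coordinate can only refine the equivalence classes at later coordinates, never coarsen them, which is a one-line consequence of $\rel^{s \leftarrow a} \subseteq \rel$. The real work in the surrounding development will be the opposite-direction statements (reconstructing classes of $\sim_i$ from classes of $\sim^{s \leftarrow a}_i$ together with the witness function), but that is outside the scope of this particular lemma.
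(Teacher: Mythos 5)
Your proof is correct and follows essentially the same argument as the paper: unfold the definition of $\sim^{s \leftarrow a}_i$ to obtain a common prefix and two extensions in $\rel^{s\leftarrow a}$, observe $\rel^{s\leftarrow a}\subseteq\rel$, and conclude $b\sim_i c$ by definition. The additional remarks about the role of the hypothesis $s+1\le i$ and the non-use of rectangularity are accurate but not part of the paper's (very terse) proof.
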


\begin{proof}
The proof is straightforward from the definition. Suppose $b \sim^{s\leftarrow a}_i c$. Then, there exist $\vv x\in H^{i-1}$ and $\vv y_b, \vv y_c \in H^{n-i}$ such that $(\vv x, b, \vv y_b), (\vv x, c, \vv y_c)\in \rel^{s\leftarrow a}\sse\rel$ and $\pr_s \vv x=a$. By the definition, $b \sim_i c$.
\end{proof}

\begin{proposition}[\cite{effective-Dyer-doi:10.1137/100811258}, \cite{DBLP:journals/jacm/Bulatov13}]\label{pro:Frame_projection}
Given a frame $F$ for $\rel(x_1,x_2,...,x_n)$, a frame for $\relo(x_1, x_2..., x_{n-1}) = \exists y \rel(x_1, ..., x_{n-1}, y)$, can be constructed in $O(n)$ time.
\end{proposition}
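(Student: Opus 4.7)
The plan is to show that the frame for $\relo$ is obtained almost for free from the frame $F$ for $\rel$: we simply drop the witnesses indexed by position $n$ and truncate the remaining tuples to their first $n-1$ coordinates. Concretely, I would define
\[
\omega'(i,a) = \pr_{[n-1]}\omega(i,a) \quad \text{for } i\in[n-1], a\in\pr_i\rel,
\]
and set $\omega'(i,a)=\bot$ whenever $a\notin\pr_i\rel$. The resulting frame $F'=\{\omega'(i,a)\mid i\in[n-1], a\in\pr_i\rel\}$ is extracted from $F$ in $O(n)$ time because $|F|\le n|H|$ and the projection on each witness is a cheap coordinate truncation (with $|H|$ treated as a constant of the fixed structure).

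The correctness argument splits into two small pieces. First I would observe that $\pr_i\relo=\pr_i\rel$ for every $i\in[n-1]$, since $\relo$ is literally the projection of $\rel$ onto the first $n-1$ coordinates. Second, I would check that the equivalence relation $\sim'_i$ on $\pr_i\relo$ (defined with respect to $\relo$) coincides with the restriction of $\sim_i$ on $\pr_i\rel$ to those indices $i\le n-1$. This follows directly from the definitions: $a\sim'_i b$ requires $\bx\in H^{i-1}$ and $\vv y_a,\vv y_b\in H^{n-1-i}$ with $(\bx,a,\vv y_a),(\bx,b,\vv y_b)\in\relo$, and by the definition of $\relo$ each of these tuples extends by a coordinate to land in $\rel$, yielding exactly the condition defining $a\sim_i b$ in $\rel$; the converse is immediate.

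With these two facts in hand the three witness-function axioms for $\omega'$ are verified immediately. Axiom (i) holds by construction. For axiom (ii), $\omega'(i,a)\in\relo$ because $\omega(i,a)\in\rel$ and $\relo$ is the projection of $\rel$; moreover $\pr_i\omega'(i,a)=\pr_i\omega(i,a)=a$ since $i\le n-1$. For axiom (iii), if $a\sim'_i b$ then $a\sim_i b$ in $\rel$, so $\pr_{[i-1]}\omega(i,a)=\pr_{[i-1]}\omega(i,b)$, and truncating the last coordinate preserves this equality, giving $\pr_{[i-1]}\omega'(i,a)=\pr_{[i-1]}\omega'(i,b)$.

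There is no real obstacle here. The only subtle point worth stating explicitly in the write-up is the observation that $\sim_i$ and $\sim'_i$ agree for $i\in[n-1]$: without this, one might fear that projecting could merge equivalence classes (forcing $\omega'$ to be reassigned across classes), but because an existential quantifier has already been implicitly applied to all coordinates strictly after $i$ in the definition of $\sim_i$, appending one more existentially quantified coordinate changes nothing. Once that is observed, the construction is literally a truncation, and the $O(n)$ bound is immediate.
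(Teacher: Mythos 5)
Your proposal is correct and matches the standard construction from the cited sources (Dyer--Richerby and Bulatov); the paper itself cites those references rather than supplying a proof. The key observation that $\sim_i$ restricted to $i\in[n-1]$ coincides with $\sim'_i$ computed in $\relo$ is exactly what makes the coordinate truncation of the witnesses work, and your verification of the three witness-function axioms and the $O(n)$ bound is sound.
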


%%%%%%%%%%%%%%%%%%%%%%%%%%%%%%%
\subsection{The Algorithm}
In order to find $\WT\rel$ as explained in the beginning of Section~\ref{sec:parity} we use a witness function and frame of $\rel$ to compute a frame and witness function of $\WT\rel$. If we can find such a $\WT\rel$, we repeat this process $n-1$ times. Eventually, we obtain $\WT\rel^* \subseteq H$, whose cardinality can be easily found. 

First, we assume that the witness function for $\rel$ and $\WT\rel$ is given. We prove we prove the following.

\begin{proposition}\label{pro:parity-algorithm}
Let $\cH$ be a 2-rigid, strongly 2-rectangular, $\SP \cH_2$ has Mal'tsev polymorphism , and let $\rel \in \SP\cH _2$ be an $n$-ary relation. Given a frame $F$ and a witness function $\omega$ for $\rel$, the parity of $\rel$ can be computed in $O(n)$ time.
\end{proposition}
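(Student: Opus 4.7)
The plan is to prove the proposition by induction on the arity $n$ of $\rel$, reducing the arity by one at each step. In the base case $n=1$, the relation $\rel\sse H$ is unary and $|\rel|\pmod 2$ can be read directly off the frame in $O(|H|)=O(1)$ time. For the inductive step, I would exploit strong 2-rectangularity of $\rel$ (valid since $\rel\in\SP\cH_2$) to decompose $\rel$ along the last coordinate as
\[
\rel \;=\; \bigsqcup_{t=1}^{\ell_n} P_t \tm \cE_{n,t},
\]
where $\cE_{n,1},\dots,\cE_{n,\ell_n}$ are the $\sim_n$-classes and $P_t := \{\bx \in H^{n-1} : (\bx, c) \in \rel \text{ for some (equivalently, all) } c \in \cE_{n,t}\}$. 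It follows that
\[
|\rel| \;\equiv\; \sum_{t\,:\,|\cE_{n,t}|\,\text{odd}} |P_t| \;\equiv\; |\WT\rel| \pmod 2,
\]
where $\WT\rel := \bigsqcup_{t\,:\,|\cE_{n,t}|\,\text{odd}} P_t$. Crucially, $\WT\rel$ has arity $n-1$ and is 2-mpp-definable via $\WT\rel(\bx) = \existspar c\; \rel(\bx, c)$, so $\WT\rel\in\SP\cH_2$ and inherits a Mal'tsev polymorphism by assumption.

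Each recursive level then has two sub-steps performed in $O(|H|)$ time. First, using $\omega$, extract the partition $\{\cE_{n,t}\}$ together with its cardinalities mod 2 by grouping $a \in \pr_n \rel$ according to the common prefix $\pr_{[n-1]} \omega(n,a)$; by property (iii) of witness functions this correctly identifies the classes. Second, construct a frame $F'$ and witness function $\omega'$ for $\WT\rel$ by ``trimming'' $\omega$: for each $j \in [n-1]$ and $a \in \pr_j \rel$, set $\omega'(j,a) := \pr_{[n-1]} \omega(j,a)$ when $\omega(j,a)[n]$ lies in an odd-sized class, and otherwise either set $\omega'(j,a) := \bot$ (when $a \notin \pr_j \WT\rel$) or substitute an alternative witness ending in an odd-sized class (when $a \in \pr_j \WT\rel$). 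With $F'$ and $\omega'$ in hand, invoke the induction hypothesis. The resulting recurrence $T(n) = T(n-1) + O(|H|)$ yields $T(n) = O(n)$.

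The hard part will be the trimming sub-step: when $\omega(j,a)[n]$ happens to fall in an even-sized class yet nonetheless $a \in \pr_j \WT\rel$, the plan is to produce, in $O(1)$ time, a replacement witness whose $n$th coordinate lies in an odd-sized class. I would use the Mal'tsev polymorphism $\vf$ of $\SP\cH_2$ to combine $\omega(j,a)$ with a representative witness $\omega(n,c_t)\in\rel$ for each odd-sized class $\cE_{n,t}$, producing a tuple that preserves the value $a$ at position $j$ while redirecting the $n$th coordinate into $\cE_{n,t}$. Rectangularity guarantees that such a redirection succeeds exactly when $a \in \pr_j \WT\rel$, so this also resolves the decision issue. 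Verifying that the resulting $\omega'$ satisfies the witness-function consistency condition, i.e.\ $\pr_{[j-1]}\omega'(j,a) = \pr_{[j-1]}\omega'(j,b)$ whenever $a \sim_j b$ in $\WT\rel$, reduces to observing that the $\sim_j$-relation for $\WT\rel$ is contained in that for $\rel$ and appealing to the corresponding property of $\omega$, together with a careful canonical choice of the representatives $c_t$.
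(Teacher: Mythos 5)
Your high-level decomposition is the same as the paper's: you repeatedly pass from $\rel$ to the $(n-1)$-ary relation $\WT\rel$ consisting of prefixes with an odd number of extensions, observe that $\WT\rel\in\SP\cH_2$ so it inherits the Mal'tsev polymorphism, prove $|\rel|\equiv|\WT\rel|\pmod 2$ using the $\sim_n$-classes, and stop at arity one. Your identity $|\rel|\equiv\sum_{|\cE_{n,t}|\text{ odd}}|P_t|$ is exactly the content of the paper's Lemmas on $\PAR\rel$ and $\WT\rel$ (Lemma~\ref{lem:PAR_size}, Lemma~\ref{lem:TILDE_size}), and the base case is handled the same way. On this part there is nothing to fix.

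The gap is entirely in the "trimming" sub-step, and it is real. You assert that from the witness function $\omega$ of $\rel$ alone one can decide in $O(1)$ whether $a\in\pr_j\WT\rel$, and produce a valid $\omega'(j,a)$, by using the Mal'tsev polymorphism to "redirect" the $n$th coordinate of $\omega(j,a)$ into an odd-sized class. But a Mal'tsev operation is ternary: you name only two tuples, $\omega(j,a)$ and a representative $\omega(n,c_t)$, and the needed third argument must itself lie in $\rel$ and have very specific coordinates (e.g.\ agreeing with $\omega(n,c_t)$ at position $j$ and with $\omega(j,a)$ at position $n$); no such tuple is guaranteed to be available from the frame. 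Moreover $\omega(j,a)$ pins down only one prefix $(\vv x,a,\vv y)$, and the $\sim_n$-class to which that single prefix extends carries no information about whether \emph{some other} prefix through $a$ at position $j$ extends to an odd class — which is exactly the question "is $a\in\pr_j\WT\rel$?" This is why the paper does not attempt to answer it from $\omega$ directly: it recomputes a frame for $\rel$ with coordinate $j$ pinned to $a$ (Proposition~\ref{pro:Frame_constant_existance}, $O(n^2)$) and inspects the resulting $\cE_n$-classes, and it enforces condition (iii) of the witness function by the $\mathsf{CheckEpsilonClass}$ routine (another pinned-frame computation per pair $a,b$). The total per-level cost is $O(n^4)$ (Proposition~\ref{pro:calculate-witness}), not the $O(|H|)$ you claim — indeed $O(|H|)$ is impossible even in principle, since the witness table for $\WT\rel$ has $\Theta(n|H|)$ entries. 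Your appeal to "$\sim_j$ for $\WT\rel$ is contained in $\sim_j$ for $\rel$" is true but insufficient: it says nothing about how to make the prefixes of the \emph{substituted} witnesses agree across a $\sim'_j$-class, which is precisely what the pinned-frame recomputation achieves. To repair the argument you should replace the trimming step with the paper's $\mathsf{FindWitnessFunction}$-style recomputation; the overall bound then reads as $O(n)$ iterations each invoking a witness-function rebuild, which is what the paper's statement of Proposition~\ref{pro:parity-algorithm} actually means (and what Theorem~\ref{the:parity-algorithm-overview}'s $O(n^5)$ reflects).
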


In Section~\ref{sec:calculate-witness-function} we propose an efficient method to compute a witness function in $O(n^4)$. Consequently, the main outcome of this section is summarized as follows:

\begin{theorem}[Theorem~\ref{the:parity-algorithm-overview} re-stated]
Let $\cH$ be a 2-rigid, strongly 2-rectangular, and $\SP \cH_2$ has Mal'tsev polymorphism. And let $\rel \in \SP\cH _2$ be an $n$-ary relation. Then $\#_2\CSP(\cH)$ can be solved in time $O(n^5)$. 
\end{theorem}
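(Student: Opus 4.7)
The theorem follows by combining Proposition~\ref{pro:parity-algorithm}, which is a parity procedure assuming frames and witness functions are already in hand, with the $O(n^4)$ witness-function construction to be described in Section~\ref{sec:calculate-witness-function}. My plan is first to reduce an arbitrary instance of $\#_2\CSP(\cH)$ to computing the parity of a single $2$-mpp-definable relation, and then to plug the two algorithmic components together.

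Given an instance $\cP=(V,\cC)$ of $\#_2\CSP(\cH)$ with $|V|=n$, the standard translation of Section~\ref{sec:two-views} exhibits $\cP$ as a conjunctive definition of an $n$-ary relation $\rel$, namely the set of solutions of $\cP$. Since a conjunctive definition is a degenerate $2$-mpp-definition (with no quantified variables), $\rel\in\SP\cH_2$; and since $\SP\cH_2$ admits a Mal'tsev polymorphism by hypothesis, $\rel$ is strongly rectangular and thus admits frames and witness functions in the sense of Section~\ref{sec:parity}. The original task of computing $\hom(\cG,\cH)\pmod{2}$ is therefore equivalent to computing $|\rel|\pmod{2}$.

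As outlined in Section~\ref{sec:parity}, the parity algorithm of Proposition~\ref{pro:parity-algorithm} iteratively reduces the arity of $\rel$, passing from a relation of arity $k$ to a relation $\WT\rel$ of arity $k-1$ with the same parity, and repeating this $n-1$ times until a unary relation is reached. Each arity-reduction step requires a witness function of the current relation; by the construction of Section~\ref{sec:calculate-witness-function} each such witness function can be produced in $O(n^4)$ time using the Mal'tsev polymorphism of $\SP\cH_2$. The parity bookkeeping on top of these witness functions costs only $O(n)$ by Proposition~\ref{pro:parity-algorithm}. Summing the witness-function computations over the at most $n$ iterations yields the bound $O(n\cdot n^4)=O(n^5)$ stated in the theorem.

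The principal obstacle lies in Section~\ref{sec:calculate-witness-function}: one must mimic the Bulatov--Dalmau/Dyer--Richerby compact-representation procedure while respecting the $\#_2$ semantics of the $\WT\rel$ construction. The standard procedure performs iterated rectangular closure under a Mal'tsev polymorphism, but in the modular setting the relevant polymorphism must live on $\SP\cH_2$ rather than on $\cH$ itself, since Section~\ref{sec:rectangularity-main} shows that a Mal'tsev polymorphism of $\cH$ need not descend to $\SP\cH_2$ through $2$-mpp-definitions. This is exactly the force of the extra hypothesis in the theorem, and once the Section~\ref{sec:calculate-witness-function} algorithm is available, the remainder of the argument is a routine concatenation of the two components above.
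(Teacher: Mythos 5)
Your proposal is correct and follows essentially the same route as the paper: view the instance as a conjunctive (hence $2$-mpp) definition of an $n$-ary $\rel\in\SP\cH_2$, reduce the arity iteratively via $\WT\rel$ through Proposition~\ref{pro:parity-algorithm}, and charge $O(n^4)$ per iteration for the witness-function recomputation of Section~\ref{sec:calculate-witness-function}, for $O(n^5)$ overall. You also correctly single out the role of the Mal'tsev polymorphism of $\SP\cH_2$ (as opposed to $\cH$) as the load-bearing hypothesis, which is precisely the point of the footnote in the paper.
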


%%%%%%%%%%%%%%%%%%%%%%%%%%%%%%%%%%%%
\subsubsection{Auxiliary relations}

We define the following relation as a tool in out proofs. For $\rel \in \SP \cH _2$, let $\PAR\rel$ be defined as follows:
\begin{equation} \label{equ:PAR_definition}
\PAR\rel(\vv x, y) = \rel(\vv x, y) \wedge (\exists^{\equiv2} z \;\; \rel(\vv x, z))
\end{equation}
Note that the relation $\PAR\rel$ is 2-mpp-definable in $\cH$. Thus, $\PAR\rel \in \SP \cH _2$, and it is rectangular. 
Also, note that the second part of the conjunct expresses the fact that the number of extensions of $\vv x$ should be odd. Hence, $(\vv x, y) \in \PAR\rel$ if and only if $(\vv x, y) \in \rel$ and $\vv x$ has an odd number of extensions, i.e., $\ext_\rel(\vv x) \equiv 1 \pmod 2$. 

\begin{lemma}\label{lem:PAR_size}
Let $\rel \in \SP \cH _2$. Then $|\rel| \equiv |\PAR\rel| \pmod 2$.
\end{lemma}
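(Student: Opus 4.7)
The plan is to partition the tuples of $\rel$ according to the value of their first $n-1$ coordinates, which groups $\rel$ into ``fibers'' indexed by $\vv x \in \pr_{[n-1]}\rel$. For each such $\vv x$, write $e(\vv x) = \ext_\rel(\vv x)$ for the number of $y$ such that $(\vv x, y) \in \rel$. Then counting $\rel$ fiber-by-fiber gives
\[
|\rel| \;=\; \sum_{\vv x \in \pr_{[n-1]}\rel} e(\vv x),
\]
and by the definition of $\PAR\rel$ in~\eqref{equ:PAR_definition} a tuple $(\vv x, y)\in\rel$ lies in $\PAR\rel$ precisely when $e(\vv x)$ is odd, so
\[
|\PAR\rel| \;=\; \sum_{\vv x:\; e(\vv x)\text{ odd}} e(\vv x).
\]

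Next, I would reduce both sums modulo $2$. On the right-hand side for $|\rel|$, each term with $e(\vv x)$ even vanishes, while each term with $e(\vv x)$ odd contributes $1$, so
\[
|\rel| \;\equiv\; \bigl|\{\vv x \in \pr_{[n-1]}\rel : e(\vv x)\text{ is odd}\}\bigr| \pmod 2.
\]
Applying the same reduction to the expression for $|\PAR\rel|$ (every term has $e(\vv x)$ odd, hence contributes $1$ mod $2$) yields exactly the same count, which establishes the congruence.

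Essentially no obstacle arises here: the statement is a combinatorial tautology about partitioning a set into fibers and observing that, modulo $2$, only the odd-sized fibers contribute, and those fibers each contribute $1$ whether counted with or without their size. The only care needed is to confirm that $\PAR\rel$ is actually a legitimate member of $\SP\cH_2$ (so the statement makes sense), which follows from its definition being a $2$-mpp-formula built from $\rel$ as already noted in the preamble to the lemma.
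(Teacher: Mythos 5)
Your proof is correct and follows essentially the same route as the paper: both arguments fiber $\rel$ over its first $n-1$ coordinates, observe that fibers of even size vanish modulo $2$, and identify the remaining count with $|\PAR\rel|$ (the paper's $\prpar_{[n-1]}\rel$ is exactly your set of $\vv x$ with $e(\vv x)$ odd).
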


\begin{proof}
Let $\sgn$ be the characteristic function for a relation $\relo\sse H^n$: 
\begin{equation*}
\sgn_\relo(\vv x)=
\left\{ 
\begin{array}{ll}
   1  & \vv x\in \relo, \\
   0  & \vv x\in H^n-\relo. 
\end{array}\right. 
\end{equation*}
We start by calculating the size of $\rel$ as follows. Recall that $\prpar_{[n-1]}\rel$ denotes the relation defined by $\exists^{\equiv2}y\, \rel(\vv x,y)$.
\begin{align*}
    |\rel| &= \sum_{\vv x\in H^n} \sgn_\rel(\vv x)\\
     &= \sum_{\vv y \in H^{n-1}} \sum_{z \in H} \sgn_\rel(\vv y, z) \\
    &= \sum_{\vv y \in H^{n-1}} \ext_\rel (\vv y)\\
    &= \sum_{\substack{\vv y \in H^{n-1} \\ \vv y \in \prpar_{[n-1]}\rel}} \ext_\rel (\vv y) + \sum_{\substack{\vv y \in H^{n-1} \\ \vv y \not \in \prpar_{[n-1]}\rel}} \ext_\rel(\vv y)\\
    &\equiv \bigoplus_{\substack{\vv y \in H^{n-1} \\ \vv y \in \prpar_{[n-1]}\rel}} \ext_\rel(\vv y) \pmod 2\\ 
    &\equiv |\PAR\rel| \pmod 2
\end{align*}
The first and second equalities are trivial by the definition of the function $\sgn$. The third equality holds because for a $\vv y \in H^{n-1}$ 
\[
    \ext_\rel(\vv y) = \sum_{z \in H} \sgn((\vv y, z) \in \rel).
\]
This equation indicates that the number of extensions of $\vv y \in H^{n-1}$ to a tuple from the $n$-ary relation $\rel$ is determined by appending each possible $z \in H$ to $\vv y$ and verifying whether $(\vv y, z)$ belongs to $\rel$.  
The forth equality is true, because we can split all $\vv y \in \pr_{[n-1]} \rel$ into two disjoint sets: one of tuples with an odd number of extension, and the other of tuples with an even number of extension.
Then, the first equivalence holds, because if $\vv y\in \pr_{[n-1]}\rel$ and $\vv y\not \in \prpar_{[n-1]}\rel$, then, $\ext_\rel(\vv y) \equiv 0 \mod 2$.
Finally, the last equivalence is valid, because $(\vv x, y)\in \PAR\rel$ if and only if $\ext_\rel(\vv x) \equiv \ext_{\PAR\rel} (\vv x) \equiv 1 \pmod 2$. 
\end{proof}

Now, the relation $\WT \rel\subseteq H^{n-1}$ we have been talking about since the beginning Section~\ref{sec:parity} is given by 
\begin{equation}\label{TILDE_definition}
    \WT\rel(\vv x) = \exists^{\equiv2} y\;\; \PAR\rel(\vv x, y)
\end{equation}
Note that the relation $\WT\rel$ is 2-mpp-definable in $\cH$. Thus, $\WT\rel \in \SP \cH _2$, and it is rectangular. Also, by the definition of $\PAR\rel$, $\vv x \in \WT\rel$ if and only if $\vv x$ has odd number of extensions into $\rel$. Therefore, the following lemma is straightforward.

\begin{lemma}\label{lem:PAR-exists}
Let $\relo(\vv x) = \exists y \; \; \PAR\rel(\vv x, y)$. Then, $\relo = \WT\rel$. 
\end{lemma}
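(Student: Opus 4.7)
The plan is a direct verification based on a single structural observation about $\PAR\rel$: for any fixed $\vv x$, the conjunct $\exists^{\equiv 2} z\;\rel(\vv x, z)$ in the definition of $\PAR\rel$ depends only on $\vv x$, so the fiber $\{y : (\vv x,y) \in \PAR\rel\}$ is either empty or coincides with $\{y : (\vv x,y) \in \rel\}$. In other words, as $\vv x$ varies, this fiber has size $\ext_\rel(\vv x)$ when $\ext_\rel(\vv x)$ is odd, and size $0$ when $\ext_\rel(\vv x)$ is even.

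From this observation both inclusions are immediate. For $\relo \subseteq \WT\rel$: if there exists a $y$ with $(\vv x,y) \in \PAR\rel$, then the fiber is nonempty, so it cannot have size $0$, forcing $\ext_\rel(\vv x)$ to be odd; but then the fiber has odd cardinality, which is exactly what it means for $\vv x$ to satisfy $\exists^{\equiv 2} y\; \PAR\rel(\vv x,y)$, i.e., $\vv x \in \WT\rel$. For $\WT\rel \subseteq \relo$: if the fiber has odd size, it is in particular nonempty, so some witness $y$ exists, giving $\vv x \in \relo$.

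The only minor subtlety is recognizing that both quantifiers in question, $\exists y$ and $\exists^{\equiv 2} y$, applied to $\PAR\rel$ are equivalent because the two possible values of the fiber size ($0$ and an odd number) are precisely the two residues modulo $2$ paired with emptiness/nonemptiness of the fiber. There is no real obstacle in the proof; the statement is essentially a rephrasing of the construction and serves as a bridge showing that $\WT\rel$, although defined via a modular quantifier, has the cleaner existential characterization $\exists y\, \PAR\rel(\vv x,y)$, which is what will later allow its frame to be computed from a frame of $\PAR\rel$ via Proposition~\ref{pro:Frame_projection}.
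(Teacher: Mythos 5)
Your proof is correct and takes essentially the same approach as the paper: both rest on the single observation that, for any fixed $\vv x$, the fiber $\{y : (\vv x,y)\in\PAR\rel\}$ is either empty or has odd cardinality (equal to $\ext_\rel(\vv x)$), so that $\exists y$ and $\exists^{\equiv 2} y$ coincide when applied to $\PAR\rel$. Your write-up is somewhat more explicit in separating the two inclusions, but the underlying argument is the same.
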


\begin{proof}
%Let $\vv x \in \pr_{[n-1]} \PAR \rel$, if and only if, there exists $y \in H$ such that $(\vv x, y) \in \PAR\rel$, if and only if, $\vv x$ has odd number of extensions into $\PAR\rel$, if and only if $\vv x$ has odd number of extensions into $\rel$, if and only if $\vv x \in \prpar_{[n-1]}\rel$. 
We have that $\vv x \in \pr_{[n-1]} \PAR \rel$ if and only if there exists $y \in H$ such that $(\vv x, y) \in \PAR\rel$. Furthermore, by the definition of $\PAR\rel$, it means $\vv x$ has an odd number of extensions to a tuple from $\PAR\rel$, which is equivalent to having an odd number of extensions to a tuple from $\rel$. In other words, $\vv x$ belongs to $\prpar_{[n-1]}\rel$. The result follows.
\end{proof}

\begin{lemma}\label{lem:TILDE_size}
Let $\rel \in \SP \cH _2$. Then $|\WT \rel| \equiv  |\PAR\rel| \pmod 2$.
\end{lemma}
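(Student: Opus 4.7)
The plan is to mirror the double-counting argument used for Lemma~\ref{lem:PAR_size}, but with $\PAR\rel$ playing the role of $\rel$, and $\WT\rel$ playing the role of $\PAR\rel$. Concretely, I would write $|\PAR\rel|$ as a sum of extension counts over all $(n{-}1)$-tuples and then split that sum according to the parity of those extension counts.

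First, unpacking the definition of $\WT\rel$ in terms of extension counts: by the semantics of the 2-modular quantifier, $\vv x \in \WT\rel$ if and only if the number of $y \in H$ with $(\vv x, y) \in \PAR\rel$ is odd, i.e.\ iff $\ext_{\PAR\rel}(\vv x) \equiv 1 \pmod 2$. Equivalently, $\WT\rel = \prpar_{[n-1]} \PAR\rel$, which is also the content of Lemma~\ref{lem:PAR-exists}.

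Now I would compute, exactly as in the proof of Lemma~\ref{lem:PAR_size},
\[
|\PAR\rel| \;=\; \sum_{\vv x \in H^{n-1}} \ext_{\PAR\rel}(\vv x) \;=\; \sum_{\vv x \in \WT\rel} \ext_{\PAR\rel}(\vv x) \;+\; \sum_{\vv x \in H^{n-1} \setminus \WT\rel} \ext_{\PAR\rel}(\vv x).
\]
Every term in the second sum is even by the characterization of $\WT\rel$ above (either $\vv x \notin \pr_{[n-1]} \PAR\rel$, in which case the term is $0$, or $\ext_{\PAR\rel}(\vv x)$ is a positive even number), so the second sum vanishes modulo $2$. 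Every term in the first sum is odd, so that sum is congruent modulo $2$ to the number of terms, which is $|\WT\rel|$. Combining, $|\PAR\rel| \equiv |\WT\rel| \pmod 2$, as required.

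There is no real obstacle here; the statement is essentially a restatement of Lemma~\ref{lem:PAR_size} applied to $\PAR\rel$ in place of $\rel$. The only thing worth double-checking is that $\PAR\rel \in \SP\cH_2$ (so that the 2-mpp-quantifier defining $\WT\rel$ is legitimate in the setting of the lemma), which follows immediately from the definition~\eqref{equ:PAR_definition} as noted in the paragraph just before Lemma~\ref{lem:PAR_size}.
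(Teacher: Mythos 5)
Your proof is correct and takes essentially the same double-counting approach as the paper: sum over all $(n-1)$-tuples, observe that the contributions from tuples outside $\WT\rel$ vanish modulo $2$ while those from tuples inside contribute $1$ each. The only cosmetic difference is that you work directly with $\ext_{\PAR\rel}$, whereas the paper routes through $\sgn_\rel$ as an intermediate step before switching to $\sgn_{\PAR\rel}$; yours is, if anything, marginally more streamlined.
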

\begin{proof}
The cardinality of $\WT\rel$ is as follows:
\begin{align*}
    |\WT\rel| &= \sum_{\vv x\in H^{n-1}} \sgn_{\WT\rel}(\vv x)\\
%%     &= \sum_{\substack{\vv x \in H^{n-1} \\ y \in H}} \sgn_{\rel(\vv x, y \\
    &\equiv \bigoplus_{\vv x \in H^{n-1}} \bigoplus_{y \in H} \sgn_\rel(\vv x, y) \pmod2\\
&\equiv \bigoplus_{\vv x \in H^{n-1}} \bigoplus_{y \in H} \sgn_{\PAR\rel}(\vv x, y )\pmod2\\
    &\equiv |\PAR\rel|\pmod2
\end{align*}
The first equality is straightforward. The first congruence is true as it follows directly from the definition of the relation $\WT\rel$: $\vv x \in \WT\rel$ if and only if $\vv x$ has odd number of extensions in $\rel$. 
%% , which is if and only if there exists $y \in \prpar_n \rel$ such that $(\vv x, y) \in \rel$. 
%% The first congruence holds because $y \in \prpar_n\rel$ if and only if $ \bigoplus_{y\in H} \sgn( (\vv x, y) \in \rel) \equiv 1 \pmod 2$.
The second congruence is valid because $\vv x\in\WT\rel$ has odd number of extensions, hence $(\vv x, y)$ should belong to $\PAR\rel$ for each $(\vv x, y)\in\rel$.
\end{proof}

\begin{proof}[Proof of Proposition~\ref{pro:parity-algorithm}]
Our goal is to reduce the arity of $\rel$ from $n$ to $1$ step by step while keeping the parity of $|\rel|$ unchanged. To achieve this, 
we define $\rel^{(n)}=\rel$ and $\rel^{(k-1)}=\WT\rel^{(k)}$ for $1 \le k \le n$. 
By applying Lemma~\ref{lem:PAR_size} and Lemma~\ref{lem:TILDE_size} to $\rel^{(k)}$ we have the following congruences:
\begin{equation*}
|\rel| \equiv |\rel^{(n)}| \equiv |\rel^{(n-1)}| \equiv ... \equiv |\rel^{(1)}| \pmod 2.
\end{equation*}
By Proposition~\ref{pro:calculate-witness} we can compute a witness function of $\rel^{(k-1)}$ denoted by $\omega^{(k-1)}$ for $n \geq k \geq 1$ from a witness function $\omega^{(k)}$ of $\rel^{(k)}$.
Finally, the cardinality of $\rel^{(1)}\subseteq H$ can be computed using its witness function $\omega^{(1)}$: We simply check if $\omega^{(1)}(1, a)\neq\bot$ for each $a\in H$. Note that the whole process takes just $O(n)$ times, not considering the complexity of calculating witness functions.
\begin{algorithm}[H]
\caption{$\mathsf{CalculateSize}$($\omega, \cE , n$)}\label{alg:FindHandle}
    \begin{algorithmic}[1]
    \If {$n=1$}
    \State $\ell \gets $ Count how many $a \in H$, $\omega(1,a) \neq \bot$
    \State \textbf{Return} $\ell\pmod2$
    \EndIf
    \If {for all $t \in [|\cE_n|]$, $|\cE_{n, t}|\equiv 0 \pmod 2$}
    \State \textbf{Return} 0
    \EndIf
    \State $\WT\omega, \WT\cE \gets \mathsf{FindWitnessFunction}(\omega, \cE)$
    \State \textbf{Return} $\mathsf{CalculateSize}$($\WT\omega, \WT\cE , n-1$) 
    \end{algorithmic}
\end{algorithm}
\end{proof}

%%%%%%%%%%%%%%%%%%%%%%%%%%%%%%%%%%%%%%%%%%%%%%%%%%%%
\subsubsection{Calculating witness function for $\WT \rel$}\label{sec:calculate-witness-function}
In this subsection, we show how to calculate a frame $\WT F$, the frame classes $\WT \cE$, and a witness function $\WT\omega$ for $\WT\rel$. Since $\PAR\rel \in \SP \cH_2$ is rectangular, we can consider its frame and witness function, denoted as $\cE'$ and $\omega'$. Suppose we have already calculated the frame and witness function for $\PAR\rel$. By applying Lemma~\ref{lem:PAR-exists} and Proposition~\ref{pro:Frame_projection}, we can obtain a witness function and frame for $\WT\rel$ in $O(n)$ time.

To calculate a witness function for $\PAR\rel$, we use two auxiliary lemmas. Once we have a witness function for $\PAR\rel$, we can proceed with calculating a witness function for $\WT\rel$.

\begin{lemma}\label{lem:tool_PAR}
Let $(\vv x, y)\in \rel$  where $\vv x \in H^{n-1}$ and $y \in H$. If there exists $s\in [|\cE_n|]$ such that $|\cE_{n, s}| \equiv 1 \pmod 2$ and $y \in \cE_{n, s}$, then $(\vv x, y) \in \PAR\rel$.
\end{lemma}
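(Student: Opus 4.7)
The plan is to show that under the hypothesis, the set of extensions of $\vv x$ in $\rel$ coincides exactly with the frame class $\cE_{n,s}$, so that its size is odd. Recall that by definition $(\vv x, y)\in\PAR\rel$ iff $(\vv x,y)\in\rel$ and $\ext_\rel(\vv x)$ is odd; the first conjunct is already given, so the only work is to establish that $\ext_\rel(\vv x)\equiv 1\pmod 2$.

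First I would spell out the equivalence relation $\sim_n$ in the special case $i=n$: since the suffix after position $n$ is empty, $a\sim_n b$ simply means that there exists $\vv u\in H^{n-1}$ with $(\vv u,a),(\vv u,b)\in\rel$. I will then argue the set-theoretic identity
\[
\{z\in H\mid (\vv x, z)\in\rel\}=\cE_{n,s}.
\]
For the inclusion $\subseteq$, if $(\vv x,z)\in\rel$ then together with the hypothesis $(\vv x,y)\in\rel$ this witnesses $z\sim_n y$, so $z\in\cE_{n,s}$. For the inclusion $\supseteq$, take any $z\in\cE_{n,s}$; then $z\sim_n y$ provides some $\vv x'\in H^{n-1}$ with $(\vv x',y),(\vv x',z)\in\rel$. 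Now I invoke the rectangularity of $\rel$ (which holds because $\rel\in\SP\cH_2$ and $\cH$ is strongly $2$-rectangular, viewing $\rel$ as a binary relation between its first $n-1$ coordinates and its last coordinate): the triple $(\vv x,y),(\vv x',y),(\vv x',z)\in\rel$ forces $(\vv x,z)\in\rel$.

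Having the identity $\{z\mid(\vv x,z)\in\rel\}=\cE_{n,s}$, I immediately conclude $\ext_\rel(\vv x)=|\cE_{n,s}|\equiv 1\pmod 2$ by the assumption on $s$, so the second conjunct of the definition of $\PAR\rel$ is satisfied and $(\vv x,y)\in\PAR\rel$ as required.

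There is no real obstacle here: the only subtle point is to be precise about what rectangularity of $\rel$ means in this context (splitting coordinates as $[n-1]$ versus $\{n\}$), and to note that this binary rectangularity of $\rel$ is guaranteed by strong $2$-rectangularity of $\cH$ together with $\rel\in\SP\cH_2$.
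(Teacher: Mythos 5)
Your proof is correct and takes essentially the same approach as the paper: both reduce the claim to showing that the set of extensions of $\vv x$ in $\rel$ is exactly the frame class $\cE_{n,s}$, with the forward inclusion immediate from the definition of $\sim_n$ and the backward inclusion supplied by binary rectangularity of $\rel$ split as $[n-1]$ versus $\{n\}$. The paper phrases this as a chain of sum manipulations whereas you write it as an explicit set identity, but the mathematical content is identical.
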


\begin{proof}
We need to calculate the number of extensions of $\vv x$.
\begin{align*}
    \ext_\rel(\vv x) &= \sum_{z \in H} \sgn_\rel(\vv x, z)\\
    &= \sum_{z \in \pr_n \rel} \sgn_rel(\vv x , z)\\
    &= \sum_{i \in [|\cE_{n}|]} \sum_{ z\in \cE_{n, i}} \sgn_\rel(\vv x , z)\\
    &= \sum_{ z\in \cE_{n, s}} \sgn_\rel(\vv x , z)\\
    &\equiv |\cE_{n,s}| \equiv 1 \pmod 2.
\end{align*}
The first and second equalities are straightforward, as they follow directly from the definition of the function $\sgn$. The third equality holds because we can partition the set $\pr_n\rel$ into equivalence classes given by the frame $F$. The fourth equality is valid because the relation $\rel$ is rectangular, and so $\vv x$ can be extended to only one class $\cE_{n,s}$.
Since there exists $s \in [|\cE_n|]$ such that $y \in \cE_{n,s}$, the number of extensions of $\vv x$ should be the cardinality of $\cE_{n,s}$. As $|\cE_{n,s}|\equiv 1$, the number of extensions of $\vv x$ is $1$ modulo $2$. Therefore, $(\vv x, y) \in \PAR\rel$.
\end{proof}

\begin{lemma}
Let $a, b \in H$, $k\in[n]$, and $\vv x \in H^n$ be such that $\vv x \in \PAR\rel$ and $\pr_k \vv x = a$. We can check in $O(n^3)$ if $a \sim'_k b$, where $\sim'_k$ is the frame equivalence of $\PAR\rel$.
\end{lemma}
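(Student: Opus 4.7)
The plan is to reduce the $\sim'_k$-check to a frame computation on a restriction of $\rel$, then read off $\PAR\rel$-membership from the $\rel$-frame-classes at position $n$ without ever computing a frame of $\PAR\rel$ directly. Since $\PAR\rel$ is $2$-mpp-definable in $\cH$ by~\eqref{equ:PAR_definition}, we have $\PAR\rel\in\SP\cH_2$, so $\PAR\rel$ admits a Mal'tsev polymorphism by hypothesis and is in particular rectangular; the Folklore lemma stated earlier in this section therefore applies to $\PAR\rel$ as well. Consequently, given $\vv x\in\PAR\rel$ with $\pr_k\vv x=a$, we have $a\sim'_k b$ iff there exists $\vv x'\in\PAR\rel$ with $\pr_{[k-1]}\vv x'=\pr_{[k-1]}\vv x$ and $\pr_k\vv x'=b$. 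Hence the task reduces to deciding non-emptiness of $\relo\cap\PAR\rel$, where
\[
\relo \;=\; \rel(x_1,\dots,x_n)\wedge\bigwedge_{i\in[k-1]} C_{\pr_i\vv x}(x_i)\wedge C_b(x_k).
\]
By Proposition~\ref{pro:Frame_constant_string_existance}, a frame $F'$ of $\relo$ together with its frame classes can be computed from the given frame of $\rel$ in $O(n^3)$ time; if $F'=\emptyset$, return ``no''.

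To bypass constructing a frame of $\PAR\rel$ itself, I rely on the following strengthening of Lemma~\ref{lem:tool_PAR}. For any $\vv y\in\rel$, rectangularity of $\rel$ with respect to the partition $[n-1]\mid\{n\}$ yields $\{z:(\pr_{[n-1]}\vv y,z)\in\rel\}=\cE_{n,s}$, where $\cE_{n,s}$ is the $\rel$-frame-class at position $n$ containing $\pr_n\vv y$. Hence $\ext_\rel(\pr_{[n-1]}\vv y)=|\cE_{n,s}|$, and therefore $\vv y\in\PAR\rel$ iff $|\cE_{n,s}|$ is odd -- both directions now hold. Moreover, each $\relo$-frame-class $\cE'_{n,i}$ at position $n$ sits inside a unique $\rel$-class $\cE_{n,s}$, because $\sim'_n$-equivalence inside $\relo\subseteq\rel$ forces $\sim_n$-equivalence.

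The algorithm then iterates over the classes $\cE'_{n,1},\dots,\cE'_{n,\ell}$, picks a representative $z\in\cE'_{n,i}$, locates the enclosing $\cE_{n,s}$ in the given frame of $\rel$, and tests the parity of $|\cE_{n,s}|$; it returns ``yes'' iff some $\cE'_{n,i}$ lies inside an odd-sized $\cE_{n,s}$. Correctness: a ``yes'' outcome yields, via any witness in $F'$ with last coordinate in $\cE'_{n,i}$, a tuple of $\relo\cap\PAR\rel$; conversely, any $\vv x'\in\relo\cap\PAR\rel$ has $\pr_n\vv x'$ in an odd-sized $\cE_{n,s}$, and the $\relo$-class of $\pr_n\vv x'$ is contained in that $\cE_{n,s}$. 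The post-processing is $O(|H|^2)$, so the overall cost is dominated by the $O(n^3)$ frame computation. The delicate point -- and what I would expect to be the main obstacle -- is the converse direction of the sharpened Lemma~\ref{lem:tool_PAR}: that \emph{every} tuple of $\PAR\rel$ has its last coordinate in an odd-sized $\rel$-class, which crucially uses the full rectangularity of $\rel$ at the split $[n-1]\mid\{n\}$ and is precisely what permits decoding $\PAR\rel$-membership from the frame of $\rel$ alone, sidestepping the $\existspar$ quantifier.
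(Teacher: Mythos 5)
Your proposal is correct and takes essentially the same approach as the paper: fix the first $k-1$ coordinates to $\pr_{[k-1]}\vv x$ and the $k$th to $b$ via Proposition~\ref{pro:Frame_constant_string_existance}, then decide $a\sim'_k b$ by whether some position-$n$ frame class of the restriction has odd cardinality, and your reformulation of that test (locating the enclosing $\cE_{n,s}$-class of $\rel$ rather than inspecting $|\cE''_{n,t}|$ directly) is equivalent because the added constants touch only positions in $[k]\subseteq[n-1]$, so the extension sets at position $n$ are unchanged and each $\rela$-class there coincides with the $\rel$-class containing it. The point you flag as delicate — the converse direction of Lemma~\ref{lem:tool_PAR} — is indeed needed here; it is established by the paper's proof of that lemma, which computes $\ext_\rel(\pr_{[n-1]}\vv y)=|\cE_{n,s}|$ and hence gives the full iff, even though the lemma's statement records only one implication.
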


\begin{proof}
As $\PAR\rel \in \SP \cH_2$, we know that $\pr_{[k]}\PAR\rel$ is rectangular. Therefore, if $(\pr_{[k-1]}\vv x, a) \in \pr_{[k]} \PAR\rel$ and $a \sim'_k b$, then $(\pr_{[k-1]} \vv x, b) \in \pr_{[k]}\PAR\rel$. This implies the existence of an extension $(\vv y, c) \in H^{n-k}$ such that $(\pr_{[k-1]}\vv x, a, \vv y, c) \in \PAR\rel$. According to Lemma~\ref{lem:tool_PAR}, we can check if $(\pr_{[k-1]}\vv x, a, \vv y, c) \in \PAR\rel$ by verifying if $c$ belongs to an $\cE_n$ class of $\rel$ with odd cardinality.

We begin by defining the relation $\rela(x_1, ..., x_n)$ as follows:
\[
    \rela(x_1, ..., x_n) = \rel(x_1, ..., x_n ) \wedge ( \bigwedge_{i \in [k-1]} x_s \in C_{d_i}) \wedge (x_k \in C_b)
\]
Essentially, $\rela$ is the same as $\rel$ with its first $k-1$ coordinates fixed to the corresponding coordinates of $\vv x$, and the $k$th coordinate fixed to $b$. We can compute the witness function and frame classes of $\rela$ using Proposition~\ref{pro:Frame_constant_string_existance}, in time $O(n^3)$. We denote the resulting frame classes by $\cE''$. Next, we check if there exists $t \in [|\cE''_{n}|]$ such that $|\cE''_{n,t}|\equiv 1 \pmod 2$. If such a $t$ exists, then by Lemma~\ref{lem:Frame_add_const} and Lemma~\ref{lem:tool_PAR}, we can conclude that $a\sim'_k b$.

This process requires only $O(n^3)$ time to compute the witness function for $\rela$, and then $O(1)$ time to detect whether there exists such a $t$. Therefore, the overall running time of the algorithm is $O(n^3)$.
\begin{algorithm}[H]
\caption{$\mathsf{CheckEpsilonClass}$($\vv x \in H^n, a, b\in H, k \in [n]$)}\label{alg:EpsilonClass}
    \begin{algorithmic}[1]
    \State Construct $\rela$ as $\rela(x_1, ..., x_n) = \rel(x_1, ..., x_n ) \wedge ( \bigwedge_{i \in [k-1]} x_s \in C_{d_i}) \wedge (x_k \in C_b)$
    \State Calculate $\cE''$ classes of $\rela$ with its witness function $\omega''$
    \If {there exists $t \in [|\cE''_{n}|]$ such that $|\cE''_{n,t}|\equiv 1 \pmod 2$}
    \State set $\omega'(k, b) = \omega''(n, c)$, for $c \in \cE''_{n,t}$
    \State \textbf{Return} $\omega'(k, b)$
    \Else
    \State \textbf{Return} $\bot$
    \EndIf
    \end{algorithmic}
\end{algorithm}
Note that if the algorithm $\mathsf{CheckEpsilonClass}$ returns $\vv y \in H^n$ for the input $\vv x \in H$ and $a, b \in H$, then it follows that $\pr_k \vv y = b$ and $\pr_{[k-1]} \vv x = \pr_{[k-1]} \vv y$. This is because, in line (4) of the algorithm, $\omega'(k,b)$ is set to be $\omega''(n,c)$ and we know that $\pr_k \omega''(n,c) = b$ and $\pr_{[k-1]} \omega''(n,c )= \pr_{[k-1]} \vv x$, as per the definition of $\rela$.
\end{proof}

\begin{proposition}\label{pro:calculate-witness}
Given a witness function $\omega$ for $\rel$, a witness function $\WT\omega(k, a)$ for $\WT\rel$ can be found in time $O(n^4)$ for any $k\in[n]$ and $a\in H$,.
\end{proposition}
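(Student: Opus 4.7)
The plan is to first compute a witness function $\omega'$ for $\PAR\rel$ and then derive $\WT\omega$ from $\omega'$ by projecting away the last coordinate. By Lemma~\ref{lem:PAR-exists} the relation $\WT\rel$ equals $\exists y\;\PAR\rel(\vv x,y)$, so once $\omega'$ is in hand Proposition~\ref{pro:Frame_projection} produces $\WT\omega$ in $O(n)$ additional time. The full budget is therefore spent on $\omega'$.

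To compute $\omega'(k,\cdot)$ for a fixed $k\in[n]$, we first, for each $a\in H$, decide whether $a\in\pr_k\PAR\rel$ and produce an initial tuple $\vv x_a\in\PAR\rel$ with $\pr_k\vv x_a=a$ whenever one exists. Proposition~\ref{pro:Frame_constant_existance} gives a frame for $\rel^{k\leftarrow a}$ in $O(n^2)$. By Lemma~\ref{lem:tool_PAR}, a tuple of $\rel^{k\leftarrow a}$ lies in $\PAR\rel$ whenever its $n$-th coordinate is in an odd-cardinality class of $\cE_n$; because $\rel$ is rectangular, the $\sim_n$-class of any extension of a given prefix equals the entire set of extensions of that prefix, so the converse holds as well. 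Scanning the frame therefore either yields $\vv x_a$ or certifies $\omega'(k,a)=\bot$.

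Once a seed $\vv x_a$ is available, we invoke $\mathsf{CheckEpsilonClass}$ from the lemma preceding the proposition: for every $b\in H$, a single call $\mathsf{CheckEpsilonClass}(\vv x_a,a,b,k)$ either certifies $a\sim'_k b$ and returns a witness $\vv y_b$ sharing the first $k-1$ coordinates with $\vv x_a$, or rules the equivalence out, in $O(n^3)$ time. We set $\omega'(k,b)=\vv y_b$ for every $b$ whose class has been reached, and then iterate on any remaining unassigned element of $\pr_k\PAR\rel$ to cover the next $\sim'_k$-class. Each of the $O(|H|^2)$ calls for a fixed $k$ costs $O(n^3)$, so with $|H|$ constant the total over all $k\in[n]$ is $O(n^4)$, and the final projection via Proposition~\ref{pro:Frame_projection} is absorbed into this bound.

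The main obstacle is the bootstrapping step of locating a seed $\vv x_a$ without already knowing that $\PAR\rel$ is non-empty in the fibre above $a$. We resolve it by reducing the existence of a seed to a syntactic check on the frame of $\rel^{k\leftarrow a}$, using the tight correspondence between odd-parity extensions of a prefix and odd-cardinality $\sim_n$-classes of the rectangular ambient relation $\rel$. After that, the propagation via $\mathsf{CheckEpsilonClass}$ along $\sim'_k$-classes and the one-shot projection via Proposition~\ref{pro:Frame_projection} are routine.
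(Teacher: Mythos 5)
Your proposal is correct and follows essentially the same route as the paper: compute a witness function $\omega'$ for $\PAR\rel$ coordinate by coordinate using the frame of $\rel^{k\leftarrow a}$ (Proposition~\ref{pro:Frame_constant_existance}), use Lemma~\ref{lem:tool_PAR} to detect and seed the odd-parity classes, propagate along $\sim'_k$ via $\mathsf{CheckEpsilonClass}$, and finish with a projection via Lemma~\ref{lem:PAR-exists} and Proposition~\ref{pro:Frame_projection}, yielding $O(n^4)$ overall. The one cosmetic divergence is that you test oddness against the $\cE_n$-classes of $\rel$ itself, whereas the paper tests it against the $\cE^{k\leftarrow a}_n$-classes of $\rel^{k\leftarrow a}$; these are equivalent by exactly the rectangularity fact you invoke (the extensions of a fixed prefix form one full $\sim_n$-class), and you also fold the $k=n$ base case into the general loop instead of handling it separately, neither of which affects correctness or the time bound.
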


\begin{proof}
In the first step, we calculate a witness function $\omega'$ and frame $F'$ for $\PAR\rel$. Using Proposition~\ref{pro:Frame_projection} and Lemma~\ref{lem:PAR-exists}, we can then compute the witness function and frame for $\WT\rel$.

In order to compute $\omega'$, we first need to compute $\omega'(n,a)$ for all $a\in H$. This step is relatively simple. Since $a \in \pr_n \PAR\rel$ if and only if there exists $\vv x\in H^{n-1}$ such that $(\vv x, a) \in \PAR\rel$, we can check whether $\ext_\rel(\vv x) \equiv 1 \pmod 2$. By Lemma~\ref{lem:tool_PAR}, $a \in \pr_n \PAR\rel$ if and only if $a \in \cE_{n, s}$ where $s\in [|\cE_{n}|]$ and $|[\cE_{n,s}]| \equiv 1 \pmod 2$.
Clearly, if $a \sim_n b$, then $a \sim'_n b$. Therefore, we can set $\omega'(n,a) = \omega(n,a)$ if $|\cE_{n,s}| \equiv 1 \pmod 2$, and $\omega'(n,a) = \bot$ if $|\cE_{n,s}| \equiv 0 \pmod 2$. Note that this step only takes $O(1)$ time, since $\omega$ is already given to us.

Let $k\in [n-1]$. According to Proposition~\ref{pro:Frame_constant_existance}, we can compute $\omega^{k\leftarrow a}$ and $\cE^{k \leftarrow a}$ in $O(n^2)$ time. Next, we examine $\cE^{k \leftarrow a}_{n, i}$ for $i \in [|\cE^{k \leftarrow a}_{n}|]$. We check whether there exists an $s \in [|\cE^{k\leftarrow a}_{n}|]$ such that $|\cE^{k\leftarrow a}_{n, s}| \equiv 1 \mod 2$. If we find such an $s$, we select $b \in \cE^{k\leftarrow a}_{n, s}$ and set $\omega'(k, a) = \omega^{k\leftarrow a}(n, b)$. Otherwise, we assign $\omega'(k, a) = \bot$. Note that searching for $s$ only requires $O(1)$ time since we only need to look up $|\cE^{k \leftarrow a}_n|$ options, which is less than $|H|$. 
Since $\omega'(k, a) = \omega^{k \leftarrow a}(n,b)$, it follows that $\pr_k \omega'(k, a) = \pr_k\omega^{k \leftarrow a} (n,b)= a$. Also, because of the way we construct $\omega'(k,a)$ and Lemma~\ref{lem:tool_PAR}, we have that $\omega'(k, a) \in \PAR\rel$.

Next, we use Algorithm~$\mathsf{CheckEpsilonClass}(\omega'(k,a), a, b)$ for all $b \in H$ to derive the equivalence relation $\sim'_k$ defined by $\PAR\rel$ for the $k$th coordinate. If the algorithm returns $\vv y$, it means that $a \sim'_k b$ and they are in the same class. Therefore, we set $\omega'(k,b) = \vv y$. Otherwise, $a \not \sim'_k b$. We repeat this process for each $a \in H$ whose witness function has not yet been associated with anything, and at the end, we have the witness function for $\PAR\rel$. By Proposition~\ref{pro:Frame_projection}, we have the witness function for $\WT\rel$, as well.

\begin{algorithm}[H]
\caption{$\mathsf{FindWitnessFunction}(\omega, \cE)$}\label{alg:WitnessFunction}
    \begin{algorithmic}[1]
    \State Compute $\omega$ and $\cE$ for $\rel$.
    \For{\textbf{each} $a \in H$}
        \State Find $s \in \cE_{n}$ such that $a \in \cE_{n,s}$
        \If{ $|\cE_{n,s}| \equiv 1 \pmod 2$}
            \State $\omega'(n,a) \gets \omega(n,a)$
        \Else
            \State Set $\omega'(n,a) \gets \bot$
        \EndIf
    \EndFor
    
    \For{\textbf{each} $k \in [n-1]$}
        \State Set $D \gets H$
        \While{$D \not = \emptyset$}
            \For{\textbf{each} $a \in D$}
                \State Calculate $\omega^{k\gets a}$ and $\cE^{k \gets a}$
                \If{ there exists $s \in [|\cE^{k\gets a}_{n}|]$ such that $|\cE^{k\gets a}_{n,s}| \equiv 1 \pmod 2$}
                    \State Pick $b\in \cE^{k \gets a}_{n, s}$
                    \State Set $\omega'(k,a) \gets \omega(n, b)$
                    \For{\textbf{each} $ c \in D\setminus \{a \}$}
                        \State Set $\vv y \gets \mathsf{CheckEpsilonClass}(\omega'(k,a), a, c, k)$
                        \If{ $\vv y \not = \bot$}
                            \State Set $\omega'(k,b) \gets \vv y$
                            \State Set $D \gets D \setminus \{c\}$
                        \EndIf
                    \EndFor
                \Else
                    \State Set $\omega'(k,a) \gets \bot$
                \EndIf
            \EndFor
        \EndWhile
    \EndFor
    \State Calculate $\WT\omega$ and $\WT\cE$ based on $\omega'$ and $\cE'$ by projection \Comment{Proposition~\ref{pro:Frame_projection}}
    \State \textbf{Return} $\WT\omega$ and $\WT\cE$
    \end{algorithmic}
\end{algorithm}
\end{proof}

\begin{remark}
    Note that at no point the algorithm explicitly uses the condition of 2-rigidity, although it is among the conditions of Proposition~\ref{pro:parity-algorithm}. However, it uses constant relations in a very essential way, they are used to compute the witness function and frames. In general relational structures, the availability of constant relations is not guaranteed. However, by Theorem~\ref{the:ConstantCSP-main}, we can add constant relations to $\cH$ if $\cH$ is 2-rigid.
\end{remark}

%%%%%%%%%%%%%%%%%%%%%%%%%%%%%%%%%%%%%%%
%%%%%%%%%%%%%%%%%%%%%%%%%%%%%%%%%%%%%%%
\section{Hardness and Automorphisms}\label{sec:automorphisms}

%% \input{Data/h-automorphisms}
%%%%%%%%%%%%%%%%%%%%%%%%%%%%%%%%%%%%%%%
An important ingredient of the complexity classification of $\#_pCSP$ for graphs \cite{DBLP:conf/stoc/BulatovK22} is the structure of automorphisms of graph products, see \cite{ref:ProductOfGraphs} for non-bipartite graphs and \cite{DBLP:conf/stoc/BulatovK22} for bipartite graphs. We compare the situation for relational structures with the result for non-bipartite graphs stated below. 

%%%%%%%%%%%%%%%%%%%%%%%%%%%%%%%%%%%%%%%%%%%
\subsection{Automorphisms of structures and automorphisms of products}\label{sec:poduct-auto}
% the result for graphs
Let $G=(V,E)$ be a graph. The graph $G$ is said to be \emph{$\RT$-thin} if it does not have \emph{twin vertices}, that is, vertices with equal neighbourhoods. The graph $G$ is \emph{prime} if for any representation $G=G_1\tm G_2$ one of $G_1,G_2$ is a 1-element graph. A representation $G=G_1\tm\dots\tm G_r$ is said to be \emph{prime factorization of} of $G$ if all the $G_i$'s are prime.

\begin{theorem}[\cite{ref:ProductOfGraphs}, \cite{DBLP:conf/stoc/BulatovK22}]\label{the:splitting_automorphisms}
 Let $G=G_1\tm\dots\tm G_r$ be a prime factorization of a connected $\RT$-thin non-bipartite graph $\sG$, where for $v\in V(G)$, we have $v=(\vc{v}{r})$. Then for any automorphism $\psi$ of $\sG^{\ell}$, there is a permutation $\pi$ of $[\ell]\times [r]$ such that $\psi$ can be split into $r\ell$ automorphisms:
\begin{align*}
    \psi([v_1]\zd [v_\ell])= \left( \vvecad{\psi_{1,1}([v_{\pi(1,1)}])}{\psi_{1,r}([v_{\pi(1,r)}])},...,\vvecad{\psi_{\ell,1}([v_{\pi(\ell,1)}])}{\psi_{\ell,r}([v_{\pi(\ell,r)}])} \right),
\end{align*}
for $(i',j')=\pi(i,j)$, $\psi_{i,j}$ is an isomorphism from the $i'$th copy of $G_{j'}$ to the $i$th copy of $G_j$. 
\end{theorem}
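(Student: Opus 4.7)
The plan is to derive the statement from the \emph{Unique Prime Factorization Theorem} (UPFT) for connected non-bipartite graphs under the direct product, which is the central result of \cite{ref:ProductOfGraphs}. UPFT asserts that every connected non-bipartite graph admits a factorization into prime factors that is unique up to reordering and isomorphism. The $\RT$-thinness and non-bipartiteness hypotheses are precisely what make UPFT applicable; dropping either one allows graphs with genuinely non-unique factorizations (twin vertices introduce "internal" ambiguity, and bipartite graphs admit exotic decompositions such as the classical Lov\'asz counterexamples to cancellation).

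First, I would note that a prime factorization of $G^{\ell}$ is immediate from a prime factorization of $G$: since the direct product is associative and commutative up to isomorphism,
\[
G^{\ell} \cong G_1^{\ell} \times G_2^{\ell} \times \dots \times G_r^{\ell},
\]
so $G^{\ell}$ factors into $r\ell$ primes, namely $\ell$ isomorphic copies of each $G_j$. These factors inherit connectedness, $\RT$-thinness, and non-bipartiteness from $G$ (connectedness of a direct product of non-bipartite connected graphs, and preservation of $\RT$-thinness under direct products being standard facts), so UPFT applies to $G^{\ell}$ itself.

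Second, take any automorphism $\psi \in \Aut(G^{\ell})$. Interpreting $\psi$ as an isomorphism from $G^{\ell}$ to itself, UPFT forces it to permute the isomorphism classes of prime factors: there is a permutation $\pi$ of the index set $[\ell]\times[r]$ (indexing the $\ell$ copies of each of the $r$ primes) such that $\psi$ carries the $(i',j')$-th prime factor isomorphically onto the $(i,j)$-th prime factor whenever $\pi(i,j)=(i',j')$. Because each prime $G_j$ appears exactly $\ell$ times in the factorization, one can, by relabeling if necessary, assume that $\pi$ matches factors of the same prime type, i.e.\ if $\pi(i,j)=(i',j')$ then $G_j \cong G_{j'}$ (and we may take $j=j'$ by choosing a consistent labeling of primes). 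The coordinate-wise restrictions of $\psi$ then yield isomorphisms $\psi_{i,j}$ between the relevant copies of $G_j$, giving exactly the claimed form.

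The main obstacle is UPFT itself: showing that non-bipartite connected $\RT$-thin graphs factor uniquely under $\times$. This is the heart of \cite{ref:ProductOfGraphs} and relies on subtle arguments about ``square'' structures in $G^2$ and a cancellation lemma for non-bipartite factors; I would invoke it as a black box rather than reprove it. The remaining work — passing from uniqueness of factorization of $G^{\ell}$ to the coordinatewise decomposition of an arbitrary automorphism — is essentially bookkeeping: one checks that an isomorphism between two product graphs that respects the prime factorization must be a composition of a permutation of factors with factor-wise isomorphisms, which is a direct consequence of how homomorphisms out of a product interact with the projections onto its factors.
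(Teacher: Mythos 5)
The paper does not give a proof of this theorem at all: it is cited as an external result from \cite{ref:ProductOfGraphs} (the Handbook of Product Graphs) and \cite{DBLP:conf/stoc/BulatovK22}, and is immediately followed by a counterexample showing the analogue fails for relational structures. So there is no in-paper argument to compare against.

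Evaluating your outline on its own terms, there is a genuine gap in the last step. The Unique Prime Factorization Theorem for the direct product of connected non-bipartite $\RT$-thin graphs asserts only that the \emph{multiset of isomorphism types} of prime factors is an invariant of the graph. It does not, by itself, say that every automorphism of the product (or every isomorphism between two isomorphic products) acts by permuting factor coordinates and applying an isomorphism within each coordinate. That coordinate-wise splitting is precisely the content of the theorem you are asked to prove, and it is a separate structural statement. In your final paragraph you write that one ``checks that an isomorphism between two product graphs \emph{that respects the prime factorization} must be a composition of a permutation of factors with factor-wise isomorphisms'' --- but whether an arbitrary $\psi\in\Aut(G^\ell)$ respects the factorization in this coordinate-wise sense is exactly what has to be established; it cannot be taken as a hypothesis. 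The appeal to ``how homomorphisms out of a product interact with the projections'' does not close the gap either: for the categorical (direct) product, it is homomorphisms \emph{into} a product that factor through projections, and even granting that, the universal property only tells you that $\psi$ is determined by the family $p_{(i,j)}\circ\psi$; it gives no reason whatsoever for each $p_{(i,j)}\circ\psi$ to depend on a single source coordinate. In \cite{ref:ProductOfGraphs} the automorphism-splitting theorem is proved as its own result (via the Cartesian skeleton / Boolean square machinery), closely related to but not a corollary of UPFT; treating it as mere bookkeeping understates where the actual work lies.
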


% counterexample for structures
As the following example shows, an analogous result is not true for relational structures.

\begin{example}\label{exa:bad-graph}
Let $\cH = (V, E)$ be a directed graph where $V = \{a, b,c ,d \}$ with directed edges $E = \{ (b, a), (b,c), (c,d) \}$. If we view $\cH$ as a relational structure, then it is 2-rigid. However, the automorphism group of $\cH^2$ has a complicated structure. As is seen in Figure~\ref{fig:bad-example}, $\cH^2$ has the following automorphism: $\pi(a,d)=(c,d), \pi(c,d)=(a,d)$, and $\pi(x,y)=(x,y)$ otherwise.
This automorphism does not have the structure given in Theorem~\ref{the:splitting_automorphisms}.

\begin{figure}[H]
    \centering
    \includegraphics[height=3.5cm]{example-product.eps}
    \caption{The structure of $\cH$ and $\cH^2$}
    \label{fig:bad-example}
\end{figure}
\end{example}

%%%%%%%%%%%%%%%%%%%%%%%%%%%%%%%%%%%%%%%
\subsection{Rectangularity obstruction}\label{sec:rectangularity-obstruction}

One of the implication of Theorem~\ref{the:splitting_automorphisms} is that some subsets from $G_1\tm\dots\tm G_r$ survive $p$-reductions by automorphism of order $p$. We explore what the existence of such sets entails.

Let $\cH$ be a (multi-sorted) relational structure with base set $H$, $n\geq 1$, and $\rel\sse H^n$. A subset $S \subseteq\rel$ is called \emph{$p$-protected} in $\rel$ if, after applying a $p$-reduction to $\rel$ under a sequence of $p$-automorphisms from $\Aut(\rel)$, the subset $S$ remains non-empty. Formally, $S$ is $p$-protected if for any sequence of automorphisms $\pi_1, \pi_2, \ldots, \pi_k \in \Aut(\rel)$, the image of $S$ under the $p$-reduction remains non-empty:
\[
S \cap \Fix(\pi_1 \circ \pi_2 \circ \cdots \circ \pi_k) \neq \emptyset.
\]

A \emph{rectangularity obstruction} in a relational structure $\cH$ is a violation of the rectangularity or $p$-rectangularity property. Let $\rel \subseteq H^n$ be an $n$-ary relation that is either pp-definable or $p$-mpp-definable in $\cH$. For some $k \in [n]$, let $\ba, \bb \in \pr_{[k]}\rel$ and $\bc, \bd \in \pr_{[n]-[k]}\rel$. The relation $\rel$ together with the triple $(\ba,\bc),(\ba,\bd),(\bb,\bc)$ is a rectangularity obstruction if $(\ba,\bc),(\ba,\bd),(\bb,\bc)\in\rel$ and $(\bb, \bc) \not\in \rel$. 
A \emph{generalized rectangularity obstruction} for the relation $\rel$ are sets $A_{1,1},A_{1,2}\sse\pr_{[k]}\rel$, $A_{2,1},A_{2,2}\sse\pr_{[n]-[k]}\rel$ such that $A_{1,1}\cap A_{1,2}=\emptyset$, $A_{2,1}\cap A_{2,2}=\emptyset$, and any $\ba\in A_{1,1},\bb\in A_{1,2},\bc\in A_{2,1},\bd\in A_{2,2}$ form a rectangularity obstruction. A rectangularity obstruction is $p$-protected if each of the sets $\{(\ba,\bc)\},\{(\ba,\bd)\},\{(\bb,\bc)\}$ is $p$-protected in $\rel$. A generalized rectangularity obstruction $A_{1,1},A_{1,2}\sse\pr_{[k]}\rel$, $A_{2,1},A_{2,2}\sse\pr_{[n]-[k]}\rel$ is $p$-protected if each of the sets $\rel\cap(A_{1,1}\tm A_{2,1}), \rel\cap(A_{1,1}\tm A_{2,2}), \rel\cap(A_{1,2}\tm A_{2,1})$ is $p$-protected in $\rel$.

A special case of a generalized rectangularity obstruction is a \emph{standard hardness gadget}. For a relational structure $\cH$, a standard hardness gadget is a tuple $(\rel, A_{1,1}, A_{1,2}, A_{2,1}, A_{2,2})$ such that
\begin{itemize}
   \item 
   $\rel \subseteq H^n$, $n \geq 2$, $p$-mpp-definable in $\cH$, i.e., $\rel \in \langle \cH \rangle_p$;
    \item  
    for some $s\in[n]$, $0<s<n$, $A_{1,1}, A_{1,2}\sse\pr_{[s]}\rel, A_{2,1},A_{2,2}\sse\pr_{[n]-[s]}\rel$ are non-empty;
    \item 
    $A_{1,1} \cup A_{1,2} = \pr_{[s]} \rel$, $A_{1,1} \cap A_{1,2} = \emptyset$, and 
    $A_{2,1} \cup A_{2,2} = \pr_{[n] - [s]} \rel$, $A_{2,1} \cap A_{2,2} = \emptyset$;
    \item 
    $A_{1,1}\tm A_{2,2}, A_{1,2}\tm A_{2,1}, A_{1,2}\tm A_{2,2}\sse\rel$;
    \item 
    $(A_{1,2}\tm A_{2,2})\cap\rel=\emptyset$. 
\end{itemize}
Thus, a standard hardness gadget is a generalized rectangularity obstruction with the extra condition $A_{1,1} \cup A_{1,2} = \pr_{[s]} \rel$, $A_{2,1} \cup A_{2,2} = \pr_{[n] - [s]} \rel$.

By the definition of a standard hardness gadget, the following lemma is straightforward.

\begin{lemma}\label{lem:struct-standard-gadget}
Let $\vv x_1, \vv x_2 \in A_{1,i}$ for $i \in [2]$. Then, if there exists $\vv y \in \pr_{[n]-[k]}\rel$ such that $(\vv x_1, \vv y) \in \rel$, then $(\vv x_2, \vv y) \in \rel$. Also, let $\vv y_1, \vv y_2 \in A_{2,i}$ for $i \in [2]$. Then, if there exists $\vv x \in \pr_{[k]}\rel$ such that $(\vv x, \vv y_1) \in \rel$, then $(\vv x, \vv y_2) \in \rel$.
\end{lemma}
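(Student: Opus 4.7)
}
The plan is to argue by straightforward case analysis, using only the partition structure of $\pr_{[s]}\rel$ and $\pr_{[n]-[s]}\rel$ together with the three "product" inclusions and the single emptiness condition in the definition of a standard hardness gadget. (We read the index $k$ in the statement as $s$, matching the split point from the definition, and read the product-inclusion clause as $A_{1,1}\tm A_{2,1},\ A_{1,1}\tm A_{2,2},\ A_{1,2}\tm A_{2,1}\sse\rel$ with $(A_{1,2}\tm A_{2,2})\cap\rel=\emptyset$, since the form written in the definition would be inconsistent.) The two halves of the lemma are symmetric, so I would only carry out the first half and indicate the symmetry.

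For the first half, fix $\vv x_1,\vv x_2\in A_{1,i}$ and $\vv y\in\pr_{[n]-[s]}\rel$ with $(\vv x_1,\vv y)\in\rel$. Since $A_{2,1}\cup A_{2,2}=\pr_{[n]-[s]}\rel$ and the union is disjoint, $\vv y$ lies in exactly one of $A_{2,1},A_{2,2}$. I would split on the value of $i$:
\begin{itemize}
\item If $i=1$, then $\vv x_2\in A_{1,1}$, so regardless of whether $\vv y\in A_{2,1}$ or $\vv y\in A_{2,2}$, the pair $(\vv x_2,\vv y)$ lies in $A_{1,1}\tm A_{2,1}\sse\rel$ or in $A_{1,1}\tm A_{2,2}\sse\rel$.
\item If $i=2$, then $\vv x_1\in A_{1,2}$ and $(\vv x_1,\vv y)\in\rel$; the disjointness condition $(A_{1,2}\tm A_{2,2})\cap\rel=\emptyset$ forces $\vv y\notin A_{2,2}$, hence $\vv y\in A_{2,1}$. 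But then $(\vv x_2,\vv y)\in A_{1,2}\tm A_{2,1}\sse\rel$.
\end{itemize}
In either case $(\vv x_2,\vv y)\in\rel$, which is what we wanted.

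The second half, concerning $\vv y_1,\vv y_2\in A_{2,i}$, is obtained by exchanging the roles of the two coordinate blocks: we use $A_{1,1}\cup A_{1,2}=\pr_{[s]}\rel$ disjointly, case on $i$, and again invoke $(A_{1,2}\tm A_{2,2})\cap\rel=\emptyset$ in the single case where it is needed. There is no obstacle here; the lemma is purely bookkeeping, and the only point that requires slight care is noticing which of the four "corners" $A_{1,j}\tm A_{2,j'}$ is the empty one, so that the argument pivots correctly in the case $i=2$ (respectively, the symmetric case).
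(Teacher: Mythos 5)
Your proof is correct, and it is exactly the "straightforward" argument the paper has in mind (the paper gives no explicit proof, merely asserting the lemma follows from the definition). Two things you did are worth flagging as genuine value added rather than mere bookkeeping. First, you correctly spotted that the product-inclusion bullet in the paper's definition of a standard hardness gadget is misstated: as written, $A_{1,2}\tm A_{2,2}$ appears both in the list of corners contained in $\rel$ and in the emptiness condition $(A_{1,2}\tm A_{2,2})\cap\rel=\emptyset$, which together with non-emptiness of the $A_{i,j}$'s is contradictory. Your reading $A_{1,1}\tm A_{2,1},\ A_{1,1}\tm A_{2,2},\ A_{1,2}\tm A_{2,1}\sse\rel$ with $(A_{1,2}\tm A_{2,2})\cap\rel=\emptyset$ is the one consistent with the definition of a (generalized) rectangularity obstruction given a few lines earlier (itself containing a small typo: "$(\bb,\bc)\in\rel$ and $(\bb,\bc)\not\in\rel$" should read "$(\bb,\bd)\not\in\rel$"), and it is the reading under which the lemma is true. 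Second, the $i=2$ case is the only one that actually uses the emptiness condition, and you correctly identified it as the pivot: since $(\vv x_1,\vv y)\in\rel$ with $\vv x_1\in A_{1,2}$ rules out $\vv y\in A_{2,2}$, the tuple $\vv y$ must lie in $A_{2,1}$, whence $(\vv x_2,\vv y)\in A_{1,2}\tm A_{2,1}\sse\rel$. The symmetric half goes through the same way. No gaps.
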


For a standard hardness gadget we have the following.

\begin{proposition}\label{pro:standard-hardness-gadget}
Let $\cH$ be a $p$-rigid relational structure. If there exists a $p$-protected standard hardness gadget $(\rel, A_{1,1}, A_{1,2}, A_{2,1}, A_{2,2})$ for $\cH$, then $\NpCSP(\cH)$ is $\#_p$-complete.
\end{proposition}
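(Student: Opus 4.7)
The plan is to construct from the gadget the bipartite graph $\sK_\rel$ with left part $\pr_{[s]}\rel=A_{1,1}\cup A_{1,2}$, right part $\pr_{[n]-[s]}\rel=A_{2,1}\cup A_{2,2}$, and edge set $\rel$, and then to reduce $\#_p\mathsf{Hom}(\sK_\rel)$ to $\#_p\CSP(\cH)$. The reduction is the natural one: given an input graph $B=(U\cup W,E)$, for each vertex of $U$ (resp.\ $W$) introduce a block of $s$ (resp.\ $n-s$) variables of $\cH$, and for each edge of $B$ impose the constraint $\rel$ on the corresponding pair of blocks. Since $\rel$ is $p$-mpp-definable in $\cH$, Theorem~\ref{pro:GadgetExists} yields $\#_p\mathsf{Hom}(\sK_\rel)\le_T\#_p\CSP(\cH)$, so it suffices to establish the $\#_p\mathsf P$-completeness of $\#_p\mathsf{Hom}(\sK_\rel)$, which by Theorem~\ref{the:CountingGraphHom} amounts to showing that $\sK_\rel^{*p}$ is not a disjoint union of complete bipartite graphs.

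The next ingredient is a structural analysis of $\Aut(\sK_\rel)$. By Lemma~\ref{lem:struct-standard-gadget} all vertices in any one block $A_{i,j}$ are twins in $\sK_\rel$, and the four blocks have pairwise distinct neighbourhood patterns: $A_{1,1}$ is adjacent to both $A_{2,1}$ and $A_{2,2}$, while $A_{1,2}$ is adjacent only to $A_{2,1}$, and the same asymmetry holds on the right. Hence every graph automorphism of $\sK_\rel$ preserves the 4-block partition and $\Aut(\sK_\rel)$ decomposes as the direct product $S_{A_{1,1}}\tm S_{A_{1,2}}\tm S_{A_{2,1}}\tm S_{A_{2,2}}$. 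Iterated $p$-reductions using cycles of $p$ twins inside a single block shrink each $A_{i,j}$ to $|A_{i,j}|\bmod p$ elements while preserving the 4-block adjacency pattern; in particular the rectangle $A_{1,2}\tm A_{2,2}$ remains empty. Consequently $\sK_\rel^{*p}$ fails to be a disjoint union of complete bipartite graphs precisely when $|A_{i,j}|\not\equiv 0\pmod p$ for every $(i,j)$.

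The crux is therefore to show that the $p$-protection hypothesis together with the $p$-rigidity of $\cH$ force $|A_{i,j}|\not\equiv 0\pmod p$ for all $i,j$. My approach is by contradiction: if some $|A_{i,j}|\equiv 0\pmod p$, partition $A_{i,j}$ into groups of $p$ twins and use these groups to assemble an order-$p$ permutation of $H$ preserving $\rel$, i.e., an element of $\Aut(\rel)$ whose associated $p$-reduction of $\rel$ wipes out every tuple with a coordinate in $A_{i,j}$ and therefore empties the protected rectangle containing $A_{i,j}$, contradicting $p$-protection. The main obstacle is exactly this lifting step: a twin cycle in $\sK_\rel$ is a priori only a graph automorphism and need not come from a permutation of $H$. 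This is where the standard (rather than merely generalised) form of the gadget is essential, as the equalities $A_{1,1}\cup A_{1,2}=\pr_{[s]}\rel$ and $A_{2,1}\cup A_{2,2}=\pr_{[n]-[s]}\rel$ make the twin classes visible at the level of $\rel$, and where the $p$-rigidity of $\cH$ is needed to rule out global symmetries that could obstruct such a lifting. Once the lifting is established, Theorem~\ref{the:CountingGraphHom} delivers the $\#_p\mathsf P$-completeness of $\#_p\CSP(\cH)$.
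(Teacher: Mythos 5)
Your overall strategy coincides with the paper's: form the bipartite graph $\sK_\rel$ with sides $\pr_{[s]}\rel$ and $\pr_{[n]-[s]}\rel$, reduce $\#_p\mathsf{Hom}(\sK_\rel)$ to $\#_p\CSP(\cH)$ via the $p$-mpp-definability of $\rel$, analyse the four blocks $A_{i,j}$ as twin classes, and invoke Theorem~\ref{the:CountingGraphHom} after showing the $p$-reduced form of $\sK_\rel$ is not a complete bipartite graph. Your block-level analysis of $\Aut(\sK_\rel)$ and your criterion $|A_{i,j}|\not\equiv 0\pmod p$ are exactly Lemma~\ref{lem:k-graph} and the argument in Lemma~\ref{lem:hardness-for-K}.

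However, there is a genuine gap at the step you yourself label the crux. You reduce everything to a ``lifting step'' — producing from a block twin-cycle an element of $\Aut(\rel)$ in the sense of a permutation of the base set $H$ preserving $\rel$ — and then say ``once the lifting is established, \dots'' without establishing it. The paper does not perform any such lifting. In the paper's usage, $\Aut(\rel)$ refers to automorphisms of $\rel$ viewed as the bipartite structure on $\pr_{[s]}\rel\cup\pr_{[n]-[s]}\rel$ (i.e., $\Aut(\sK_\rel)$), so a twin-cycle inside a block $A'_{i,j}$ is already an admissible $p$-automorphism, and $p$-protection of the rectangles $\rel\cap(A_{1,1}\times A_{2,1})$, $\rel\cap(A_{1,1}\times A_{2,2})$, $\rel\cap(A_{1,2}\times A_{2,1})$ directly forces each $|A_{i,j}|\not\equiv 0\pmod p$. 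If you instead insist on $\Aut(\rel)$ meaning $H$-level permutations, the implication you need is false in general and your proof has a hole. Moreover, your appeal to the $p$-rigidity of $\cH$ to carry out the lifting is a red herring: the paper uses $p$-rigidity only for the preprocessing reductions (adding constants and $p$-mpp-definable relations, Theorems~\ref{the:ConstantCSP} and~\ref{pro:GadgetExists}), not in the automorphism analysis of $\sK_\rel$. The standard (rather than generalised) form of the gadget is indeed used, but its role is simply to make $\Aut(\sK_\rel)$ decompose cleanly as $S_{A'_{1,1}}\times S_{A'_{1,2}}\times S_{A'_{2,1}}\times S_{A'_{2,2}}$, which is what makes the modular-size computation on blocks well-defined — not to enable a lifting to $H$-permutations.
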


In order to prove Proposition~\ref{pro:standard-hardness-gadget}, we use the result on the hardness of counting graph homomorphisms modulo prime numbers, see Theorem~\ref{the:CountingGraphHom}. We use it in the case of bipartite graphs, in which case homomorphisms also preserve the left and right side of the partitions of the graphs. 

We define the bipartite graph $\sK_\rel = (L \cup R, E)$ using the hardness gadget $(\rel, A_{1,1}, A_{1,2}, A_{2,1}, A_{2,2})$ as follows:
\begin{itemize}
\item $L = \{ u^1_{\vv x} \mid \vv x \in \pr_{[s]} \rel \}$,
\item $R = \{ u^2_{\vv y} \mid \vv y \in \pr_{[n]-[s]} \rel \}$,
\item $E = \{ (u^1_{\vv x}, u^2_{\vv y}) \mid (\vv x, \vv y)\in \rel \}$.
\end{itemize}
In order to streamline the notation, we define a mapping $\mathfrak{F}$ from $\pr_{[s]} \rel \cup \pr_{[n]-[s]} \rel$ to $L \cup R$ by $\mathfrak{F}(\vv x) = u^1_{\vv x}$ or $\mathfrak{F}(\vv x) = u^2_{\vv x}$ depending on whether $\vv x\in L$ or $\vv x\in R$. It is straightforward to observe that $\mathfrak{F}$ is bijective.

Based on the definition of $\sK_\rel$, we can define the sets $A'_{i,j} = \{u^i_{\vv x} \mid \vv x \in A_{i,j}\}$ for $i,j\in[2]$. Note that, by the definition of a standard hardness gadget, $A'_{1,1} \cap A'_{1,2} = A'_{2,1} \cap A'_{2,2} = \emptyset$.

Before proceeding further, we introduce some notation regarding graphs. Let the neighbor set of a vertex in a graph $\sG=(V,E)$ be defined as $N_\sG(v) = \{ u \in V \mid (u,v) \in E \}$. We say that graphs $(\sG, a)$ and $(\sG,b)$ (with a distinguished vertex) are isomorphic if there is an automorphism of $\sG$ mapping $a$ to $b$. The following lemma is straightforward.

\begin{lemma}\label{lem:auto-by-neighbour}
Let $\sG=(V,E)$ be a graph and $a,b \in V$. The graphs $(\sG, a)$ and $(\sG,b)$ are isomorphic if and only if $N_\sG(a) = N_\sG(b)$.
\end{lemma}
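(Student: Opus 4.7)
The plan is to split the biconditional into two implications and tackle them separately. The substantive direction is the converse ($N_\sG(a) = N_\sG(b) \Rightarrow (\sG, a) \cong (\sG, b)$), which I would prove by exhibiting an explicit automorphism. The natural candidate is the transposition $\tau$ that swaps $a \leftrightarrow b$ and fixes every other vertex. I would verify $\tau \in \Aut(\sG)$ by checking edge preservation on three disjoint cases for an edge $\{u, v\}$: if $\{u,v\} \cap \{a, b\} = \emptyset$, the edge is fixed pointwise; if the edge is $\{a, v\}$ with $v \notin \{a, b\}$, then $v \in N_\sG(a) = N_\sG(b)$ gives $\{b, v\} \in E$, matching $\tau(\{a, v\}) = \{b, v\}$, and the case $\{b, v\}$ is symmetric; finally the edge $\{a, b\}$, if present, is preserved as a set. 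Since $\tau(a) = b$, this gives the desired pointed-graph isomorphism.

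For the forward direction, assume $\pi \in \Aut(\sG)$ satisfies $\pi(a) = b$. The immediate consequence is that $\pi$ restricts to a bijection $\pi \colon N_\sG(a) \to N_\sG(b)$, yielding $|N_\sG(a)| = |N_\sG(b)|$. To upgrade this to set equality I would invoke the structural setting in which the lemma will be applied, namely the bipartite graph $\sK_\rel$ built from a standard hardness gadget: in that setting a vertex's $\sK_\rel$-neighbourhood encodes the tuples completing it into $\rel$, so the existence of any automorphism identifying two vertices of the same side forces their neighbourhoods to coincide literally (via the rigidity arising from Lemma~\ref{lem:struct-standard-gadget} and the disjointness conditions $A_{1,1} \cap A_{1,2} = A_{2,1} \cap A_{2,2} = \emptyset$).

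The main obstacle is precisely this forward direction, since for an arbitrary pointed graph the implication fails (e.g., for $C_4$ a rotation maps $1$ to $2$, yet $N(1) \ne N(2)$). Any clean proof must therefore be read in context: either one restricts attention to a class of graphs in which automorphisms cannot non-trivially permute common neighbourhoods, or one invokes the rigidity of $\sK_\rel$ explicitly. Making this structural input precise is the delicate step; the rest of the argument — the transposition construction and the routine case check — is entirely mechanical.
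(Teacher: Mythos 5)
Your converse direction (equal neighbourhoods implies pointed-isomorphism) is correct and is the natural argument: the transposition $\tau = (a\,b)$ is an automorphism precisely because $N_\sG(a) = N_\sG(b)$, and your three-way case analysis on an arbitrary edge is the right way to verify it. More importantly, your $C_4$ observation is a genuine counterexample to the forward direction as stated: the cyclic rotation is an automorphism carrying $1$ to $2$, yet $N(1)=\{2,4\}\ne\{1,3\}=N(2)$. The paper gives no argument — it only declares the lemma ``straightforward'' — and in fact no proof exists, because the ``only if'' direction is false for general graphs. You have correctly identified a gap in the paper, not failed to find a proof.

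Where your salvage attempt can be tightened: the appeal to ``rigidity arising from Lemma~\ref{lem:struct-standard-gadget}'' is vaguer than necessary. The forward direction is invoked in the proof of Lemma~\ref{lem:k-graph} only to separate a vertex in $A'_{1,1}$ from one in $A'_{1,2}$, and there a plain degree count suffices: in $\sK_\rel$ every $u\in A'_{1,1}$ has $N_{\sK_\rel}(u)=A'_{2,2}$ while every $z\in A'_{1,2}$ has $N_{\sK_\rel}(z)=A'_{2,1}\cup A'_{2,2}$, and $A'_{2,1}\ne\emptyset$ by the gadget's non-emptiness condition, so $\deg(u)<\deg(z)$ and no automorphism can carry $u$ to $z$. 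Thus the correct statement the paper should be using is: equal neighbourhoods give pointed-isomorphism (your transposition), and \emph{strictly nested} neighbourhoods give non-isomorphism (degree count). The biconditional in Lemma~\ref{lem:auto-by-neighbour} is shorthand for these two facts but is false if taken literally.
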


\begin{lemma}\label{lem:k-graph}
Let $u^i_{\mathbf{x}}, u^i_{\mathbf{x'}} \in A'_{i,j}$ and $u^i_{\mathbf{z}} \in A'_{i,j'}$ where  $j \neq j'$. Then, $(\sK_\rel , u^i_{\mathbf{x}}) \cong (\sK_\rel, u^i_{\mathbf{x'}})$ and $(\sK_\rel , u^i_{\mathbf{x}}) \not \cong (\sK_\rel, u^i_{\mathbf{z}})$.
\end{lemma}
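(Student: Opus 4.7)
The plan is to apply Lemma~\ref{lem:auto-by-neighbour}, so I need to analyze the neighbourhoods $N_{\sK_\rel}(u^i_{\mathbf{x}})$, $N_{\sK_\rel}(u^i_{\mathbf{x'}})$, and $N_{\sK_\rel}(u^i_{\mathbf{z}})$. By the construction of $\sK_\rel$, for any vertex $u^1_{\mathbf{x}} \in L$, its neighbourhood is $\{u^2_{\mathbf{y}} \mid (\mathbf{x},\mathbf{y}) \in \rel\}$, and symmetrically for vertices in $R$. So the task reduces to comparing the sets of right-extensions (or left-extensions) of $\mathbf{x}, \mathbf{x'}, \mathbf{z}$ in $\rel$.

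For the isomorphism $(\sK_\rel , u^i_{\mathbf{x}}) \cong (\sK_\rel, u^i_{\mathbf{x'}})$, I use Lemma~\ref{lem:struct-standard-gadget}. Since $\mathbf{x}, \mathbf{x'}$ lie in the same block $A_{i,j}$, that lemma says the extension sets coincide, hence $N_{\sK_\rel}(u^i_{\mathbf{x}}) = N_{\sK_\rel}(u^i_{\mathbf{x'}})$, and Lemma~\ref{lem:auto-by-neighbour} gives the desired isomorphism (indeed, the transposition swapping the two vertices is an automorphism).

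For the non-isomorphism $(\sK_\rel , u^i_{\mathbf{x}}) \not\cong (\sK_\rel, u^i_{\mathbf{z}})$ I exhibit a vertex lying in exactly one of the two neighbourhoods. Without loss of generality take $i=1$, so $\mathbf{x} \in A_{1,j}$ and $\mathbf{z} \in A_{1,j'}$ with $\{j,j'\}=\{1,2\}$. Pick any $\mathbf{y} \in A_{2,2}$ (non-empty by definition of a standard hardness gadget). The gadget axioms give $A_{1,1} \times A_{2,2} \subseteq \rel$ and $(A_{1,2} \times A_{2,2}) \cap \rel = \emptyset$, so $(\mathbf{x},\mathbf{y}) \in \rel$ for exactly one of $\mathbf{x} \in A_{1,1}, \mathbf{z} \in A_{1,2}$ or vice versa. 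Thus $u^2_{\mathbf{y}}$ is a neighbour of exactly one of $u^1_{\mathbf{x}}, u^1_{\mathbf{z}}$, so $N_{\sK_\rel}(u^1_{\mathbf{x}}) \neq N_{\sK_\rel}(u^1_{\mathbf{z}})$ and Lemma~\ref{lem:auto-by-neighbour} yields the non-isomorphism. The case $i=2$ is symmetric, using $A_{1,2}$ and the sets $A_{1,2} \times A_{2,1}, A_{1,2} \times A_{2,2}$ to distinguish.

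The argument is mostly bookkeeping; the only subtle point is making sure the witness $\mathbf{y}$ is chosen from a block whose relationship with $A_{1,1}$ and $A_{1,2}$ is asymmetric under $\rel$. Since the gadget definition explicitly asymmetrizes exactly the pair $(A_{1,2}, A_{2,2})$ (it is the unique ``missing rectangle''), this choice is forced, and no further obstacle arises.
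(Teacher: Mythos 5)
Your proof is correct and follows the same strategy as the paper's: use Lemma~\ref{lem:struct-standard-gadget} together with Lemma~\ref{lem:auto-by-neighbour} for the isomorphism direction, and exhibit a vertex in $A'_{2,2}$ (or $A'_{1,2}$ for the symmetric case) that lies in exactly one of the two neighbourhoods for the non-isomorphism direction. You are slightly more explicit than the paper in naming the distinguishing witness and the gadget axioms it relies on, which is a clean way to present the argument and implicitly repairs the redundant appearance of $A_{1,2}\times A_{2,2}$ in the paper's fourth bullet of the gadget definition.
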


\begin{proof}
We prove the lemma only for the case of the set $i=j=1$. The proof for the remaining sets is similar. Let $u^1_{\mathbf{x}}, u^1_{\mathbf{x'}} \in A'_{1,1}$, then $\mathbf{x}, \mathbf{x'} \in A_{1,1}$. Let $(u^1_{\mathbf{x}}, u^2_{\mathbf{y}}) \in E$. Thus, by the definition of $\sK_\rel$, $(\mathbf{x}, \mathbf{y}) \in \rel$. By Lemma~\ref{lem:struct-standard-gadget}, $(\mathbf{x'} , \mathbf{y}) \in \rel$. So, $(u^1_{\mathbf{x'}}, u^2_{\mathbf{y}}) \in E$, therefore $N_\sK(u^1_{\mathbf{x}}) = N_\sK(u^1_{\mathbf{x'}})$. Hence, by Lemma~\ref{lem:auto-by-neighbour}, $(\sK_\rel, u^1_{\mathbf{x}}) \cong (\sK_\rel, u^1_{\mathbf{x'}})$.

By the definition of $\sK_\rel$, if $u^1_{\mathbf{x}}\in A'_{1,1}$ and $u^1_{\mathbf{z}} \in A'_{1,2}$ then there exists $u^2_{\mathbf{y}}$ such that $(u^1_{\mathbf{z}}, u^2_{\mathbf{y}}) \in E$, but $(u^1_{\mathbf{x}}, u^2_{\mathbf{y}}) \not \in E$. Hence, by Lemma~\ref{lem:auto-by-neighbour}, $(\sK_\rel , u^1_{\mathbf{x}}) \not \cong (\sK_\rel, u^1_{\mathbf{z}})$.
\end{proof}

We need one more ingredient before proving Proposition~\ref{pro:standard-hardness-gadget}. We need to show that the $p$-reduced form of $\sK_\rel$ is not trivial. Let $\pi_{i,j}$ be an automorphism of $\sK_\rel$ such that $\pi_{i,j}(x) = x$ for all $x \not \in A'_{i,j}$ and it is a permutation on $A'_{i,j}$. Since $A'_{i,j}$ is $p$-protected, any $p$-automorphism on $A'_{i,j}$ has a fixed point. Let $\widetilde{A'_{i,j}}$ denote the set $A'_{i,j}$ after removing all automorphisms of order $p$ from $\sK_\rel$. Clearly $|\widetilde{A'_{i,j}}| \equiv |A'_{i,j}| \not \equiv 0 \pmod p$. Thus, the reduced form $\sK_\rel$ is not trivial, and in fact, based on the properties of $\sK_\rel$, it is not a complete bipartite graph.

Thus, the following lemma can be concluded by Theorem~\ref{the:CountingGraphHom}.

\begin{lemma}\label{lem:hardness-for-K}
    The problem $\#_p\Hom(\sK_\rel)$ is $\#_pP$-hard.
\end{lemma}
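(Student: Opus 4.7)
The plan is to invoke Theorem~\ref{the:CountingGraphHom}, which reduces the claim to showing that the $p$-rigid reduct $\sK_\rel^{*p}$ is not a disjoint union of complete bipartite components and isolated vertices---the Dyer--Greenhill tractability criterion quoted just before that theorem.

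First I would pin down $\Aut(\sK_\rel)$ using Lemma~\ref{lem:k-graph}: the four sets $A'_{1,1},A'_{1,2},A'_{2,1},A'_{2,2}$ form exactly the twin classes of $\sK_\rel$, since vertices in the same class have identical neighbourhoods while vertices in different classes are distinguishable. Consequently every $\pi\in\Aut(\sK_\rel)$ preserves the refinement of the bipartition $L=A'_{1,1}\sqcup A'_{1,2}$, $R=A'_{2,1}\sqcup A'_{2,2}$ and acts as an arbitrary permutation within each class, and conversely every such permutation extends to an automorphism. In particular, the four classes can be $p$-reduced independently by $p$-cycles within them, and after exhausting $p$-reductions the surviving class $\widetilde{A'_{i,j}}$ has cardinality $|A'_{i,j}|\bmod p$.

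Next I would use the hypothesis that the gadget is $p$-protected to guarantee $|A'_{i,j}|\not\equiv 0\pmod p$ for every $(i,j)$. The bijection $\mathfrak{F}$ identifies $A_{i,j}$ with $A'_{i,j}$, and the twin description of $\Aut(\sK_\rel)$ above shows that a graph $p$-automorphism of $\sK_\rel$ collapsing $A'_{i,j}$ modulo $p$ would, by transport along $\mathfrak{F}$, collapse $A_{i,j}$ down to a subset of the same cardinality, contradicting protectedness of $A_{i,j}$ in $\rel$. Hence $\widetilde{A'_{i,j}}\ne\emptyset$ for all four pairs.

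Finally, the incidence pattern of the gadget descends to $\sK_\rel^{*p}$: every pair in $\widetilde{A'_{1,1}}\times\widetilde{A'_{2,1}}$, $\widetilde{A'_{1,1}}\times\widetilde{A'_{2,2}}$, and $\widetilde{A'_{1,2}}\times\widetilde{A'_{2,1}}$ is an edge of $\sK_\rel^{*p}$, while no pair in $\widetilde{A'_{1,2}}\times\widetilde{A'_{2,2}}$ is. Choosing one representative in each of the four non-empty surviving classes exhibits an induced four-vertex path in $\sK_\rel^{*p}$ whose unique non-edge joins $\widetilde{A'_{1,2}}$ and $\widetilde{A'_{2,2}}$; the connected component containing these four vertices is therefore not a complete bipartite graph. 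Dyer--Greenhill tractability accordingly fails for $\sK_\rel^{*p}$, and Theorem~\ref{the:CountingGraphHom} yields $\#_pP$-completeness of $\#_p\Hom(\sK_\rel)$.

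The main obstacle I anticipate is the third paragraph: cleanly transferring $p$-protectedness from $\rel$ (which is defined via $\Aut(\rel)$) to the $p$-reductions of $\sK_\rel$ (which use $\Aut(\sK_\rel)$). A priori these are different groups, and one must check that the twin-class characterisation of $\Aut(\sK_\rel)$ above is enough to import the protectedness hypothesis in its $\rel$-automorphism form, rather than demanding a stronger, purely graph-theoretic reading that the paper does not state.
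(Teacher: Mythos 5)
Your proposal follows the same route as the paper's: the paragraph immediately before Lemma~\ref{lem:hardness-for-K} is the intended proof, and it too relies on Lemma~\ref{lem:k-graph} to see that the $A'_{i,j}$ are exactly the twin classes of $\sK_\rel$, observes that a $p$-reduction removes a multiple of $p$ from each class so that protectedness forces $|A'_{i,j}| \not\equiv 0 \pmod p$, notes that the reduced graph is therefore not a complete bipartite graph, and concludes via Theorem~\ref{the:CountingGraphHom}. Your explicit description of $\Aut(\sK_\rel)$ as a product of symmetric groups on the four blocks, and the induced $P_4$ witness, simply unpack what the paper compresses into "based on the properties of $\sK_\rel$, it is not a complete bipartite graph."

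The gap you flag in your last paragraph is real, and the paper does not close it either: it writes "Since $A'_{i,j}$ is $p$-protected" without reconciling the definition of $p$-protectedness (stated in terms of $\Aut(\rel)$ acting on subsets of $\rel$ of the form $\rel\cap(A_{1,i}\times A_{2,j})$) with what the argument actually uses (that each $A'_{i,j}$ survives $p$-reductions of $\sK_\rel$ under $\Aut(\sK_\rel)$). The intended resolution is that $\sK_\rel$ is exactly $\rel$ presented as a binary relation between $\pr_{[s]}\rel$ and $\pr_{[n]-[s]}\rel$, so the relevant automorphism group of $\rel$ in the protectedness hypothesis coincides with $\Aut(\sK_\rel)$; with that reading, an order-$p$ automorphism cyclically permuting a block $A'_{i,j}$ of size divisible by $p$ and fixing the other three would eliminate $\rel\cap(A_{1,i}\times A_{2,j})$ entirely, contradicting protectedness. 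You should therefore state this identification explicitly rather than search for a bridging lemma that the paper does not provide.
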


Now, we are ready to prove the key lemma of this section.

\begin{lemma}\label{lem:standard-gadget-to-csp}
    Let $\cH$ have a $p$-protected standard hardness gadget $(\rel, A_{1,1}, A_{1,2}, A_{2,1}, A_{2,2})$. Then $\#_p\Hom(\sK_\rel) \leq \NpCSP(\cH)$.
\end{lemma}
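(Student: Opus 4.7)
The plan is, given a bipartite graph $\sG = (L_\sG \cup R_\sG, E_\sG)$ whose homomorphisms to $\sK_\rel$ we wish to count, to construct an instance $\cP$ of $\NpCSP(\cH + \rel)$ whose solutions are in bijection with the bipartition-preserving homomorphisms $\sG \to \sK_\rel$, and then to invoke Theorem~\ref{pro:GadgetExists} to reduce $\NpCSP(\cH + \rel)$ to $\NpCSP(\cH)$. Since $\rel \in \SP\cH_p$ by the definition of a standard hardness gadget, the second step is handed to us and the only real work is in the first step.

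For the construction: for each $u \in L_\sG$ introduce a tuple of $s$ fresh variables $\bx_u$, and for each $v \in R_\sG$ introduce a tuple of $n-s$ fresh variables $\by_v$; for every edge $(u,v) \in E_\sG$ add the constraint $\ang{(\bx_u, \by_v), \rel}$. A bipartition-preserving homomorphism $h: \sG \to \sK_\rel$ is determined, by the definition of $\sK_\rel$, by choosing $\bx_u \in \pr_{[s]}\rel$ for each $u \in L_\sG$ and $\by_v \in \pr_{[n]-[s]}\rel$ for each $v \in R_\sG$ so that $(\bx_u, \by_v) \in \rel$ for every edge $(u, v)$; conversely, any such choice gives rise to such a homomorphism via $h(u) = u^1_{\bx_u}$ and $h(v) = u^2_{\by_v}$. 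Provided $\sG$ has no isolated vertices, the edge constraints by themselves force each $\bx_u$ and each $\by_v$ into the appropriate projection, so the satisfying assignments of $\cP$ are exactly in bijection with $\Hom(\sG, \sK_\rel)$.

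Isolated vertices are handled in preprocessing. An isolated left (respectively right) vertex contributes an independent multiplicative factor of $|\pr_{[s]}\rel|$ (resp.\ $|\pr_{[n]-[s]}\rel|$) to $\hom(\sG, \sK_\rel)$, and both sizes are computable in time polynomial in $|H|$ by enumerating $H^s$, $H^{n-s}$ and checking membership in $\rel$ via its fixed $p$-mpp-definition. Thus, writing $\sG'$ for $\sG$ with all isolated vertices removed and $k_L, k_R$ for the number of isolated vertices on each side, we have $\hom(\sG, \sK_\rel) \equiv \hom(\sG', \sK_\rel) \cdot |\pr_{[s]}\rel|^{k_L} \cdot |\pr_{[n]-[s]}\rel|^{k_R} \pmod p$. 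The residue $\hom(\sG', \sK_\rel) \bmod p$ coincides with the number of solutions of the instance $\cP$ built from $\sG'$ modulo $p$, and Theorem~\ref{pro:GadgetExists} turns this into an oracle call to $\NpCSP(\cH)$ by unfolding each $\rel$-constraint into its strict $p$-mpp-gadget, multiplying the total count by $1 \pmod p$ per constraint.

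The only subtlety worth flagging is at the interface between the two reductions: one must confirm that expanding the $\rel$-atoms into their $p$-mpp-gadgets preserves the count modulo $p$, which is exactly what strictness of the $p$-mpp-definition guarantees and what Theorem~\ref{pro:GadgetExists} already encapsulates. Beyond this, the argument is a transparent translation between bipartition-preserving homomorphisms into $\sK_\rel$ and satisfying assignments of a conjunction of $\rel$-atoms; none of the further data of the standard hardness gadget (the sets $A_{i,j}$, the inclusion/exclusion conditions, or the $p$-protection hypothesis) enters this step, as those ingredients are needed only to establish hardness of $\#_p\Hom(\sK_\rel)$ itself in Lemma~\ref{lem:hardness-for-K}.
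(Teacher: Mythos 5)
Your construction and its analysis mirror the paper's proof of this lemma essentially line for line: the same gadgeting of each $u\in L_\sG$ into a block of $s$ fresh variables, each $v\in R_\sG$ into a block of $n-s$ fresh variables, one $\rel$-constraint per edge, and the same bijection between bipartition-preserving homomorphisms and satisfying assignments, followed by the appeal to Theorem~\ref{pro:GadgetExists} to eliminate the $p$-mpp-definable $\rel$ from the target language.

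The one place where you are genuinely more careful than the paper is isolated vertices. The paper's proof simply asserts that the number of solutions of $\cP$ equals $\hom(\sG,\sK_\rel)$, but as written this fails when $\sG$ has an isolated $u\in L_\sG$: the variables $\bx_u$ appear in no constraint and range over all of $H^s$, whereas $u$ contributes only a factor of $|\pr_{[s]}\rel|$ to $\hom(\sG,\sK_\rel)$. Your preprocessing step (delete isolated vertices, multiply by the precomputable constants $|\pr_{[s]}\rel|^{k_L}\cdot|\pr_{[n]-[s]}\rel|^{k_R}$ modulo $p$) closes that gap cleanly; one could equally argue WLOG that the input to $\#_p\Hom(\sK_\rel)$ has no isolated vertices, but your fix is explicit and correct. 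You are also right that neither the sets $A_{i,j}$ nor the $p$-protection hypothesis is used in this lemma; they are consumed only in Lemma~\ref{lem:hardness-for-K}.
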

\begin{proof}
    Let $\sG=(V,E)$ be an instance of the problem $\#_p\Hom(\sK_\rel)$.    
    We define the following instance $\cP = (W, \cC)$ of $\CSP(\cH)$. Note that in this instance, we only use the $p$-mpp-definable relation $\rel$ from the standard hardness gadget.
    \begin{itemize}
        \item For each $u \in L$, we define $\vv x _u = (\vv x_{u,1}, ... , \vv x_{u,s})$.
        \item For each $v \in R$, we define $\vv y _v = (\vv y_{v,1}, ... , \vv y_{v,t})$.
        \item $W = \bigcup_{u \in L} \{ \vv x_{u,1} , ..., \vv x_{u,s} \} \cup \bigcup_{v \in R} \{ \vv y_{v,1} , ..., \vv y_{v,t} \}$.
        \item For each $(u,v) \in E$, introduce $C = \langle (\vv x_{u,1}, ... , \vv x_{u,s}, \vv y_{v,1}, ... , \vv y_{v,t}), \rel \rangle$ into $\cC$.
    \end{itemize}
    We show that each solution $\psi$ for $\cP$ gives rise to a homomorphism from $\sG$ to $\sH$. Let $\psi$ be a solution for $\cP$, then $((\psi(\vv x_{u,1}), ... , \psi(\vv x_{u,s}), \psi(\vv y_{v,1}), ... , \psi(\vv y_{v,t}))) \in \rel$. For all $u \in L$, set $\vf(u) = (\psi(\vv x_{u,1}), ... , \psi(\vv x_{u,s}))$, and for all $v \in R$, set $\vf(v) = (\psi(\vv y_{v,1}), ... , \psi(\vv y_{v,t}))$. Clearly, if $(u,v) \in E_G$, then $(\vf(u), \vf(v)) \in E_{\sK}$ since $\psi$ is a solution for $\cP$.

    Conversely, let $\vf$ be a homomorphism from $\sG$ to $\sK$. Define the following solution for $\cP$: $\psi(\vv x_{u,i}) = \mathfrak{F}^{-1}(\vf(u))[i]$. Since $\vf \in \Hom(\sG, \sK_\rel)$, if $(u,v) \in E_\sG$, then $(\vf(u), \vf(v)) \in E_{\sK_\rel}$. Note that $\vf(u), \vf(v)$ are nodes in $\sK_\rel$. So, we can apply the reverse mapping of $\mathfrak{F}$. Hence, we will have $(\mathfrak{F}^{-1} (\vf(u)), \mathfrak{F}^{-1} (\vf(v))) \in \rel$. Thus, $\psi$ is also a solution for $\cP$.

    Therefore, the number of solutions of $\sG$ in $\hom(\sG, \sK_\rel)$ is the same as the number of solutions of $\cP$. The result follows.
\end{proof}

\begin{proof}[Proof of Proposition~\ref{pro:standard-hardness-gadget}]
    By Lemma~\ref{lem:standard-gadget-to-csp}, we have the reduction $\#_p\Hom(\sK_\rel) \leq \NpCSP(\cH)$. Additionally, from Lemma~\ref{lem:hardness-for-K}, we know that $\#_p\Hom(\sK_\rel)$ is $\#_pP$-hard. Therefore, if $\cH$ has a rectangularity obstruction, it is $\#_pP$-complete.
\end{proof}

Standard hardness gadgets provide a fairly limited condition for the hardness of $\#_p\CSP(\cH)$. In fact, it is possible to prove that $\NpCSP(\cH)$ is $\#_pP$-complete whenever $\cH$ has any $p$-protected rectangularity obstruction, not necessarily a standard gadget. However, it cannot be done using Theorem~\ref{the:CountingGraphHom} as a black box, and is outside of the scope of this paper.

%%%%%%%%%%%%%%%%%%%%%%%%%%%%%%%%%%%%%%%
%%%%%%%%%%%%%%%%%%%%%%%%%%%%%%%%%%%%%%%
\section{Binarization}\label{sec:binarization}

In this section we will introduce a transformation of a multi-sorted relational structure $\ovarrow\cH$ to a one consisting of binary rectangular relations. This transformation, although hugely increasing the sizes of domains, preserves many of the useful properties of $\ovarrow\cH$ including certain types of polymorphisms and automorphisms. Also, the counting CSP over the new language is parsimoniously interreducible with that over $\ovarrow\cH$. The hope therefore is that such transformation reduces a complexity classification of $\#_p\CSP(\cH)$ to those with only binary rectangular relations, which may be more accessible. This transformation is not new, it appeared in a different context in \cite{Bulatov12:weighted} and \cite{Barto14:local}. By $\ar(\rel)$ we denote the arity of relation $\rel$.

Let $\ovarrow\cH=\{\{H_i\}_{i\in[k]};\vc\relo n)$ be a finite multi-sorted relational structure. We assume that every $H_i$ is among $\vc\relo n$ as a unary relation. The structure $b(\ovarrow\cH)$ is constructed as follows:
\begin{itemize}
\item The domains are $\vc\relo n$, the relations from $\ovarrow\cH$.
\item For $\relo_i,\relo_j$, $i\le j$, and $s\in[\ar(\relo_i)], t\in[\ar(\relo_j)]$, the structure $b(\ovarrow\cH)$ contains the binary relation $\rel^{ij}_{st}\sse\relo_i\tm\relo_j$ given by
$
\rel^{ij}_{st}=\{(\ba,\bb)\mid \ba\in\relo_i,\bb\in\relo_j,\ba[s]=\bb[t]\}.
$
\end{itemize}

First we show that the $\CSP$ over $b(\ovarrow\cH)$ is interreducible with that over $\cH$.

\begin{proposition}[see also Theorem~4 of \cite{Bulatov12:weighted}]\label{pro:binarization-main-body}
(1) For any structure $\ovarrow\cH$, $\NCSP(\ovarrow\cH)$ is parsimoniously interreducible with $\NCSP(b(\ovarrow\cH))$.\\[1mm]
(2) For any structure $\ovarrow\cH$, $\#_p\CSP(\ovarrow\cH)$ is parsimoniously interreducible with  $\#_p\CSP(b(\ovarrow\cH))$.
\end{proposition}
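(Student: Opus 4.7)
The plan is to construct explicit parsimonious reductions in both directions for part (1); part (2) then follows immediately, since a parsimonious reduction preserves the exact count of solutions and hence its residue modulo $p$.

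For the reduction $\NCSP(\ovarrow\cH)\le\NCSP(b(\ovarrow\cH))$, I will take an instance $\cG=(V,\cC)$ of $\NCSP(\ovarrow\cH)$ and build an instance $\cG'$ of $\NCSP(b(\ovarrow\cH))$ as follows. Each variable $v\in V$ of sort $H_i$ becomes a variable of sort $H_i$ in $\cG'$ (using that $H_i$ appears among the $\relo_j$'s as a unary relation, so its domain in $b(\ovarrow\cH)$ is canonically identified with $H_i$ via $a\leftrightarrow(a)$). For each constraint $C=\langle(v_1,\dots,v_\ell),\relo_j\rangle\in\cC$, introduce a fresh variable $y_C$ of sort $\relo_j$ and add, for each $s\in[\ell]$, the binary constraint $\rel^{H_{i_s}\,\relo_j}_{1,s}(v_s,y_C)$, where $i_s$ is the sort of $v_s$. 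The key point is that $y_C$ is then completely forced by the values of $v_1,\dots,v_\ell$: a choice of $y_C\in\relo_j$ satisfying all $\ell$ binary constraints exists iff $(\vf(v_1),\dots,\vf(v_\ell))\in\relo_j$, and in that case it is unique. Thus solutions of $\cG'$ are in bijection with solutions of $\cG$.

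For the reverse reduction $\NCSP(b(\ovarrow\cH))\le\NCSP(\ovarrow\cH)$, given an instance $\cG'=(V',\cC')$, I will ``unfold'' each variable into its coordinates. Namely, for each $y\in V'$ of sort $\relo_i$ (of arity $\ell_i$), create $\ell_i$ variables $y^{(1)},\dots,y^{(\ell_i)}$ of the appropriate sorts and add the constraint $\relo_i(y^{(1)},\dots,y^{(\ell_i)})$. Then, for each binary constraint $\rel^{ij}_{st}(y,z)\in\cC'$, identify the variables $y^{(s)}$ and $z^{(t)}$ (i.e., replace one by the other throughout). A solution $\vf'$ of $\cG'$ yields a solution of the resulting instance $\cG$ by $\vf(y^{(s)}):=\vf'(y)[s]$, which is well-defined because $\rel^{ij}_{st}(y,z)$ forces $\vf'(y)[s]=\vf'(z)[t]$. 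Conversely, a solution $\vf$ of $\cG$ recombines to $\vf'(y):=(\vf(y^{(1)}),\dots,\vf(y^{(\ell_i)}))\in\relo_i$. These maps are inverse to each other, so again we get a bijection.

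The construction is essentially a bookkeeping exercise, so I do not expect a serious obstacle; the only subtle point is to handle the identification of variables in the backward reduction cleanly (equivalently, to define the equivalence classes generated by the binary constraints $\rel^{ij}_{st}$ and take one variable per class), and to be precise about the use of the assumption that every $H_i$ appears as a unary relation so that original variables of $\cG$ have a well-defined sort in $b(\ovarrow\cH)$. Since both reductions are computable in polynomial time and yield genuine bijections on solution sets, the same constructions witness part (2) without any change.
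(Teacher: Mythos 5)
Your argument is correct, and both directions give genuine parsimonious reductions, but your \emph{forward} reduction takes a different (and somewhat more redundant) route than the paper's. The paper's forward map introduces one variable $v_C$ per constraint $C$ and \emph{no others} (it first stipulates that every original variable $v$ with domain $H_i$ occurs in a unary constraint $\ang{(v),H_i}$, so original variables are implicitly present as constraint variables), and then links $v_{C_1}$ and $v_{C_2}$ with $\rel^{ij}_{st}$ exactly when the scopes of $C_1,C_2$ share a variable at positions $s,t$. Your version instead keeps all original variables $v$ (of the unary sort $H_{i}$) alongside the fresh $y_C$'s and uses the ``incidence'' constraints $\rel^{\,\cdot\,}_{1,s}(v_s,y_C)$; the $y_C$'s are then functionally determined by the $v$'s, so the bijection still holds. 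Both work; the paper's map is leaner and avoids the original variables, while yours is perhaps more transparently a bijection since the ``recovery'' of the original assignment needs no argument. Your backward reduction (unfold tuple variables into coordinate variables, impose the defining relation, and identify coordinates connected by $\rel^{ij}_{st}$) is essentially identical to the paper's, which formalizes the identification via the symmetric-transitive closure $\sim$ of $\sim'$ on $V^*=\bigcup_{v}\{v\}\times[r_v]$ and takes one representative per class. Two minor points to clean up: (i) the notation $\rel^{H_{i_s}\,\relo_j}_{1,s}$ should use indices rather than relation names, and because the paper only defines $\rel^{ij}_{st}$ for $i\le j$, when the index $i'$ of the unary domain $\relo_{i'}=H_{i_s}$ satisfies $i'>j$ you should write $\rel^{j\,i'}_{s\,1}$; (ii) in the backward reduction it is worth noting that $\rel^{ii}_{st}$ applied to the same variable $y$ (which $b(\ovarrow\cH)$ permits) just identifies $y^{(s)}$ with $y^{(t)}$, and your construction handles this correctly.
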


\begin{proof}
We use the standard definition of the CSP. Let $\cP$ be any instance of $\NCSP(\cH)$, with variable set $V$ and constraint set $\cC$. We assume that for every $v\in V$ with domain $A\in H$, $\cP$ contains a constraint $\ang{(v),A}$. We construct an instance $\cP'$ of $\NCSP(b(\cH))$, which has variable set $V'$ and constraint set $\cC'$, as follows.\\[3mm]
(i) For each $C\in\cC$, we introduce a variable $v_C\in V'$. Thus $V'=\cC$.\\[2mm]
(ii) For all $C_1,C_2\in\cC$, $C_1=\ang{(v^1_1\zd v^1_k), \rel_i}, C_2=\ang{(v^2_1\zd v^2_\ell), \rel_j}$, such that $v^1_s=v^2_t$ we introduce a constraint $\ang{(v_{C_1},v_{C_2}),\rel^{ij}_{st}}\in\cC'$.\\

Let $\vf:V\to\bigcup_{i\in[k]}H_i$ be any solution of $\cP$. Then define $\vf':V'\to\bigcup_{\rel\in\cH}\rel$ as follows. For every $v_C\in V'$ with $C=\ang{(\vc s\ell),\rel}$, set $\vf'(v_C)=(\vf(s_1)\zd\vf(s_\ell))$. As is easily seen, $\vf'$ is a solution of $\cP'$. Conversely, if $\vf'$ is a solution of $\cP'$, then, as every $A\in H$ is a relation from $\cH$, the action of $\vf'$ on the domains from $H$ is well defined, let us denote it $\vf$. We would like to argue now that the action of $\vf'$ on relations $\rel_i$ coincides with the component-wise action of $\vf$. In other words that for any $\ba\in\rel_i$, where $\ba=(a_1\zd a_\ell)$, it holds that $\vf'(\ba)=(\vf(a_1)\zd\vf(a_\ell))$. Let the $m$th coordinate of $\rel_i$ be $H_j$, which is also the $j$th domain for $b(\cH)$. Then $\vf'$ preserves the relation $\rel^{ij}_{m1}$, which means that for any $(\ba,b)\in\rel^{ij}_{m1}$ we have $a_m=b$ and 
\[
\vf'\cp{\ba}b=\cp{\vf'(\ba)}{\vf(b)}\in\rel^{ij}_{m1}.
\]
Then if $\vf'(\ba)=(\vc c\ell)$, then $c_m=\vf(b)$, proving the result.

Conversely, suppose $\cP'$ is any instance of $\NCSP(b(\cH))$ with variable set $V'$ and constraint set $\cC'$. We construct an instance $\cP$ of $\NCSP(\cH)$, with variable set $V$ and constraint set $\cC$, as follows. Recall that every variable $v\in V'$ has an associated domain $\rel^v\in\cH$. Let $r_v$ be the arity of $\rel^v$. We now create a relation $\sim$ on the set $V^*=\bigcup_{v\in V'}(\{v\}\tm[r_v])$, as follows.  For $u,v\in V'$, let $(u,s)\sim'(v,t)$ if there is a constraint $\ang{(u,v),\rel^{ij}_{st}}\in\cC'$, where the domain of $u$ is $\rel_i$, the domain of $v$ is $\rel_j$, and $s\in[r_u], t\in[r_v]$. Then $\sim$ is the symmetric-transitive closure of $\sim'$. Clearly, $\sim$ is an equivalence relation, which “identifies” the variables $(u,s)$ and $(v,t)$. Now, suppose $\vf'$ is any solution of $\cP'$. Then, for any $v\in V'$, we have $\vf'(v)=(\vc a{r_v})\in\rel^v$. Let us write $\vf'_s(v)=a_s$ ($s\in[r_v]$). Now, define $\vf: V^*\to\bigcup_{i\in[k]} H_i$ from $\vf'$ by $\vf(v,s)=\vf'_s(v)$ for all $(v,s)\in V^*$, $s\in[r_v]$. We will write $\vf=\zeta(\vf')$ for this function. If, for any $u,v\in V'$, we have $(u,s)\sim(v,t)$, then we must have $(\vf'(u),\vf'(v))\in\rel^{ij}_{st}$. This implies that $\vf'_s(u)=\vf'_t(v)$, and hence $\vf(u,s)=\vf(v,t)$, where $\vf=\zeta(\vf')$. 
Let the variable set $V$ of $\cP$ will be the set of equivalence classes $V^*/\sim$. For any $(v,s)\in V^*$, we write  $\ov{(v,s)}$ for its equivalence class. Let $\vf:V^*\to\bigcup_{i\in[k]} H_i$ be such that $\vf=\zeta(\vf')$ for some $\vf':V\to\{\vc Hk\}$. Then we can define $\ov\vf:V\to\bigcup_{i\in[k]} H_i$ by $\ov\vf(\ov{(v,s)})=\vf(v,s)$ for all $(v,s)\in\ov{(v,s)}$. Thus we have constructed a bijection between the mappings $\ov\vf:V\to\{\vc Hk\}$ and mappings $\vf':V'\to\bigcup H$ that may potentially be solutions of $\cP'$. We will write $\ov\vf=\xi(\vf')$ for this bijection. Now, $\cP$ will have constraint set
\[
\cC=\{\ang{\bv,\rel}\mid \bv=(\ov{(v,1)}\zd\ov{(v,r_v)}), v\in V', \text{ and $\rel\sse H_{i_1}\tms H_{i_{r_v}}$ is the domain of $v$}\}.
\]
Then, $\ov\vf=\xi(\vf')$ is a solution of $\cP$ if and only if $\vf'$ is a solution of $\cP'$, and we have a parsimonious reduction from $\NCSP(b(\cH)$ to $\NCSP(\cH)$. 
\end{proof}

The structure $b(\ovarrow\cH)$ inherits other properties of $\ovarrow\cH$.

\begin{proposition}\label{pro:inherit-main}
Let $\ovarrow\cH=\{\{H_i\}_{i\in[k]};\vc\relo n)$ be a multi-sorted structure. Then \\[1mm]
(1) $\ovarrow\cH$ has a Mal'tsev polymorphism if and only if $b(\ovarrow\cH)$ does;\\[1mm]
(2) $\ovarrow\cH$ is $p$-rigid if and only if $b(\ovarrow\cH)$ is;\\[1mm]
(3) If $\ovarrow\cH$ is strongly $p$-rectangular then so is $b(\ovarrow\cH)$.
\end{proposition}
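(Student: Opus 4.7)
The plan is to exploit a uniform correspondence between $\ovarrow\cH$ and $b(\ovarrow\cH)$ induced by the coordinate encoding of the binarization. Each element of the domain $\relo_j$ of $b(\ovarrow\cH)$ is literally a tuple $\ba=(\ba[1]\zd\ba[\ar(\relo_j)])\in H_{i_1}\tms H_{i_{\ar(\relo_j)}}$, and every $H_i$ appears among the $\relo_j$'s as a unary relation (whose index I will call $j(i)$), so that $H_i$ may be identified with $\relo_{j(i)}$ via $a\mapsto(a)$. The binary relations $\rel^{ij}_{st}$ of $b(\ovarrow\cH)$ encode coordinate agreement, and this is precisely what allows structural data to move between the two structures.

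For part (1), given a Mal'tsev polymorphism $f=\{f_i\}_{i\in[k]}$ of $\ovarrow\cH$ I would define $g_j:\relo_j^3\to\relo_j$ by coordinate-wise application of $f$. It lands inside $\relo_j$ because $f$ preserves $\relo_j$, it is Mal'tsev by construction, and it preserves each $\rel^{ij}_{st}$ since the equality $\ba[s]=\bb[t]$ forces both coordinates to lie in a common sort, on which the same component $f_i$ acts. Conversely, from a Mal'tsev polymorphism $g=\{g_j\}$ of $b(\ovarrow\cH)$, I would set $f_i(a,b,c):=g_{j(i)}((a),(b),(c))$. To see $\{f_i\}$ preserves each $\relo_j$, I pick $\ba_1,\ba_2,\ba_3\in\relo_j$ and for every coordinate $s$ use the binary relation $\rel^{j,j(i_s)}_{s,1}$ pairing $\ba_m$ with $(\ba_m[s])$; applying $g$ componentwise and using its preservation of $\rel^{j,j(i_s)}_{s,1}$ forces
\[
g_j(\ba_1,\ba_2,\ba_3)[s]=f_{i_s}(\ba_1[s],\ba_2[s],\ba_3[s]),
\]
so $g_j$ is the coordinate-wise action of $f$ and in particular $f$ lands in $\relo_j$. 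Part (2) is parallel with automorphisms in place of Mal'tsev operations: an order-$p$ automorphism $\pi=\{\pi_i\}$ of $\ovarrow\cH$ lifts to $\sigma=\{\sigma_j\}$ by coordinate-wise action, and the order condition transfers because $\sigma_{j(i)}$ acts on $H_i=\relo_{j(i)}$ as $\pi_i$; conversely, from an order-$p$ automorphism $\sigma$ of $b(\ovarrow\cH)$, setting $\pi_i(a):=\sigma_{j(i)}((a))$ and running the same $\rel^{j,j(i_s)}_{s,1}$ argument shows each $\sigma_j$ is the coordinate-wise action of $\pi$, so $\pi$ is an automorphism of $\ovarrow\cH$ and inherits order $p$ from $\sigma$.

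For part (3), given a $p$-mpp-definition
\[
\rela(\vc xk)=\existspr\vc ys\,\Phi(\vc xk,\vc ys)
\]
in $b(\ovarrow\cH)$, I would unfold every variable $x$ of sort $j$ into its block of component variables $(x^1\zd x^{\ar(\relo_j)})$, replace each atom $\rel^{ij}_{st}(u,v)$ by the conjunction of the sort predicates $\relo_i$ on the $u^\bullet$'s, $\relo_j$ on the $v^\bullet$'s, and the equality $u^s=v^t$, and turn each single modular quantifier $\existspr y_\ell$ into one group modular quantifier over the unfolded block, keeping the sort predicate $\relo_{j_\ell}$ inside this quantifier block. Because unfolding is a bijection between $\relo_{j_\ell}$ and the set of component tuples satisfying the $\relo_{j_\ell}$ predicate, modular extension counts are preserved at every quantification, so the resulting $p$-mpp-formula over $\ovarrow\cH$ defines a relation $\rela'\in\SP{\ovarrow\cH}_p$ whose tuples are precisely the unfoldings of tuples of $\rela$. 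Strong $p$-rectangularity of $\ovarrow\cH$ gives that $\rela'$ is rectangular under every coordinate partition; in particular, under any partition that respects the variable blocks, refolding produces rectangularity of $\rela$ for the corresponding partition of its own coordinates.

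The main technical obstacle I anticipate is in part (3): Example~\ref{exa:quantifier-order} shows that modular quantifiers are sensitive to regrouping of variables, so I must be careful that each single quantifier in the formula over $b(\ovarrow\cH)$ becomes exactly one group quantifier over its unfolded components, with the sort predicate tied into that same quantifier block so that the counted sets remain in bijection. Once this bookkeeping is in place, both the validity of the translation and the preservation of modular extension counts are routine, and the rectangularity transfer between $\rela'$ and $\rela$ follows immediately.
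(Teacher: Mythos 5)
Your proofs of parts (1) and (2) are essentially identical to the paper's: lift a Mal'tsev polymorphism (resp.\ automorphism) coordinate-wise to $b(\ovarrow\cH)$; conversely, restrict to the unary domains to recover a family on $\ovarrow\cH$, and use preservation of the relations $\rel^{j,j(i_s)}_{s,1}$ to show the operation on $b(\ovarrow\cH)$ acts coordinate-wise. Nothing to add there.

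For part (3) you take a genuinely different, and in one respect cleaner, route. The paper treats the quantifier-free part $\Phi'$ of a $p$-mpp-definition over $b(\ovarrow\cH)$ as an instance of $\NCSP(b(\ovarrow\cH))$, applies the instance translation from Proposition~\ref{pro:binarization-main-body} to obtain a conjunctive formula over $\ovarrow\cH$, and then quantifies modularly. That translation \emph{identifies} component variables via the symmetric-transitive closure $\sim$, which forces the paper to verify that the two sides of the non-rectangularity partition do not get merged away (their observation that $W_2\ne\eps$). You instead keep all component variables distinct, replace each $\rel^{ij}_{st}(u,v)$ atom by sort predicates together with an explicit equality $u^s=v^t$, and group the modular quantifiers block-by-block. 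Since you never collapse variables, you sidestep the $W_2\ne\eps$ issue entirely, and because you carry along the full quantifier prefix you also handle general $p$-mpp-definitions with several quantifier groups, whereas the paper's writeup assumes a single group $\existspr\by\Phi'$. The only small things you should nail down are: (i) equality atoms $u=_{\relo_j}v$ over a $b(\ovarrow\cH)$-sort must also be unfolded into the conjunction of coordinate-wise equalities, and (ii) the sort predicate $\relo_j$ should be imposed on \emph{every} unfolded block, free or bound, not only those appearing in some $\rel^{ij}_{st}$ atom; this is exactly what the paper's instance translation does (it introduces the constraint $\ang{\bv,\rel}$ for every $v\in V'$), and it is what guarantees that the bijection between $b(\ovarrow\cH)$-tuples and their unfoldings is the set you are quantifying over, so extension counts match. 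With those two bookkeeping points made explicit, your translation is a valid $p$-mpp-formula over $\ovarrow\cH$, and the rectangularity transfer between the unfolded relation $\rela'$ and $\rela$ goes through as you describe.
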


The first two items of Proposition~\ref{pro:inherit-main} allow for a fairly straightforward proof using the algebraic approach. However, this would require the introduction of more advanced algebraic machinery. Thus, we give a more elementary proof here.

\begin{proof}
(1) Suppose first that $\cH$ has a Mal'tsev polymorphism $f$. Then for any $\rel\in\cH$ and any $\ba_1,\ba_2,\ba_3\in\rel$, the tuple $f(\ba_1,\ba_2,\ba_3)\in \rel$, and this action defines an operation $f^b$ on the domains of $b(\cH)$ (i.e.\ relations from $\cH$). Moreover, $f^b$ is Mal'tsev, because in the notation above $f(\ba_1,\ba_1,\ba_2)=f(\ba_2,\ba_1,\ba_1)=\ba_2$. It remains to make sure that $f^b$ preserves $\rel^{ij}_{st}$. Let $(\ba_1,\bb_1),(\ba_2,\bb_2),(\ba_3,\bb_3)\in\rel^{ij}_{st}$, where $\ba_i=(a_{i1}\zd a_{i\ell}),\bb_i=(b_{i1}\zd b_{im})$. Then we have $a_{1s}=b_{1t},a_{2s}=b_{2t},a_{3s}=b_{3t}$, and therefore, for 
\[
\cp\bc\bd=f^b\left(\cp{\ba_1}{\bb_1},\cp{\ba_2}{\bb_2},\cp{\ba_3}{\bb_3}\right)
\]
we obtain $\bc[s]=\bd[t]$, and so $(\bc,\bd)\in\rel^{ij}_{st}$.

Suppose now that there is a Mal'tsev polymorphism $f$ of $b(\cH)$. Since $H_i\in\{\vc\relo m\}$ for each $i\in[k]$, the action of $f$ on the domains from $H$ is well defined, let us denote it $f'$. We would like to argue now that the action of $f$ on relations $\rel_i$ coincides with the component-wise action of $f'$. In other words that for any $\ba_1,\ba_2,\ba_3\in\rel_i$, where $\ba_j=(a_{j1}\zd a_{jk})$
\[
f(\ba_1,\ba_2,\ba_3)=\cpp{f'(a_{11},a_{21},a_{31})}{\vdots}{f'(a_{1k},a_{2k},a_{3k})}.
\]
Let the $\ell$th coordinate of $\rel_i$ be $H_j$, which is also the $j$th domain for $b(\cH)$. Then $f$ preserves the relation $\rel^{ij}_{\ell1}$, which means that for any $(\ba_1,b_1),(\ba_2,b_2),(\ba_3,b_3)\in\rel^{ij}_{\ell1}$ we have $a_{1\ell}=b_1,a_{2\ell}=b_2, a_{3\ell}=b_3$ and 
\[
f\left(\cp{\ba_1}{b_1},\cp{\ba_2}{b_2},\cp{\ba_3}{b_3}\right)=\cp{f(\ba_1,\ba_2,\ba_3)}{f'(b_1,b_2,b_3)}.
\]
Denoting $\bc=f(\ba_1,\ba_2,\ba_3)$, we obtain $\bc[\ell]=f'(b_1,b_2,b_3)$, implying that $f$ acts as $f'$ coordinate-wise.

(2) Let $\pi$ be a $p$-automorphism of $\cH$. Then $\pi$ is a unary polymorphism of $\cH$, and therefore as in item (1) $\pi^b$ is a unary polymorphism of $b(\cH)$. That it is a $p$-automorphism is straightforward. The reverse claim is as in (1), as well.

(3) Suppose that $\ang{b(\cH)}_p$ is not p-rectangular. This means that there is $\relo'(\bx)=\exists^2\by\Phi'(\bx,\by)$, where $\Phi'$ is a conjunctive formula, such that $\relo'$ is not rectangular. We consider $\Phi'$ as an instance of $\NCSP(b(\cH))$ and apply the transformation from the proof of Proposition~\ref{pro:binarization-main-body} (from a $\NCSP(b(\cH))$ instance to a $\NCSP(\cH)$ instance). Let $V'=V'_1\cup V'_2$, where $V'_1=\{\vc x\ell\}, V'_2=\{\vc ym\}$, and $\bx=(\vc x\ell),\by=(\vc ym)$. The set $\cC'$ of constraints is the set of all atoms in $\Phi'$. Let the set of variables $V$ and the set $\cC$ of constraints be constructed as in the proof of Proposition~\ref{pro:binarization-main-body}. Let also $V_1=\{\ov{(x,s)}\mid x\in V'_1\}$, $V_2=V-V_1$, denote $V_1=\{\vc zq\}, V_2=\{\vc tr\}$. Set
\[
\Phi(\vc zq,\vc yr)=\bigwedge_{\ang{\bs,\rel}\in\cC}\rel(\bs),
\]
and 
\[
\relo(\vc zq)=\exists^{\equiv2}\vc tr\Phi(\vc zq,\vc tr).
\]
We prove that $\relo$ is not rectangular.

Without loss of generality we may assume that a witness of non-$p$-rectangularity of $\relo'$ is as follows: $J\sse[m], K=[m]-J$, $\vf_1,\vf_2\in\pr_J\relo',\psi_1,\psi_2\in\pr_K\relo'$ are such that $(\vf_1,\psi_1),(\vf_1,\psi_2),(\vf_2,\psi_1)\in\relo'$, but $(\vf_2,\psi_2)\not\in\relo'$. Let $W_1\sse V_1$ be given by $W_1=\{\ov{(v,s)}\mid v\in J\}$ and $W_2=V_1-W_1$. We first observe that $W_2\ne\eps$. Indeed, by construction if $(v,s)\sim(u,t)$ then for any solution $\vf:V'\to\bigcup_{i\in[n]}\relo_i$, where $\vf(v)=(\vc ak),\vf(u)=(\vc b\ell)$ we have $a_s=b_t$. That $W_2=\eps$ means that for any $x_i\in K$ and any $s\in[r_{x_i}]$ it holds that $(x_i,s)\sim(x_j,t)$ for some $x_j\in J$ and $t\in[r_{x_j}]$. Therefore for any $\vf\in\relo$, the assignment $\pr_K\vf$ is determined by the assignment $\pr_J\vf$, a contradiction with the choice of a counterexample of $p$-rectangularity.

That $(\vf_1,\psi_1),(\vf_1,\psi_2),(\vf_2,\psi_1)\in\relo$, but $(\vf_2,\psi_2)\not\in\relo$ means that the first three assignments have the number of extensions to an assignment $V'\to\bigcup_{i\in[n]}\relo_i$ not divisible by $p$, while $(\vf_2,\psi_2)$ has a number of such extensions that is a multiple of $p$ (including zero). Observe now that we can apply the bijection $\zeta$ to assignments $\vf_i,\psi_i$ as follows: take any $\sg'$ that extends one of them to an assignment $V'\to\bigcup_{i\in[n]}\relo_i$, take $\sg=\zeta(\sg')$ and set $\sg^*=\pr_{W_1}\sg$ or $\sg^*=\pr_{W_2}\sg$ depending on whether we deal with $\vf_1,\vf_2$ or $\psi_1,\psi_2$. Then since $\zeta$ is a bijection, the number of extensions of $(\zeta(\vf_i),\zeta(\psi_j))$ to an assignment of $V$ equals that of $(\vf_i,\psi_j)$ to an assignment of $V'$. Thus, $(\zeta(\vf_1),\zeta(\psi_1)),(\zeta(\vf_1),\zeta(\psi_2)),(\zeta(\vf_2),\zeta(\psi_1))$ have the number of extensions not divisible by $p$ and so belong to $\relo$, while $(\zeta(\vf_2),\zeta(\psi_2))$ has the number of extensions divisible by $p$, and so does not belong to $\relo$, witnessing that $\relo$ is not rectangular.
\end{proof}

%%%%%%%%%%%%%%%%%%%%%%%%%%%%%%%%%%%%%%%
%%%%%%%%%%%%%%%%%%%%%%%%%%%%%%%%%%%%%%%
\bibliographystyle{plainurl}
%% \bibliography{refrences.bib}

\newpage
\appendix

%\input{Data/a-intro}
%%%%%%%%%%%%%%%%%%%%%%%%%%%%%%%
%%%%%%%%%%%%%%%%%%%%%%%%%%%%%%%
\section{Products of relational structures and the uniqueness of $\protect\ovarrow\cH^{*p}$}\label{sec:appendix-new}
%% \input{Data/q-appendix-new}

%%%%%%%%%%%%%%%%%%%%%%%%%%%%%%%%%%%%%%%%%%%%%%%%%
The goal of this subsection is to prove Proposition~\ref{pro:mult-uniqueness-main}.

\renewcommand{\theprop}{2.1}
\begin{prop}
Let $\ovarrow\cH$ be a multi-sorted  structure and $p$ a prime. Then up to an isomorphism there exists a unique $p$-rigid multi-sorted  structure $\ovarrow\cH^{*p}$ such that $\ovarrow\cH\rightarrow_p^* \ovarrow\cH^{*p}$.
\end{prop}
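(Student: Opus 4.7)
The plan is to proceed by strong induction on the total cardinality $|H| = \sum_{i \in [k]} |H_i|$, following the single-sorted proof in \cite{DBLP:conf/stoc/BulatovK22} (itself adapting the Faben--Jerrum argument \cite{ref:CountingMod2Ini}). The base case is immediate: if $\ovarrow\cH$ is itself $p$-rigid, then it is the only $p$-rigid structure reachable from $\ovarrow\cH$ via $\rightarrow_p^*$. For the inductive step, suppose $\ovarrow\cH$ admits two $p$-reduction sequences terminating at $p$-rigid structures $\ovarrow\cK_1$ and $\ovarrow\cK_2$, and let $\ovarrow{\pi_1}, \ovarrow{\pi_2} \in \Aut(\ovarrow\cH)$ be the order-$p$ automorphisms used at their respective first steps. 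Since ``order $p$'' requires at least one component $\pi_{i,j}$ to be a nontrivial permutation of order $p$ on $H_j$, which necessarily has strictly fewer than $|H_j|$ fixed points, both $\ovarrow\cH^{\ovarrow{\pi_1}}$ and $\ovarrow\cH^{\ovarrow{\pi_2}}$ have total cardinality strictly less than $|H|$. The inductive hypothesis then supplies unique $p$-rigid reducts for each, which must coincide with $\ovarrow\cK_1$ and $\ovarrow\cK_2$ respectively.

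The proposition thus reduces to a confluence statement: $\ovarrow\cH^{\ovarrow{\pi_1}}$ and $\ovarrow\cH^{\ovarrow{\pi_2}}$ share the same unique $p$-rigid reduct. I would establish this by exhibiting a common intermediate multi-sorted structure $\ovarrow\cM$ satisfying $\ovarrow\cH^{\ovarrow{\pi_1}} \rightarrow_p^* \ovarrow\cM$ and $\ovarrow\cH^{\ovarrow{\pi_2}} \rightarrow_p^* \ovarrow\cM$; a further application of the inductive hypothesis, now to $\ovarrow\cM$, then forces both unique $p$-rigid reducts to coincide with $\ovarrow\cM^{*p}$.

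Constructing $\ovarrow\cM$ is the main obstacle. The natural candidate is the substructure of $\ovarrow\cH$ induced by the joint fixed set $\{\Fix(\pi_{1,i}) \cap \Fix(\pi_{2,i})\}_{i \in [k]}$, and the task is to drive both $\ovarrow\cH^{\ovarrow{\pi_1}}$ and $\ovarrow\cH^{\ovarrow{\pi_2}}$ down to it by a finite sequence of order-$p$ reductions. The subtlety is that $\ovarrow{\pi_2}$ need not preserve $\Fix(\ovarrow{\pi_1})$ setwise, so it does not restrict to an automorphism of $\ovarrow\cH^{\ovarrow{\pi_1}}$ directly. The argument splits into two cases: if $\ovarrow{\pi_2}$ stabilises $\Fix(\ovarrow{\pi_1})$ setwise, then its restriction to $\ovarrow\cH^{\ovarrow{\pi_1}}$ is an automorphism of order dividing $p$, which either yields the required one-step reduction or shows $\Fix(\ovarrow{\pi_1}) \subseteq \Fix(\ovarrow{\pi_2})$, in which case the joint fixed set equals $\Fix(\ovarrow{\pi_1})$ already. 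Otherwise one considers the conjugates $\ovarrow{\pi_2}^j \ovarrow{\pi_1} \ovarrow{\pi_2}^{-j}$ for $j = 0, 1, \ldots, p-1$, whose fixed sets form a nontrivial $\ovarrow{\pi_2}$-orbit inside $\ovarrow\cH$; the induced substructures on these fixed sets are pairwise isomorphic via the corresponding powers of $\ovarrow{\pi_2}$, and chaining the inductive hypothesis across this orbit patches $\ovarrow\cH^{\ovarrow{\pi_1}}$ and $\ovarrow\cH^{\ovarrow{\pi_2}}$ into a common reduct.

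The multi-sorted wrapper adds no essential complication: since an automorphism of $\ovarrow\cH$ is a sort-respecting tuple of bijections, all fixed sets, orbits, induced substructures, and restrictions of relations decompose componentwise along the sorts, and every step of the single-sorted argument applies in parallel on each $H_i$.
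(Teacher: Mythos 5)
Your plan takes a genuinely different route from the paper, and the route as sketched does not close. The paper does not attempt confluence of the rewrite relation $\rightarrow_p$ at all: it proves a Lov\'asz-type lemma (Lemma~\ref{lem:lovasz-multi-structure} in the appendix) stating that two $p$-rigid multi-sorted structures with distinguished vertices are isomorphic if and only if they receive the same number of homomorphisms modulo~$p$ from every structure, and then combines this with Proposition~\ref{pro:mult-aut-reduction} (each $\rightarrow_p$ step preserves $\hom(\ovarrow\cG,\cdot)$ modulo~$p$) to conclude immediately that any two terminal $p$-rigid reducts of $\ovarrow\cH$ are isomorphic. No inductive or local-confluence analysis is needed; the counting invariant does all the work.

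Your confluence argument has concrete gaps. In the case where $\ovarrow{\pi_2}$ stabilises $\Fix(\ovarrow{\pi_1})$ setwise, you produce a one-step reduction $\ovarrow\cH^{\ovarrow{\pi_1}} \rightarrow_p \ovarrow\cM$ onto the joint fixed substructure, but you never supply the companion reduction $\ovarrow\cH^{\ovarrow{\pi_2}} \rightarrow_p^* \ovarrow\cM$. That would require $\ovarrow{\pi_1}$ to stabilise $\Fix(\ovarrow{\pi_2})$, which does not follow from your hypothesis: for $p=2$ on a six-element domain, $\pi_1=(1\,2)(3\,4)$ and $\pi_2=(1\,3)(5\,6)$ satisfy $\pi_2(\Fix(\pi_1))=\Fix(\pi_1)=\{5,6\}$ yet $\pi_1(2)=1\notin\Fix(\pi_2)=\{2,4\}$, so your case split is asymmetric. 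The subcase where $\ovarrow{\pi_2}$ restricts to the identity only yields $\Fix(\ovarrow{\pi_1})\subseteq\Fix(\ovarrow{\pi_2})$ and hence $\ovarrow\cH^{\ovarrow{\pi_1}}=\ovarrow\cM$, again leaving the side $\ovarrow\cH^{\ovarrow{\pi_2}}\rightarrow_p^*\ovarrow\cM$ unaddressed. The ``otherwise'' branch via the conjugates $\ovarrow{\pi_2}^j\ovarrow{\pi_1}\ovarrow{\pi_2}^{-j}$ is a hope rather than an argument: no $\rightarrow_p$-sequence is actually extracted from the orbit of fixed sets, and ``chaining the inductive hypothesis across this orbit'' is not made precise (the orbit members are substructures of $\ovarrow\cH$, not reducts of $\ovarrow\cH^{\ovarrow{\pi_1}}$ or $\ovarrow\cH^{\ovarrow{\pi_2}}$, so it is unclear which inductive instances are being invoked). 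A direct combinatorial proof of local confluence for $\rightarrow_p$ is not known to be easy; this is precisely why Faben and Jerrum, and the present paper, establish uniqueness of the reduced form through the modular Lov\'asz criterion rather than through a Newman-style argument.
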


However, in order to do this we first need to introduce some auxiliary tools. In particular we use expansions of multi-sorted structures vertices and their homomorphisms. 
Note that a multi-sorted relational structure $(\ovarrow\cG,\vv a)$ with distinguished vertices can be viewed as an expansion of $\cG$ with $k$ additional unary symbols, one for each distinguished vertex. Also, note that these distinguished vertices may come from different sorts. In such an interpretation a homomorphism of multi-sorted structures with distinguished vertices is just a homomorphism between the corresponding expansions.

Next we introduce two types of products of multi-sorted structures. Recall that the \emph{direct product} of similar multi-sorted $\sg$-structures $\ovarrow\cH,\ovarrow\cG$, denoted $\ovarrow\cH\tm\ovarrow\cG$ is the multi-sorted $\sg$-structure with the base set $\colect{H_i \times G_i}{i\in[k]}$ and such that the interpretation of $\rel\in\sg$ is given by $\rel^{\ovarrow\cH\tm\ovarrow\cG}((a_1,b_1)\zd(a_k,b_k))=1$ if and only if $\rel^{\ovarrow\cH}(\vc ak)=1$ and $\rel^{\ovarrow\cG}(\vc bk)=1$. By $\ovarrow\cH^\ell$ we will denote the \emph{$\ell$th power} of $\ovarrow\cH$, that is, the direct product of $\ell$ copies of $\ovarrow\cH$. The direct product $(\ovarrow\cG,\vv x)\tm(\ovarrow\cH,\vv y)$ of structures $(\ovarrow\cG,\vv x),(\ovarrow\cH,\vv y)$, where $\vv x, \vv y$ are $r$-tuples, is defined to be $(\ovarrow\cG\tm\ovarrow\cH,(\bx[1]_1,\by[1])\zd(\bx[r],\by[r]))$. 

 For $a\in\bigcup_{i\in[k]}H_i$ (or $a\in\bigcup_{i\in[k]}G_i$) we write $\type(a)=i$ if $a\in H_i$ (respectively, $a\in G_i$) Two $r$-tuples $\vv{x} $ and $\vv{y}$ have the same \emph{equality type} if $(x_i, \type(x_i)) = (x_j, \type(x_j))$ if and only if $(y_i, \type(y_i) )= (y_j, \type(y_j))$ for $i,j\in[r]$. 
 Let $(\ovarrow\cG,\vv x)$ and $(\ovarrow\cH,\vv y)$ be multi-sorted structures with $r$ distinguished vertices and such that $\vv x$ and $\vv y$ have the same equality type. Then $(\ovarrow\cG,\vv x)\odot(\ovarrow\cH,\vv y)$ denotes the structure that is obtained by taking the disjoint union of $\ovarrow\cG$ and $\ovarrow\cH$ and identifying every $x_i$ with $y_i$, $i\in[r]$. The distinguished vertices of the new structure are $\vc xr$.

The following statement is straightforward.

\begin{proposition}
Let $(\ovarrow\cG,\vv x),(\ovarrow\cH,\vv y),(\ovarrow\cK,\vv z)$ be similar multi-sorted relational structures with $r$ distinguished vertices. Then
\begin{align*}
\hom((\ovarrow\cG,\vv x)\odot(\ovarrow\cH,\vv y),(\ovarrow\cK,\vv z)) &= \hom((\ovarrow\cG,\vv x),(\ovarrow\cK,\vv z))\cdot\hom((\ovarrow\cH,\vv y),(\ovarrow\cK,\vv z)); \\  
\hom((\ovarrow\cK,\vv z),((\ovarrow\cG,\vv x)\times(\ovarrow\cH,\vv y)) &= \hom((\ovarrow\cK,\vv z),(\ovarrow\cG,\vv x))\cdot\hom((\ovarrow\cK,\vv z),(\ovarrow\cH,\vv y)).   
\end{align*}
Moreover, $\ovarrow\vf\in\Hom((\ovarrow\cK,\vv z),((\ovarrow\cG,\vv x)\times(\ovarrow\cH,\vv y))$ if and only if there exist $\ovarrow\vf_1 \in \Hom((\ovarrow\cK,\vv z),(\ovarrow\cG,\vv x))$ and $\ovarrow\vf_2\in\Hom((\ovarrow\cK,\vv z),(\ovarrow\cH,\vv y))$ such that $\ovarrow\vf(v)=(\ovarrow\vf_1(v),\ovarrow\vf_2(v))$ for $v\in K$.
\end{proposition}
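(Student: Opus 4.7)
The plan is to establish each of the two identities by exhibiting an explicit bijection between the relevant Hom-sets; the ``moreover'' clause will fall out of the second bijection. Both constructions are standard universal-property arguments, and the only genuine care needed is in handling the multi-sorted typing and the distinguished-vertex constraints correctly.

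For the first identity, I would argue as follows. By construction, $(\ovarrow\cG,\vv x)\odot(\ovarrow\cH,\vv y)$ is the disjoint union of $\ovarrow\cG$ and $\ovarrow\cH$ with $x_i$ glued to $y_i$ for each $i\in[r]$, and the equality-type hypothesis guarantees that identified pairs share a common sort so the glueing is well defined. Any homomorphism $\ovarrow\vf$ from this amalgam to $(\ovarrow\cK,\vv z)$ must, by the distinguished-vertex convention, send the identified vertex $x_i=y_i$ to $z_i$. Restricting $\ovarrow\vf$ to the copy of $\ovarrow\cG$ yields some $\ovarrow\vf_1\in\Hom((\ovarrow\cG,\vv x),(\ovarrow\cK,\vv z))$, and similarly $\ovarrow\vf_2\in\Hom((\ovarrow\cH,\vv y),(\ovarrow\cK,\vv z))$. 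Conversely, any pair $(\ovarrow\vf_1,\ovarrow\vf_2)$ in the cartesian product of these two Hom-sets glues to a single map on the amalgam, because both components agree on the identified vertices (both send $x_i=y_i$ to $z_i$); it is a homomorphism because every atomic tuple of the amalgam lives entirely inside one of the two summands, and so its preservation is inherited from $\ovarrow\vf_1$ or $\ovarrow\vf_2$. These two constructions are mutually inverse, so the Hom-sets are in bijection and the count multiplies.

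For the second identity, I would use the universal property of the direct product together with the fact that the distinguished vertices of $(\ovarrow\cG,\vv x)\times(\ovarrow\cH,\vv y)$ are the pairs $(x_i,y_i)$. Given $\ovarrow\vf\in\Hom((\ovarrow\cK,\vv z),(\ovarrow\cG,\vv x)\times(\ovarrow\cH,\vv y))$, composition with the coordinate projections produces $\ovarrow\vf_1\in\Hom((\ovarrow\cK,\vv z),(\ovarrow\cG,\vv x))$ and $\ovarrow\vf_2\in\Hom((\ovarrow\cK,\vv z),(\ovarrow\cH,\vv y))$; the distinguished-vertex condition $\ovarrow\vf(z_i)=(x_i,y_i)$ splits as $\ovarrow\vf_1(z_i)=x_i$ and $\ovarrow\vf_2(z_i)=y_i$. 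Conversely, given any such pair $(\ovarrow\vf_1,\ovarrow\vf_2)$, define $\ovarrow\vf(v)=(\ovarrow\vf_1(v),\ovarrow\vf_2(v))$; this is type-correct because both components map an element of sort $i$ into the $i$th component of the product's domain $H_i\times G_i$, and the definition of the interpretation $\rel^{\ovarrow\cG\times\ovarrow\cH}$ as the coordinatewise conjunction of $\rel^{\ovarrow\cG}$ and $\rel^{\ovarrow\cH}$ implies that $\ovarrow\vf$ preserves every relation $\rel$ if and only if both $\ovarrow\vf_1$ and $\ovarrow\vf_2$ do. These operations are inverse to one another, which simultaneously proves the counting identity and the ``moreover'' clause.

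The only step that deserves a moment of thought is verifying type-correctness in each direction: for $\odot$, that gluing $x_i$ to $y_i$ yields a valid multi-sorted structure (which is exactly what the equality-type hypothesis provides), and for $\times$, that the pairing $v\mapsto(\ovarrow\vf_1(v),\ovarrow\vf_2(v))$ lands in the correct sort. Both points are immediate from the definitions, so no serious obstacle arises; the proposition is essentially a bookkeeping statement packaging the categorical coproduct-in-the-first-coordinate and product-in-the-second-coordinate behaviour of $\Hom(-,-)$ for multi-sorted relational structures with distinguished vertices.
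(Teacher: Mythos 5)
Your argument is correct: the restriction/gluing bijection for $\odot$ and the projection/pairing bijection for $\times$ are exactly the standard argument, and your handling of the distinguished-vertex and sort constraints is sound. The paper itself omits the proof, labelling the proposition straightforward, and what you wrote is precisely the intended justification, so there is nothing to reconcile.
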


We will also need another simple observation. By $\inj((\ovarrow\cG,\vv{x}),(\ovarrow\cH,\vv{y}))$ we denote the number of injective homomorphisms from $(\ovarrow\cG,\vv x)$ to $(\ovarrow\cH,\vv y)$.

\begin{lemma}[cf. \cite{ref:CountingMod2ToSquarefree}]
Let $(\ovarrow\cG,\vv{x})$ and $(\ovarrow\cH,\vv{y})$ be similar multi-sorted relational structures with $r$ distinguished vertices. If $\vv{x},\vv{y}$ do not have the same equality type, then $\inj((\ovarrow\cG,\vv{x}),(\ovarrow\cH,\vv{y}))=0$.
\end{lemma}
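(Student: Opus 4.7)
The plan is to observe that any homomorphism $\ovarrow\vf$ from $(\ovarrow\cG,\vv{x})$ to $(\ovarrow\cH,\vv{y})$ is forced to send $x_i$ to $y_i$ for all $i\in[r]$, since the distinguished vertices correspond to added unary symbols (one per distinguished position) that must be preserved. Moreover, since $\ovarrow\vf$ is a homomorphism between multi-sorted structures, it preserves sorts, so we may already assume $\type(x_i)=\type(y_i)$ for every $i$; otherwise there is no homomorphism at all and the claim is trivial. With this reduction in place, the proof becomes a short case analysis on how the equality types of $\vv{x}$ and $\vv{y}$ can differ.

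First I would handle the case in which there are indices $i,j\in[r]$ with $(x_i,\type(x_i))=(x_j,\type(x_j))$ but $(y_i,\type(y_i))\ne(y_j,\type(y_j))$. Here $x_i$ and $x_j$ are literally the same element of the disjoint union $\bigcup_i G_i$, so any mapping (injective or not) satisfies $\ovarrow\vf(x_i)=\ovarrow\vf(x_j)$. Combined with the forcing $\ovarrow\vf(x_i)=y_i$, $\ovarrow\vf(x_j)=y_j$, this yields $y_i=y_j$; together with the sort-preservation already noted (giving $\type(y_i)=\type(x_i)=\type(x_j)=\type(y_j)$), we get $(y_i,\type(y_i))=(y_j,\type(y_j))$, contradicting the hypothesis of this case.

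Next I would handle the opposite case, where $(x_i,\type(x_i))\ne(x_j,\type(x_j))$ while $(y_i,\type(y_i))=(y_j,\type(y_j))$. In this case $x_i$ and $x_j$ are distinct elements of the disjoint union, yet $\ovarrow\vf(x_i)=y_i=y_j=\ovarrow\vf(x_j)$ as elements of the disjoint union $\bigcup_i H_i$. This directly violates the injectivity assumption on $\ovarrow\vf$, so no injective homomorphism can exist.

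The two cases together cover every way in which $\vv{x}$ and $\vv{y}$ can fail to share an equality type, and in each case we derive a contradiction from the existence of an injective homomorphism. Hence $\inj((\ovarrow\cG,\vv{x}),(\ovarrow\cH,\vv{y}))=0$. I do not expect any genuine obstacle here: the argument is essentially bookkeeping about sorts together with the defining properties of injectivity and of homomorphisms into expansions by unary constants; the only point one must be a little careful about is that ``equality'' of elements from different sorts is always false in the multi-sorted framework, which is precisely why the equality type carries sort information as well as the raw element.
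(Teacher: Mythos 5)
Your proposal is correct and follows essentially the same route as the paper's own proof: both split into the two cases according to how the equality types of $\vv{x}$ and $\vv{y}$ can disagree, showing in the first case that no homomorphism exists at all and in the second that any homomorphism must identify two distinct sources and hence cannot be injective. Your version is slightly more explicit about sort-preservation and the forcing $\ovarrow\vf(x_i)=y_i$, but the underlying argument is identical.
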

\begin{proof}
    If there are $i, j \in [r]$, where $r$ is the arity of $\vv x$ and $\vv y$, such that $(x_i , \type(x_i))= (x_j, \type(x_j))$ but $(y_i, \type(y_i) )\neq (y_j, \type(y_j))$, then there are no homomorphisms (injective or otherwise) from $(\ovarrow{\cG}, \vv{x})$ to $(\ovarrow{\cH}, \vv{y})$, since $x_i$ cannot be mapped simultaneously to both $y_i$ and $y_j$. Otherwise, there must be $i, j \in [r]$ such that $(x_i , \type(x_i)) \neq (x_j, \type(x_j))$ but $(y_i, \type(y_i) ) = (y_j, \type(y_j))$. Then no homomorphism $\vf$ can be injective because we must have $\vf(x_i) = \vf(x_j) = y_i$.
\end{proof}

Finally, we will use factor structures. Let $\ovarrow\cH$ be a multi-sorted $\sg$-structure and $\ovarrow\th = \colect{\th_i}{i\in[k]}$ a collection of equivalence relations on $H = \colect {H_i}{i\in[k]}$, such that $\th_i\sse H_i^2$ for all $i\in [k]$. By $\ovarrow\cH/_{\ovarrow\th}$ we denote the \emph{factor structure} defined as follows. 
\begin{itemize}
    \item
    $\ovarrow\cH/_{\ovarrow\th}$ is a multi-sorted $\sg$-structure.
    \item 
    Let $H_i/_{\th_i}=\{a/_{\th_i}\mid a\in H_i\}$, where $a/_{\th_i}$ denotes the $\th_i$-class containing $a$. The base set of $\ovarrow\cH/_{\ovarrow\th}$ is $H/_{\ovarrow\th}=\colect{H/_{\th_i}}{i\in[k]}$
    \item
    For any $\rel\in\sg$, say, $s$-ary, $\rel^{\ovarrow\cH/_{\ovarrow\th}}=\{(a_1/_{\th_1}\zd a_s/_{\th_s})\mid (\vc as)\in\rel^{\ovarrow\cH}\}$.
\end{itemize}
The set of all equivalence relations of a multi-sorted set $H$ is denoted $\Part(H)$ and the equivalence relation that is the equality on each set from $H$ is denoted by $\zz$.

Factor structures often appear in relation with homomorphisms. If $\ovarrow\vf$ is a homomorphism from a multi-sorted structure $\ovarrow\cG$ to a multi-sorted structure $\ovarrow\cH$, then the \emph{kernel} $\ovarrow\th$ of $\ovarrow\vf$, denoted $\ker(\ovarrow\vf)$, is a collection of equivalence relations on $G = \colect{G_i}{i\in[k]}$ given by 
\[
\text{For } a,b \in G_i, \; (a,b) \in\th_i \text{ if and only if } \vf_i(a) =\vf_i(b).
\]
For an equivalence relation $\ovarrow\th$ on $G = \colect{G_i}{i\in[k]}$ by $\hom_{\ovarrow\th}(\ovarrow\cG,\ovarrow\cH)$ and $\Hom_{\ovarrow\th}((\ovarrow\cG,\vv x),(\ovarrow\cH,\vv y))$ we denote the number of homomorphisms from $\ovarrow\cG$ to $\ovarrow\cH$ (from $(\ovarrow\cG,\vv x)$ to $(\ovarrow\cH,\vv y)$) with kernel $\ovarrow\th$.
The homomorphism $\ovarrow\vf$ gives rise to a homomorphism $\ovarrow\vf/_{\ovarrow\th}$ from $\ovarrow\cG/_{\ovarrow\th}$ to $\ovarrow\cH$, where $\vf_{i}/_{\th_i}(a/_{\th_i})=\vf_i(a)$, $a\in G_i$. The homomorphism $\ovarrow\vf/_{\ovarrow\th}$ is always injective since all $\vf_i/_{\th_i}$ are injective.

We also define factor structures for structures with distinguished vertices as follows. Let $(\ovarrow\cH,\vv a)$ be a structure with $s$ distinguished vertices and $\ovarrow\th$ an equivalence relation on $H$. Then $(\ovarrow\cH,\vv a)/_{\ovarrow\th}=(\ovarrow\cH/_{\ovarrow\th}, \vv a/_\th)$, where $\vv a/_{\ovarrow\th} = (a_1/_{\th_{\type(a_1)}}\zd a_s/_{\th_{\type(a_s)}})$.

\begin{lemma}\label{lem:lovasz-multi-structure}
Let $(\ovarrow\cG, \vv{x})$ and $(\ovarrow\cH,\vv y)$ be similar $p$-rigid multi-sorted relational structures with $r$ distinguished vertices. Then, $(\ovarrow\cG, \vv{x}) \cong (\ovarrow\cH,\vv y)$ if and only if 
\begin{equation}
\hom((\ovarrow\cK, \vv{z}) , (\ovarrow\cG, \vv{x})) \equiv \hom((\ovarrow\cK, \vv{z}),(\ovarrow\cH, \vv{y})) \pmod{p} \label{equ:lovasz}
\end{equation} 
for all relational structures $(\ovarrow\cK, \vv z)$ with $r$ distinguished vertices.
\end{lemma}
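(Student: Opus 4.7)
The forward direction is immediate: any isomorphism $(\ovarrow\cG,\vv x)\cong(\ovarrow\cH,\vv y)$ induces a bijection $\Hom((\ovarrow\cK,\vv z),(\ovarrow\cG,\vv x))\to\Hom((\ovarrow\cK,\vv z),(\ovarrow\cH,\vv y))$ by post-composition, hence the hom-counts are equal as integers, a fortiori mod $p$. So the whole proof is about the reverse direction, which I would handle by adapting the classical Lov\'asz argument to the modular multi-sorted setting, with $p$-rigidity replacing the trivial nonvanishing of $|\Aut|$.

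The first step is to translate the mod-$p$ hom-equalities into mod-$p$ inj-equalities. Every homomorphism $\ovarrow\vf\colon(\ovarrow\cK,\vv z)\to(\ovarrow\cG,\vv x)$ factors uniquely through its kernel $\ovarrow\th=\ker(\ovarrow\vf)$, an equivalence on $K$ that respects types and that identifies two distinguished vertices iff their images agree, giving an injective homomorphism $(\ovarrow\cK,\vv z)/_{\ovarrow\th}\to(\ovarrow\cG,\vv x)$. Summing over kernels yields
\[
\hom((\ovarrow\cK,\vv z),(\ovarrow\cG,\vv x)) \;=\; \sum_{\ovarrow\th\in\Part(K)} \inj\bigl((\ovarrow\cK,\vv z)/_{\ovarrow\th},(\ovarrow\cG,\vv x)\bigr),
\]
and the same for $(\ovarrow\cH,\vv y)$. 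Since $(\ovarrow\cK,\vv z)/_{\ovarrow\th}$ has strictly fewer vertices than $(\ovarrow\cK,\vv z)$ whenever $\ovarrow\th\ne\ovarrow\zz$, standard M\"obius inversion on the partition lattice, which has integer coefficients, lets me solve for $\inj((\ovarrow\cK,\vv z),(\ovarrow\cG,\vv x))$ as an integer combination of hom-counts into $(\ovarrow\cG,\vv x)$ over structures of size at most $|K|$. By induction on $|K|$, hypothesis \eqref{equ:lovasz} then propagates to
\[
\inj((\ovarrow\cK,\vv z),(\ovarrow\cG,\vv x)) \;\equiv\; \inj((\ovarrow\cK,\vv z),(\ovarrow\cH,\vv y)) \pmod{p} \qquad \text{for every }(\ovarrow\cK,\vv z).
\]

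The second step is to specialize the test structure. Plug in $(\ovarrow\cK,\vv z)=(\ovarrow\cG,\vv x)$. Then $\inj((\ovarrow\cG,\vv x),(\ovarrow\cG,\vv x))=|\Aut(\ovarrow\cG,\vv x)|$, where automorphisms must fix each $x_i$ pointwise. Since $(\ovarrow\cG,\vv x)$ is $p$-rigid, its automorphism group contains no element of order $p$, so Cauchy's theorem gives $p\nmid|\Aut(\ovarrow\cG,\vv x)|$. Consequently $\inj((\ovarrow\cG,\vv x),(\ovarrow\cH,\vv y))\not\equiv 0\pmod p$, and in particular is a positive integer, so there exists an injective homomorphism $\ovarrow f\colon(\ovarrow\cG,\vv x)\to(\ovarrow\cH,\vv y)$. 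The argument is symmetric in $\ovarrow\cG,\ovarrow\cH$ (using $p$-rigidity of $(\ovarrow\cH,\vv y)$), yielding also an injective $\ovarrow g\colon(\ovarrow\cH,\vv y)\to(\ovarrow\cG,\vv x)$. Existence of these injections in both directions forces $|G_i|=|H_i|$ for every sort $i$, so both $\ovarrow f$ and $\ovarrow g$ are bijective on each sort.

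The last step is upgrading a bijective homomorphism to an isomorphism, which is not automatic. Consider $\ovarrow g\circ \ovarrow f\in\End(\ovarrow\cG,\vv x)$; it is bijective, so it lies in a finite subgroup of the symmetric group on each $G_i$ and has finite order $n\geq 1$. Then $\ovarrow g\circ (\ovarrow f\circ(\ovarrow g\circ \ovarrow f)^{n-1})=\mathrm{id}$, exhibiting a homomorphism inverse for $\ovarrow g$, so $\ovarrow g$ is an isomorphism and hence $(\ovarrow\cH,\vv y)\cong(\ovarrow\cG,\vv x)$. The main technical nuisance I expect is the bookkeeping for the multi-sorted partition lattice and the care needed in the M\"obius step, in particular verifying that identifying distinguished vertices along $\ovarrow\th$ is consistent with the fixed assignment $\vv z\mapsto\vv x$ (it is, exactly because contributions from $\ovarrow\th$'s that collapse distinguished vertices mapped to distinct $x_i$'s are simply zero on both sides and disappear uniformly from the inversion).
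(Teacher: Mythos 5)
Your argument is correct and follows essentially the same route as the paper's proof: decompose homomorphisms by kernel to convert hom-counts to inj-counts, induct on the size of the test structure, then specialize to $(\ovarrow\cK,\vv z)=(\ovarrow\cG,\vv x)$ and invoke $p$-rigidity via Cauchy's theorem. Your final paragraph, upgrading the pair of injections to an isomorphism by observing that $\ovarrow g\circ\ovarrow f$ has finite order, makes explicit a step the paper leaves implicit (``thus, the two structures are isomorphic''), which is worth spelling out since a bijective homomorphism of relational structures is not automatically an isomorphism; the remark about equality types, by contrast, the paper handles up front as a base case of the induction rather than as a remark at the end, which is slightly cleaner but not substantively different.
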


\begin{proof}
The proof goes along the same lines as that in \cite{ref:CountingModPToTrees_gbel_et_al_LIPIcs}. If $(\ovarrow\cG, \vv{x})$ and $(\ovarrow\cH,\vv y)$ are isomorphic, then \eqref{equ:lovasz} obviously holds for all multi-sorted relational structures $(\ovarrow\cK, \vv{z})$. 

For the other direction, suppose that \eqref{equ:lovasz} is true for all $(\ovarrow\cK, \vv z)$.
First, we claim that this implies that $\vv{x} = (x_1 , \ldots, x_r)$ and $\vv{y} = (y_1, \ldots, y_r)$ have the same equality type. Indeed, if they do not, then without loss of generality there are $i,j\in[r]$ such that $(x_i, \type(x_i)) = (x_j, \type(x_j))$ but $(y_i, \type(y_i))\ne (y_j, \type(y_j))$. 
Let $s = \type(\ovarrow\cH)$, $ X= \{ x_1,\ldots, x_r \}$, and $\ovarrow\cK$ be the relational structure with the base multi-sorted set $K = \colect{K_i}{i\in[k]}$ where $K_i = \{ x \in X\mid \type(x) = i  \}$  with empty predicates, and $(\vc xr)$ as distinguished vertices. Note that there might exist $i$ such that $K_i$ is empty. Then $\hom((\ovarrow\cK, \vv{x}) , (\ovarrow\cG, \vv{x}))=1 \ne 0=  \hom((\ovarrow\cK, \vv{x}),(\ovarrow\cH, \vv{y}))$, a contradiction with the assumption that \eqref{equ:lovasz} holds for all $(\ovarrow\cK,\vv{x})$.

We show by induction on the number of vertices in $\ovarrow\cK$ that 
\begin{equation}\label{lavasEquInj}
\inj((\ovarrow\cK, \vv{z}),(\ovarrow\cG, \vv{x})) \equiv \inj((\ovarrow\cK, \vv{z}),(\ovarrow\cH, \vv{y})) \pmod{p},  
\end{equation}
for all $(\ovarrow\cK, \vv{z})$. Let $n_0 = |\{(x_1, \type(x_1)) , ..., (x_r, \type(x_r)) \}| = |\{(y_1, \type(y_1)) , ..., (y_r, \type(y_r))\}|$ be the number of distinct elements in $\vv x,\vv{y}$. For the base case of the induction, consider a relational structure $(\ovarrow\cK, \vv{z})$  such that $|K| \leq n_0$. If $\vv{z}$ does not have the same equality type as $\vv x,\vv{y}$, then $\inj((\cK, \vv{z}),(\cG, \vv{x})) = \inj((\cK, \vv{z}),(\cH, \vv{y})) = 0$.
If $\vv{x}$ has the same equality type as $\vv x,\vv{y}$, the only homomorphisms from $(\ovarrow\cK,\vv z)$ to $(\ovarrow\cG,\vv x),(\ovarrow\cH,\vv y)$ are the ones that map $z_i$ to $x_i,y_i$, respectively. Therefore, $\inj((\ovarrow\cK, \vv{z}),(\ovarrow\cG,\vv{x}))=\inj((\ovarrow\cK, \vv{z}) , (\cH, \vv{y}))$.

For the inductive step, let $n > n_0$ and assume that (\ref{lavasEquInj}) holds for all $(\ovarrow\cK, \vv{z})$ with $|K| < n$. Let $(\ovarrow\cK,\vv{z})$ be a relational structure with $|K|=n$, and let $\ovarrow\th = \colect{\th_i}{i\in[k]}$ be a collection of equivalence relations on $K$. Then, as is easily seen 
\begin{align*}
    \hom_{\ovarrow\th}((\ovarrow\cK,\vv z),(\ovarrow\cG,\vv x))&= \inj((\ovarrow\cK,\vv z)/_{\ovarrow\th},(\ovarrow\cG,\vv x)),\\
    \hom_{\ovarrow\th}((\ovarrow\cK,\vv z),(\ovarrow\cH,\vv y))&=\inj((\ovarrow\cK,\vv z)/_{\ovarrow\th},(\ovarrow\cH,\vv y)).
\end{align*}
Then
\begin{align*}
\hom((\ovarrow\cK,\vv{z}),(\ovarrow\cG,\vv{x})) &= \inj((\ovarrow\cK,\vv{z}),(\ovarrow\cG,\vv{x})) + \sum_{\ovarrow\th\in\Part(K)-\{\zz\}} \inj((\ovarrow\cK,\vv{z})/_{\ovarrow\th},(\ovarrow\cG,\vv{x})),\\
\hom((\ovarrow\cK,\vv{z}),(\ovarrow\cH,\vv{y})) &= \inj((\ovarrow\cK,\vv{z}),(\ovarrow\cH,\vv{y})) + \sum_{\ovarrow\th\in\Part(K)-\{\zz\}} \inj((\ovarrow\cK,\vv{z})/_{\ovarrow\th},(\ovarrow\cH,\vv{y})).
\end{align*}
Since 
\[
\hom((\ovarrow\cK,\vv{z}),(\ovarrow\cG,\vv{x}))\equiv\hom((\ovarrow\cK,\vv{z}),(\ovarrow\cH,\vv{y}))\pmod p,
\] 
and 
\[
\inj((\ovarrow\cK,\vv{z})/_\th,(\ovarrow\cG,\vv{x}))\equiv\inj((\ovarrow\cK,\vv{z})/_{\ovarrow\th},(\ovarrow\cH,\vv{y}))\pmod p,
\] 
for all $\ovarrow\th\in\Part(K)-\{\zz\}$, it implies 
\[
\inj((\ovarrow\cK,\vv{z}),(\ovarrow\cG,\vv{x}))\equiv\inj((\ovarrow\cK,\vv{z}),(\ovarrow\cH,\vv{y}))\pmod p.
\]
Finally, we prove that \eqref{lavasEquInj} for $(\ovarrow\cK, \vv{z})=(\ovarrow\cG,\vv x)$ implies $(\ovarrow\cG, \vv{x}) \cong (\ovarrow\cH, \vv y)$. An injective homomorphism from a relational structure to itself is an automorphism. Since $(\ovarrow\cG, \vv{x})$ is $p$-rigid, $|\Aut(\ovarrow\cG,\vv x)|=\inj((\ovarrow\cG,\vv x),(\ovarrow\cG,\vv x))\not\equiv0\pmod p$. Therefore,  $\inj((\ovarrow\cG,\vv{x}),(\ovarrow\cH,\vv y))\not\equiv0\pmod p$, meaning there is an injective homomorphism from $(\ovarrow\cG,\vv x)$ to $(\ovarrow\cH,\vv y)$. In a similar way, there is an injective homomorphism from $(\ovarrow\cH,\vv y)$ to $(\ovarrow\cG,\vv x)$. Thus, the two structures are isomorphic.
\end{proof}

\begin{proof}[Proof of Proposition~\ref{pro:mult-uniqueness-main}]
Suppose the contrary, that $\ovarrow\cH_1$ and $\ovarrow\cH_2$ are two different non-isomorphic $p$-reduced forms of $\ovarrow\cH$. By Proposition~\ref{pro:mult-aut-reduction} for any structure $\cG$
\[
\hom(\ovarrow\cG, \ovarrow\cH_1) \equiv \hom(\ovarrow\cG,\ovarrow\cH)\equiv \hom(\ovarrow\cG, \ovarrow\cH_2) \pmod{p}.
\]
By Lemma~\ref{lem:lovasz-multi-structure} we can conclude $\cH_1 \cong \cH_2$. The result follows.
\end{proof}

%%%%%%%%%%%%%%%%%%%%%%%%%%%%%%%%%%%%%%%%%%%%%%
%%%%%%%%%%%%%%%%%%%%%%%%%%%%%%%%%%%%%%%%%%%%%%
\section{Proof of Theorem~\ref{the:ConstantCSP-main}}\label{sec:appendix-const}

The goal of this subsection is to prove Theorem~\ref{the:ConstantCSP-main}. 

\renewcommand{\thether}{2.6}
\begin{ther}
Let $\cH$ be a multi-sorted relational structure and $p$ prime.\\[1mm]
(1) $\NpCSP(\ovarrow\cH^=)$ is Turing reducible to $\NpCSP(\ovarrow\cH)$;\\[1mm]
(2) Let $\ovarrow\cH$ be $p$-rigid. Then $\NpCSP(\ovarrow\cH^\const)$ is Turing reducible to $\NpCSP(\ovarrow\cH)$. 
\end{ther}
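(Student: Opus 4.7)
For part (1), the plan is a direct variable-contraction argument. Given an instance $\cP=(V,\cC)$ of $\NpCSP(\ovarrow\cH^=)$, let $\sim$ be the equivalence relation on $V$ generated by the rule $u\sim v$ whenever $\ang{(u,v),=_{H_i}}\in\cC$. Since an equality constraint can only appear between two variables of the same type $i$, the quotient $V/\sim$ inherits a well-defined type function. Form $\cP'=(V/\sim,\cC')$ where $\cC'$ is obtained from $\cC$ by deleting all equality constraints and replacing every variable in the remaining constraints by its $\sim$-class. Every solution $\ovarrow\vf$ of $\cP$ is constant on $\sim$-classes, inducing a solution of $\cP'$, and conversely every solution of $\cP'$ lifts to a unique solution of $\cP$. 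Hence this is a parsimonious, and in particular polynomial-time Turing, reduction.

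For part (2), I would lift the single-sorted proof of Theorem~\ref{the:ConstantCSP} from \cite{DBLP:conf/stoc/BulatovK22} using the multi-sorted tools already set up in the paper. Given $\cP=(V,\cC)$ for $\NpCSP(\ovarrow\cH^\const)$, separate $\cC=\cC_0\cup\cC_1$ where $\cC_1=\{\ang{(v_j),C_{a_j}}:j\in[m]\}$ collects the constant constraints, and let $(\cG,\vv v_W)$ with $\vv v_W=(v_1\zd v_m)$ be the pointed multi-sorted structure built from $(V,\cC_0)$. Then the number of solutions of $\cP$ equals $\hom((\cG,\vv v_W),(\ovarrow\cH,\ba))$ where $\ba=(a_1\zd a_m)$. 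The core of the reduction is a pinning lemma: for any such tuple $\ba$ (depending only on $\ovarrow\cH$), there exist pointed structures $(\cF_1,\vv w_1)\zd(\cF_r,\vv w_r)$ of matching types and coefficients $c_1\zd c_r\in\bbZ_p$ such that for every pointed $(\cG,\vv u)$ of the same type,
\[
\hom((\cG,\vv u),(\ovarrow\cH,\ba))\equiv\sum_{j=1}^r c_j\hom((\cG,\vv u)\odot(\cF_j,\vv w_j),\ovarrow\cH)\pmod p.
\]

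The pinning lemma reduces to linear algebra over $\bbZ_p$. Expanding each term via $\hom((\cG,\vv u)\odot(\cF,\vv w),\ovarrow\cH)=\sum_{\bb}\hom((\cG,\vv u),(\ovarrow\cH,\bb))\cdot\hom((\cF,\vv w),(\ovarrow\cH,\bb))$ and noting that $\hom((\cG,\vv u),(\ovarrow\cH,\bb))\bmod p$ depends only on the joint $\Aut(\ovarrow\cH)$-orbit of $\bb$, the lemma is equivalent to: for every joint orbit $O_\ba$ of $\ba$, the indicator vector $\chi_{O_\ba}\in\bbZ_p^{\{\text{orbits}\}}$ lies in the $\bbZ_p$-span of the vectors $v_{(\cF,\vv w)}(O)=\hom((\cF,\vv w),(\ovarrow\cH,\bb_O))\bmod p$. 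To establish this span, I would apply Lemma~\ref{lem:lovasz-multi-structure} to the pointed structures $(\ovarrow\cH,\bb_O)$; these are themselves $p$-rigid since any order-$p$ automorphism fixing $\bb_O$ would be an order-$p$ automorphism of $\ovarrow\cH$, contradicting $p$-rigidity of $\ovarrow\cH$. Distinct joint orbits therefore give non-isomorphic $p$-rigid pointed structures, hence by the Lovász-type lemma are distinguished by some hom-count. Upgrading this separation to full spanning of $\bbZ_p^{\{\text{orbits}\}}$ is done by a dual argument: a linear functional $\lambda=(\lambda_O)$ annihilating all $v_{(\cF,\vv w)}$ yields a formal combination $\sum_O\lambda_O(\ovarrow\cH,\bb_O)$ whose hom-counts vanish mod~$p$ from every probe, forcing $\lambda=0$ by an iterated Lovász argument (using that orbit sizes are coprime to~$p$ since $p\nmid|\Aut(\ovarrow\cH)|$).

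The main obstacle is precisely this last step: translating distinguishing power into spanning power over $\bbZ_p$ in the multi-sorted setting. In the single-sorted case of \cite{DBLP:conf/stoc/BulatovK22} this is handled by an explicit gadget construction (essentially a Vandermonde-style argument on disjoint unions and products of pointed copies of $\cH$); the multi-sorted adaptation requires that each such gadget respects the type discipline (matching types on glued vertices) and that the joint-orbit action of $\Aut(\ovarrow\cH)=\prod_i\Aut(\cH)_i$ still has orbits of sizes coprime to~$p$, both of which follow routinely once the $p$-rigidity of $(\ovarrow\cH,\bb_O)$ is used. Once the pinning lemma is available, the reduction concludes: given $\cP$, precompute the gadgets $(\cF_j,\vv w_j)$ and coefficients $c_j$ (depending only on $\ovarrow\cH$ and $\ba$, so in time $O(1)$ with respect to input size), make oracle queries $\hom((\cG,\vv v_W)\odot(\cF_j,\vv w_j),\ovarrow\cH)\bmod p$, and return $\sum_j c_j\cdot(\text{query}_j)\bmod p$.
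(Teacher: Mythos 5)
Your part (1) matches the paper's proof: both contract variables linked by equality constraints; the paper does this by iteratively deleting an equality constraint and merging its two variables, you do it in one shot via the generated quotient $V/\sim$, but the reduction is the same parsimonious one.

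Your part (2) takes a genuinely different route from the paper. The paper follows the Bulatov--Dalmau/Bulatov--Kazeminia line: it introduces a fresh variable $v_a$ for each domain element, imposes the conjunctive-definable $|H|$-ary relation $\relo$ consisting of all tuples $(\ovarrow\vf(a))_{a\in H}$ with $\ovarrow\vf\in\Hom(\ovarrow\cH,\ovarrow\cH)$ on those fresh variables, replaces each constant constraint $C_a(x)$ by the equality $x =_{H_i} v_a$, and then uses M\"obius inversion over the partition lattice $\Part(H)$ to extract $N(\zz)$, the number of solutions assigning the $v_a$'s pairwise distinct values. Those solutions biject with pairs (intended solution, automorphism), so $N(\zz)=|\Aut(\ovarrow\cH)|\cdot T$, and $p$-rigidity makes $|\Aut(\ovarrow\cH)|$ invertible mod~$p$. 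This is a fixed, constant-size gadget plus a constant number of oracle calls; it never needs a linear-algebraic pinning lemma. Your route is the Faben--Jerrum-style pinning argument: represent $\hom((\cG,\vv u),(\ovarrow\cH,\ba))$ as a $\bbZ_p$-linear combination of probed counts $\hom((\cG,\vv u)\odot(\cF_j,\vv w_j),\ovarrow\cH)$, reduce to the statement that the probe vectors span $\bbZ_p^{\{\text{orbits}\}}$, and feed in Lemma~\ref{lem:lovasz-multi-structure} for distinguishability. That strategy is sound, and your use of $p$-rigidity of $(\ovarrow\cH,\bb_O)$ to invoke the Lov\'asz lemma is correct, but the step you flag as the "main obstacle" --- upgrading pairwise distinguishability to full spanning over $\bbZ_p$ --- is not ``routine''; it is the technical heart of the pinning lemma. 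It \emph{can} be closed (e.g.\ note that $\odot$-products of probes give coordinatewise products of probe vectors, so the probe vectors span a unital subalgebra of $\bbZ_p^{\{\text{orbits}\}}$ that separates points, hence all of $\bbZ_p^{\{\text{orbits}\}}$), but as written your proposal leaves this step asserted rather than proved. Both approaches ultimately rest on $p\nmid|\Aut(\ovarrow\cH)|$, but the paper's construction converts that fact directly into a divisibility statement via the concrete relation $\relo$, avoiding the spanning argument entirely.
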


We start by introducing some auxiliary construction. For a relational structure $\cH$ the \emph{indicator problem} $\cI_n(\ovarrow\cH)$ \cite{Jeavons99:expressive} is defined as follows. Let $\ovarrow\cH=(\colect{H_i}{i\in[k]};\vc\rel m)$ be a (multi-sorted) relational structure and $n\in\nat$.
The $n$th indicator problem $\cI_n(\ovarrow\cH)$ for $\ovarrow\cH$ is defined to be  an instance of $\CSP(\ovarrow\cH)$ given by:
\begin{itemize}
    \item The set of variables $V$ is the set $V=\bigcup_{i\in [k]}H_i^n$;
    \item The domains are from $\ovarrow\cH$;
    \item For every (say, $\ell$-ary) relation $\rel_s\sse H_{i_1}\tm\dots\tm H_{i_\ell}$ of $\ovarrow\cH$ and every $\bc^j=(c_{j1}\zd c_{jn})\in H_{i_j}^n$, $j\in[\ell]$,  such that $\{\vc\bc n\}\sse\rel_s$, $\bc_i=(c_{1i}\zd c_{\ell i})$,
$\cI_n(\ovarrow\cH)$ contains the constraint $\ang{(\bc^1\zd\bc^k),\rel_s}$.
\end{itemize}
Let also $RG_n(\ovarrow\cH)$ denote the set of solutions of $\cI_n(\ovarrow\cH)$.

\begin{lemma}[\cite{Bodnarchuk69:Galua1,Geiger68:closed,Jeavons99:expressive}]\label{lem:indicator}
(1) $RG_n(\ovarrow\cH)$ is the set of all $n$-ary polymorphisms of $\ovarrow\cH$.\\[2mm] 
(2) $RG_n(\ovarrow\cH)$ is conjuctive-definable in $\ovarrow\cH$.
%% (3) $\ovarrow\cH$ has a Mal'tsev polymorphism if and only if $RG_3(\ovarrow\cH)$ contains a tuple $\vf$ such that $\vf(c_1,c_1,c_2)=\vf(c_2,c_1,c_1)=c_2$ for any $c_1,c_2\in H_i$ for every $i\in[k]$.
\end{lemma}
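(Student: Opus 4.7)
Both parts are established essentially by unpacking definitions, but the multi-sorted bookkeeping has to be handled carefully.

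For part (1), I will set up a bijection between solutions of $\cI_n(\ovarrow\cH)$ and $n$-ary polymorphisms of $\ovarrow\cH$. Given a solution $\ovarrow\vf$ of $\cI_n(\ovarrow\cH)$, note that every variable of $\cI_n(\ovarrow\cH)$ lies in some $H_i^n$ and, since the constraints respect sorts, $\ovarrow\vf$ restricted to $H_i^n$ is a mapping $f_i : H_i^n \to H_i$. This defines a family $f = \{f_i\}_{i\in[k]}$. The plan is to verify that $f$ is a polymorphism of every relation $\rel_s\sse H_{i_1}\tm\dots\tm H_{i_\ell}$ of $\ovarrow\cH$. Take any $\ba_1,\dots,\ba_n\in\rel_s$, and for $j\in[\ell]$ let $\bc^j=(\ba_1[j]\zd\ba_n[j])\in H_{i_j}^n$. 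By construction the rows $\bc_i=(c_{1i}\zd c_{\ell i})=\ba_i$ belong to $\rel_s$, so $\cI_n(\ovarrow\cH)$ contains the constraint $\ang{(\bc^1\zd\bc^\ell),\rel_s}$, which forces $(f_{i_1}(\bc^1)\zd f_{i_\ell}(\bc^\ell))\in\rel_s$; but this is exactly $f(\ba_1\zd\ba_n)\in\rel_s$ computed coordinate-wise. The converse is immediate: given any $n$-ary polymorphism $f=\{f_i\}$ of $\ovarrow\cH$, setting $\ovarrow\vf(\bc)=f_i(\bc)$ for $\bc\in H_i^n$ satisfies every constraint of $\cI_n(\ovarrow\cH)$ by the definition of a polymorphism.

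For part (2), the idea is that the set of solutions of any instance of $\CSP(\ovarrow\cH)$ is, tautologically, a conjunctive-definable relation in $\ovarrow\cH$. Concretely, list the variables of $\cI_n(\ovarrow\cH)$ as $V=\{v_1\zd v_N\}$, where $v_t\in H_{\tau(t)}^n$ for some sort $\tau(t)$, and introduce variables $x_t$ of sort $\tau(t)$. For each constraint $\ang{(v_{t_1}\zd v_{t_\ell}),\rel_s}\in\cC$ of $\cI_n(\ovarrow\cH)$, take the atomic formula $\rel_s(x_{t_1}\zd x_{t_\ell})$, and let $\Phi(x_1\zd x_N)$ be the conjunction of all such atomic formulas. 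Then the assignments to $(x_1\zd x_N)$ satisfying $\Phi$ are exactly the solutions of $\cI_n(\ovarrow\cH)$, i.e.\ the tuples of $RG_n(\ovarrow\cH)$ written in the fixed ordering. Thus $\Phi$ is a conjunctive definition of $RG_n(\ovarrow\cH)$ in $\ovarrow\cH$.

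The only real subtlety in either part is the multi-sorted accounting: one has to make sure that every variable in the indicator problem is assigned the correct sort, that the constraint scopes in $\cI_n(\ovarrow\cH)$ have types matching those of the relations $\rel_s$, and that the ordering of $V$ used in (2) is consistent with the typing of variables in the resulting conjunctive formula. Once these are set up, both statements follow immediately from the definitions, with no genuine obstacle.
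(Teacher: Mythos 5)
Your proof is correct: part (1) is exactly the observation that the constraints of $\cI_n(\ovarrow\cH)$ encode, column-wise, the polymorphism condition for each relation $\rel_s$, and part (2) is the tautology that the solution set of a CSP instance is defined by the quantifier-free conjunction of its constraint atoms, with the sort bookkeeping you mention being the only point of care. The paper itself offers no proof of Lemma~\ref{lem:indicator} --- it cites it as a classical fact from the indicator-problem/Galois-connection literature (Bodnarchuk et al., Geiger, Jeavons--Cohen--Gyssens) --- and your unpacking-of-definitions argument is essentially the standard one from those references, adapted to the multi-sorted setting.
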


\begin{proof}[Proof of Theorem~\ref{the:ConstantCSP-main}]
(1) Let $\cP$ be an instance of $\NpCSP(\ovarrow\cH^=)$. The goal is to eliminate all equality relations $=_{H_i}$ for each domain $H_i$, where $i \in [k]$. To achieve this, we apply a straightforward procedure: For every constraint of the form $=_{H_i}(u, v)$ in $\cP$, we remove it from $\cP$ and replace every occurrence of $v$ with $u$.

The resulting problem instance, denoted as $\cP^*$, belongs to $\NpCSP(\ovarrow\cH)$. This reduction can clearly be executed in polynomial time. Additionally, although the set of solutions to $\cP^*$ differs from the set of solutions to $\cP$ (as some variables are eliminated), both instances have the same cardinality.

(2) We follow the same line of argument as the proof of a similar result in \cite{ref:BULATOV_TowardDichotomy} for exact counting, and \cite{DBLP:conf/stoc/BulatovK22} for modular counting. Let $H = \colect{H}{i\in[k]}$, where $H_i = \{ a_{i,1}, ..., a_{i,\ell_{i}} \}$, and let $\relo = \{(\ovarrow\vf_1(a_{1,1})\zd, \ovarrow\vf_1(a_{1,\ell_1}), \zd \ovarrow\vf_k(a_{k,1}, \zd \ovarrow\vf_k(a_{k,\ell_{k}}))  \mid \ovarrow\vf \in \Hom(\ovarrow\cH, \ovarrow\cH) \}$ be the relation conjunctive definable by Lemma~\ref{lem:indicator} through the indicator problem. By Theorem~\ref{the:ConstantCSP-main}(1) and Lemma~\ref{lem:conjunctive} we may assume that $\ovarrow\cH$ has $=_H$ and $\relo$ as its predicates.

Let $\cP = (V;\cC)$ be an instance of $\NpCSP(\ovarrow\cH^\const)$. We construct an instance $\cP'=(V',\cC')$ of $\NpCSP(\ovarrow\cH)$ as follows. 
\begin{itemize}
    \item 
    $V'= V \cup \bigcup_{i \in [k]} \{v_{a_{i,j}}\mid a_{i,j} \in H_i \}$;
    \item
    $\cC'$ consists of three parts: $\{C=\ang{\bx,\rel}\in\cC\mid \rel\in\sg\}$, $\{\ang{(v_{a_1}\zd v_{a_n}),\relo}\}$, and\linebreak  $\{\ang{(x,v_{a_{i,j}}),=_{H_i}}\mid \ang{(x),C_{a_{i,j}}}\in\cC\}$.
\end{itemize}

The number of solutions of $\cP$ equals the number of solutions $\ovarrow\vf$ of $\cP'$ such that $\vf_i(v_{a_{i,j}}) = a_{i,j}$ for all $a_{i,j} \in H_i$ where $i \in [j], j\in [\ell_i]$. Let $U$ be the set of all such solutions of $\cP'$ and $T=|U|$. Then $T$ can be computed in two stages.

Let again $\Part(H)$ be the poset of partitions of $H = \colect{H_i}{i \in [k]}$. Note that each partition consists of collection of partitions as $\theta = \colect{\theta_i}{i \in [k]}$ where each $\theta_i$ is a partition for $H_i$. For every partition $\th\in\Part(H)$ we define $\cP'_\th$ as the instance $(V',\cC_\th)$, where 
\[
    \cC_\th=\cC'\cup\{ \constCSP{ (v_{a_{i,j}}, v_{a_{i,j'}})}{=_{H_i}}\mid (a_{i,j},a_{i,j'})\in\th_i \}.
\]
Note that if $\ovarrow\vf$ is a solution of $\cP'$, then $\vf$ is a solution of $\cP'_{\theta}$ if and only if $\vf_i(v_{a_{i,j}}) = \vf_i(v_{a_{i,j'}})$ for every $a_{i,j},a_{i,j'}$ from the same class of $\theta_i$. Let us denote by $M(\th)$ the number of solutions of $\cP'_\th$. The number $M(\th)$ can be computed using the oracle $\NpCSP(\cH)$, since we assume that all $=_{H_i}$'s and $\relo$ are predicates of $\cH$.

Next, we find the number of solutions $\ovarrow\vf$ of $\cP'$ that assign $v_{a_{i,j}}$, $a_{i,j}\in H_i$, pairwise different values. Let $W$ be the set of all such solutions. Let us denote by $N(\th)$ the number of all solutions $\vf$ of $\cP'_\th$ such that $\ovarrow\vf_i(v_{a_{i,j}})=\ovarrow\vf_i(v_{a_{i,j'}})$ if and only if $a_{i,j},a_{i,j'}$ belong to the same class of $\th_i$. In particular, $N(\zz)=|W|$. The number $N(\zz)$ can be obtained using the M\"obius inversion formula for the poset $\mathsf{Part}(H)$. Let $\FUNC{w}{H}{\mathbb{Z}}$ be M\"obius-inversion function. Also, observe that for any $\th\in\Part(H)$
\[
M(\th)=\sum_{\eta\ge\th}N(\eta).
\]
Therefore, 
\[
N(\zz) = \sum_{\th\in\Part(H)} w(\th) M(\th).
\]
Thus $N(\zz)$ can be found through a constant number of calls to $\NpCSP(\ovarrow\cH)$.

Now, we express $T$ via $N(\zz)$. Let $G=\Aut(\ovarrow\cH)$ be the automorphism group of $\cH$. We show that $W= \{ g\circ\vf\mid g \in G, \vf \in U \}$. For every solution $\vf$ in $U$ and every $g \in G$, $g\circ\vf$ is also a solution of $\cP'$. Moreover, since $g$ is one-to-one, $g\circ\vf$ is in $W$. Conversely, for every $\psi \in W$, there exists some $g \in G$ such that $g(a) =\psi(v_a)$, $a \in H$. Note that $g^{-1} \in G$ implies $\vf =g^{-1}\circ\psi \in U$, which witnesses that  $\psi= g\circ\vf$ is of the required form.
Finally, for every $\vf,\vf' \in U$ and every $g,g'\in  G$, if $\vf|_{\{v_a\mid a \in H \}}\ne\vf'|_{\{v_a\mid a \in H \}}$ or $g\ne g'$ then $g\circ\vf\ne g'\circ \vf'$. Thus, $N(\zz)=|G|\cdot T$. Since $\cH$ is $p$-rigid,  $|G|\not \equiv 0\pmod p$. Therefore  $T \equiv |G|^{-1}\cdot N(\zz)\pmod{p}$.
\end{proof}

\end{document}